\newcommand{\les}{\lesssim}
\newcommand{\bea}{\begin{eqnarray}}\newcommand{\eea}{\end{eqnarray}}
\newcommand{\beq}{\begin{equation}}\newcommand{\ee}{\end{equation}}
\newcommand{\beqs}{\begin{equation*}}\newcommand{\eqs}{\end{equation*}}
\newcommand{\eps}{{\varepsilon}}
\newcommand{\BS}{{\mathbb S}}
\newcommand{\la}{\langle}
\renewcommand{\b}{\beta}
\newcommand\Ga{{\Gamma}}
\def\la{{\lambda}}
\newcommand\G{{\mathcal G}}
\newcommand\tG{\tilde{\mathcal G}}
\newcommand\g{\tilde g}
\newcommand\Gat{\tilde{\Ga}}
\newcommand\tpi{\tilde{\pi}}
\newcommand\tla{\tilde{\lambda}}
\newtheorem{theorem}{Theorem}[section]
\newtheorem{lemma}[theorem]{Lemma}
\newtheorem{prop}[theorem]{Proposition}
\newtheorem{proposition}[theorem]{Proposition}
\theoremstyle{remark}
\newtheorem{remark}[theorem]{Remark}
\newtheorem*{remark*}{Remark}
\def\a{\alpha}
\def\ga{\gamma}\def\de{\delta}
\def\bm{\left( \begin{array}{cc}}
\def\endm{\end{array}\right)}\newcommand{\eq}{\end{equation}}
\def\tr{\text{tr}}
\def\a{\alpha}\def\b{\beta}
\def\ga{\gamma}\def\de{\delta}
\def\pa{\partial}
\def \rectangle#1#2{\hbox{\vrule\vbox to #2
{\hrule\hbox to #1{\hfil}\vfil\hrule}\vrule}}
\def\Lb{\underline{L}}
\def\tr{\text{tr}}
\def\a{\alpha}\def\b{\beta}\def\ga{\gamma}
\def\de{\delta}\def\Boxr{\widetilde{\square}}
\def\pa{\partial}
\def\Lb{\underline{L}}
\def\pas{\text{$\pa\mkern -10.0mu$\slash\,}}
\def\sls#1{\text{$#1\mkern -13.0mu$\slash\,}}
\def\us{u^{\!*}}
\def\Ss{{{S\underline{}}^*}}
\def\Ls{{{L\underline{}\!}^*}}\def\Lbs{{{\underline{L\!}}^*}}
\def\Lh{{\hat{L\underline{}}}}\def\Lbh{{\hat{\underline{L\!}\,}}}
\def\Lb{\underline{L\!}}\def\uL{\underline{L\!}}
\def\utau{\underline{\tau\!}\,}\def\uchi{\underline{\chi}}
\numberwithin{equation}{section}
\begin{document}

\title {Masses at null infinity for Einstein's equations in harmonic coordinates}
\dedicatory{Dedicated to Demetrios Christodoulou on the occasion of his 70th birthday}
\author{Lili He}
\address{Department of Mathematics \\ Johns Hopkins University \\ Baltimore, MD 21218, USA}
\email{lhe31@jhu.edu}

\author{Hans Lindblad}
\address{Department of Mathematics \\ Johns Hopkins University \\ Baltimore, MD 21218, USA}
\email{lindblad@math.jhu.edu}

\maketitle
\begin{abstract}
	In this work we give a complete picture of how to in a direct simple way define the mass at null infinity in harmonic coordinates in three different ways that we show satisfy the Bondi mass loss law. The first and second way involve only the limit of metric (Trautman mass) respectively the null second fundamental forms along asymptotically characteristic surfaces (asymptotic Hawking mass) that only depend on the ADM mass. The last involves construction of special characteristic coordinates at null infinity (Bondi mass). The results here rely on asymptotics of the metric derived in \cite{Lind17}.
\end{abstract}

\section{Introduction} The first definition of mass at null infinity was given by Trautman \cite{T58}. In Trautman's definition, the mass is defined as the integral of the so called superpotential which is expressed in terms of the metric and its first order derivatives over the spheres receding to null infinity, see \S\ref{subsec:Trautman} for precise definition. We refer the readers to \cite{T62, T02, B58, G58} and \cite[\S 3.1]{BC17} and references therein for more details.

In 1960, Bondi \cite{B60} introduced a new approach which was based on the outgoing null rays to study the gravitational waves. Later, Bondi, Metzner and van der Burg \cite{BMM62} considered the axisymmetric spacetimes. Soon after, Sachs \cite{S62} generalized the formalism to non axisymmetric spacetimes. In the Bondi-Sachs formalism, the coordinates which are called Bondi-Sachs coordinates, are adapted to the null geodesics of the space time. With respect to such a coordinate system, only $6$ metric quantities are needed to describe the spacetime, and the Bondi mass and radiated energy at null infinity are defined in terms of certain lower order terms of these metric components. 
Hintz-Vasy \cite{HV20}  showed the existence of the Bondi-Sachs coordinates for a specific class of initial data and identified the Bondi mass in a generalized wave coordinates. 

Christodoulou \cite{C91} introduced an alternative approach to defining the mass at null infinity without the need to use the Bondi-Sachs coordinates. The definition was given as the limit of the Hawking mass of a family of spheres that converge to a round metric sphere along the outgoing null hypersurfaces towards null infinity. Christodoulou proved that the limit of the Hawking mass exists and satisfies a mass loss law for the initial data used in \cite{CK93} by analyzing the null structure Einstein equations. Later on, the limit of Hawking mass of suitable spheres was analyzed in the settings of the work \cite{B10, KN03}.

 These three masses are defined completely differently and each has been analyzed in several settings. As summarized in \cite{BC17}, in the setting where the Bondi-Sachs formalism can be carried out, the limit of the Hawking mass along suitable family of spheres recovers the Trautman mass and Bondi mass. However, the notion of the mass and radiated energy at null infinity in harmonic coordinates remains to be clarified. In this work we give a complete picture of how to define the mass at null infinity in harmonic coordinates in the different ways that we show satisfy the Bondi mass loss law and therefore coincide.

\subsection{Einstein vacuum equations in  harmonic coordinates}
Einstein's equations in harmonic are a system of nonlinear wave equations
\begin{equation}\label{eq:EisnteinWave}
	\Boxr_g\,  g_{\mu\nu} =F_{\mu\nu} (g) (\pa g, \pa g),
	\qquad\text{where}\quad
	\Boxr_g=\text{$\sum$} g^{\alpha\beta}\pa_\alpha\pa_\beta,
\end{equation}
for a Lorentzian metric $g_{\alpha\beta}$, that in addition satisfy the preserved wave coordinate condition
\begin{equation}\label{eq:WaveCordinateCond}
	\partial_\alpha \big(\sqrt{|g|} g^{\alpha\beta}\big)=0,\qquad \text{where} \quad  |g|=|\det{\big(g\big)}|.
\end{equation}
Choquet-Bruhat \cite{C52} proved local existence in these coordinates. Christodoulou-Klainerman \cite{CK93} proved global existence for Einstein's vacuum equations $R_{\mu\nu}=0$ for small asymptotically flat initial data:
\begin{equation}\label{eq:asymptoticallyflatdata}
	g_{ij}\big|_{t=0}=(1+M r^{-1})\, \delta_{ij} + o(r^{-1-\gamma}),\quad
	\pa_t g_{ij}\big|_{t=0}=o(r^{-2-\gamma}),\quad r=|x|,\qquad 0<\gamma<1,
\end{equation}
where $M\!>\!0$ by the positive mass theorem \cite{SY79, W81}.
The proof avoids using coordinates since it
was believed the metric in harmonic coordinates would blow up for large times.
John \cite{J81, J85} had noticed that solutions to some nonlinear wave equations blow up for small data, whereas Klainerman \cite{K83,K86}, see also Christodoulou \cite{C86}, came up with the ``null condition'', that guaranteed global existence for small data. However Einstein's equations do not satisfy the null condition.
The null condition provide a cancellation of the nonlinear terms so that
solutions decay like solutions of linear equations.
H\"ormander introduced a simplified asymptotic system, by neglecting angular derivatives which we expect decay faster due to the rotational invariance, to study blowup.
Lindblad \cite{L92} showed that the asymptotic system corresponding to the quasilinear part of Einstein's equations does not blow up and gave an example of a nonlinear equation of this form that have global solutions that do not decay as much.
Lindblad-Rodnianski \cite{LR03} introduced the {\it weak null condition}
requiring that the corresponding asymptotic system  have global solutions and  showed that Einstein's
equations in {\it wave coordinates} satisfy the weak null condition which was used in
\cite{LR05,LR10} to prove global existence.
Starting from the $L^2$ estimates in \cite{LR05,LR10}, Lindblad \cite{Lind17} derived more detailed asymptotics that we will rely on. We expect the result of this manuscript to be true in the presence of matter since these asymptotics can also be derived by directly using a change of coordinates which is equivalent to generalized wave coordinates as in Kauffmann-Lindblad \cite{KL21} and \cite{CKL19, H21}.

 \subsection{The characteristic surfaces}
 In order to unravel the effect of the quasilinear terms in \eqref{eq:EisnteinWave} one can change to
 characteristic coordinates as in \cite{CK93}, but this loses regularity and is not explicit.
 Instead Lindblad \cite{Lind17}  used the asymptotics of the metric to determine the characteristic surfaces asymptotically and used this to construct coordinates.
 Due to the wave coordinate condition \eqref{eq:WaveCordinateCond} the outgoing light cones
 of a solution with asymptotically flat data \eqref{eq:asymptoticallyflatdata} approach those of the Schwarzschild metric with the same mass $M$.
 In \cite{Lind17} it was shown that there is a solution to the eikonal equation that approaches the one for Schwarzschild
 \begin{equation}\label{eq:eikonal}
 	g^{\alpha\beta} \pa_\alpha u \,\pa_\beta u=0,\qquad  u\to  u^*\!\!=t-r^*\!\!,\quad \text{when}\qquad r>t/2\to\infty,\qquad\text{where } r^*\!\!=r+M\ln r+O(M/r).
 \end{equation}

 \subsection{The asymptotics of the metric}
 In \cite{Lind17} the precise asymptotics of the metric was given.  Asymptotically the metric is Minkowski metric $m_{\mu\nu}\!$ plus
 \begin{equation*}
 	h_{\mu\nu}(t,r\omega)\sim H_{\mu\nu}(\widetilde{r}\!-t,\omega)/(t+r)+K_{\mu\nu}\big(\tfrac{t+\widetilde{r}}{|\widetilde{r}\!-t|+1},\omega\big)/(t+r),
 	\qquad \widetilde{r}\!= r+M\ln{r},\qquad \omega\!=\!x/|x|.
 \end{equation*}
 Here $H$ is concentrated close to the outgoing light cones $\widetilde{q}=\widetilde{r}\!-t$ constant,
 $|H(\widetilde{q},\omega)|\!\leq\! \varepsilon (1\!+| \widetilde{q}|)^{-\gamma^\prime}\!\!$ where $\gamma'=\gamma-C\varepsilon$ for some constant $C$ and small constant $\varepsilon$,
 and $K$ is homogeneous of degree $0$ with a log singularity at the light cone
 $|K(s,\omega)|\!\leq \!\varepsilon \ln{|s|}$ for the nontangential components.
 $H$ is the radiation field of the free curved wave operator, the left of \eqref{eq:EisnteinWave}, and
 $K$ is the backscattering of the wave operator with a source term
 $F_{\mu\nu}\sim P_{\mathcal{S}}(\pa_\mu h,\pa_\nu h)$ in the right of \eqref{eq:EisnteinWave},
 where $P_{\mathcal{S}}$ is the norm of the components tangential to the spheres.
 In the wave zone
 \begin{equation*}
 	K_{\mu\nu}\big(\tfrac{t+\widetilde{r}}{|\widetilde{r}\!-t|+1},\omega\big)\sim L_\mu(\omega) L_\nu(\omega) \int_{\widetilde{r}-t}^\infty \frac{1}{2}\ln{\Big(\frac{t+\widetilde{r}+\widetilde{q}}{t-\widetilde{r}+\widetilde{q}}\Big)} n\big(\widetilde{q},\omega\big)\, d \widetilde{q} ,
 	\quad\text{when}\quad  |t-\widetilde{r}|<<t+\widetilde{r},
 \end{equation*}
 where $L_\mu\!=\!m_{\mu\nu} L^\nu\!$, in a null frame $L\!=\!(1,\omega)$, $\underline{L}\!=\!(1,-\omega)$ and orthonormal $S_1,S_2\!\in\! T(\mathbb{S}^2)$, we have
 \begin{equation}\label{eq:Newsfucntion}
 	n(\widetilde{q},\omega)=-P_{\mathcal{S}}\big(\pa_{\widetilde{q}} H,\pa_{\widetilde{q}} H\big)(\widetilde{q},\omega),\quad\text{where}\quad
 	P_{\mathcal{S}} (D,E)
 	= -D_{AB}
 	\, E^{AB}\!/\,2,\quad A,B\in\{S_1,S_2\}.
 \end{equation}
\begin{remark*}
	In this manuscript $\widetilde{\bullet},  \widehat{\bullet},\overline{\bullet}$ refer to the quantities expressed in the coordinates $(\widetilde{t}=t, \widetilde{r}\omega)$ where $\widetilde{r}=r+M\ln r$, modified asymptotically Schwarzschild null coordinates $\widehat{y}^p$ as defined in \ref{subsec:2.2} and the Bondi-Sachs coordinates respectively, unless otherwise specified.
\end{remark*}
\subsection{The Trautman mass and radiated energy} \label{subsec:Trautman}
We will use the surface $\widetilde{u}=\widetilde{t}-\widetilde{r}$ constant instead of the null cones to define the Trautman mass and radiated energy at null infinity in terms of the asymptotics of the metric components in wave coordinates.

We let ${\pa}_\mu\!=\!A_\mu^\nu\widetilde{\pa}_\nu$ and $\g^{\mu\nu\!}\!=\!A^{\mu}_{ \a}A^{\nu}_{\beta} g^{\a\b}\!$,
where $A_\mu^\nu\!=\pa \widetilde{x}^\nu\!/\pa x^\mu$ and $\widetilde{x}\!=\!\widetilde{r}\omega$ where $\widetilde{r}\!\!=\!r\!+\!M\!\ln{r}$. Let $S_{\widetilde{u}, \widetilde{r}}=\!\{(\widetilde{t},\widetilde{x}); \widetilde{t}\!=\!\widetilde{u}\!+\!\widetilde{r}\}$ be a sphere, following \cite{T58, T62, T02, BC17} we define the {\it Trautman four-momentum} as
\[
M^\alpha_T(\widetilde{u})=\lim_{r\to\infty}\frac{1}{4\pi}\int_{S_{\widetilde{u}, \widetilde{r}}}\!\tilde{\mathbb{U}}^{\alpha\beta\gamma}\,dS_{\beta\gamma}.
 \]
 Here $dS_{\beta\gamma}\! =\! n_{[\beta}k_{\gamma]}\widetilde{r}^2dS(\omega)$ with $n_\gamma\! =\! (d\widetilde{r})_\gamma\! =\! (0, \omega_i)$, $k_\beta\! =\! (d\widetilde{t})_\beta\! =\! (1, 0,0,0)$, and the superpotential $\tilde{\mathbb{U}}^{\alpha\beta\gamma}$ is
\[
\tilde{\mathbb{U}}^{\a\b\gamma}=\sqrt{\lvert\tilde{g}\rvert}\tilde{g}^{\a\mu}\tilde{\mathbb{U}}_\mu^{\b\gamma} \quad\text{where} \quad \tilde{\mathbb{U}}_\mu^{\beta\gamma}=\sqrt{\lvert \tilde{g}\rvert} \tilde{g}^{\alpha\mu}\tilde{g}^{\sigma[\rho}\delta_\mu^\gamma\tilde{g}^{\beta]\tau}\widetilde{\partial}_{\tau}\tilde{g}_{\rho\sigma}.
\]
Here the square brackets denote the antisymmetric part of a tensor, i.e., $T^{[a_1\cdots a_l]}=\sum_{\sigma}(-1)^\sigma T^{a_{\sigma(1)}\cdots a_{\sigma(l)}}$ where the sum is taken over all permutations $\sigma$ of $1,\ldots,l$ and $(-1)^\sigma$ is $1$ for even permutations and $-1$ for odd permutations. A direct computation implies
\[
\tilde{\mathbb{U}}^{\a\b\gamma}=-\tla^{\a\b\mu},\quad\text{where}\quad
\tilde{\lambda}^{\a\b\mu}=\widetilde{\pa}_\nu \big(|\g|(\g^{\a\b}
\g^{\mu\nu}-\g^{\a\mu}\g^{\b\nu})\big).
\]
Therefore with $L_\alpha=(-1, \omega_i)$ and $\Lb_\alpha=(-1, -\omega_i)$ we can write
\[
M^\alpha_T(\widetilde{u})\!\!=\!\!\frac{1}{4\pi}\int_{\mathbb{S}^2}\! m_T^\alpha(\widetilde{u},\omega){ dS(\omega)}, \qquad\text{where}\quad
m_T^\alpha(\widetilde{u},\omega)
= \lim_{\widetilde{r}\to\infty}  (\widetilde{r})^2 \big(\widetilde{\lambda}^{\alpha\beta\gamma}
\uL_\gamma L_\beta\big)(\widetilde{u}-\widetilde{r},\widetilde{r}\omega).
\]
The {\it Trautman radiated four-momentum}\footnote{For the original definition of the radiated four-momentum \cite{T58, T62, T02}, Trautman uses the mixed Einstein pseudotensor of energy and momentum. In \cite{G58} it is noted that the symmetric Landau-Lifshitz pseudotensor has the same total energy and momentum as the mixed Einstein pseudotensor.} is defined as
\[
	E^\alpha_T(\widetilde{u})=\lim_{\widetilde{r}\to\infty}\frac{1}{2\pi}\int_{S_{\widetilde{u}, r}}\!\lvert\tilde{g}\rvert\tilde{\pi}^{\alpha\beta}\,dS_{\beta}.
\]
Here $dS_\beta=n_\beta\widetilde{r}^2dS(\omega)$ with $n_\beta=(0, \omega_i)$ and $\tilde{\pi}^{\alpha\beta}$ is {\it Landau-Lifshitz} pseudotensor \cite[\S101]{LL62},
\[
\tilde{\pi}^{\a\b}=-2\tilde{G}^{\a\b}+\frac{1}{\lvert\tilde{g}\rvert}\widetilde{\pa}_\mu \tla^{\a\b\mu}\quad\text{where}\quad \tilde{G}^{\a\b}=\tilde{R}^{\a\b}-\frac{1}{2}\tilde{g}^{\a\b}\tilde{R}.
,
\]
which is a symmetric.
We write
\[
E_T^\alpha(\widetilde{u})
= \frac{1}{2\pi}\int_{\mathbb{S}^2} \Delta m^\a_T(\widetilde{u},\omega)\, d\omega\quad\text{where}\quad\Delta m^\a_T(\widetilde{u},\omega)=
\lim_{\widetilde{r}\to\infty}  \widetilde{r}^2  |\g| \tpi^{\a i} \frac{\widetilde{x}_i}{\widetilde{r}}.
\]

\begin{remark*}
Many known gravitational pseudotensors can be derived from the above superpotentials, including the mixed Einstein pseudotensor of energy and momentum and the symmetric Landau-Lifshitz pseudotensor. We refer the readers to \cite{B58, G58, T62} for a more detailed discussion of different pseusotensors and their relations.
\end{remark*}
We will refer to $M^0_T(\widetilde{u})$ as the {\it Trautman mass} and to $E^0_T(\widetilde{u})$ as the radiated energy at null infinity. Using the asymptotics in \cite{Lind17} we prove in subsection \ref{subsec:4.4}
\begin{theorem}\label{thm:1}
	The Trautman four-momentum $M^\alpha_T(\widetilde{u})$ and Trautman radiated four momentum $E_T^\alpha(\widetilde{u})$ are well defined and satisfy the mass loss law
	\[ M^\alpha_T(\widetilde{u}_2)-M^\alpha_T(\widetilde{u}_1)=-\int_{{\widetilde{u}_1}}^{\widetilde{u}_2} E^\alpha_T(\widetilde{u})\, d\widetilde{u}.
	\]
	Moreover the radiated energy $E^0_T$ at null infinity can be expressed in terms of $n$ in $\eqref{eq:Newsfucntion}$ as
	$$ E^0_T(\widetilde{u})\!\!=\!\!\int_{\mathbb{S}^2}\!n(-\widetilde{u},\omega){ dS(\omega)\!}/{8\pi} ,
	$$
	and the Trautman mass $M_T^0(\widetilde{u})\to M$, the ADM mass, as $\widetilde{u}\to-\infty$ and $M_T^0(\widetilde{u})\to0$ as $\widetilde{u}\to\infty$.
\end{theorem}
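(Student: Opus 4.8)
The plan is to separate two concerns: the mass loss law, valid for every component $\alpha$, should follow from an exact algebraic identity for the Landau--Lifshitz pseudotensor together with the divergence theorem, using the asymptotics of \cite{Lind17} only through the fact that the defining limits exist and the error terms decay; the remaining assertions --- well-definedness of the limits, the identification of $E^0_T$ with the news integral, and the boundary values of $M^0_T$ --- require inserting the detailed expansion of the metric into $\tla^{\alpha\beta\gamma}$ and tracking null components. For the mass loss law, write $\mathfrak H^{\alpha\beta\gamma\nu}=|\g|\big(\g^{\alpha\beta}\g^{\gamma\nu}-\g^{\alpha\gamma}\g^{\beta\nu}\big)$, which is antisymmetric in $(\beta\gamma)$, so that $\tla^{\alpha\beta\gamma}=\widetilde\pa_\nu\mathfrak H^{\alpha\beta\gamma\nu}$ and hence $\widetilde\pa_\gamma\widetilde\pa_\mu\tla^{\alpha\gamma\mu}\equiv 0$; since $R_{\mu\nu}=0$ for the solutions considered we get $\widetilde G^{\alpha\beta}=0$, $|\g|\tpi^{\alpha\beta}=\widetilde\pa_\mu\tla^{\alpha\beta\mu}$, and $\widetilde\pa_\beta(|\g|\tpi^{\alpha\beta})=0$. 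Integrating $0=\int_{D_R}\widetilde\pa_\beta(|\g|\tpi^{\alpha\beta})\,d\widetilde t\,d\widetilde x$ over the region $D_R$ bounded by $\{\widetilde u=\widetilde u_1\}$, $\{\widetilde u=\widetilde u_2\}$ and $\{\widetilde r=R\}$ (the vertices on the axis carry no flux) and letting $R\to\infty$: the flux through $\{\widetilde r=R\}$ equals $\int_{\widetilde u_1}^{\widetilde u_2}\!\int_{\mathbb{S}^2}|\g|\tpi^{\alpha i}\tfrac{\widetilde x_i}{\widetilde r}\,\widetilde r^2\,dS(\omega)\,d\widetilde u\to\int_{\widetilde u_1}^{\widetilde u_2}\!\int_{\mathbb{S}^2}\Delta m^\alpha_T\,dS\,d\widetilde u=2\pi\int_{\widetilde u_1}^{\widetilde u_2}E^\alpha_T(\widetilde u)\,d\widetilde u$; the flux through each cone $\{\widetilde u=\widetilde u_i\}$, after substituting $|\g|\tpi^{\alpha\beta}=\widetilde\pa_\mu\tla^{\alpha\beta\mu}$ and using the antisymmetry of $\tla$ once more (to drop the terms where $\widetilde\pa$ falls on the conormal $d\widetilde u$), reduces by the divergence theorem on the cone to $\pm\int_{S_{\widetilde u_i,R}}\tla^{\alpha0i}\omega_i\,\widetilde r^2\,dS\to\pm 2\pi M^\alpha_T(\widetilde u_i)$. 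Reassembling these boundary terms with the correct orientations yields $M^\alpha_T(\widetilde u_2)-M^\alpha_T(\widetilde u_1)=-\int_{\widetilde u_1}^{\widetilde u_2}E^\alpha_T(\widetilde u)\,d\widetilde u$; the only analytic input is that $\tla^{\alpha\beta\gamma}\uL_\gamma L_\beta$ and $|\g|\tpi^{\alpha i}$ are $O(\widetilde r^{-2})$ uniformly in $\widetilde u$ on compact intervals, so limits and integrals commute and nothing escapes along $\widetilde r=\infty$.

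Next I would carry out the explicit evaluation, substituting the expansion of $h$ from \cite{Lind17} rewritten in the $\widetilde x$ coordinates and noting that on $\{\widetilde u=\mathrm{const}\}$ one has $\widetilde r-\widetilde t=-\widetilde u$ and $t+r\sim 2\widetilde r$, so $H$ is frozen at $H(-\widetilde u,\omega)$. In $\tla^{\alpha\beta\gamma}\uL_\gamma L_\beta=\uL_\gamma L_\beta\,\widetilde\pa_\nu\mathfrak H^{\alpha\beta\gamma\nu}$ the wave coordinate condition \eqref{eq:WaveCordinateCond} annihilates the pieces in which $\widetilde\pa_\nu$ hits $\sqrt{|\g|}\g^{\gamma\nu}$, and to linear order this leaves $\tla^{\alpha\beta\gamma}\uL_\gamma L_\beta\approx-m^{\alpha\mu}\,\uL\big(\bar h_{\mu\nu}L^\nu\big)$ modulo $O(\widetilde r^{-2})$, where $\bar h_{\mu\nu}=h_{\mu\nu}-\tfrac12 m_{\mu\nu}m^{\rho\sigma}h_{\rho\sigma}$; the wave coordinate condition also forces $\pa_{\widetilde q}(\bar h_{\mu\nu}L^\nu)$ to be of lower order (the ``good'' components are $\widetilde q$-independent at leading order, carrying only the $M/r$ Schwarzschild/backscatter part), and since $\uL(t+r)=0$ and $\uL\omega^i=0$ the whole quantity is genuinely $O(\widetilde r^{-2})$, so $m^\alpha_T(\widetilde u,\omega)=\lim\widetilde r^2\,\tla^{\alpha\beta\gamma}\uL_\gamma L_\beta$ exists. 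Re-expressing its $\widetilde r^{-2}$ coefficient through the reduced Einstein equations ($F_{\mu\nu}\sim P_{\mathcal{S}}(\pa_\mu h,\pa_\nu h)$ and the backscatter formula for $K$) gives well-definedness of $M^\alpha_T$, and the same computation for $\Delta m^\alpha_T=\lim\widetilde r^2\,\widetilde\pa_\mu\tla^{\alpha i\mu}\,\omega_i$ gives well-definedness of $E^\alpha_T$; for $\alpha=0$ it isolates the sphere-tangential term $(\pa_{\widetilde q}H)_{AB}(\pa_{\widetilde q}H)^{AB}=2n(-\widetilde u,\omega)$ (by \eqref{eq:Newsfucntion}), and matching constants gives $\Delta m^0_T=\tfrac14 n(-\widetilde u,\omega)$ up to a spherical divergence, hence $E^0_T(\widetilde u)=\tfrac1{2\pi}\int_{\mathbb{S}^2}\Delta m^0_T\,d\omega=\tfrac1{8\pi}\int_{\mathbb{S}^2}n(-\widetilde u,\omega)\,dS(\omega)$.

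The boundary values then follow from the explicit formula for $m^0_T(\widetilde u,\omega)$: as $\widetilde u\to-\infty$, $-\widetilde u\to+\infty$, so $H(-\widetilde u,\omega)$ and the wave-zone part of $K$ vanish (using $|H(\widetilde q,\omega)|\le\varepsilon(1+|\widetilde q|)^{-\gamma'}$ and integrability of $n$) and only the static part of $K$, which \cite{Lind17} identifies with the $M/r$ tail, survives, integrating over $\mathbb{S}^2$ to the ADM mass $M$; as $\widetilde u\to+\infty$, $-\widetilde u\to-\infty$ and the memory integral $\int_{-\widetilde u}^\infty(\cdots)\,d\widetilde q$ runs over all of $\R$, canceling the static part so that $M^0_T(\widetilde u)\to 0$ --- equivalently, integrating the mass loss law over $\widetilde u\in\R$ gives $\tfrac1{8\pi}\int_{\R}\!\int_{\mathbb{S}^2}n(\widetilde q,\omega)\,dS(\omega)\,d\widetilde q=M$, the statement that the total radiated energy is the ADM mass, read off from the $\widetilde q\to\pm\infty$ behavior of $H$ in \cite{Lind17}.

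I expect the main obstacle to be the second step: several terms in the expansions of $\tla^{\alpha\beta\gamma}\uL_\gamma L_\beta$ and of $\widetilde\pa_\mu\tla^{\alpha i\mu}\omega_i$ are a priori of borderline size $(\ln\widetilde r)/\widetilde r^2$ --- generated by the logarithmic shift $\widetilde r=r+M\ln r$, by the logarithmic growth of $K$, and by $\widetilde\pa_\nu$ landing on the position-dependent null covectors --- and one must show, using the precise null structure of $H$ and $K$ (in particular $K_{\mu\nu}\sim L_\mu L_\nu(\cdots)$, whence $K_{LL}=0$) together with the wave coordinate condition, that these cancel before the $O(\widetilde r^{-2})$ limit emerges; pinning down the constant in $\Delta m^0_T=\tfrac14 n$ and establishing $\tfrac1{8\pi}\int n=M$ for the $\widetilde u\to+\infty$ limit are the other points that need care.
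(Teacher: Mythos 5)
Your proposal follows essentially the same route as the paper's Section~\ref{sec:4}: the divergence identity $\widetilde{\pa}_\beta(|\g|\tpi^{\a\b})=0$ integrated between two cones $\widetilde{q}=\mathrm{const}$ and a sphere $\widetilde{r}=R$, with the cone fluxes reduced to sphere integrals of $\tla^{\a\b\ga}\Lb_\ga L_\b$ by antisymmetry and integration by parts along the generators, followed by the existence of the limits via the wave coordinate condition and the improved estimate for $\pa_L(r h^1)$, the identification $\Delta m^0_T\sim n$ through the sphere-tangential part of $\pa_{\widetilde{q}}\hat{\tG}$, and the boundary values from the decay of $H^{1\infty}$ together with Proposition~\ref{prop:totalmassloss}. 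The borderline $(\ln\widetilde{r})/\widetilde{r}^2$ cancellations you flag as the main obstacle are exactly what the paper's Lemmas~\ref{lem:newmetric}--\ref{lem:gfun} and Propositions~\ref{prop:sharpmetricdecay}--\ref{prop:specialsharpmetricdecay} are used to handle, so your outline is consistent with the paper's proof.
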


 \subsection{The asymptotic Hawking mass and radiated energy}
We will use the asymptotically null surfaces $u^*=t-r^*$ constant where $r^*=r+M\ln r+O(M/r)$ instead of null cones to define the asymptotic Hawking mass and the radiated energy at null infinity.

Following \cite{C91,C09} define the radius of a surface $S$ by $r(S)\!=\!\sqrt{\!\text{Area}(S)\!/4\pi}$.
Let $\Lh$ and $\Lbh$ be the outgoing respectively incoming
null normals to $S$ satisfying
$g(\Lh,\Lbh)\!\!=\!-2$.\!
$\Lh$ and $\Lbh$ are unique up to the transformation
$\Lh\!\!\to \!a \Lh$ and
$\Lbh\!\to  a^{\!-1\!}\Lbh$.
The null second fundamental form and the conjugate null second fundamental form
are defined by
$
\chi(X,Y)\!\!=\!g(\nabla_{\!\!X}\Lh,Y)$ respectively
$\underline{\chi}(X,Y)\!\!=\!g(\nabla_{\!\!X} \Lbh,Y),
$
for any vectors $X,Y$ tangent to $S$ at a point,
where $\nabla_{\!\!X}$ is covariant differentiation.
The  Hawking mass
$$
M_{\mathcal{H}}(S)\!=r(S)\big(1\!+\int_S \tr \chi \,\tr \underline{\chi} \,dS\!/16\pi
\big),
$$
is
invariant under the transformation since $\chi\!\!\to\! a\chi$ and
$\underline{\chi}\!\!\to\! \underline{\chi}/a$. If $\tr \chi \,\tr \underline{\chi}\!<\!0$ we
can fix $\Lh$ and $\Lbh$ by
$\tr\chi\!+\tr \underline{\chi}\!=\!0$. Let $\hat{\chi}$ and $\hat{\underline{\chi}}$
be the traceless parts. The incoming and outgoing energy fluxes are
$$
E(S)\!=\!\int_{S}\! \hat{\chi}^2 dS\!/16\pi,\qquad\text{and}\qquad
\underline{E}(S)\!=\!\int_{S}\! \hat{\underline{\chi}}^2 \!dS_{\!}/16\pi\!.
$$
We use the family of spheres $S_{u^*\!, r}=\!\{(t,x); t\!=\!u^*+r^*(r), \, |x|\!\!=\!r\}$ to define the {\it asymptotic Hawking mass} and the radiated energy at null infinity as follows
\begin{align*}
	M_{AH}(u^*)=\lim_{r\to \infty}M_{\mathcal{H}}(S_{u^*\!, r})\quad \text{and}\quad E_{AH}(u^*)=\lim_{r\to\infty}\underline{E}(	S_{u^*\!,r}),
\end{align*}
with $r(S)^2\slashed{g}$ converging to a round metric
where $\slashed{g}$ is the restriction of $g$ on the spheres $S_{u^*\!, r}$.
\begin{remark}
	As pointed out in \cite{Sauter08}, it is absolutely essential in the limit process that the spheres $S_{u^*\!, r} $converge to a round metric sphere. Otherwise the limit of the Hawking mass has nothing to do with the Bondi mass, in general. This is somehow related to the undesirable fact that the Hawking mass of any spherical surface in Euclidean space is negative unless it is a metric sphere where it is zero.
\end{remark}
With the asymptotics results in \cite{Lind17} we prove in subsection \ref{subsec:5.4}:
\begin{theorem}\label{thm:2}
	The asymptotic Hawking mass $M_{AH}(u^*)$ and the radiated energy $E_{AH}(u^*)$ are well defined and in fact with $n$ in \eqref{eq:Newsfucntion}
	\[
		M_{AH}(u^*)=M-\frac{1}{8\pi}\int_{-u^*}^{\infty}\int_{\mathbb{S}^2}n(\eta, \omega)\,dS(\omega)d\eta,\quad\text{and}\quad	E_{AH}(u^*)=\!\frac{1}{8\pi}\int_{\mathbb{S}^2}\!n(-u^*,\omega){ dS(\omega)\!}.
	\]
	 Therefore, they satisfy the mass loss law
	\[
	\frac{d}{du^*}M_{AH}(u^*)=-E_{AH}(u^*).
	\]
Moreover, $M_{AH}(u^*)\to M$, the ADM mass, as $u^*\to-\infty$ and $M_{AH	}(u^*)=\to0$ as $u^*\to\infty$.
	\end{theorem}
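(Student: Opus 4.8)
The plan is to read off the asymptotics of every geometric ingredient of the Hawking mass of $S_{u^*,r}$ — the induced metric $\slashed g$ and its area radius $r(S)$, the outgoing/incoming null normals $\hat L,\hat{\underline L}$, and the traces $\tr\chi,\tr\underline\chi$ and traceless parts $\hat\chi,\hat{\underline\chi}$ — directly from the metric expansion of \cite{Lind17}, and then let $r\to\infty$ with $u^*$ fixed. Note first that on $S_{u^*,r}$ one has $\widetilde q=\widetilde r-t=\widetilde r-r^*(r)-u^*=-u^*+O(M/r)\to-u^*$, so $H$ and its $\widetilde q$-derivatives are, in the limit, evaluated at $\widetilde q=-u^*$, while the argument $s=(t+\widetilde r)/(|\widetilde r-t|+1)$ of $K$ tends to $+\infty$ like $r$. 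Because the backscattering tensor is proportional to $L_\mu L_\nu$ and $m(L,S_A)=0$, the sphere-tangential components of $K$ vanish to leading order; with $h_{\mu\nu}=O(1/r)$ this gives $\slashed g_{AB}=r^2\sigma_{AB}(1+O(1/r))$ for the round metric $\sigma$, so $r(S)^{-2}\slashed g\to\sigma$ and $r(S)/r\to1$ — precisely the normalization hypothesis under which $M_{AH},E_{AH}$ are defined. I then construct $\hat L,\hat{\underline L}$ as the asymptotically Schwarzschild null normals perturbed by $h$, the correction of $\hat{\underline L}$ along $L$ being $\tfrac14 h_{\underline L\underline L}L$ (so that $g(\hat{\underline L},\hat{\underline L})=0$), normalized by $g(\hat L,\hat{\underline L})=-2$ and then rescaled by $a=1+O(1/r)$ to enforce $\tr\chi+\tr\underline\chi=0$; this rescaling is harmless for all limits.

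For $E_{AH}$, the Koszul formula gives, for a fixed angular frame, $\underline\chi_{AB}=\tfrac12\hat{\underline L}(\slashed g_{AB})+(\text{frame terms})$; since the Schwarzschild part is shear-free and $K_{AB}=0$, only $H$ feeds $\hat{\underline\chi}$, and using $\hat{\underline L}(\widetilde q)=-2+O(M/r)$ one gets $\hat{\underline\chi}_{AB}\sim c\,r^{-1}\partial_{\widetilde q}H_{AB}(-u^*,\omega)$ for the trace-free part, with a universal constant $c$. Hence $\underline E(S_{u^*,r})=\tfrac1{16\pi}\int_{S}|\hat{\underline\chi}|^2\,dS\to\tfrac{c^2}{16\pi}\int_{\mathbb S^2}(\partial_{\widetilde q}H)_{AB}(\partial_{\widetilde q}H)^{AB}(-u^*,\omega)\,dS(\omega)$, and by \eqref{eq:Newsfucntion} this is $E_{AH}(u^*)=\tfrac1{8\pi}\int_{\mathbb S^2}n(-u^*,\omega)\,dS(\omega)$ once $c$ is computed; the limit exists because $n(-u^*,\cdot)\in L^1(\mathbb S^2)$ (from $|\partial_{\widetilde q}H|\lesssim(1+|\widetilde q|)^{-1-\gamma'}$).

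For $M_{AH}$, expand $\tr\chi\,\tr\underline\chi=-4/r^2+O(1/r^3)$ and track the $O(1/r^3)$ coefficient. The purely Schwarzschild term $-\tfrac4{r^2}(1-\tfrac{2M}{r})$ produces $\lim r(S)\big(1+\tfrac1{16\pi}\int_S(\cdot)\,dS\big)=M$. The remaining $u^*$-dependence enters only through $h_{\underline L\underline L}$, which via the $\tfrac14 h_{\underline L\underline L}L$ correction to $\hat{\underline L}$ feeds into $\tr\underline\chi$: its radiation part $H_{\underline L\underline L}$ and its backscattering part $K_{\underline L\underline L}$ are linked by the $\underline L\underline L$-component of Einstein's equations — the transport (mass-aspect) equation with source $\sim n$ — and collecting them, integrating over the sphere and multiplying by $r(S)$ yields the term $-\tfrac1{8\pi}\int_{-u^*}^\infty\int_{\mathbb S^2}n(\eta,\omega)\,dS(\omega)\,d\eta$; one checks that the $H$-linear and $H$-quadratic pieces not already in this combination drop out (by tracelessness, by vanishing angular average, or by being $o(1)$ after multiplication by $r(S)$). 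This gives $M_{AH}(u^*)=M-\tfrac1{8\pi}\int_{-u^*}^\infty\int_{\mathbb S^2}n$. The mass-loss law then follows by differentiating in $u^*$ (fundamental theorem of calculus plus continuity of $\widetilde q\mapsto\int_{\mathbb S^2}n(\widetilde q,\omega)\,dS(\omega)$); $M_{AH}(u^*)\to M$ as $u^*\to-\infty$ from integrability of $n$ in $\widetilde q$; and $M_{AH}(u^*)\to0$ as $u^*\to+\infty$ from $\tfrac1{8\pi}\int_{-\infty}^\infty\int_{\mathbb S^2}n=M$ (total radiated energy equals the ADM mass, as in \cite{Lind17}; equivalently by matching with Theorem \ref{thm:1}, whose $M_T^0$ has the same expression).

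The crux is the third step: controlling the expansion of $\tr\chi\,\tr\underline\chi$ to order $1/r^3$ with error terms that still vanish after multiplication by $r(S)\sim r$, and — above all — showing that the logarithmic-in-$r$ terms, present in $h_{\underline L\underline L}$ because $K$ has a logarithmic singularity at the light cone, cancel in the combination entering the Hawking mass, so that the limit is finite at all. This cancellation, which uses the precise coefficients of \cite{Lind17} together with the wave-coordinate condition, is exactly what forces the answer to be $M-\tfrac1{8\pi}\int_{-u^*}^\infty\int n$ rather than something divergent. Constructing the null normals with the $\tr\chi+\tr\underline\chi=0$ normalization and disposing of the quadratic-in-$H$ contributions are the secondary technical points.
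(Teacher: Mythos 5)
Your overall strategy is the same as the paper's: compute $\tr\chi$, $\tr\uchi$ and their traceless parts together with the area element directly from the asymptotics of \cite{Lind17} in the modified asymptotically Schwarzschild null frame and pass to the limit, with the radiated energy coming from $r^2|\hat{\uchi}|^2\sim r^2(\pa_{q^*}\hat h^1)_{AB}(\pa_{q^*}\hat h^1)^{AB}\to 2n(-u^*,\omega)$ and the mass coming from the $\Lb\Lb$-component, whose backscattering part carries $\int_{-u^*}^{\infty}n\,d\eta$. You also correctly isolate the crux: $r\,\widehat h_1^{v^*v^*}$ by itself diverges logarithmically, and only the combination $r\,\pa_{\Ls}\widehat{h}_1^{v^*v^*}+\widehat{h}_1^{v^*v^*}\approx\pa_{\Ls}\bigl(r^*\widehat{h}_1^{v^*v^*}\bigr)$ has a finite limit, equal to $-\tfrac{2}{r}\int_{-u^*}^{\infty}n\,d\eta$ to leading order.

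The gap is in your accounting of where that saving term comes from. You assert that the $u^*$-dependence enters $\tr\uchi$ only through $h_{\Lb\Lb}$ via the $\tfrac14 h_{\Lb\Lb}L$ correction to $\Lbh$. That correction to the null normal produces only the $O(\widehat h_1^{v^*v^*}/r)$ contributions; it does not produce the term $\tfrac12\pa_{\Ls}\widehat h_1^{v^*v^*}$, which is precisely the one that cancels the logarithmic divergence of $r\,\widehat h_1^{v^*v^*}$. In the paper that term arises from pieces you have relegated to the ones that are supposed to ``drop out'': the derivative of the area element along $\Lbh$, i.e.\ $\tfrac12\Lbs(\slashed{\tr}h^1)$, together with the angular commutator contribution $-\widehat{\slashed{\nabla}}_c\widehat h_1^{v^*c}$; these are then traded for $\tfrac12\pa_{\Ls}\widehat h_1^{v^*v^*}+\widehat h_1^{v^*v^*}/r+\dots$ using the contracted wave coordinate condition \eqref{eq:wccLb} of Proposition \ref{prop:wccLb}. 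If one follows your route literally, the combination does not close and $r(S)\int_S\tr\chi\,\tr\uchi\,dS$ diverges like $\ln r$, so the mechanism you describe would fail even though you correctly anticipate that a cancellation must occur. A secondary point: establishing $r^{-2}\slashed g=\sigma+O(1/r)$ is not the roundness statement the definition requires; the paper verifies $r(S)^2K\to1$ via the Gauss equation and a computation of the intrinsic curvature (two angular derivatives of $h^1$), which your argument does not supply.
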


\subsection{The Bondi-Sachs coordinates}
The definition of Bondi mass introduced in 1962 in \cite{BMM62, S62} requires the existence of the so called Bondi-Sachs coordinates. In this manuscript we will construct the Bondi-Sachs coordinates $\overline{y}^p=(u, \overline{r}, \overline{y}^3, \overline{y}^4)$ under which we denote the solution to \eqref{eq:EisnteinWave} by $\overline{g}$. The Bondi-Sachs coordinates  $\overline{y}^p =(u,\overline{r},\overline{y}^3, \overline{y}^4)$ are based on a family of outgoing
null hypersurfaces $\overline{y}^1=u=const$. The two angular coordinates $\overline{y}^a$, $(a,b,c,...=3,4)$, are constant along the null rays, i.e. $g^{\alpha\beta}\pa_\beta u\pa_\alpha \overline{y}^a\!\!=\!0$.  The coordinate $\overline{y}^2 =\overline{r}$, which varies along the null rays,
is chosen to be an areal coordinate such that
$\det [\overline{g}_{ab}] = \overline{r}^4 \mathfrak{q}$, where $\mathfrak{q}(\overline{y}^a)$ is the determinant of the unit sphere metric $\overline{q}_{ab}$
associated with the angular coordinates $\overline{y}^a$. In these coordinates, the metric takes the Bondi-Sachs form (see Proposition \ref{prop:BSmetric})
\[
\overline{g}_{pq}d\overline{y}^p d\overline{y}^q=-\frac{V}{\overline{r}}e^{2\beta} du^2-2 e^{2\beta}dud\overline{r} +\overline{r}^2h_{ab}\Big(d\overline{y}^a-U^adu\Big)\Big(d\overline{y}^b-U^bdu\Big).
\]

\subsection{The Bondi mass and radiated energy}
Once we write the metric in the Bondi-Sachs form as above, following \cite{MW16} we define the mass aspect $M_A$ and news tensor $N_{ab}$ as follows
\begin{align*}
	M_A(u, \overline{y}^a)&:=-\lim_{\overline{r}\to\infty}\Bigl(V(u, \overline{r}, \overline{y}^a)-\overline{r}\Bigr),\\
	N_{ab}(u, \overline{y}^c)&:=\frac 12\pa_uC_{ab}(u, \overline{y}^c)\quad\text{where}\quad C_{ab}(u, \overline{y}^c):=\lim_{\overline{r}\to\infty}\overline{r}(h_{ab}(u, \overline{r}, \overline{y}^c)-\overline{q}_{ab}(\overline{y}^c)).
\end{align*}
The Bondi mass $M_B$\footnote{The Bondi energy-momentum vector for the outgoing null hypersurfaces $u=const$ is defined as \cite{BMM62, S62, CJM98, BB12} the average of the Bondi mass aspect $M_A$ over the unit round sphere weighted by a vector $N^\alpha=(1, N^i )$ where $N^i (1\leq i\leq3)$ are the $l=1$ spherical harmonics. That is, $N^i=(\sin\theta\cos\varphi, \sin\theta\sin\varphi, \cos\theta)$ in the natural spherical coordinates $(\theta, \varphi)$ for a unit round sphere. More specifically, the Bondi energy-momentum vector is defined by
\[
M_{B}^\alpha(u)=\frac{1}{4\pi}\int_{\mathbb{S}^2}M_A(u, \overline{y}^a)N^\alpha d\overline{S}(\overline{y}^a).
\] The time component $M_B^0$ is referred as the Bondi energy in \cite{B04, Sauter08, BB12} and the Bondi mass in \cite{BMM62, S62, CJM98} respectively. In this manuscript we adopt the latter definition.} radiated energy $E_B$ are defined by
\[
M_B(u)=\frac{1}{4\pi}\int_{\BS^2}M_A(u,\overline{y}^a)d\overline{S}(\overline{y}^a)\quad\text{and}\quad
E_B(u)=\frac{1}{4\pi}\int_{\BS^2}\!|N|^2\,d\overline{S}(\overline{y}^a).
\]
where $d\overline{S}(\overline{y}^a)\!=\!\!\sqrt{\mathfrak{q}(\overline{y}^a)}d\overline{y}^3d\overline{y}^4$ is the volume form associated to the unit sphere metric $\overline{q}_{ab}$ and $\lvert N\rvert^2\!\!=\!\overline{q}^{ac}\overline{q}^{bd}N_{ab}N_{cd}$.\!
We will prove the existence of $M_B(u)$ and $E_B(u)$ and the Bondi mass loss law in subsection \ref{subsec:7.2}
\begin{theorem}\label{thm:3}
	Let $M_A, N_{ab}, M_B, E$ be defined as above, then we have
	\[ M_B(u)=M-\frac{1}{8\pi}\int_{\BS^2}\!\int_{-u}^\infty\!n(\eta,\overline{y}^a)\,d\eta d\overline{S}(\overline{y}^a).
	\]
	The radiated energy is expressed as
	\[	E_B(u)=\frac{1}{8\pi}\int_{\BS^2}\!n(-u,\overline{y}^a)\,d\overline{S}(\overline{y}^a).
	\]
	They satisfy the Bondi mass loss law
	\[
		\frac{d}{du}M_B(u)=-E_B(u).
	\]
	Moreover, $M_B(u)\!\to \!M$ as $u\!\to\!\!-\infty$ where $M$ is the ADM mass and $M_B(u)\!\to\!0$ as $u\to\infty$.
\end{theorem}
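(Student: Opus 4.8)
\emph{Proof strategy.} The plan is to run the argument entirely at null infinity in the Bondi--Sachs coordinates $\overline{y}^p=(u,\overline{r},\overline{y}^3,\overline{y}^4)$ of the preceding subsection, in which the metric takes the Bondi--Sachs form of Proposition~\ref{prop:BSmetric}: I would expand the coefficients $V,\beta,h_{ab},U^a$ to two orders in $\overline{r}^{-1}$ in terms of the asymptotics $h_{\mu\nu}(t,r\omega)\sim H_{\mu\nu}(\widetilde{r}-t,\omega)/(t+r)+K_{\mu\nu}\big(\tfrac{t+\widetilde{r}}{|\widetilde{r}-t|+1},\omega\big)/(t+r)$ from \cite{Lind17}, read off $M_A$, $C_{ab}$, $N_{ab}$, and average over $\BS^2$. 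The first task is to make the change of variables explicit, from the modified coordinates $\widetilde{x}=\widetilde{r}\omega$ (equivalently the modified Schwarzschild null coordinates $\widehat{y}^p$) to $\overline{y}^p$. It is pinned down by: (i) the hypersurfaces $u=\mathrm{const}$ being level sets of a solution of the eikonal equation~\eqref{eq:eikonal}, so $u\to t-r^{*}$ as $\overline{r}\to\infty$ and $u$ coincides in the limit with the coordinate $u^{*}$ of Theorem~\ref{thm:2}; (ii) the $\overline{y}^a$ being constant along the null generators and converging to the angles $\omega$; and (iii) the areal normalization $\det[\overline{g}_{ab}]=\overline{r}^{4}\mathfrak{q}$, which determines $\overline{r}$ in terms of $r$ through the induced metric on the spheres, hence through the angular part of $h$, on top of the $M\ln r$ already built into $\widetilde{r}$.

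Substituting the asymptotics of $h$ into the transformed metric and collecting powers of $\overline{r}^{-1}$ along a fixed slice $u=\mathrm{const}$, on which $\widetilde{q}=\widetilde{r}-t=-u+o(1)$ is essentially frozen while $\overline{r}\to\infty$, one extracts the needed pieces. Since $L=(1,\omega)$ has no angular components, the backscattering term $K_{\mu\nu}\sim L_\mu L_\nu(\cdots)$ drops out of $h_{ab}$, so the angular part of $h$ is carried by $H$ alone; combined with $\det[h_{ab}]=\mathfrak{q}$ this makes the $\overline{r}^{-1}$ coefficient of $h_{ab}$ trace-free with respect to $\overline{q}_{ab}$ and forces $h_{ab}\to\overline{q}_{ab}$, so the spheres $S_{u^{*},r}$ converge to a round metric, as is required in the definition of $M_{AH}$ and emphasized in the Remark after Theorem~\ref{thm:2}. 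One then obtains $\beta=O(\overline{r}^{-2})$, $U^a=O(\overline{r}^{-2})$, the shear $C_{ab}=\lim_{\overline{r}\to\infty}\overline{r}(h_{ab}-\overline{q}_{ab})$ equal to a fixed multiple of the trace-free part of $H_{ab}(-u,\omega)$, and, since $\overline{g}_{uu}=-\tfrac{V}{\overline{r}}e^{2\beta}+\overline{r}^{2}h_{ab}U^aU^b=-\tfrac{V}{\overline{r}}+O(\overline{r}^{-2})$, the mass aspect $M_A=-\lim_{\overline{r}\to\infty}(V-\overline{r})$ as the $\overline{r}^{-1}$ coefficient of $\overline{g}_{uu}$ (equivalently, via the Bondi hypersurface equation for $\partial_{\overline{r}}V$, as the integration datum of $V$ at infinity). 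This coefficient carries the constant $M$ (the ADM mass of the data~\eqref{eq:asymptoticallyflatdata}) together with a radiative correction governed by $H$ and by $K$; here it is crucial that the logarithmic-in-$\overline{r}$ terms that $K$ would otherwise produce in $\overline{g}_{uu}$ are absorbed by the Bondi change of coordinates, so that $V-\overline{r}$ has a finite limit at all --- the analogue, pushed to the order needed for the mass, of the $M\ln r$ shift already present for the Schwarzschild metric in harmonic coordinates.

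Feeding the expansion of $V$ into $M_A$ and using the wave-zone formula for $K$ in terms of $n$ recorded before~\eqref{eq:Newsfucntion}, the $\BS^2$-average of $M_A$ separates into $M$ and the term $-\tfrac1{8\pi}\int_{\BS^2}\int_{-u}^{\infty}n(\eta,\overline{y}^a)\,d\eta\,d\overline{S}$, giving the asserted formula for $M_B(u)$. Likewise $N_{ab}=\tfrac12\partial_uC_{ab}$ is a multiple of $\partial_{\widetilde{q}}$ of the trace-free part of $H_{ab}$ at $\widetilde{q}=-u$, and a direct computation with the definition of $P_{\mathcal{S}}$ in~\eqref{eq:Newsfucntion} identifies $|N|^2=\overline{q}^{ac}\overline{q}^{bd}N_{ab}N_{cd}$ with $\tfrac12 n(-u,\overline{y}^a)$, so that $E_B(u)=\tfrac1{4\pi}\int_{\BS^2}|N|^2\,d\overline{S}=\tfrac1{8\pi}\int_{\BS^2}n(-u,\overline{y}^a)\,d\overline{S}$. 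The Bondi mass loss law then follows by differentiating the formula for $M_B$, using $\tfrac{d}{du}\int_{-u}^{\infty}n(\eta,\cdot)\,d\eta=n(-u,\cdot)$; equivalently it follows from the Bondi--Sachs evolution equation $\partial_uM_A=-\tfrac18|N|^2+(\text{angular divergence})$, whose $\BS^2$-integral kills the divergence term.

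Finally the limits. Since $n=-P_{\mathcal{S}}(\partial_{\widetilde{q}}H,\partial_{\widetilde{q}}H)$ with $|H(\widetilde{q},\omega)|\lesssim\varepsilon(1+|\widetilde{q}|)^{-\gamma'}$ and the corresponding bound on $\partial_{\widetilde{q}}H$ from \cite{Lind17}, the function $n(\cdot,\omega)$ is integrable near $+\infty$, so $\int_{-u}^{\infty}n\to0$ as $u\to-\infty$ and $M_B(u)\to M$; and as $u\to+\infty$, $\tfrac1{8\pi}\int_{\BS^2}\int_{-u}^{\infty}n\to\tfrac1{8\pi}\int_{\BS^2}\int_{-\infty}^{\infty}n=M$, the total radiated energy being the ADM mass --- which is exactly the $u^{*}\to+\infty$ statement of Theorems~\ref{thm:1} and~\ref{thm:2} --- so $M_B(u)\to0$. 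I expect the main obstacle to be the first step together with these logarithmic cancellations: producing the Bondi--Sachs coefficients to sub-leading order with enough uniformity to interchange the $\overline{r}\to\infty$ limit with the coordinate change, and in particular checking that the construction of $\overline{y}^p$ absorbs the $\overline{r}^{-1}\ln\overline{r}$ contributions of the backscattering term $K$ to $\overline{g}_{uu}$, so that $M_A$ is finite, while delivering $h_{ab}\to\overline{q}_{ab}$ rather than some other unit-determinant limit. Once that is in hand the remainder is bookkeeping, and consistency with Theorems~\ref{thm:1} and~\ref{thm:2}, and with the general principle of \cite{BC17,Sauter08} that the Hawking mass of the round Bondi spheres recovers the Bondi mass, serves as a check.
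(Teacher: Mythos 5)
Your strategy coincides with the paper's: build the Bondi--Sachs coordinates on top of the eikonal function, expand $V$ and $h_{ab}$, read off $M_A$, $C_{ab}$, $N_{ab}$, and integrate over the sphere. Your endgame also matches subsection \ref{subsec:7.2} exactly --- the trace-freeness of $C_{ab}$ forced by the areal normalization, the identification $|N|^2=\tfrac12 n(-u,\overline{y}^a)$ via \eqref{eq:NewsfunctionM}, the angular divergence $\overline{\slashed{\nabla}}_c\overline{\slashed{\nabla}}_d\widehat{H}^{cd}$ integrating to zero, the mass loss law by direct differentiation, and the limits via integrability of $n$ and Proposition \ref{prop:totalmassloss}. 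You have also correctly located the danger point, namely the $\overline{r}^{-1}\ln\overline{r}$ contribution of the backscattering term to $\overline{g}_{uu}$.

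However, the step you defer as ``bookkeeping plus uniformity'' is where the actual content lies, and your proposal is missing the idea that closes it: the wave coordinate condition. The mass aspect sits at order $\overline{r}^{-1}$ in $\overline{g}_{uu}$, and $V$ is essentially $-\overline{r}\,\pa_{\Ls}\overline{r}\,\pa_{\Lbs}\overline{r}$, where the areal radius \eqref{eq:expforr} contains the term $-\tfrac14 r\,(r^2\widehat{q}_{ab}\widehat{h}_1^{ab})$. The pointwise asymptotics and the sharp decay estimates (Remark \ref{rem:sharpmetricdecay}) only give $|\pa_{\Lbs}(r^2\widehat{q}_{ab}\widehat{h}_1^{ab})|\les \varepsilon r^{-1-\gamma'}$, which after multiplication by $\tfrac14 r\cdot\overline{r}$ leaves an $O(r^{1-\gamma'})$ ambiguity in $V-\overline{r}$ --- divergent, so the limit defining $M_A$ cannot even be shown to exist this way. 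The paper's resolution is Proposition \ref{prop:wccLb} (the $\Lbs$-contraction of the wave coordinate condition in the modified null frame), which converts $\pa_{\Lbs}(\slashed{\tr}h^1)$ into $\pa_{\Ls}\widehat{h}_1^{v^*v^*}+2\widehat{h}_1^{v^*v^*}/r+\cdots$ up to $O(\varepsilon r^{-2-\sigma})$; this is what produces, via \eqref{eq:badderiofr}, the combination $\tfrac14\overline{r}^2\pa_{\Ls}\widehat{h}_1^{v^*v^*}+\tfrac14\overline{r}\,\widehat{h}_1^{v^*v^*}$ in \eqref{eq:expofVexpofhab} whose logarithms cancel and whose finite part is $\tfrac12\int_{-u}^{\infty}n\,d\eta$. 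The other two contractions \eqref{eq:wcc_L} and \eqref{eq:wcc:Y} are likewise what give the refined derivatives of the eikonal function and of the angular coordinates (Propositions \ref{prop:badderiofu} and \ref{prop:badderihigh}) entering the Jacobian. So the gap is not one of uniformity in interchanging limits but a missing structural input: without the harmonic gauge identities the subleading expansion of $V$ cannot be computed. (A minor aside: the Bondi--Sachs evolution-equation alternative you mention for the mass loss law is unnecessary here and its normalization would need rechecking; the direct differentiation of the explicit formula suffices, as in the paper.)
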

\begin{remark}
According to Theorem \ref{thm:1}, \ref{thm:2} and \ref{thm:3}, we see that the masses and radiated energies at null infinity defined in the above three ways are equivalent. We also note that these three different ways rely on the same collections of the asymptotics in \cite{Lind17}. 
In \cite{Lind17} it was shown that $\int_{-\infty}^\infty\int_{\BS^2}n(\eta, \omega)\,dS(\omega)d\eta/8\pi=M$, and we can conclude that the total radiated energy is equal to the ADM mass. In particular, this implies that if $n=0$ then $M=0$, and then by the positive mass theorem \cite{SY79, W81} the spacetime is Minkowski space.
\end{remark}

\subsection*{Acknowledgments} We would like to thank Igor Rodnianski for many important discussions and
initial collaboration.
 We would also like to thank Mihalis Dafermos and Volker Schlue
  for useful discussions. H. L. was supported in part by Simons Foundation Collaboration Grant 638955.

\section{The metric in modified asymptotically Schwarzschild null coordinates}
In this section we introduce the modified asymptotically Schwarzschild null coordinates and review some results concerning the asymptotics of the metric established in \cite{Lind17}.
\subsection{Modified asymptotically Schwarzschild null coordinates}\label{subsec:2.2}
Suppose $g_{\alpha\beta}=m_{\alpha\beta}+h^0_{\alpha\beta}+h^1_{\alpha\beta}$ where $h^0_{\alpha\beta}=\frac{M}{r}\delta_{\alpha\beta}\chi(\frac{r}{1+t})$ and $\chi(s)=1$ when $s\geq 1/2$ and $0$ when $s\leq1/4$. Then the inverse metric $g^{\alpha\beta}=m^{\alpha\beta}+h_0^{\alpha\beta}+h_1^{\alpha\beta}$ where $h_0^{\alpha\beta}=-\frac M r\delta^{\alpha\beta}\chi(\frac{r}{1+t})$ and $h^{\alpha\beta}_1=-m^{\alpha\mu}h^1_{\mu\nu}m^{\nu\beta}+O(h^2)$.

We introduce the modified asymptotically Schwarzschild null coordinates $\widehat{y}^p=(v^*=t+r^*, u^*=t-r^*, \widehat{y}^3, \widehat{y}^4)$. Here we let $r=|x|$, $\omega=\frac xr\in\BS^2$ and $\widehat{y}^a=(\widehat{y}^3, \widehat{y}^4)$ be local coordinates on $\BS^2$ and define $r^*=r+M\ln r+O(M/r)$ which is slightly different from $\widetilde{r}=r+M\ln r$ by solving
\[
\frac{dr^*}{dr}=\rho'(r)=\Big(\frac{1+M/r}{1-M/r}\Big)^{1/2}=1+\frac{M}{r}+O(M^2/r^2).
\]
In what follows indices $\widehat{y}^p\!,\widehat{y}^q\!,\dots$ will stand for all the modified asymptotically Schwarzschild null coordinates whereas $\widehat{y}^a\!,\widehat{y}^b\!\!,\dots$ stand for the coordinates on the sphere only. We will now calculate the changes of variables
$\pa_{\widehat{y}^p}\!=\widehat{A}_p^\mu\pa_\mu$ and $\pa_\mu=\widehat{A}_\mu^p\pa_{\widehat{y}^p}$.
We define $\Ls\!\!=\!\Ls^{\mu}\pa_\mu=\pa_t+\pa_{r^*}$, $\Lbs\!\!=\!\Lbs^{\mu}\pa_\mu=\pa_t\!-\!\pa_{r^*}$, $\Ls_{\!\!\mu}=-\pa_\mu u^*$ and $\Lbs_{\!\!\mu}\!=-\pa_\mu v^*$, then we have
\[
\pa_{v^*}=\frac{1}{2}\Ls=\frac12(\pa_t+(\omega^i/\rho')\pa_i),\quad \pa_{u^*}=\frac12\Lbs=\frac12(\pa_t-(\omega^i/\rho')\pa_i),\quad \pa_{\widehat{y}^a}=\widehat{A}_a^\mu\pa_\mu,
\]
where $\omega=x/r$, $x=r\omega$,
and
\[
\pa_\mu=-\frac12\Ls_{\!\!\mu}\pa_{\Lbs}-\frac12\Lbs_{\!\!\mu}\pa_{\Ls}+\widehat{X}_\mu^a\pa_{\widehat{x}^a}=-\Ls_{\!\!\mu}\pa_{u^*}-\Lbs_{\!\!\mu}\pa_{v^*}+\widehat{A}_\mu^a\pa_{\widehat{y}^a}.
\]
Here we have $\widehat{A}_a^\mu \widehat{A}_\mu^b=\delta_a^b$ and $\widehat{A}_a^\mu=O(r), \widehat{A}_\mu^a=O(1/r)$.

Under the coordinates $(v^*, u^*, \widehat{y}^3, \widehat{y}^4)$, we denote the corresponding metric by $\widehat{g}$. The inverse of the metric then satisfies
$\widehat{g}^{pq}=g^{\alpha\beta}\widehat{A}^p_\alpha \widehat{A}^q_\beta$. We have in the region $r>t/2$
\begin{gather*}
	\widehat{g}^{v^*\! v^*}\!\!\! =\widehat{h}_1^{v^*\! v^*}\!\!\!, \quad\widehat{g}^{v^*\! u^*}\!\!\! =-2(1\!+\!\frac{M\!}{r})+\widehat{h}_1^{v^*\! u^*}\!\!\!,\quad \widehat{g}^{v^*\! a}\!=\widehat{h}_1^{v^*\! a}\!\!,\quad \widehat{g}^{u^*\! u^*}\!\!\! =\widehat{h}_1^{u^*\! u^*}\!\!\!,\quad
	\widehat{g}^{u^*\! a}\!=\widehat{h}_1^{u^*\! a}\!\!,\quad
	\widehat{g}^{ab}\!=(1\!-\!\frac {M\!}{r})\frac{1}{r^2}\widehat{q}^{\,ab}\!+\widehat{h}_1^{ab}\!.
\end{gather*}
where $\widehat{q}_{ab}$ is the unit sphere metric on $\BS^2$ associated with the angular coordinates $(\widehat{y}^3\! , \widehat{y}^4)$ and $\widehat{q}^{\,ac}\widehat{q}_{cb}\! =\! \delta^a_b$. In fact this follows from decomposing the leading part $m\! +\! h_0$ into a time part, a radial part and a sphere part
\begin{equation}
	\big(m^{\alpha\beta}+h_0^{\alpha\beta}\big)\xi_\alpha\eta_\beta
	=-(1+\frac Mr)\xi_0 \eta_0+(1-\frac Mr)\omega^i\omega^j\xi_i\eta_j
	+(1-\frac Mr)\big(\delta^{ij}-\omega^i\omega^j\big)\xi_i\eta_j ,
\end{equation}
and composing with the change of variables in the radial part.

\subsection{Asymptotics result}

Let $(\widetilde{t},\widetilde{x})$ be the asymptotically Schwarzschild coordinates with
$$
\widetilde{t}=t,\qquad \widetilde{x}^i=\widetilde{r} \omega^i,\qquad
\text{where}\quad \omega^i= {x^i}/{r},\qquad \widetilde{r}=r+M\ln{r},\quad
r=|\,x|.
$$ We write $\widetilde{h}^0_{\alpha\beta}+\widetilde{h}^1_{\alpha\beta}=h^0_{\alpha\beta}+h^1_{\alpha\beta}$ where $\widetilde{h}^0_{\alpha\beta}=\chi(\frac{\widetilde{r}}{1+t})\frac{M}{\widetilde{r}}\delta_{\alpha\beta}$ and $\widetilde{q}=\widetilde{r}-t$. We now restate several propositions from \cite{Lind17}, which will be used frequently in this manuscript. The first proposition is the sharp decay estimates in asymtotically Schwarzschild coordinates
\begin{prop}[{\cite[Proposition 17]{Lind17}}]\label{prop:sharpmetricdecay}
	For $|I|\!\leq\! N\!-6$ with
	$\gamma^\prime\!=\gamma-C\varepsilon$, $\widetilde{q}\!=\!\widetilde{r}\!-\!t$ and
$\langle \widetilde{q}\rangle\!=\!\sqrt{1\!+\widetilde{q}^2}$ we have
	\beqs\label{eq:metricdecaysharp}
	|{\widetilde{Z}}{}^I \widetilde{h}^{1}|\les \frac{\varepsilon^2 S^0(t,\widetilde{r})}{(1\!+\!t\!+\!\widetilde{r})(1\!+\!\widetilde{q}_+)^{1-C\varepsilon}}
	+\frac{\varepsilon}{1\!+\!t\!+\!\widetilde{r}}   \frac{1}{(1\!+\!|\,\widetilde{q}|)^{\gamma^\prime}},\quad\text{where}\quad
S^0(t,\widetilde{r})
	=\frac{t}{\widetilde{r}}\,\ln{\!\Big(\frac{\langle \,t\!+\widetilde{r}\,\rangle}{\langle \,t\!-\widetilde{r}\rangle}\Big)}
	\les
	\frac{1}{\varepsilon}
	\Big(\frac{\langle\, t\!+\widetilde{r} \rangle}{\langle \,t\!-\widetilde{r}\rangle}\Big)^{\varepsilon}\!\!.
	\eqs
	For $\widetilde{r}\geq \!t_{\!}/2 \!$ we have
	\begin{align}
		\label{eq:metricdecaysharptan}
		|\widetilde{Z}^{I} \widetilde{h}^{1}_{TU}|
		&\les \frac{\varepsilon}{(1+t+\widetilde{r})(1+\widetilde{q}_+)^{\gamma^\prime}},\\
		|\pa \widetilde{Z}^{I} \widetilde{h}^{1}_{LT}|\!+|\pa \widetilde{Z}^{I\!} \delta^{AB}\widetilde{h}^{1}_{AB}|
		&\les \frac{\varepsilon}{(1+t+\widetilde{r})^{2-\varepsilon}
			(1+|\,\widetilde{q}|)^{\varepsilon}(1+\widetilde{q}_+)^{\gamma^\prime}},\label{eq:sharpwsvecoordder}\\
		|\widetilde{Z}^{I} \widetilde{h}^{1}_{LT}|\!+|\widetilde{Z}^{I} \delta^{AB} \widetilde{h}^{1}_{AB}|
		&\les \frac{\varepsilon}{(1_{\!}\!+t\!+\widetilde{r})^{1+\gamma^\prime}\!\!}
		+\!\frac{\varepsilon}{1_{\!}\!+t\!+\widetilde{r}\!\!}\,
		\Big(\frac{1\!+\widetilde{q}_-\!}{1_{\!}\!+t\!+\widetilde{r}\!}\Big)^{\!\!1-\varepsilon}\!\!\!,\!\!\!
		\label{eq:sharpwsvecoordfunc}
	\end{align}
where $\widetilde{q}_+\! =\! \max\{0, \widetilde{q}\}$, $\widetilde{q}_-\! =\! \max\{0, -\widetilde{q}\}$. Here $\widetilde{h}^1_{UV}\! =\! \widetilde{h}^1_{\alpha\beta}U_{\!\alpha} V_\beta$ where $U_{\!\alpha} \! =\! m_{\alpha\beta}U^\beta\!\!$ and $U\!, \!V\!\!  \in\!\! \{L, \Lb\,, A, B\}$, the null frame associated to the coordinates $(t, x)$, and $\widetilde{Z}^{I}\!$ stands for a product of $|I|\!$ of the vector fields
\begin{equation}\label{eq:vectorfields}
	\{\pa_{\widetilde{x}^{\alpha}}, ~\widetilde{x}^{i}\pa_{\widetilde{x}^{j}}-\widetilde{x}^{i}\pa_{\widetilde{x}^{j}}, ~\widetilde{x}^{i}\pa_{\widetilde{t}}+\widetilde{t}\pa_{\widetilde{x}^{i}}, ~\widetilde{S}
=\widetilde{t}\pa_{\widetilde{t}}+\widetilde{x}^{i}\pa_{\widetilde{x}^{i}}\},
\quad \text{where}\quad \widetilde{x}^{\alpha}=(\widetilde{t}=t, \widetilde{x}^{i})=(t, \widetilde{r}\omega^i).
\end{equation}
\end{prop}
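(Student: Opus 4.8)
This is \cite[Proposition~17]{Lind17}, and the plan is to follow the argument there; I describe its structure. The point of departure is the global existence theorem of Lindblad--Rodnianski \cite{LR05,LR10}: for data as in \eqref{eq:asymptoticallyflatdata} it yields weighted $L^2$ energy estimates of the form
\[
\big\|\, w^{1/2}\,\pa\,\widetilde{Z}^I\widetilde{h}^1\big\|_{L^2(\{t\}\times\R^3)}\les \varepsilon\,(1+t)^{C\varepsilon},\qquad |I|\le N,
\]
where the weight $w=w(q)$, $q=r-t$, grows like $(1+q_+)^{1+2\gamma'}$ in the exterior and permits mild decay in the interior, encoding the weak null condition. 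First I would propagate these by commuting \eqref{eq:EisnteinWave} with the Minkowski vector fields \eqref{eq:vectorfields} and treating $\Boxr_g-\Box=h_1^{\alpha\beta}\pa_\alpha\pa_\beta$ perturbatively.

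Next, the coarse pointwise bounds follow from the weighted Klainerman--Sobolev inequality applied to these energies; the loss of roughly three derivatives is exactly what imposes the restriction $|I|\le N-6$ (the remaining derivatives being spent on second-order quantities and commutators with the curved operator). This already gives $|\widetilde{Z}^I\widetilde{h}^1|\les \varepsilon(1+t+\widetilde{r})^{-1+C\varepsilon}(1+\widetilde{q}_+)^{-\gamma'}$, but not the sharp interior weight $S^0\sim(t/\widetilde{r})\ln(\langle t+\widetilde{r}\rangle/\langle t-\widetilde{r}\rangle)$, which records the logarithmic growth of the backscattered part $K$ of the metric. To produce $S^0$ I would integrate the wave equation for $\widetilde{h}^1$ against the explicit fundamental solution of $\Box$ along the nearly Minkowski outgoing cones, decomposing the source $F_{\mu\nu}(g)(\pa g,\pa g)$ in the null frame: the single dangerous self-interaction $P_{\mathcal S}(\pa_q\widetilde{h},\pa_q\widetilde{h})$ integrates up to a contribution of precisely the size $\varepsilon^2 S^0/(1+t+\widetilde{r})$, while every other product obeys the classical null condition and only feeds the $\varepsilon(1+t+\widetilde{r})^{-1}(1+|\widetilde{q}|)^{-\gamma'}$ term.

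The improved bounds \eqref{eq:metricdecaysharptan}--\eqref{eq:sharpwsvecoordfunc} for the ``good'' components I would obtain from the preserved wave coordinate condition \eqref{eq:WaveCordinateCond}: contracting $\pa_\alpha(\sqrt{|g|}g^{\alpha\beta})=0$ with $L_\beta$ expresses $\pa_q$ of $\widetilde{h}^1_{LT}$ and of $\delta^{AB}\widetilde{h}^1_{AB}$ in terms of tangential ($L$- and $S_A$-) derivatives of $\widetilde{h}^1$, and tangential derivatives gain a factor $\sim(1+|\widetilde{q}|)/(1+t+\widetilde{r})$; integrating in $q$ from data at the cone gives the stated gain, valid in the exterior $\widetilde{r}\ge t/2$ where the solution is genuinely close to Schwarzschild. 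Finally, since $\widetilde{r}=r+M\ln r$ the level sets $\widetilde{q}=\widetilde{r}-t=$ const are the Schwarzschild outgoing cones to leading order and the map $(t,x)\mapsto(\widetilde t,\widetilde x)$ has Jacobian $\mathrm{Id}+O(M/r)$, so the bounds written in $q=r-t$ transfer to the bounds in $\widetilde{q}$ stated above, the discrepancy being absorbed into $\widetilde{h}^1$.

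The main obstacle is the step producing $S^0$: one must single out which piece of the quadratic nonlinearity drives the logarithmic growth in the interior and show that everything else retains null structure. Soft energy arguments cannot separate these, and this is precisely where the fine analysis of the asymptotic system underlying the weak null condition \cite{LR03,LR05,Lind17} is needed.
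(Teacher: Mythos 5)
The paper does not prove this proposition: it is imported verbatim as \cite[Proposition 17]{Lind17}, so there is no internal proof to compare against. Your outline --- weighted energy estimates from \cite{LR05,LR10}, Klainerman--Sobolev for the coarse pointwise decay, the wave coordinate condition \eqref{eq:WaveCordinateCond} for the improved bounds on $\widetilde{h}^1_{LT}$ and $\delta^{AB}\widetilde{h}^1_{AB}$, an $L^\infty$ integration over backward light cones that isolates the non-null interaction $P_{\mathcal{S}}(\partial_q h,\partial_q h)$ as the source of the $S^0$ factor, and the final transfer to the coordinates $\widetilde{x}=\widetilde{r}\omega$ --- is a faithful reconstruction of the strategy actually used in that reference.
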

Here and in what follows we focus on the region $\{\widetilde{r}>t/2\}\cap\{r^*>t/2\}$.
\begin{remark}\label{rem:sharpmetricdecay}
In view of Proposition \ref{prop:sharpmetricdecay} we have the sharp decay estimates for the metric components in modified asymptotically Schwarzschild null coordinates. For $r^*>t/2$, with $q^*=-u^*, 0<\gamma<1$ and $\gamma'=\gamma-C\varepsilon$ we see that for $|I|\leq N-6$
\begin{align}
	|Z^{*I}\hat{h}^{v^*v^*}|&\les \frac{\varepsilon^2}
	{(1+t+r^*)(1+q_+^*)^{1-C\varepsilon}}\ln{\!\Big(\frac{\langle \,t\!+r^*\,\rangle}{\langle \,t\!-r^*\rangle}\Bigr)},\\
	|Z^{*I}\hat{h}_1^{v^*u^*}|+|Z^{*I}(r\hat{h}_1^{v^*a})|&\les \frac{\varepsilon}{(1+t+r^*)(1+q_+^*)^{\gamma^\prime}},\\
	|Z^{*I}\hat{h}_1^{u^*u^*}|+|Z^{*I}(r\hat{h}_1^{u^*a})|+|Z^{*I}(r^2\hat{q}_{ab}\hat{h}_1^{ab})|&\les \frac{\varepsilon}{(1_{\!}\!+t\!+r^*)^{1+\gamma^\prime}\!\!}
	+\!\frac{\varepsilon}{1_{\!}\!+t\!+r^*\!\!}\,
	\Big(\frac{1\!+q_-^*\!}{1_{\!}\!+t\!+r^*\!}\Big)^{\!\!1-\varepsilon}.
\end{align}
Here $Z^{*I}$ stands for a product of $|I|$ of the vector fields
\begin{equation}\label{eq:modifiedvectorfields}
	\{\pa_{x^{*\alpha}}, ~x^{*i}\pa_{x^{*j}}-x^{*i}\pa_{x^{*j}}, ~x^{*i}\pa_{t^*}+t^*\pa_{x^{*i}}, ~t^*\pa_{t^*}+x^{*i}\pa_{x^{*i}}\},
\quad\text{where} \quad x^{*\alpha}=(t^*, x^{*i})=(t, r^*\omega^i).
\end{equation}
\end{remark}
 The second one establishes the estimates for $L$ derivatives
 \begin{prop}[{\cite[Proposition 18]{Lind17}}]\label{prop:specialsharpmetricdecay}
 	With  $\delta_{UV}^{\Lb\Lb}\!\!=\!1$ if $U\!=\!V\!\!=\!\Lb$ and $0$ otherwise and $|I|\leq N-6$ we have
 	\begin{equation}\label{eq:extraLderest1}
 		\frac{1}{r\!} \,|(\partial_{t}+\pa_{\widetilde{r}})(\widetilde{r} \widetilde{Z}{}^I \widetilde{h}^{1}_{UV})|\les
 		\frac{\varepsilon(1+\widetilde{q}_-)^{\gamma-C\varepsilon}}{(1\!+t\!+\widetilde{r})^{2+\gamma-C\varepsilon}}	+ \delta_{UV}^{\Lb\Lb}\,\frac{\varepsilon(1+\widetilde{q}_+)^{-\gamma}}{(1\!+t\!+\widetilde{r})^2}.
 	\end{equation}
 \end{prop}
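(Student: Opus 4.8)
The plan is to obtain this improved ``good--derivative'' estimate from the sharp decay of Proposition~\ref{prop:sharpmetricdecay} together with the reduced Einstein equations~\eqref{eq:EisnteinWave}, exploiting that $L:=\partial_t+\partial_{\widetilde r}$ is tangent, to leading order, to the asymptotically Schwarzschild outgoing cones and hence should annihilate the radiation field $H$ in the asymptotics. Since the left side of the claim equals $r^{-1}|L(\widetilde r\,\widetilde Z^I\widetilde h^1_{UV})|$, it is enough to bound $w:=L(\widetilde r\,\widetilde Z^I\widetilde h^1_{UV})$ by $r$ times the right side. Writing $\underline L:=\partial_t-\partial_{\widetilde r}$ and letting $\Delta_{\mathbb S^2}$ be the unit--sphere Laplacian, I would use the Friedlander--type identity, valid in the $\widetilde x$--coordinates,
\[
\square_m\phi=-\frac1{\widetilde r}\,L\underline L(\widetilde r\,\phi)+\frac1{\widetilde r^2}\,\Delta_{\mathbb S^2}\phi,
\]
with $\square_m$ the Minkowski wave operator. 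Applying it to $\phi=\widetilde Z^I\widetilde h^1_{UV}$, writing $\Boxr_g=\square_m+\widetilde h^{\alpha\beta}\partial_\alpha\partial_\beta$ up to lower--order terms arising because $\widetilde x$ is not harmonic, and commuting~\eqref{eq:EisnteinWave} with $\widetilde Z^I$, one obtains an evolution equation for $w$ in the incoming direction,
\[
\underline L\,w=\frac1{\widetilde r}\,\Delta_{\mathbb S^2}(\widetilde Z^I\widetilde h^1_{UV})+\widetilde r\,\widetilde h^{\alpha\beta}\partial_\alpha\partial_\beta(\widetilde Z^I\widetilde h^1_{UV})-\widetilde r\,\Boxr_g(\widetilde Z^I\widetilde h^1_{UV})+(\text{lower order}),
\]
where $\Boxr_g(\widetilde Z^I\widetilde h^1_{UV})$ is the $\widetilde Z^I$--commuted Einstein nonlinearity $F$, plus contributions of the Schwarzschild part $h^0$ and of commutators.

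I would then integrate this identity in $\underline L$ along the incoming null ray $t+\widetilde r=\text{const}$ through the given point, from its intersection with $\{t=0\}$---which stays in $\{\widetilde r>t/2\}$ and lies far out on the initial slice---down to the point. Along the ray $t+\widetilde r$ is frozen, so all weights $(1+t+\widetilde r)$ in Proposition~\ref{prop:sharpmetricdecay} are constant while only $(1+|\widetilde q\,|)$ and $(1+\widetilde q_\pm)$ vary, and the ray segment has length $\lesssim 1+t+\widetilde r$. The boundary term $w|_{t=0}$ is controlled by the asymptotically flat data~\eqref{eq:asymptoticallyflatdata} and Proposition~\ref{prop:sharpmetricdecay}, and has the asserted size. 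For the integral of the right side: the angular term $\widetilde r^{-1}\Delta_{\mathbb S^2}\widetilde Z^I\widetilde h^1_{UV}$ carries two extra rotation vector fields, hence two extra factors $(1+t+\widetilde r)^{-1}$; integrating the residual $(1+|\widetilde q\,|)^{-\gamma'}$ over the ray, together with $1+\widetilde q_-\lesssim 1+t+\widetilde r$ on $\{\widetilde r>t/2\}$, bounds it by $r$ times the first term on the right of the claim. In the quasilinear term the Schwarzschild piece $\widetilde h_0^{\alpha\beta}\partial_\alpha\partial_\beta$ merely reshapes $\square_m$ into the Schwarzschild reduced wave operator (whose null structure is adapted to $\widetilde q=\text{const}$, which is why $\widetilde r=r+M\ln r$ was chosen), while the genuinely perturbative part $\widetilde h_1^{\alpha\beta}\partial_\alpha\partial_\beta$ has small coefficient; its only dangerous contribution is the one in which \emph{both} $\partial$'s are bad ($\underline L$--)derivatives, whose coefficient is $\widetilde h^1_{LL}$, for which~\eqref{eq:sharpwsvecoordfunc} supplies the improved decay forced by the wave coordinate condition~\eqref{eq:WaveCordinateCond}; this term, and every term carrying at least one good derivative, is then harmless after integration.

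The heart of the matter---and the step I expect to be the main obstacle---is the semilinear term $\widetilde r\,\Boxr_g(\widetilde Z^I\widetilde h^1_{UV})$, i.e.\ $\widetilde r$ times the commuted nonlinearity $F_{UV}$. Here one must use the precise algebraic structure of Einstein's equations in wave coordinates: $F_{\mu\nu}$ splits into genuine null forms, which again gain decay, and the weak--null--condition term $P_{\mathcal S}(\partial_\mu h,\partial_\nu h)$ of~\eqref{eq:Newsfucntion}, whose leading part is proportional to $L_\mu L_\nu$. Consequently the surviving contribution to $F_{UV}$ vanishes unless $U=V=\underline L$---matching exactly the fact that the backscattered field $K_{\mu\nu}$ in the asymptotics is itself of the form $L_\mu L_\nu(\cdots)$---so that $L$ applied to $\widetilde r\,\widetilde h^1_{\Lb\Lb}$ no longer annihilates the leading term and leaves a contribution with no extra $(1+t+\widetilde r)$ decay, carrying only the $(1+\widetilde q_+)^{-\gamma}$ decay inherited from the news $n$ of~\eqref{eq:Newsfucntion}. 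This is precisely the origin of the $\delta^{\Lb\Lb}_{UV}$ term. The remaining technical work---keeping track of the asymptotically Schwarzschild corrections (that $L\widetilde r=1+O(M/r)$ rather than $1$, and the $M\ln r$ shift of the cones), handling commutators, and controlling the light cone $\widetilde q=0$ where both the angular terms and the logarithmic part of $K$ are most singular---is exactly what is carried out in~\cite{Lind17} in the course of proving Proposition~\ref{prop:sharpmetricdecay}; the present statement is thus a structured consequence of that analysis, and is \cite[Proposition~18]{Lind17}.
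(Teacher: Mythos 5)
The paper does not prove this proposition: it is imported verbatim as \cite[Proposition~18]{Lind17}, so there is no in-paper argument to compare your attempt against. Your sketch reproduces the strategy that \cite{Lind17} itself uses --- the identity $\widetilde r\,\square_m\phi=-L\Lb(\widetilde r\phi)+\widetilde r^{-1}\Delta_{\mathbb S^2}\phi$ applied to $\phi=\widetilde Z^I\widetilde h^1_{UV}$, integration of $\Lb w$ along incoming null rays from the initial slice, the wave-coordinate gain \eqref{eq:sharpwsvecoordfunc} on the coefficient $h_{LL}$ of the bad--bad quasilinear term, and the observation that the non-null part $P_{\mathcal S}(\pa_\mu h,\pa_\nu h)\sim L_\mu L_\nu\, n$ of the source survives only for $U=V=\Lb$, which is exactly the origin of the $\delta^{\Lb\Lb}_{UV}$ term --- and the outline is correct. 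As written, however, it is a roadmap rather than a proof: the quantitative bookkeeping (the logarithmic $S^0$ weight in Proposition~\ref{prop:sharpmetricdecay}, the borderline integrals near $\widetilde q=0$, and the commutators with $\widetilde Z^I$) is explicitly deferred back to \cite{Lind17}, just as the paper itself defers the entire statement.
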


\begin{remark}\label{rem:specialsharpmetricdecay}
	Correspondingly in modified asymptotically Schwarzschild null coordinates we have
	\begin{equation}
		\frac{1}{r\!} \,|\partial_{\Ls}(r^* {Z^*}{}^I \widehat{h}_1^{v^*u^*})|\les
		\frac{\varepsilon(1+q^*_-)^{\gamma-C\varepsilon}}{(1\!+t\!+r^*)^{2+\gamma-C\varepsilon}}\quad\text{for}\quad |I|\leq N-6.
		\label{eq:extraLderest2}
		\end{equation}
\end{remark}

The third one provides us with the asymtotics for the metric in asymptotically Scwarzschild coordinates
\begin{prop}[{\cite[Proposition 20, 22]{Lind17}}]	\label{prop:lim}
Let $H^1_{TU}(\widetilde{q},\omega,\widetilde{r})=\widetilde{r} h^1_{TU}(\widetilde{r}-\widetilde{q}, \widetilde{r}\omega)$,then the limit
\beqs
H^{1\infty}_{TU}(\widetilde{q},\omega)
=\lim_{\widetilde{r}\to\infty}H^{1}_{TU}(\widetilde{q},\omega,\widetilde{r}),
\eqs
exists and satisfies $H_{TU}^{1\infty}\!=\!H_{UT}^{1\infty}$,
and
$
H^{1\infty}_{LT}(\widetilde{q},\omega)\!=\!\delta^{AB}H^{1\infty}_{AB}
\textbf{}(\widetilde{q},\omega)\!=\!0.
$
Moreover, for  $|\alpha|+k\leq N-6$ and $|J|+|K|=k$ and $\widetilde{r}>t/2$
\begin{align*}
	\big|\,\pa_\omega^\alpha \big((1+|\,\widetilde{q}|)\pa_{\widetilde{q}}\big)^k
	H^{1\infty}_{TU}(\widetilde{q},\omega)\big| &\les \varepsilon
	(1+\widetilde{q}_+)^{-\gamma^\prime},\\
	\big|\pa_\omega^\alpha \widetilde{S}^J\pa_t^KH^{1}_{TU}(\widetilde{q},\omega,\widetilde{r})
	-\pa_\omega^\alpha(\widetilde{q}\,\pa_{\widetilde{q}})^J(-\pa_{\widetilde{q}})^KH^{1\infty}_{TU}(\widetilde{q},\omega)\big|
	&\les\varepsilon \Big(\frac{1+\widetilde{q}_-}{1+t+\widetilde{r}}\Big)^{\gamma^\prime}.
\end{align*}
Let
\begin{equation}\label{eq:NewsfunctionA}
 n(\widetilde{q},\omega)=\tfrac{1}{2}\delta^{CD}\delta^{C^\prime
	D^\prime} V^\infty_{CC^\prime}(\widetilde{q},\omega)
V^\infty_{DD^\prime}(\widetilde{q},\omega)\quad\text{where}\quad V^\infty_{TU}(\widetilde{q}, \omega)=\pa_{\widetilde{q}}H^{1\infty}_{TU}(\widetilde{q}, \omega),
\end{equation}
 for the component $h^1_{\Lb\Lb}(t, \widetilde{r}\omega)$ we have when $\widetilde{r}\gg1$
\[
h^1_{\Lb\Lb}(t, \widetilde{r}\omega)=2\frac{M}{\widetilde{r}}(\chi^e(\widetilde{q})-1)+\int_{\widetilde{r}-t}^\infty
\frac{2}{\widetilde{r}}\ln{\Big(\frac{t+\widetilde{r}+\eta}{t-\widetilde{r}+\eta}\Big)}
n\big(\eta,\omega\big)\, d \eta+\frac{H^1_{\Lb\Lb}(\widetilde{q} ,\omega)}{\widetilde{r}}+\widetilde{\mathcal{R}}.
\]
Here $\chi^e(s)=1$ when $s\geq2$ and $\chi^e(s)=0$ when $s\leq1$, and  for $|\alpha|+k=|I|\leq N-7$ we see that
 \[	\big|\,\pa_\omega^\alpha \big((1+|\,\widetilde{q}|)\pa_{\widetilde{q}}\big)^k
H^{1\infty}_{\Lb\Lb}(\widetilde{q},\omega)\big| \les \varepsilon
(1+\widetilde{q}_+)^{-\gamma^\prime},\qquad|\widetilde{Z}^I\widetilde{\mathcal{R}}|\lesssim \varepsilon \frac{(1+\widetilde{q}_-)^{\gamma^\prime}}{(1+t+\widetilde{r})^{1+\gamma^\prime}}.
\]
\end{prop}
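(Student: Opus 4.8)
The plan is to derive everything in this proposition directly from the two sharp decay estimates already recorded, Propositions~\ref{prop:sharpmetricdecay} and \ref{prop:specialsharpmetricdecay}, together with the wave equation \eqref{eq:EisnteinWave} for $h^1=g-m-h^0$. The part concerning $H^{1\infty}_{TU}$ is obtained by integrating along the outgoing direction; the formula for $h^1_{\Lb\Lb}$ requires in addition a backscattering computation for the curved wave operator, and that is where the real work lies.

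\textbf{Existence, symmetry, vanishing, convergence rate and derivative bounds for $H^{1\infty}_{TU}$.} Fix $\widetilde{q},\omega$ and regard $H^1_{TU}(\widetilde{q},\omega,s)=s\,h^1_{TU}(s-\widetilde{q},s\omega)$ as a function of $s=\widetilde{r}$, so that $\partial_s H^1_{TU}=(\partial_t+\partial_{\widetilde{r}})(\widetilde{r}\,h^1_{TU})$ evaluated at $(t,\widetilde{r})=(s-\widetilde{q},s)$. Since at least one index among $T,U$ is tangential, $\delta_{TU}^{\Lb\Lb}=0$, and Proposition~\ref{prop:specialsharpmetricdecay} bounds $|\partial_s H^1_{TU}|$ by $\varepsilon\, s(1+\widetilde{q}_-)^{\gamma-C\varepsilon}(1+s)^{-2-\gamma+C\varepsilon}$, which is integrable at $s=\infty$; hence $H^1_{TU}(\widetilde{q},\omega,s)$ is Cauchy, the limit $H^{1\infty}_{TU}$ exists, and integrating the tail from $\widetilde{r}$ to $\infty$ yields precisely the stated rate $\lesssim\varepsilon\big((1+\widetilde{q}_-)/(1+t+\widetilde{r})\big)^{\gamma'}$. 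Symmetry is inherited from $h^1_{\alpha\beta}=h^1_{\beta\alpha}$, and letting $\widetilde{r}\to\infty$ with $\widetilde{q}$ fixed in \eqref{eq:sharpwsvecoordfunc} shows $\widetilde{r}\,h^1_{LT}$ and $\widetilde{r}\,\delta^{AB}h^1_{AB}$ tend to $0$, i.e.\ $H^{1\infty}_{LT}=\delta^{AB}H^{1\infty}_{AB}=0$; this is the imprint of the wave coordinate condition \eqref{eq:WaveCordinateCond}, confining the slowly decaying self-interaction to the $\Lb\Lb$ component. For the commuted version one applies $\widetilde{S}^J\partial_t^K$ to the identity $H^1_{TU}=H^{1\infty}_{TU}-\int_{\widetilde{r}}^\infty\partial_s H^1_{TU}\,ds$: since $\widetilde{S}$ and $\partial_t$ commute with $\partial_t+\partial_{\widetilde{r}}$ up to a multiple of itself and lower order terms, the same integration applies to $\widetilde{S}^J\partial_t^K(\widetilde{r}\,h^1_{TU})$ via the $\widetilde{Z}^I$-versions of Propositions~\ref{prop:sharpmetricdecay}--\ref{prop:specialsharpmetricdecay}; on the limit, which depends on $(t,\widetilde{r})$ only through $\widetilde{q}=\widetilde{r}-t$, one has $\partial_t\to-\partial_{\widetilde{q}}$ and $\widetilde{S}=t\partial_t+\widetilde{r}\partial_{\widetilde{r}}\to\widetilde{q}\,\partial_{\widetilde{q}}$ (the residual $\widetilde{r}(\partial_t+\partial_{\widetilde{r}})$-piece being absorbed by Proposition~\ref{prop:specialsharpmetricdecay}), producing the operator $(\widetilde{q}\partial_{\widetilde{q}})^J(-\partial_{\widetilde{q}})^K$. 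The same identification, with the angular rotations among $\widetilde{Z}$ acting as $\partial_\omega$ on the limit, combined with \eqref{eq:metricdecaysharptan} when $|\widetilde{q}|\gtrsim1$ and with $\partial_{\widetilde{q}}=-\partial_t$ when $|\widetilde{q}|\lesssim1$, gives $|\partial_\omega^\alpha((1+|\widetilde{q}|)\partial_{\widetilde{q}})^k H^{1\infty}_{TU}|\lesssim\varepsilon(1+\widetilde{q}_+)^{-\gamma'}$.

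\textbf{The formula for $h^1_{\Lb\Lb}$.} This is the substantive step. Contracting \eqref{eq:EisnteinWave} with $\Lb^\mu\Lb^\nu$ and subtracting $\Boxr_g h^0_{\Lb\Lb}$ gives $\Boxr_g h^1_{\Lb\Lb}=\mathcal F_{\Lb\Lb}$, where the leading part of $\mathcal F_{\Lb\Lb}$ near the light cone comes from the $P_{\mathcal S}(\partial h,\partial h)$ piece of $F_{\mu\nu}$; by the previous step, the leading derivative is $\partial_\mu h\sim L_\mu\,\partial_{\widetilde{q}}H^{1\infty}/\widetilde{r}$, so this leading part equals $n(\widetilde{q},\omega)/\widetilde{r}^2$ up to a universal constant, with $n$ as in \eqref{eq:NewsfunctionA}, while the remainder (the curved correction $\Boxr_g-\Box$, the cutoff commutators from $h^0$, and angular and lower order self-interactions) is better behaved. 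Writing $\psi=\widetilde{r}\,h^1_{\Lb\Lb}$ and passing to null variables $\widetilde{q}=\widetilde{r}-t$, $s=t+\widetilde{r}$, the operator $\widetilde{r}\,\Box(\psi/\widetilde{r})$ becomes $4\partial_s\partial_{\widetilde{q}}\psi$ plus the lower order angular term $\widetilde{r}^{-2}\Delta_\omega\psi$, so to leading order $\partial_s\partial_{\widetilde{q}}\psi\sim n(\widetilde{q},\omega)/\widetilde{r}$. Integrating this in the null variables, with the constants of integration fixed by the free data and by the decay of $h^1_{\Lb\Lb}$ toward spacelike infinity, the homogeneous solution contributes the radiation field $H^1_{\Lb\Lb}(\widetilde{q},\omega)/\widetilde{r}$, the mismatch between $h^0$ and the actual $\Lb\Lb$-behavior of the Schwarzschild-type part contributes the explicit term $2(M/\widetilde{r})(\chi^e(\widetilde{q})-1)$, and the particular solution generated by the source $\sim n/\widetilde{r}$ backscatters into $(2/\widetilde{r})\int_{\widetilde{r}-t}^\infty\ln\!\big(\tfrac{t+\widetilde{r}+\eta}{t-\widetilde{r}+\eta}\big)n(\eta,\omega)\,d\eta$, the logarithmic kernel being exactly the double primitive of $1/\widetilde{r}$ along the null directions (its $\ln|s|$-type growth near the cone is what obstructs a clean $1/\widetilde{r}$ decay). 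The remaining contributions are estimated via Propositions~\ref{prop:sharpmetricdecay}--\ref{prop:specialsharpmetricdecay} to obey the stated bound on $|\widetilde{Z}^I\widetilde{\mathcal R}|$, and the $\widetilde{q}$-derivative bound on $H^{1\infty}_{\Lb\Lb}$ is proved as for $H^{1\infty}_{TU}$ above, now keeping the $\delta_{UV}^{\Lb\Lb}$ term of Proposition~\ref{prop:specialsharpmetricdecay}, which is still integrable in $s$ after multiplying by $\widetilde{r}$.

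\textbf{Main obstacle.} The delicate part is this $h^1_{\Lb\Lb}$ analysis: one must (i) verify that the leading coefficient of $\mathcal F_{\Lb\Lb}$ near the cone is exactly the news quantity $n$ of \eqref{eq:NewsfunctionA} with the correct numerical factor --- this uses the null structure of $F_{\mu\nu}$, the wave coordinate condition, and the tangential asymptotics $\partial_{\widetilde{q}}H^{1\infty}_{TU}$ established in the first step --- and (ii) control the difference between the genuine operator $\Boxr_g$ and the flat (or Schwarzschild) model operator uniformly over the long time intervals at play, while still extracting the precise logarithmic backscattering kernel $\tfrac12\ln\big(\tfrac{t+\widetilde{r}+\eta}{t-\widetilde{r}+\eta}\big)$, which is only borderline integrable. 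Both points are exactly where the sharp, uniform-in-time estimates of \cite{Lind17} recalled above are indispensable; the $TU$ part of the proposition, by contrast, is a fairly direct consequence of integrating Proposition~\ref{prop:specialsharpmetricdecay} along outgoing rays.
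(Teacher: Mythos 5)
First, note that the paper does not prove this proposition at all: it is imported verbatim from \cite{Lind17} (Propositions 20 and 22 there), so any assessment of your argument is against the proof in that reference rather than against anything in the present manuscript. Your treatment of the $TU$ part is essentially the argument of \cite{Lind17}: one integrates the estimate of Proposition~\ref{prop:specialsharpmetricdecay} along outgoing rays $\widetilde{q}=\mathrm{const}$ (the $\delta^{\Lb\Lb}_{UV}$ term being absent when $(T,U)\neq(\Lb,\Lb)$), reads off the convergence rate from the tail of that integral, gets the vanishing of $H^{1\infty}_{LT}$ and $\delta^{AB}H^{1\infty}_{AB}$ from \eqref{eq:sharpwsvecoordfunc} (i.e.\ from the wave coordinate condition), and identifies $\widetilde{S}\to\widetilde{q}\,\pa_{\widetilde{q}}$, $\pa_t\to-\pa_{\widetilde{q}}$ on the limit. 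That part is sound and essentially complete at the level of detail appropriate here.

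The $\Lb\Lb$ part, however, contains a genuine error and otherwise remains a plan rather than a proof. You claim that the bound on $H^{1\infty}_{\Lb\Lb}$ follows ``as for $H^{1\infty}_{TU}$ above, now keeping the $\delta^{\Lb\Lb}_{UV}$ term of Proposition~\ref{prop:specialsharpmetricdecay}, which is still integrable in $s$ after multiplying by $\widetilde{r}$.'' It is not: multiplying by $\widetilde{r}$ gives $|(\pa_t+\pa_{\widetilde{r}})(\widetilde{r}\,h^1_{\Lb\Lb})|\lesssim \varepsilon(1+\widetilde{q}_+)^{-\gamma}(1+t+\widetilde{r})^{-1}$, whose integral along a fixed-$\widetilde{q}$ ray diverges logarithmically. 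This is precisely why $\lim_{\widetilde{r}\to\infty}\widetilde{r}\,h^1_{\Lb\Lb}$ does not exist and why the expansion contains the term $\frac{2}{\widetilde{r}}\int_{\widetilde{r}-t}^{\infty}\ln\bigl(\tfrac{t+\widetilde{r}+\eta}{t-\widetilde{r}+\eta}\bigr)n\,d\eta$, which grows like $\ln(t+\widetilde{r})/\widetilde{r}$ at fixed $\widetilde{q}$. Consequently $H^{1}_{\Lb\Lb}(\widetilde{q},\omega)$ in the displayed formula is not a limit of $\widetilde{r}\,h^1_{\Lb\Lb}$; it is the coefficient left over \emph{after} subtracting the explicit Schwarzschild and backscattering terms, and its bound has to be extracted from that construction, not from integrating Proposition~\ref{prop:specialsharpmetricdecay}. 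More broadly, the two steps you yourself flag as the main obstacles --- that the leading source in $\Boxr_g h^1_{\Lb\Lb}$ near the cone is exactly $n$ of \eqref{eq:NewsfunctionA} with the right constant (which requires the wave coordinate condition to kill all non-tangential quadratic terms), and the asymptotic solution lemma for $\Box\phi= n(r-t,\omega)/r^2$ producing the precise logarithmic kernel together with the remainder bound $|\widetilde{Z}^I\widetilde{\mathcal{R}}|\lesssim\varepsilon(1+\widetilde{q}_-)^{\gamma'}(1+t+\widetilde{r})^{-1-\gamma'}$ --- are the actual content of Proposition~22 of \cite{Lind17} and are not carried out in your sketch; naming them does not discharge them.
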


\begin{remark}\label{rem:limit}
In modified asymptotically Schwarzschild null coordinates, since $|q^*-\widetilde{q}|\les M/r$ it follows from Proposition \ref{prop:lim} that the following limits
\begin{gather*}
\widehat{H}_{1\infty}^{v^*u^*}(q^*, \widehat{y}^a)\!=\!\lim_{r^*\to\infty}\!\! r^*\widehat{h}_1^{v^*u^*}(v^*, -q^*, \widehat{y}^a),\quad \widehat{H}_{1\infty}^{u^*u^*}(q^*, \widehat{y}^a)\!=\!\lim_{r^*\to\infty}\!\! r^*\widehat{h}_1^{u^*u^*}(v^*, -q^*,\widehat{y}^a),\\
\widehat{H}_{1\infty}^{u^*a}(q^*, \widehat{y}^a)\!=\!\lim_{r^*\to\infty} {r^*}^2\widehat{h}_1^{u^*a}(v^*, -q^*,\widehat{y}^a),\quad
\widehat{H}_{1\infty}^{v^*a}(q^*, \widehat{y}^a)\!=\!\lim_{r^*\to\infty} {r^*}^2\widehat{h}_1^{v^*a}(v^*, -q^*,\widehat{y}^a),\\
\widehat{H}_{1\infty}^{ab}(q^*, \widehat{y}^a)\!=\!\lim_{r^*\to\infty} {r^*}^3\widehat{h}_1^{ab}(v^*, -q^*,\widehat{y}^a)
\end{gather*}
exist and satisfy $\widehat{H}^{u^*u^*}_{1\infty}(q^*, \widehat{y}^a)=\widehat{H}^{u^*a}_{1\infty}(q^*, \widehat{y}^a)\!=\widehat{q}_{ab}\widehat{H}^{ab}_{1\infty}(q^*,\widehat{y}^a)=0$. Moreover for  $|\alpha|+k\leq N-6$  we have
\[
	\big|\,\pa_{\widehat{y}^a}^\alpha \big((1+|\,q^*)\pa_{q^*}\big)^k
\widehat{H}^{pq}_{1\infty}(q^*,\widehat{y}^a)\big| \les \varepsilon
(1+q^*_+)^{-\gamma^\prime},\qquad (p,q)\neq v^*,v^*),
\]
and when $r^*\gg1$
\begin{gather*}
	\widehat{h}_1^{v^*u^*}(v^*, -q^*,\widehat{y}^a)\!=\!\frac{\widehat{H}_{1\infty}^{v^*u^*}(q^*, \widehat{y}^a)}{r^*}\!+\!\widehat{\mathcal{R}}^{u^*v^*},\qquad	\widehat{h}_1^{v^*u^*}(v^*, -q^*,\widehat{y}^a)\!=\!\frac{\widehat{H}_{1\infty}^{u^*u^*}(q^*, \widehat{y}^a)}{r^*}\!+\!\widehat{\mathcal{R}}^{u^*u^*}\\
\widehat{h}_1^{v^*u^*}(v^*, -q^*,\widehat{y}^a)\!=\!\frac{\widehat{H}_{1\infty}^{u^*a}(q^*, \widehat{y}^a)}{{r^*}^2}\!+\!\widehat{\mathcal{R}}^{u^*a},\qquad
	\widehat{h}_1^{v^*a}(v^*, -q^*,\widehat{y}^a)\!=\!\frac{\widehat{H}_{1\infty}^{v^*a}(q^*, \widehat{y}^a)}{{r^*}^2}\!+\!\widehat{\mathcal{R}}^{v^*a},\\
\widehat{h}_1^{ab}(v^*, -q^*,\widehat{y}^a)\!=\!\frac{\widehat{H}_{1\infty}^{ab}(q^*, \widehat{y}^a)}{{r^*}^3}\!+\!\widehat{\mathcal{R}}^{ab},
\end{gather*}
where the remainders $\widehat{\mathcal{R}}$ satisfy
\begin{gather*}
|Z^{*I}\widehat{\mathcal{R}}^{v^*\!u^*}\!|
+|Z^{*I}\widehat{\mathcal{R}}^{u^*\!u^*}\!|
\les\frac{\eps(1\!+\!q^*_-)^{\gamma^\prime}}{(1\!+\!t\!+\!r^*)^{1+\gamma^\prime}},\qquad
|Z^{*I}\widehat{\mathcal{R}}^{ab}|
\les\frac{\eps(1\!+\!q^*_-)^{\gamma^\prime}}{(1\!+\!t\!+\!r^*)^{3+\gamma^\prime}},\\
|Z^{*I}\widehat{\mathcal{R}}^{v^*\!a}|+|Z^{*I}\widehat{\mathcal{R}}^{u^*\!a}|
\les\frac{\eps(1\!+\!q^*_-)^{\gamma^\prime}}{(1\!+\!t\!+\!r^*)^{2+\gamma^\prime}}.
\end{gather*}
Let $n(q^*,\widehat{y}^a)=n(q^*, \omega(\widehat{y}^a))$, then we have
\begin{equation}\label{eq:NewsfunctionM}
n(q^*,\widehat{y}^a)=\tfrac{1}{2}\widehat{q}_{ab}\widehat{q}_{a'b'}\widehat{V}^{aa'}(q^*,\widehat{y}^a)\widehat{V}^{bb'}(q^*,\widehat{y}^a)\quad\text{with}\quad\widehat{V}^{ab}=\pa_{q^*}\widehat{H}^{ab}(q^*,\widehat{y}^a).
\end{equation}
As for the component $\widehat{h}^{v^*v^*}_1$, we have when $r^*\gg1$
\[
\widehat{h}_1^{v^*v^*}(v^*, u^*, \widehat{y}^a)=-\frac{2M}{r^*}(\chi^e(q^*)-1)-\int_{q^*}^\infty
\frac{2}{r^*}\ln{\Big(\frac{v^*+\eta}{u^*+\eta}\Big)}
n\big(q^*,\omega\big)\, d \eta+\frac{\widehat{H}^{v^*v^*}_{1\infty}(q^* ,\omega)}{r^*}+\mathcal{R}^{v^*v^*}.
\]
Here for $|\alpha|+k=|I|\leq N-7$, $\widehat{H}^{v^*v^*}_{1\infty}$ and the remainder $\mathcal{R}^{v^*v^*}$ satisfy \[	\big|\,\pa_\omega^\alpha \big((1+|\,q^*)\pa_{q^*}\big)^k
\widehat{H}^{v^*v^*}_{1\infty}(q^*,\omega)\big| \les \varepsilon
(1+q^*_+)^{-\gamma^\prime}, \qquad|Z^{*I}\mathcal{R}^{v^*v^*}|\lesssim \varepsilon \frac{(1+q^*_-)^{\gamma^\prime}}{(1+t+r^*)^{1+\gamma^\prime}}.
\]
\end{remark}
The last proposition gives a relation between $M$ and $n$.
\begin{prop}[{\cite[Proposition 28]{Lind17}}]\label{prop:totalmassloss} We have
	\beq\label{eq:totalmassloss}
	\frac{1}{2}\int_{-\infty}^{+\infty} \int_{\mathbb{S}^2}n(\widetilde{q},\omega)\frac{ dS(\omega)\!\!}{4\pi}\,
	d\widetilde{q}\!=M.
	\eq
\end{prop}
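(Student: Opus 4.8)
This is Proposition 28 of \cite{Lind17}; here is the route I would take to prove it directly from the asymptotics recorded above. Conceptually the identity \eqref{eq:totalmassloss} says that the energy radiated through all of null infinity equals the ADM mass, i.e. that the final mass is zero, so the natural strategy is a global flux argument built on the representation formula for the ``bad'' component $h^1_{\Lb\Lb}$ in Proposition \ref{prop:lim},
\[
\widetilde r\, h^1_{\Lb\Lb}(t,\widetilde r\omega)
= 2M\big(\chi^e(\widetilde q)-1\big)
+ 2\int_{\widetilde q}^{\infty}\ln\!\Big(\tfrac{\,t+\widetilde r+\eta\,}{\,t-\widetilde r+\eta\,}\Big)\,n(\eta,\omega)\,d\eta
+ H^{1\infty}_{\Lb\Lb}(\widetilde q,\omega)+\widetilde r\,\widetilde{\mathcal R},
\]
in which the ADM mass $M$ enters only through the cutoff term $2M(\chi^e(\widetilde q)-1)$ --- which vanishes for $\widetilde q\gtrsim 2$ (towards spatial infinity) and equals $-2M$ for $\widetilde q\lesssim 1$ (towards timelike infinity) --- while the news $n$ enters only through the logarithmic kernel.

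Step 1: an asymptotic mass-loss identity. I would first isolate from the $\eta$-integral its logarithmically divergent part as $\widetilde r\to\infty$ --- the controlled failure of peeling --- and show that the remainder, together with the cutoff term and $H^{1\infty}_{\Lb\Lb}$, assembles into a well-defined mass aspect $\mathfrak m(\widetilde q,\omega)$ for the family of surfaces $\{\widetilde q=\mathrm{const}\}$. The $\ln(\cdot)\,n$ kernel in the formula is precisely the backscattering ($K$-type) part of the curved wave operator acting on the source $F_{\Lb\Lb}\sim P_{\mathcal S}(\pa_{\widetilde q}h,\pa_{\widetilde q}h)= -\,n$ appearing on the right of \eqref{eq:EisnteinWave} (see \eqref{eq:Newsfucntion} and \eqref{eq:NewsfunctionA}), and differentiating the formula along the family yields a transport identity of the schematic form
\[
\pa_{\widetilde q}\,\mathfrak m(\widetilde q,\omega) = -\,n(\widetilde q,\omega)+\slashed{\nabla}\!\cdot(\,\cdots\,),
\]
the last term being a total angular divergence; this is the asymptotic version of the Bondi mass-loss relation for these surfaces.

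Step 2: the two ends, and conclusion. Using the decay estimates of Propositions \ref{prop:sharpmetricdecay} and \ref{prop:lim} I would pass to the limits $\widetilde q\to\pm\infty$. Near spatial infinity ($\widetilde q\to+\infty$) the cutoff term is $0$, and the renormalised $\eta$-integral together with $H^{1\infty}_{\Lb\Lb}$ decay like $(1+\widetilde q)^{-\gamma'}$, so $\mathfrak m(+\infty,\omega)=0$ and the entire ADM mass is carried by the Schwarzschild part $h^0$; this fixes the normalisation of the parameter $M$ against the ADM mass of the data \eqref{eq:asymptoticallyflatdata}. Near timelike infinity ($\widetilde q\to-\infty$) the piece $h^0_{\Lb\Lb}$ has been cut off to zero and the cutoff term in $h^1_{\Lb\Lb}$ contributes $-2M$, so $\mathfrak m(-\infty,\omega)$ is controlled, by the same formula, in terms of $M$ and of $\lim_{\widetilde q\to-\infty}H^{1\infty}_{\Lb\Lb}(\widetilde q,\omega)$. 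Integrating the transport identity of Step~1 over $\mathbb{S}^2\times\mathbb{R}_{\widetilde q}$ kills the angular divergence and produces, on the left, a boundary contribution proportional to $M$ coming from the jump of the cutoff term between the two ends, and, on the right, $\int_{\mathbb{S}^2}\int_{\mathbb{R}}n\,d\widetilde q\,dS(\omega)$; comparing the two sides and tracking the numerical normalisation gives \eqref{eq:totalmassloss}.

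The main obstacle is Step~1: making sense of the mass aspect $\mathfrak m$ and its transport equation in the presence of the logarithmic non-peeling term, and establishing the identity with the correct constant from the full harmonic-gauge Einstein system rather than from a model equation --- this is the heart of the asymptotic analysis of \cite{Lind17}. A secondary technical point is that the decay in $\widetilde q$ is only at the borderline rate $(1+\widetilde q_+)^{-\gamma'}$ with $\gamma'$ possibly close to $0$, so the passages to the limit and the interchange of the $\widetilde q$-integration with the $\widetilde r\to\infty$ limit and the $\mathbb{S}^2$-integration must be justified by the quantitative estimates of Propositions \ref{prop:sharpmetricdecay}--\ref{prop:lim}; in particular, the value $\lim_{\widetilde q\to-\infty}H^{1\infty}_{\Lb\Lb}$ --- equivalently, that the metric is asymptotically Schwarzschild with vanishing mass at $i^+$ --- has to be extracted from those estimates rather than assumed.
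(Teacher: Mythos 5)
You should first note that the paper contains no proof of Proposition \ref{prop:totalmassloss}: it is imported verbatim from \cite[Proposition 28]{Lind17} and used as a black box (indeed it is the one input that lets the authors conclude $M_B(u)\to 0$ as $u\to\infty$), so there is no in-paper argument to measure your sketch against. Judged on its own terms, your proposal has a genuine gap at its central step. Steps 1 and 2 amount to building a mass aspect $\mathfrak{m}(\widetilde{q},\omega)$ out of the representation formula for $h^1_{\Lb\Lb}$ and deriving the transport identity $\pa_{\widetilde{q}}\mathfrak{m}=-n+\slashed{\nabla}\cdot(\cdots)$; that is precisely the Bondi mass loss law, which this paper already establishes three times over (Theorems \ref{thm:1}--\ref{thm:3}) from the same asymptotics. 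But a mass loss law only controls the \emph{difference} of the mass between the two ends of null infinity: it yields $M_B(-\infty)-M_B(+\infty)=\frac{1}{8\pi}\int_{\mathbb{R}}\int_{\mathbb{S}^2}n\,dS\,d\widetilde{q}$, and identifying $M_B(-\infty)=M$ is the easy half. The identity \eqref{eq:totalmassloss} is exactly the additional statement $M_B(+\infty)=0$, i.e.\ that nothing is left over at timelike infinity. Your argument obtains this by asserting that $\mathfrak{m}(-\infty,\omega)$ reduces, after averaging over $\mathbb{S}^2$, to the $-2M$ jump of the cutoff term; this requires the renormalised backscattering integral and $\lim_{\widetilde{q}\to-\infty}H^{1\infty}_{\Lb\Lb}(\widetilde{q},\omega)$ to contribute nothing. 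You flag this as a ``secondary technical point'' to be ``extracted from those estimates,'' but the estimates you invoke give no decay in that direction: Proposition \ref{prop:lim} bounds the profiles by $\varepsilon(1+\widetilde{q}_+)^{-\gamma'}$, which is merely $O(\varepsilon)$ as $\widetilde{q}\to-\infty$, and the sharp decay of Proposition \ref{prop:sharpmetricdecay} likewise degenerates towards timelike infinity. So the step that actually carries the content of the proposition is unsupported.

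Worse, within the logical architecture of this paper the implication runs the other way: the vanishing of the final Trautman/Hawking/Bondi mass is \emph{deduced from} Proposition \ref{prop:totalmassloss} (see the end of the subsection on the ADM mass in Section \ref{sec:4}, Theorem \ref{thm:Bondiloss}, and the last lines of \S\ref{subsec:7.2}), so taking ``final mass $=0$'' as the starting point would be circular here. What is genuinely needed --- and what \cite{Lind17} supplies --- is an independent piece of information tying the wave zone to the interior of the light cone at late times (equivalently, a global energy identity on spacelike slices relating the conserved ADM-type integral to the flux through null infinity), which is what forces the coefficient $-2M$ produced by the cutoff $2M(\chi^e(\widetilde{q})-1)$ to balance the total backscattering $\int n$. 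Your sketch correctly identifies the representation formula of Proposition \ref{prop:lim} as the place where $M$ and $n$ meet, but the matching condition that closes the argument lives precisely where all the estimates you cite lose their strength, and it cannot be recovered from the null-infinity mass loss law alone.
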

In what follows we write $A=O(B)$ if $A\leq CB$  and $A=O_k(B)$ if $\sum_{|I|\leq k}|\widetilde{Z}^{I}A|+|Z^{*I}A|\leq CB$  with $\widetilde{Z}\!$ and $Z^{*}\!$ defined in \eqref{eq:vectorfields} and \eqref{eq:modifiedvectorfields} respectively for some universal constant $C$. We define $\sigma\!=\!\min\{\gamma'\!, 1\!-3\eps\}\!>\!0$.

\subsection{Wave coordinate condition in modified asymptotically Schwarzschild null coordinates}

Let $N$ be some fixed large integer ($N=9$ works). We express the wave coordinate condition in modified asymptotically Schwarzschild null coordinates.
\begin{equation}\label{eq:wcc}
\pa_\alpha\Big(g^{\alpha\beta}\sqrt{|g|}\Big)=\Big(-\frac12\Ls_{\!\!\alpha} \pa_{\Lbs}-\frac12\Lbs_{\!\!\alpha}\pa_{\Ls}+\widehat{A}_\alpha^a\pa_{\widehat{y}^a}\Big)\Big(g^{\alpha\beta}\sqrt{|g|}\Big)=0.
\end{equation}

\subsubsection{Contraction with $\Ls$}
\begin{proposition}
	We have
	\begin{equation}\label{eq:wcc_L}
		\frac12\pa_{\Lbs}(\widehat{h}_1^{u^*u^*})+ \frac{\widehat{h}^{v^*u^*}_1}{r}=O_2(\frac{\varepsilon}{r^{2+\sigma}}).
	\end{equation}
\end{proposition}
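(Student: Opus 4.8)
The plan is to read \eqref{eq:wcc_L} off the wave coordinate condition \eqref{eq:wcc} by contracting the free index $\beta$ with $\Ls_\beta=-\pa_\beta u^*$. Since the $x$--coordinates are harmonic for $g$, for the scalar function $\widehat y^q$ one has $\Boxr_g\widehat y^q=\tfrac{1}{\sqrt{|\widehat g|}}\pa_{\widehat y^p}\big(\sqrt{|\widehat g|}\,\widehat g^{pq}\big)=g^{\alpha\beta}\pa_\alpha\pa_\beta\widehat y^q$, so with $q=u^*$, using $\pa_{u^*}=\tfrac12\pa_{\Lbs}$, $\pa_{v^*}=\tfrac12\pa_{\Ls}$ and the component list of \S\ref{subsec:2.2} ($\widehat g^{u^*u^*}=\widehat h_1^{u^*u^*}$, $\widehat g^{v^*u^*}=-2(1+M/r)+\widehat h_1^{v^*u^*}$, $\widehat g^{u^*a}=\widehat h_1^{u^*a}$),
\[
\tfrac12\pa_{\Ls}\!\big(\sqrt{|\widehat g|}\,\widehat g^{v^*u^*}\big)+\tfrac12\pa_{\Lbs}\!\big(\sqrt{|\widehat g|}\,\widehat h_1^{u^*u^*}\big)+\pa_{\widehat y^a}\!\big(\sqrt{|\widehat g|}\,\widehat h_1^{u^*a}\big)=\sqrt{|\widehat g|}\,\Boxr_g u^* .
\]
(This is equivalent to contracting \eqref{eq:wcc} directly with $\Ls_\beta$ and commuting $\Ls_\alpha,\Lbs_\alpha,\widehat A_\alpha^a,\Ls_\beta$ through the derivatives, which gives the same relation with $\sqrt{|g|}$ in place of $\sqrt{|\widehat g|}$ plus explicit second--derivatives--of--the--coordinate--change terms; either form works.) Dividing by $\sqrt{|\widehat g|}$ turns each product rule into a ``flat'' derivative plus a $\pa\ln\sqrt{|\widehat g|}$ factor, which is $O(1/r)$ along $\Ls,\Lbs$ and $O(1)$ along the sphere.

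The $\pa_{\Lbs}$ term then gives $\tfrac12\pa_{\Lbs}\widehat h_1^{u^*u^*}$ modulo $\tfrac12\widehat h_1^{u^*u^*}\pa_{\Lbs}\ln\sqrt{|\widehat g|}$, which is negligible since $\widehat H_{1\infty}^{u^*u^*}=0$ forces $\widehat h_1^{u^*u^*}=\widehat{\mathcal R}^{u^*u^*}=O_2(\varepsilon(1+q^*_-)^{\gamma'}(1+t+r^*)^{-1-\gamma'})$ by Remark \ref{rem:limit}. For the $\pa_{\Ls}$ term I would split off the background $-2(1+M/r)$ and, in the $\widehat h_1^{v^*u^*}$ piece, use $\pa_{\Ls}r^*=1$ to write $\pa_{\Ls}\widehat h_1^{v^*u^*}=\tfrac{1}{r^*}\pa_{\Ls}(r^*\widehat h_1^{v^*u^*})-\widehat h_1^{v^*u^*}/r^*$; the first term on the right is $O_2(\varepsilon r^{-2-\sigma})$ by the improved $L$--derivative estimate \eqref{eq:extraLderest2} of Remark \ref{rem:specialsharpmetricdecay}, leaving $-\widehat h_1^{v^*u^*}/r^*$. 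The coefficient $1$ in front of $\widehat h_1^{v^*u^*}/r$ in \eqref{eq:wcc_L} is then recovered by adding this to the cross terms $\tfrac12\widehat g^{v^*u^*}\pa_{\Ls}\ln\sqrt{|\widehat g|}$: the background part uses $\pa_{\Ls}\ln\sqrt{|\widehat g|}=2/r+O(M/r^2)$ and contributes $+\widehat h_1^{v^*u^*}/r$, while the $h^1$--part of $\ln\sqrt{|\widehat g|}$ itself depends linearly on $\widehat h_1^{v^*u^*}$ through the nonvanishing pairing $\widehat g_{(0)}^{v^*u^*}$, so $\pa_{\Ls}$ of it is again reduced by $\pa_{\Ls}r^*=1$; the three contributions sum to $\widehat h_1^{v^*u^*}/r$. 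Here $r^*$ and $r$ may be interchanged at the cost of $O(M\ln r\cdot r^{-2})$ corrections.

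It remains to bound the rest by $O_2(\varepsilon r^{-2-\sigma})$. The angular term is harmless once one notes $\widehat H^{u^*a}_{1\infty}=0$, so $\widehat h_1^{u^*a}=\widehat{\mathcal R}^{u^*a}=O_2(\varepsilon(1+q^*_-)^{\gamma'}(1+t+r^*)^{-2-\gamma'})$ (Remark \ref{rem:limit}); since $\pa_{\widehat y^a}$ is a bounded combination of the rotation fields among the $Z^{*I}$ and $\pa_{\widehat y^a}\ln\sqrt{|\widehat g|}=O(1)$, $\tfrac{1}{\sqrt{|\widehat g|}}\pa_{\widehat y^a}(\sqrt{|\widehat g|}\widehat h_1^{u^*a})=O_2(\varepsilon r^{-2-\gamma'})$. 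The purely background terms --- $\pa_{\Ls}$ of $-2(1+M/r)$, the background logarithmic--determinant derivatives, and the background part of $\Boxr_g u^*$, each a fixed function of $M,r,\omega$ computed from $\rho'(r)=((1+M/r)/(1-M/r))^{1/2}$ --- must cancel up to $O(M^2 r^{-3})$; this is the statement that the model metric $m+h_0$ solves \eqref{eq:WaveCordinateCond} to that order and that $\rho'$ is chosen so that $u^*,v^*$ are null for it, and is verified from the explicit computations of $\pa_r r^*$, $\pa_i r=\omega_i$, $\pa_j\omega_i=r^{-1}(\delta_{ij}-\omega_i\omega_j)$ and the Jacobian. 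The genuinely quadratic $O(h^2)$ remainders and the $h^1$--corrections to $\sqrt{|\widehat g|}$ and to $\Boxr_g u^*$ are estimated by Proposition \ref{prop:sharpmetricdecay} and Remark \ref{rem:sharpmetricdecay}, using that the trace of $h^1$ and the components $h^1_{LT},\,\delta^{AB}h^1_{AB}$ carry the extra decay $(1+t+r)^{-1-\gamma'}$; all are $O_2(\varepsilon r^{-2-\sigma})$. For the $O_2$ statement one applies up to two $\widetilde Z$ or $Z^*$ fields to the whole identity: since the invoked estimates hold for $|I|\le N-6$ (or $N-7$) with $N\ge9$, and commuting these fields through $\pa_{\Ls},\pa_{\Lbs},\pa_{\widehat y^a}$, $1/r$ and $\sqrt{|\widehat g|}$ only produces further terms of the same type, all bounds persist.

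The step I expect to be the main obstacle is the simultaneous bookkeeping of the $O(1/r)$ and $O(M/r^2)$ background order: one must verify that the logarithmic--determinant terms, the $\pa_{\Ls}(M/r)$ term and the background $\Boxr_g u^*$ cancel exactly modulo $O(M^2 r^{-3})$, while keeping precise track of the few $O(\widehat h_1^{v^*u^*}/r)$--size pieces --- including the one hidden inside $\pa_{\Ls}\ln\sqrt{|\widehat g|}$ --- so that their sum is exactly $\widehat h_1^{v^*u^*}/r$ and not some other multiple.
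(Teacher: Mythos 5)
Your proposal is correct, and it starts from the same identity as the paper---contracting the wave coordinate condition \eqref{eq:wcc} with $\Ls_{\!\beta}=-\pa_\beta u^*$---but the bookkeeping is organized genuinely differently. The paper keeps the density $\sqrt{|g|}\,g^{\alpha\beta}$ intact and commutes the frame covectors $\Ls_{\!\alpha},\Lbs_{\!\alpha},\widehat{A}^a_\alpha,\Ls_{\!\beta}$ through $\pa_\alpha$ term by term, computing the frame derivatives and the sphere Christoffel symbols $\widehat{\Gamma}^c_{ab}$ explicitly. This has two consequences: in the product $\sqrt{|g|}\,\widehat{g}^{v^*u^*}$ the linear occurrences of $\widehat{h}_1^{v^*u^*}$ cancel (since $\sqrt{|g|}=1+M/r+\tfrac12\widehat{h}_1^{v^*u^*}+\dots$ while $\widehat{g}^{v^*u^*}=-2(1+M/r)+\widehat{h}_1^{v^*u^*}$), so the improved $L$--derivative estimate \eqref{eq:extraLderest2} is never needed in this step, and the whole term $\widehat{h}^{v^*u^*}_1/r$ is produced in a single place, by the spherical divergence term acting on $\sqrt{|g|}$. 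You instead pass to the fully transformed density via the scalar identity $\Box_g u^*$ in two coordinate systems, divide by $\sqrt{|\widehat{g}|}$, and assemble the coefficient of $\widehat{h}_1^{v^*u^*}/r$ from three sources: $-\tfrac12$ from trading $\tfrac12\pa_{\Ls}\widehat{h}_1^{v^*u^*}$ for $-\widehat{h}_1^{v^*u^*}/(2r^*)$ via \eqref{eq:extraLderest2}, $+1$ from $\tfrac12\widehat{h}_1^{v^*u^*}\pa_{\Ls}\ln\sqrt{|\widehat{g}|}$ with $\pa_{\Ls}\ln\sqrt{|\widehat{g}|}=2/r+O(r^{-2})$, and $+\tfrac12$ from $\tfrac12\widehat{g}_0^{v^*u^*}$ hitting the $h^1$--part $\tfrac12\widehat{h}_1^{v^*u^*}$ of $\ln\sqrt{|\widehat{g}|}$. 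I checked that this tally, the background cancellation against $g^{\alpha\beta}\pa_\alpha\pa_\beta u^*=-2/r+M/r^2+O(r^{-3})$, and your treatment of the $\pa_{\Lbs}$ and angular terms via $\widehat{H}^{u^*u^*}_{1\infty}=\widehat{H}^{u^*a}_{1\infty}=0$ are all consistent with the paper's conclusion. The price of your route is its reliance on \eqref{eq:extraLderest2} and on the Jacobian factor hidden inside $\sqrt{|\widehat{g}|}$; the benefit is that all frame-commutator computations collapse into a single $\pa\ln\sqrt{|\widehat{g}|}$ factor, which makes the origin of each $O(1/r)$ term more transparent.
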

\begin{proof}
	Contracting \eqref{eq:wcc} with $\Ls_{\!\!\beta}$ we obtain
\begin{equation*} \Ls_{\!\!\beta}\pa_\alpha\big(g^{\alpha\beta}\sqrt{|g|}\big)=-\frac12\Ls_{\!\!\beta}\Ls_{\!\!\alpha} \pa_{\Lbs}\big(g^{\alpha\beta}\sqrt{|g|}\big)-\frac12\Ls_{\!\!\beta}\Lbs_{\!\!\alpha}\pa_{\Ls}\big(g^{\alpha\beta}\sqrt{|g|}\big)
+\Ls_{\!\!\beta} \widehat{A}_\alpha^c\pa_{\widehat{y}^c}\big(g^{\alpha\beta}\sqrt{|g|}\big)=0.
\end{equation*}
Here and in what follows the repeated indices $c,d$ are summed over $a,b$. We first analyze the first term
\begin{multline*}
	-\frac12\Ls_{\!\!\beta}\Ls_{\!\!\alpha} \pa_{\Lbs}\left(g^{\alpha\beta}\sqrt{|g|}\right)=-\frac12\pa_{\Lbs}\left(\Ls_{\!\!\beta}\Ls_{\!\!\alpha} g^{\alpha\beta}\sqrt{|g|}\right)+\pa_{\Lbs}(\Ls_{\!\!\alpha})\Ls_{\!\!\beta}g^{\alpha\beta}\sqrt{|g|}\\
	=-\frac12\pa_{\Lbs}(\widehat{h}_1^{u^*u^*})+\omega^i\delta_{i\alpha}\frac {M}{r^2}\frac{1}{\rho'}\Ls_{\!\!\beta}g^{\alpha\beta}\sqrt{|g|}+O_2(\frac{\varepsilon}{r^{2+\sigma}}) =-\frac12\pa_{\Lbs}(\widehat{h}_1^{u^*u^*})+\frac{M}{r^2}+O_2(\frac{\varepsilon}{r^{2+\sigma}}) \end{multline*}
where we used the estimate $\pa_{\Lbs}(\sqrt{|g|})=O_2(\varepsilon r^{-1})$. Here and in what follows the repeated indices $i, j$ are summed over $1,2,3$. We note that the error term is of order $O_2(\varepsilon r^{-2-\sigma})$ because it only depends on the metric. For the second term, we have
\begin{multline*}	-\frac12\Ls_{\!\!\beta}\Ls_{\!\!\alpha}\pa_{\Ls}\left(g^{\alpha\beta}\sqrt{|g|}\right)=-\frac12\pa_{\Ls}\left(\Ls_{\!\!\beta}\Lbs_{\!\!\alpha} g^{\alpha\beta}\sqrt{|g|}\right)+\frac12\pa_{\Ls}(\Lbs_{\!\!\alpha})\Ls_{\!\!\beta}g^{\alpha\beta}\sqrt{|g|}+\frac12\pa_{\Ls}(\Ls_{\!\!\beta})\Lbs_{\!\!\alpha}g^{\alpha\beta}\sqrt{|g|}\\
	=-\frac{M}{r^2}-\frac12\pa_{\Ls}(\widehat{h}_1^{v^*u^*}) -\frac{M}{r^2}+\frac12\pa_{\Ls}(\widehat{h}_1^{v^*u^*})+\frac{M}{2r^2}+\frac{M}{2r^2}+O_2(\frac{\varepsilon}{r^{2+\sigma}})
	=-\frac{M}{r^2}+O_2(\frac{\varepsilon}{r^{2+\sigma}}).
\end{multline*}
Before analyzing the last term, we calculate the Christoffle symbols  $\widehat{{\Gamma}}^c_{ab}$ on sphere under the coordinates $\widehat{y}^a$:
\[
\widehat{\Gamma}^c_{ab}=-\frac{\pa x^\alpha}{\pa \widehat{y}^a}\frac{\pa x^\beta}{\pa \widehat{y}^b}\frac{\pa^2\widehat{y}^c}{\pa x^\alpha\pa x^\beta}=-\widehat{A}^\alpha_a\pa_{\widehat{y}^b}\widehat{A}^c_\alpha.
\]
As we have $\widehat{A}_\alpha^c\widehat{A}_c^\beta=(\delta_i^j-\omega_i\omega^j)\delta_{j\alpha}\delta^{i\beta}$, we see that
\begin{multline*}
	\widehat{A}^c_\beta\widehat{\Gamma}^a_{cb}=-\widehat{A}^c_\beta \widehat{A}^\alpha_c\pa_{\widehat{y}^b}\widehat{A}^a_\alpha=-\pa_{\widehat{y}^b}\widehat{A}^a_\beta
-\widehat{A}^a_\alpha\omega^j\delta_{j\beta}\pa_{\widehat{y}^b}(\omega_i\delta^{i\alpha}) =-\pa_{\widehat{y}^b}\widehat{A}^a_\beta-\widehat{A}^a_\alpha\omega^j\delta_{j\beta}\widehat{A}_b^\mu\pa_\mu(\omega_i\delta^{i\alpha})\\	=-\pa_{\widehat{y}^b}\widehat{A}^a_\beta
-\widehat{A}^a_\alpha\omega^j\delta_{j\beta}\widehat{A}_b^\mu\frac{\delta_k^l-\omega_k\omega^l}{r}\delta_{l\mu}\delta^{k\alpha}
	=-\pa_{\widehat{y}^b}\widehat{A}^a_\beta-\frac{\delta^a_b\omega^j\delta_{j\beta}}{r}.
\end{multline*}
Finally, we compute
\begin{multline*}
	\Ls_{\!\!\beta} \widehat{A}_\alpha^c\pa_{\widehat{y}^c}\left(g^{\alpha\beta}\sqrt{|g|}\right)=\pa_{\widehat{y}^c}\left(\Ls_{\!\!\beta} \widehat{A}_\alpha^c g^{\alpha\beta}\sqrt{|g|}\right)-\pa_{\widehat{y}^c}(\Ls_{\!\!\beta})\widehat{A}^c_\alpha g^{\alpha\beta}\sqrt{|g|}-\pa_{\widehat{y}^c}(\widehat{A}^c_\alpha)\Ls_{\!\!\beta} g^{\alpha\beta}\sqrt{|g|}\\ =-\rho'\frac{\delta^{ij}-\omega^i\omega^j}{r}\delta_{i\alpha}\delta_{j\beta}g^{\alpha\beta}\sqrt{|g|}+(\widehat{A}^d_\alpha\widehat{\Gamma}_{dc}^c+\frac{\omega^i\delta_{i\alpha}\delta^c_c}{r})\Ls_{\!\!\beta} g^{\alpha\beta}\sqrt{|g|}\\
	=-\frac{2}{r}(1+\!\frac Mr)(1-\!\frac Mr)\sqrt{|g|}+\frac{\rho'}{r}\slashed{\tr} h^1+\frac 2r\frac{\Ls_{\!\!\alpha}-\Lbs_{\!\!\alpha}}{2\rho'}L^*_\beta g^{\alpha\beta}\sqrt{|g|}+O_2(\frac{\varepsilon}{r^{2+\sigma}})
	=-\frac{\widehat{h}^{v^*u^*}_1}{r}+O_2(\frac{\varepsilon}{r^{2+\sigma}}).
\end{multline*}
Gathering our estimates yield the lemma.
\end{proof}

\subsubsection{Contraction with $\widehat{A}^a$}
\begin{proposition}
	We have
	\begin{equation}\label{eq:wcc:Y}
		\frac 12\pa_{\Lbs}(\widehat{h}_1^{u^*a})+\frac 12\pa_{\Ls}(\widehat{h}_1^{v^*a})+\frac{2\widehat{h}^{v^*a}_1}{r}+\frac{1}{2r^2}\widehat{q}^{\,ac}\pa_{\widehat{y}^c}(\widehat{h}_1^{v^*u^*})+\widehat{\slashed{\nabla}}_c\widehat{h}_1^{ac}=O_2(\frac{\varepsilon}{r^{3+\sigma}}).
	\end{equation}
\end{proposition}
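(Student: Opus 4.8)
The plan is to repeat the scheme of the previous proposition, now contracting the wave coordinate condition \eqref{eq:wcc} with the angular covector $\widehat{A}^a_\beta=\pa_\beta\widehat{y}^a$ in place of $\Ls_{\!\!\beta}$. This gives
\[
-\tfrac12\widehat{A}^a_\beta\Ls_{\!\!\alpha}\pa_{\Lbs}\big(g^{\alpha\beta}\sqrt{|g|}\big)
-\tfrac12\widehat{A}^a_\beta\Lbs_{\!\!\alpha}\pa_{\Ls}\big(g^{\alpha\beta}\sqrt{|g|}\big)
+\widehat{A}^a_\beta\widehat{A}^c_\alpha\pa_{\widehat{y}^c}\big(g^{\alpha\beta}\sqrt{|g|}\big)=0 ,
\]
and I would treat the three terms in turn. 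For each I would move the derivative outside by the Leibniz rule and recognise the leading contraction of $g$ with the change-of-variables coefficients as an inverse metric component from the table of \S\ref{subsec:2.2}: since $\Ls_{\!\!\alpha}=-\pa_\alpha u^*$ and $\Lbs_{\!\!\alpha}=-\pa_\alpha v^*$ we get $\widehat{A}^a_\beta\Ls_{\!\!\alpha}g^{\alpha\beta}=-\widehat{g}^{u^*a}=-\widehat{h}_1^{u^*a}$, $\widehat{A}^a_\beta\Lbs_{\!\!\alpha}g^{\alpha\beta}=-\widehat{g}^{v^*a}=-\widehat{h}_1^{v^*a}$ and $\widehat{A}^a_\beta\widehat{A}^c_\alpha g^{\alpha\beta}=\widehat{g}^{ca}=(1-M/r)r^{-2}\widehat{q}^{\,ca}+\widehat{h}_1^{ca}$. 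This already produces $\tfrac12\pa_{\Lbs}(\widehat{h}_1^{u^*a})$, $\tfrac12\pa_{\Ls}(\widehat{h}_1^{v^*a})$, $\pa_{\widehat{y}^c}(\widehat{h}_1^{ca})$ and a purely background piece $\pa_{\widehat{y}^c}\big((1-M/r)r^{-2}\widehat{q}^{\,ca}\sqrt{|g|}\big)$, modulo (i) commutators in which a derivative lands on $\sqrt{|g|}$ or on the change-of-variables coefficients and (ii) products of an $h_1$-derivative with $\sqrt{|g|}-1=O_2(\varepsilon/r)$, the latter being quadratically small and hence $O_2(\frac{\varepsilon}{r^{3+\sigma}})$ by Remark \ref{rem:sharpmetricdecay}.

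The rest of \eqref{eq:wcc:Y} comes out of the commutators, for which I would reuse the elementary identities already at hand: $\pa_{\Ls}\widehat{A}^a_\mu=-(\rho' r)^{-1}\widehat{A}^a_\mu$ and $\pa_{\Lbs}\widehat{A}^a_\mu=(\rho' r)^{-1}\widehat{A}^a_\mu$ (because $\widehat{y}^a$ depends only on $\omega$, so $\widehat{A}^a_0=0$ and $\pa_r\widehat{A}^a_i=-r^{-1}\widehat{A}^a_i$); the facts that $\pa_{\Ls}\Lbs_{\!\!\alpha}$ and $\pa_{\Lbs}\Ls_{\!\!\alpha}$ are $O(M/r^2)$ times the radial covector $\omega^j\delta_{j\alpha}$; the Christoffel identities $\widehat{\Gamma}^c_{ab}=-\widehat{A}^\alpha_a\pa_{\widehat{y}^b}\widehat{A}^c_\alpha$ and $\widehat{A}^c_\beta\widehat{\Gamma}^a_{cb}=-\pa_{\widehat{y}^b}\widehat{A}^a_\beta-r^{-1}\delta^a_b\,\omega^j\delta_{j\beta}$ from the previous proof; the decomposition of $m+h_0$ into its time, radial and sphere parts, which kills the time and radial slots against the sphere-tangent $\widehat{A}^a$; and $\omega^j\delta_{j\alpha}=(2\rho')^{-1}(\Ls_{\!\!\alpha}-\Lbs_{\!\!\alpha})$, which turns $\omega^j\delta_{j\alpha}\,h_1^{\alpha\beta}\widehat{A}^a_\beta$ into $(2\rho')^{-1}(\widehat{h}_1^{v^*a}-\widehat{h}_1^{u^*a})$. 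Assembling these, the background sphere-geometry terms of order $r^{-2}$ collapse to $(1-M/r)\sqrt{|g|}\,r^{-2}\,\widehat{\slashed{\nabla}}_c\widehat{q}^{\,ca}=0$ and drop out; the Christoffel corrections from the $h_1$-parts of the commutators combine with $\pa_{\widehat{y}^c}(\widehat{h}_1^{ca})$ into $\widehat{\slashed{\nabla}}_c\widehat{h}_1^{ac}$; the $(\rho' r)^{-1}$ factors and the $\omega$-pieces produce the $r^{-1}\widehat{h}_1^{v^*a}$ term with the correct total coefficient $2$ (the accompanying $r^{-1}\widehat{h}_1^{u^*a}$ contributions combining with $\tfrac12\pa_{\Lbs}\widehat{h}_1^{u^*a}$); and the single genuine leftover $(1-M/r)r^{-2}\widehat{q}^{\,ca}\pa_{\widehat{y}^c}\sqrt{|g|}$, expanded via $\sqrt{|g|}=1+\tfrac12 m^{\mu\nu}h^1_{\mu\nu}+\dots$ with $m^{\mu\nu}h^1_{\mu\nu}=-h^1_{L\Lb}+\delta^{AB}h^1_{AB}$, $-h^1_{L\Lb}=\widehat{h}_1^{v^*u^*}+(\text{lower order})$ and $\delta^{AB}H^{1\infty}_{AB}=0$ from Proposition \ref{prop:lim}, gives exactly $\tfrac1{2r^2}\widehat{q}^{\,ac}\pa_{\widehat{y}^c}(\widehat{h}_1^{v^*u^*})$ up to $O_2(\frac{\varepsilon}{r^{3+\sigma}})$.

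Finally I would bound the errors using the decay of \cite{Lind17} recorded in Remarks \ref{rem:sharpmetricdecay}, \ref{rem:specialsharpmetricdecay} and \ref{rem:limit}: the quadratic products $(\pa h_1)(\sqrt{|g|}-1)$, $h_1\,\pa\sqrt{|g|}$, $h_1\,(M/r)$ and $h_1\,\pa(M/r)$ are $O_2(\frac{\varepsilon}{r^{3+\sigma}})$ since $\sqrt{|g|}-1=O_2(\varepsilon/r)$ and $\pa\sqrt{|g|}=O_2(\varepsilon/r^2)$; the $O(M/r^2)$ commutators $\pa_{\Lbs}\Ls$, $\pa_{\Ls}\Lbs$, $\rho''$, being radial covectors, contract to zero against both $\widehat{A}^a$ and $m+h_0$, so only their pairing with $h_1$ survives and is $O_2(\varepsilon M/r^4)$; the angular derivative of the trace $\delta^{AB}h^1_{AB}$ gains the extra $r^{-\sigma}$ from \eqref{eq:sharpwsvecoordfunc}; and $\widehat{H}^{u^*a}_{1\infty}=0$ from Remark \ref{rem:limit} controls the $\widehat{h}_1^{u^*a}$ contributions. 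The step I expect to be the main obstacle is precisely this bookkeeping: tracking every $1/r$ and $M/r$ weight generated by the change of variables, verifying that the background ($m+h_0$) pieces cancel amongst themselves down to the claimed order, and isolating from the first-order-in-$h_1$ remainder exactly the five terms on the left of \eqref{eq:wcc:Y} with everything else provably $O_2(\frac{\varepsilon}{r^{3+\sigma}})$; the structural fact that makes this go through is that the dangerous $r^{-2}$-level angular divergence is the covariant divergence $\widehat{\slashed{\nabla}}_c\widehat{q}^{\,ca}$ of the round metric and hence vanishes.
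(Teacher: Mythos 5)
Your plan follows essentially the same route as the paper's proof: contract the wave coordinate condition with $\widehat{A}^a_\beta$, integrate by parts to read off $\tfrac12\pa_{\Lbs}\widehat{h}_1^{u^*a}$, $\tfrac12\pa_{\Ls}\widehat{h}_1^{v^*a}$ and the angular divergence, extract $\tfrac1{2r^2}\widehat{q}^{\,ac}\pa_{\widehat{y}^c}\widehat{h}_1^{v^*u^*}$ from $\widehat{g}_0^{ca}\pa_{\widehat{y}^c}\sqrt{|g|}$, and assemble the commutator/Christoffel corrections into $\widehat{\slashed{\nabla}}_c\widehat{h}_1^{ac}$ and the $2\widehat{h}_1^{v^*a}/r$ term, with the same decay estimates controlling the errors. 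The only thing left implicit is the explicit verification that the $r^{-1}\widehat{h}_1^{v^*a}$ contributions sum to coefficient $2$ (in the paper it arises as $\tfrac12$ from the $\Lbs_{\!\!\alpha}\pa_{\Ls}$ term plus $\tfrac32$ from the angular term), but you have correctly identified all of its sources.
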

\begin{proof}
Contracting \eqref{eq:wcc} with $\widehat{A}^a_\beta$ yields
\begin{align*}
	\widehat{A}^a_\beta\pa_\alpha\left(g^{\alpha\beta}\sqrt{|g|}\right)=-\frac12\widehat{A}^a_\beta \Ls_{\!\!\alpha} \pa_{\Lbs}\left(g^{\alpha\beta}\sqrt{|g|}\right)-\frac12\widehat{A}^a_\beta\Lbs_{\!\!\alpha}\pa_{\Ls}\left(g^{\alpha\beta}\sqrt{|g|}\right)+\widehat{A}^a_\beta \widehat{A}_\alpha^c\pa_{\widehat{y}^c}\left(g^{\alpha\beta}\sqrt{|g|}\right)=0.
\end{align*}
As for the first term, since we have $\pa_r\widehat{A}^a_\alpha=-\widehat{A}^a_\alpha/r$, we see that
\begin{align*}
	-\frac12\widehat{A}^a_\beta \Ls_{\!\!\alpha} \pa_{\Lbs}\!\!\left(g^{\alpha\beta}\sqrt{|g|}\right)&=-\frac12\pa_{\Lbs}\!\!\left(\widehat{A}^a_\beta \Ls_{\!\!\alpha} g^{\alpha\beta}\sqrt{|g|}\right)+\frac 12\pa_{\Lbs}(\Ls_{\!\!\alpha})	\widehat{A}^a_\beta g^{\alpha\beta}\sqrt{|g|}+\frac 12\pa_{\Lbs}(\Ls_{\!\!\alpha})	\widehat{A}^a_\beta g^{\alpha\beta}\sqrt{|g|}\\
	&=\frac 12\pa_{\Lbs}(\widehat{h}_1^{u^*a})+O_2(\frac{\varepsilon}{r^{3+\sigma}}).
\end{align*}
Next we calculate the second term
\begin{align*}
	-\frac12\widehat{A}^a_\beta\Lbs_{\!\!\alpha}\pa_{\Ls}\!\!\left(g^{\alpha\beta}\sqrt{|g|}\right)&=-\frac12\pa_{\Ls}\!\!\left(\widehat{A}^a_\beta \Lbs_{\!\!\alpha} g^{\alpha\beta}\sqrt{|g|}\right)+\frac 12\pa_{\Ls}(\Lbs_{\!\!\alpha})	\widehat{A}^a_\beta g^{\alpha\beta}\sqrt{|g|}+\frac 12\pa_{\Ls}(\Lbs_{\!\!\alpha})	\widehat{A}^a_\beta g^{\alpha\beta}\sqrt{|g|}\\
	&=\frac 12\pa_{\Ls}(\widehat{h}_1^{v^*a})+\frac{\widehat{h}^{v^*a}_1}{2r}+O_2(\frac{\varepsilon}{r^{3+\sigma}}).
\end{align*}
Finally,
\begin{align*}
	\widehat{A}^a_\beta \widehat{A}_\alpha^c\pa_{\widehat{y}^c}\left(g^{\alpha\beta}\sqrt{|g|}\right)&=\widehat{A}^a_\beta \widehat{A}_\alpha^cg^{\alpha\beta}_0\pa_{\widehat{y}^c}\left(\sqrt{|g|}\right)+\widehat{A}^a_\beta \widehat{A}_\alpha^c\pa_{\widehat{y}^c}\left(h_1^{\alpha\beta}\sqrt{|g|}\right)\\
	&=\frac{1}{2r^2}\widehat{q}^{\,ac}\pa_{\widehat{y}^c}(\widehat{h}_1^{v^*u^*})+\widehat{A}^a_\beta \widehat{A}_\alpha^c\pa_{\widehat{y}^c}\left(h_1^{\alpha\beta}\sqrt{|g|}\right)+O_2(\frac{\varepsilon}{r^{3+\sigma}}).
\end{align*}
We write
\begin{multline*}
	\widehat{A}^a_\beta \widehat{A}_\alpha^c\pa_{\widehat{y}^c}\left(h_1^{\alpha\beta}\sqrt{|g|}\right)
	=\pa_{\widehat{y}^c}\left(\widehat{A}^a_\beta \widehat{A}_\alpha^ch_1^{\alpha\beta}\sqrt{|g|}\right)
-\widehat{A}_\alpha^c\pa_{\widehat{y}^c}(\widehat{A}^a_\beta)h_1^{\alpha\beta}\sqrt{|g|}-\widehat{A}^a_\beta \pa_{\widehat{y}^c}(\widehat{A}^c_\alpha)h_1^{\alpha\beta}\sqrt{|g|}\\
	=\pa_{\widehat{y}^c}(\widehat{h}_1^{ac})+\Big(\widehat{A}^c_\alpha \widehat{A}^d_\beta\widehat{\Gamma}_{dc}^a+\frac{\delta_c^a\omega^j\delta_{j\beta}}{r}+\widehat{A}^a_\beta \widehat{A}^d_\alpha\widehat{\Gamma}_{dc}^c
+\frac{\delta_c^c\omega^j\delta_{j\alpha}}{r}\Big)h_1^{\alpha\beta}\sqrt{|g|}+O_2(\frac{1}{r^{3+\sigma}})\\ =\widehat{\slashed{\nabla}}_c\widehat{h}_1^{ac}+\frac{\widehat{h}_1^{v^*a}}{2r}+\frac{\widehat{h}_1^{v^*a}}{r}+O_2(\frac{\varepsilon}{r^{3+\sigma}})=\widehat{\slashed{\nabla}}_c\widehat{h}_1^{ac}+\frac{3\widehat{h}_1^{v^*a}}{2r}+O_2(\frac{\varepsilon}{r^{3+\sigma}})
\end{multline*}
where $\widehat{\slashed{\nabla}}$ is the covariant derivative on sphere.
Putting all together yields the conclusion.
\end{proof}

\subsubsection{Contraction with $\Lbs$}
\begin{proposition}\label{prop:wccLb}
	We have
	\begin{equation}\label{eq:wccLb}
		\frac12\pa_{\Lbs}(\slashed{\tr}h^1)+\frac{\widehat{h}_1^{v^*u^*}}{r}-\frac12\pa_{\Ls}(\widehat{h}_1^{v^*v^*})-\frac{\widehat{h}_1^{v^*v^*}}{r}-\widehat{\slashed{\nabla}}_c(\widehat{h}_1^{v^*c})-\frac 14\pa_{\Lbs}(\widehat{h}_1^{ac}r^2\widehat{q}_{cb}\widehat{h}_1^{bd}r^2\widehat{q}_{da})=O_2(\frac{\varepsilon}{r^{2+\sigma}}).
	\end{equation}
\end{proposition}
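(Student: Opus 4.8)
The plan is to follow the scheme used for \eqref{eq:wcc_L} and \eqref{eq:wcc:Y}. First I would contract \eqref{eq:wcc} with $\Lbs_{\!\!\beta}$, obtaining
\[
-\tfrac12\Lbs_{\!\!\beta}\Ls_{\!\!\alpha}\pa_{\Lbs}\big(g^{\alpha\beta}\sqrt{|g|}\big)
-\tfrac12\Lbs_{\!\!\beta}\Lbs_{\!\!\alpha}\pa_{\Ls}\big(g^{\alpha\beta}\sqrt{|g|}\big)
+\Lbs_{\!\!\beta}\widehat{A}^c_\alpha\pa_{\widehat{y}^c}\big(g^{\alpha\beta}\sqrt{|g|}\big)=0,
\]
and then analyze the three pieces by the Leibniz rule, moving each derivative onto the densitized scalars $\Lbs_{\!\!\beta}\Ls_{\!\!\alpha}g^{\alpha\beta}\sqrt{|g|}=\widehat{g}^{v^*u^*}\sqrt{|g|}$, $\Lbs_{\!\!\beta}\Lbs_{\!\!\alpha}g^{\alpha\beta}\sqrt{|g|}=\widehat{h}_1^{v^*v^*}\sqrt{|g|}$ and $\Lbs_{\!\!\beta}\widehat{A}^c_\alpha g^{\alpha\beta}\sqrt{|g|}=-\widehat{h}_1^{v^*c}\sqrt{|g|}$, and evaluating the commutator pieces, where a derivative lands on a frame coefficient, with the same identities used above: $\pa_r\widehat{A}^a_\alpha=-\widehat{A}^a_\alpha/r$, $\rho''=O(M/r^2)$, $\pa_{\Lbs}\Ls_{\!\!\alpha}=\omega^i\delta_{i\alpha}\tfrac M{r^2}\tfrac1{\rho'}$ with its $\pa_{\Ls}$-analogue, and $\widehat{A}^c_\beta\widehat{\Gamma}^a_{cb}=-\pa_{\widehat{y}^b}\widehat{A}^a_\beta-\delta^a_b\omega^j\delta_{j\beta}/r$. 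Throughout I would use two structural facts: $\Lbs$ is null for the leading metric in the $r^*$ coordinate, $(m^{\alpha\beta}+h_0^{\alpha\beta})\Lbs_\alpha\Lbs_\beta=-(1+M/r)+(1-M/r)\rho'^2=0$, which is why the second contraction is purely $\widehat{h}_1^{v^*v^*}$ and $\Lbs_{\!\!\beta}\widehat{A}^c_\alpha g_0^{\alpha\beta}=0$; and the expansion $\sqrt{|g|}=1+\tfrac Mr\chi+\tfrac12 m^{\alpha\beta}h^1_{\alpha\beta}+\tfrac18(m^{\alpha\beta}h_{\alpha\beta})^2-\tfrac14 m^{\alpha\mu}h_{\mu\nu}m^{\nu\beta}h_{\beta\alpha}+O(h^3)$, in which $m^{\alpha\beta}h^1_{\alpha\beta}=-h^1_{L\Lb}+\delta^{AB}h^1_{AB}$ is a combination of good components with fast-decaying first derivatives.

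Carrying out the expansion, I expect the first term to produce the quadratic contribution $-\tfrac14\pa_{\Lbs}\big(\widehat{h}_1^{ac}r^2\widehat{q}_{cb}\widehat{h}_1^{bd}r^2\widehat{q}_{da}\big)$ — from the leading $-2$ of $\widehat{g}^{v^*u^*}$ against the sphere–sphere block of the term $-\tfrac14 m^{\alpha\mu}h_{\mu\nu}m^{\nu\beta}h_{\beta\alpha}$ of $\sqrt{|g|}$, which equals $-\tfrac14\widehat{h}_1^{ac}r^2\widehat{q}_{cb}\widehat{h}_1^{bd}r^2\widehat{q}_{da}$ — together with $\tfrac12\pa_{\Lbs}(\slashed{\tr}h^1)$ from the sphere–trace piece of the linear part of $\sqrt{|g|}$, a cancellation of the two $\pa_{\Lbs}(\widehat{h}_1^{v^*u^*})$ contributions (one from $\widehat{h}_1^{v^*u^*}$ in $\widehat{g}^{v^*u^*}$, one from $-\tfrac12 h^1_{L\Lb}\approx\tfrac12\widehat{h}_1^{v^*u^*}$ in $\sqrt{|g|}$), and a batch of explicit $M/r^2$ terms; the second term to produce $-\tfrac12\pa_{\Ls}(\widehat{h}_1^{v^*v^*})$ together with more $M/r^2$ terms; and the third term, after converting angular derivatives of the frame coefficients into Christoffel symbols exactly as in the proof of \eqref{eq:wcc:Y}, to produce the sphere divergence $-\widehat{\slashed{\nabla}}_c(\widehat{h}_1^{v^*c})$ together with $r^{-1}$ times contractions of $\widehat{h}_1$: the $r^{-1}\omega_\alpha$-part of $\pa_{\widehat{y}^c}\widehat{A}^c_\alpha$ contracted against $h_1^{\alpha\beta}\Lbs_\beta$ yields, via $\omega_\alpha=\tfrac1{2\rho'}(\Ls_{\!\!\alpha}-\Lbs_{\!\!\alpha})$, precisely the terms $\widehat{h}_1^{v^*u^*}/r$ and $-\widehat{h}_1^{v^*v^*}/r$.

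Three families of contributions must be kept rather than absorbed into the error. The component $\widehat{h}_1^{v^*v^*}\sim\varepsilon(\log r)/r$ is the slowly decaying one (by Remark \ref{rem:limit} it carries the logarithmic backscattering $\int_{q^*}^\infty\tfrac2{r^*}\ln\!\big(\tfrac{v^*+\eta}{u^*+\eta}\big)n\,d\eta$ and the mass term $-\tfrac{2M}{r^*}(\chi^e(q^*)-1)$), so $\pa_{\Ls}(\widehat{h}_1^{v^*v^*})$ and $\widehat{h}_1^{v^*v^*}/r$ stay in the identity; $\pa_{\Lbs}(\slashed{\tr}h^1)$ only decays like $\varepsilon/r^{2-\varepsilon}$ by \eqref{eq:sharpwsvecoordder}, whose exponent $2-\varepsilon$ is smaller than $2+\sigma$, so it survives as well; and the quadratic sphere term cannot be dropped because $\widehat{h}_1^{ab}\sim\varepsilon/r^3$ while the two factors $r^2\widehat{q}$ make $\widehat{h}_1^{ac}r^2\widehat{q}_{cb}\widehat{h}_1^{bd}r^2\widehat{q}_{da}\sim\varepsilon^2/r^2$, so that $\pa_{\Lbs}$ of it is still $\sim\varepsilon^2/r^2\gg\varepsilon/r^{2+\sigma}$ for large $r$, whereas every other quadratic combination carries a good component ($h^1_{LL}$, $h^1_{L\Lb}$, or $\delta^{AB}h^1_{AB}$) and decays at least like $\varepsilon^2/r^{3-\varepsilon}$ and is harmless. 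The main obstacle will be the final bookkeeping: one must collect every explicit $M/r^2$ contribution generated by $\pa_{\Lbs}$ and $\pa_{\Ls}$ hitting $M/r$, the frame coefficients, or the radial change of variables and check that they cancel identically, exactly as in the proof of \eqref{eq:wcc_L} (this is forced since $M/r^2\gg\varepsilon/r^{2+\sigma}$); then verify, using that $\sqrt{|g|}-1-\tfrac Mr\chi$ is built from good components, Remark \ref{rem:sharpmetricdecay}, Remark \ref{rem:limit} and \eqref{eq:extraLderest2}, that every remaining cross term is $O_2(\varepsilon/r^{2+\sigma})$, so that what is left is exactly the six named terms of \eqref{eq:wccLb}. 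In short, the computation produces many contributions of size $\varepsilon/r^2$ or $M/r^2$ — which are \emph{not} errors at this accuracy — and the work is to see that each matches one of the six named terms or cancels against another with no residue; granting this, the proposition follows as \eqref{eq:wcc_L} and \eqref{eq:wcc:Y} did.
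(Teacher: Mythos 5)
Your plan follows the paper's proof essentially step for step: the same contraction with $\Lbs_{\!\!\beta}$ and three-way frame decomposition, the same Leibniz/commutator bookkeeping for the frame coefficients, the same identification of the six surviving terms (with $\pa_{\Lbs}\sqrt{|g|}$ supplying $\tfrac12\pa_{\Lbs}(\slashed{\tr}h^1)$ and the quadratic sphere term, and the cancellation between $\pa_{\Lbs}(\widehat{h}_1^{v^*u^*})$ and $\pa_{\Lbs}(h^1_{\Ls\Lbs})$, exactly as in the paper's Lemmas \ref{lem:deriofdeter} and \ref{lem:deriofLLb}), and the same final check that the explicit $M/r^2$ contributions cancel. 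The only difference is organizational: the paper packages the computation of $\pa_{\Lbs}\sqrt{|g|}$ and the relation between $h^1_{\Ls\Lbs}$ and $\widehat{h}_1^{v^*u^*}$ into two separate lemmas, which you fold into a direct second-order expansion of the determinant.
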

 In order to prove this proposition, we now need the following lemmas.
\begin{lemma}\label{lem:deriofdeter}
	We have
	\[\pa_{\Lbs}(\sqrt{|g|})=\frac{M}{r^2}-\frac 12\pa_{\Lbs}(h^1_{\Ls\Lbs})+\frac{M}{r}\pa_{\Lbs}(\widehat{h}_1^{v^*u^*})+\frac12\pa_{\Lbs}(\slashed{\tr}h^1)-\frac 14\pa_{\Lbs}(\widehat{h}_1^{ac}r^2\widehat{q}_{cb}\widehat{h}_1^{bd}r^2\widehat{q}_{da})+O_2(\frac{\varepsilon}{r^{2+\sigma}}).\]
\end{lemma}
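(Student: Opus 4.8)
The plan is to compute $\sqrt{|g|}$ directly from the expansion of $g_{\alpha\beta}$ in the null frame adapted to the modified coordinates, keeping all terms up to and including quadratic order in $h^1$ where they can contribute to a $\Lbs$-derivative of size $O(\varepsilon r^{-1})$, and discarding everything that is $O_2(\varepsilon r^{-2-\sigma})$. First I would write $|g| = |\det(g_{pq})|$ in the modified asymptotically Schwarzschild null coordinates, so that the leading part is the Schwarzschild-type determinant coming from $m+h_0$, and the corrections come from $h^1$. Using the block structure recorded in \S\ref{subsec:2.2}, the relevant frame components of $h^1$ are $h^1_{\Ls\Ls}$ (i.e.\ $\widehat{h}_1^{u^*u^*}$ after raising), $h^1_{\Ls\Lbs}$, the mixed $h^1_{\Ls a}, h^1_{\Lbs a}$, and the sphere block $h^1_{ab}$; the combination $\slashed{\tr}h^1 = \widehat{q}^{\,ab} h^1_{ab}$ and the quadratic sphere term $\widehat{h}_1^{ac}r^2\widehat{q}_{cb}\widehat{h}_1^{bd}r^2\widehat{q}_{da}$ will appear from expanding $\log\det$ of the $2\times 2$ angular block to second order. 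The $M/r^2$ term is the Schwarzschild background contribution to $\pa_{\Lbs}\sqrt{|g|}$, and the $\frac{M}{r}\pa_{\Lbs}(\widehat{h}_1^{v^*u^*})$ term is a genuine cross term between the $O(M/r)$ background and the first-order perturbation in the $u^*v^*$ block.

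Concretely, I would use the identity $\sqrt{|g|} = \sqrt{|g_{\mathrm{bg}}|}\,\bigl(1 + \tfrac12 (g_{\mathrm{bg}})^{pq} h^1_{pq} + \tfrac12\,(\text{quadratic}) + \cdots\bigr)$ where $g_{\mathrm{bg}} = m + h_0$ in the null frame, then differentiate with $\pa_{\Lbs}$ and expand. Applying $\pa_{\Lbs}$ to the background determinant gives $M/r^2$ plus lower-order terms (here one uses $\pa_{\Lbs} r = -\rho' \sim -(1+M/r)$ and the explicit form of $\rho'$). Applying $\pa_{\Lbs}$ to the linear term $\tfrac12 (g_{\mathrm{bg}})^{pq} h^1_{pq}$ produces $\tfrac12\pa_{\Lbs}(\slashed{\tr}h^1)$ from the angular block, $-\tfrac12\pa_{\Lbs}(h^1_{\Ls\Lbs})$ from the $u^*v^*$ block (the $-\tfrac12$ coming from the off-diagonal structure of the Minkowski null metric where $m_{\Ls\Lbs} = -2$ so $m^{\Ls\Lbs} = -\tfrac12$), and the cross term $\frac{M}{r}\pa_{\Lbs}(\widehat{h}_1^{v^*u^*})$ from differentiating the $O(M/r)$ correction to the inverse background metric against $h^1$. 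Applying $\pa_{\Lbs}$ to the quadratic term gives $-\tfrac14\pa_{\Lbs}(\widehat{h}_1^{ac}r^2\widehat{q}_{cb}\widehat{h}_1^{bd}r^2\widehat{q}_{da})$, the only quadratic piece that survives because the other quadratic combinations involve components that are already $O(\varepsilon r^{-1-\sigma})$ pointwise (by the tangential estimate \eqref{eq:metricdecaysharptan} and Remark \ref{rem:limit}), so their $\Lbs$-derivative is $O_2(\varepsilon r^{-2-\sigma})$; I would track the power counting carefully using that $\pa_{\Lbs}$ is a "bad" derivative, so it does not gain decay, but the quadratic sphere components $r h^1_{ab}$ have size $O(\varepsilon (1+q_+)^{-\gamma'})$ and there is a genuine $r^2$ from $r^2\widehat{q}_{ab}$ balancing the $\widehat{q}^{ab}\sim r^{-2}$ weights.

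The main obstacle I anticipate is bookkeeping: systematically justifying which of the many $O(h^2)$ terms and which of the $h^0\cdot h^1$ cross terms are actually negligible, i.e.\ fit into $O_2(\varepsilon r^{-2-\sigma})$, versus which must be kept. This requires carefully matching each frame component of $h^1$ to its sharp decay rate from Proposition \ref{prop:sharpmetricdecay}, Remark \ref{rem:sharpmetricdecay}, and Remark \ref{rem:limit} — in particular distinguishing the tangential components (which decay like $r^{-1}(1+q_+)^{-\gamma'}$, good) from the $\Lbs\Lbs$ component (which only has the weaker bound) — and being careful that $\pa_{\Lbs}$ applied to a product obeys Leibniz without creating spurious large terms. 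A secondary subtlety is the precise treatment of the $\rho'$ factors: since $r^* \neq \widetilde r$, one has $\pa_{\Lbs} = \pa_t - (\omega^i/\rho')\pa_i$ and $\rho'^2 = (1+M/r)/(1-M/r)$, and the $M/r^2$ and $\frac Mr \pa_{\Lbs}(\widehat h_1^{v^*u^*})$ coefficients come out correctly only if one expands $\rho'$ to the right order; I would verify the coefficient $M/r^2$ against the explicit Schwarzschild computation as a consistency check. Once the decomposition of $\sqrt{|g|}$ is in hand, collecting the surviving terms gives exactly the stated formula, and Proposition \ref{prop:wccLb} then follows by substituting this lemma into the $\Lbs$-contraction of \eqref{eq:wcc} and combining with the already-established identities \eqref{eq:wcc_L} and \eqref{eq:wcc:Y} to eliminate the $\widehat{h}_1^{u^*u^*}$ and $\widehat{h}_1^{v^*a}$ terms.
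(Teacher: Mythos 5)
Your proposal is correct and follows essentially the same route as the paper: the paper's proof starts from the identity $2\pa_{\Lbs}\sqrt{|g|}=\sqrt{|g|}\,g^{\mu\nu}\pa_{\Lbs}g_{\mu\nu}$ and then expands $g^{\mu\nu}=m^{\mu\nu}+h_0^{\mu\nu}+h_1^{\mu\nu}$ and $\sqrt{|g|}=1+M/r-\tfrac12 h^1_{L\Lb}+O_2(\varepsilon r^{-2-\sigma})$, which is just your second-order determinant expansion differentiated in the other order, and it identifies the five surviving terms from exactly the sources you name (background $\to M/r^2$, null block $\to -\tfrac12\pa_{\Lbs}h^1_{\Ls\Lbs}$, $h_0$--$h^1$ cross term $\to \tfrac{M}{r}\pa_{\Lbs}\widehat{h}_1^{v^*u^*}$ via $h^1_{\Ls\Lbs}=-\widehat{h}_1^{v^*u^*}+O_3(\varepsilon r^{-1-\sigma})$, angular trace, and the quadratic angular block), discarding the rest by the same sharp-decay bookkeeping you describe.
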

\begin{proof}
	We first notice that $h^1_{UV}=h^1_{\alpha\beta}U^\alpha V^\beta$. Given the facts that $2\pa_\alpha(\sqrt{|g|})=\sqrt{|g|}g^{\mu\nu}\pa_{\alpha}g_{\mu\nu}$,  $\sqrt{|g|}=1+\frac{M}{r}-\frac 12h^1_{L\Lb}+O_2(\varepsilon r^{-2-\sigma})$ and $h^1_{L\Lb}=h^1_{\Ls\Lbs}+O_3(\varepsilon r^{-1-\sigma})=-\widehat{h}_1^{v^*u^*}+O_3(\varepsilon r^{-1-\sigma})$, we obtain
	\begin{multline*}	
		\pa_{\Lbs}(\sqrt{|g|})=\sqrt{|g|}\Big(\frac{M}{r^2}+\frac 12g^{\alpha\beta}\pa_{\Lbs}(h^1_{\alpha\beta})\Big)
		=\sqrt{|g|}\Big(\frac{M}{r^2}-\frac 12\pa_{\Lbs}(h^1_{\Ls\Lbs})-\frac{M}{2r}\pa_{\Lbs}(h^1_{\Ls\Lbs})
+\frac12\pa_{\Lbs}(\slashed{\tr}h^1)\Big)\\
+\sqrt{|g|}\Big(-\frac 14h^1_{\Ls\Lbs}\pa_{\Lbs}(h^1_{\Ls\Lbs})-\frac 14\pa_{\Lbs}(\widehat{h}_1^{ac}r^2\widehat{q}_{cb}\widehat{h}_1^{bd}r^2
\widehat{q}_{da})\Big)+O_2(\frac{\varepsilon}{r^{2+\sigma}})\\
		=\frac{M}{r^2}-\frac 12\pa_{\Lbs}\!(h^1_{\Ls\Lbs}\!)+\frac{M}{r}\pa_{\Lbs}\!(\widehat{h}_1^{v^*u^*}\!)
+\frac12\pa_{\Lbs}(\slashed{\tr}h^1)-\frac 14\pa_{\Lbs}(\widehat{h}_1^{ac}r^2\widehat{q}_{cb}\widehat{h}_1^{bd}r^2
\widehat{q}_{da})+O_2(\frac{\varepsilon}{r^{2+\sigma}}).
 \tag*{\qedhere}
	\end{multline*}
\end{proof}
\begin{lemma}\label{lem:deriofLLb}
	We have 	
	\[
	\pa_{\Lbs}(h^1_{\Ls\Lbs})+\pa_{\Lbs}(\widehat{h}_1^{v^*u^*})=-\widehat{h}_1^{v^*u^*}\pa_{\Lbs}(\widehat{h}_1^{v^*u^*})+\frac{2M}{r}\pa_{\Lbs}(\widehat{h}_1^{v^*u^*})+O_2(\frac{\varepsilon}{r^{2+\sigma}}).
	\]
\end{lemma}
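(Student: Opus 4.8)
\emph{Reduction to an algebraic identity.}
Since everything in the statement is obtained by applying $\pa_{\Lbs}$ to scalars, the plan is to first prove the sharper identity
\[
h^1_{\Ls\Lbs}+\widehat{h}_1^{v^*u^*}=\frac{2M}{r}\,\widehat{h}_1^{v^*u^*}-\frac12\big(\widehat{h}_1^{v^*u^*}\big)^2+E,\qquad \pa_{\Lbs}E=O_2\!\big(\varepsilon\, r^{-2-\sigma}\big),
\]
and then differentiate. Indeed $\pa_{\Lbs}\big(\tfrac12(\widehat{h}_1^{v^*u^*})^2\big)=\widehat{h}_1^{v^*u^*}\,\pa_{\Lbs}(\widehat{h}_1^{v^*u^*})$, while $\big(\pa_{\Lbs}\tfrac{2M}{r}\big)\widehat{h}_1^{v^*u^*}=O_2(M\varepsilon\, r^{-3})=O_2(\varepsilon\, r^{-2-\sigma})$ because $M\les\varepsilon$; hence $\pa_{\Lbs}\big(\tfrac{2M}{r}\widehat{h}_1^{v^*u^*}\big)=\tfrac{2M}{r}\pa_{\Lbs}(\widehat{h}_1^{v^*u^*})+O_2(\varepsilon r^{-2-\sigma})$, and the lemma follows.

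\emph{Expanding the inverse metric.}
I would use the Neumann series $g^{\alpha\beta}=m^{\alpha\beta}-m^{\alpha\mu}h_{\mu\nu}m^{\nu\beta}+m^{\alpha\mu}h_{\mu\rho}m^{\rho\sigma}h_{\sigma\nu}m^{\nu\beta}-\cdots$ with $h=h^0+h^1$. Because $h^0_{\mu\nu}=\tfrac Mr\chi\,\delta_{\mu\nu}$ and $m^{\alpha\mu}\delta_{\mu\nu}m^{\nu\beta}=\delta^{\alpha\beta}$, the pure-$h^0$ linear term is exactly $h_0^{\alpha\beta}$, so $h_1^{\alpha\beta}=-m^{\alpha\mu}h^1_{\mu\nu}m^{\nu\beta}+(\text{quadratic and higher in }h)$. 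Contracting with $\Ls_{\!\!\alpha}=-\pa_\alpha u^*$ and $\Lbs_{\!\!\alpha}=-\pa_\alpha v^*$, for which $\widehat{h}_1^{v^*u^*}=h_1^{\alpha\beta}\Ls_{\!\!\alpha}\Lbs_{\!\!\beta}$ exactly (one checks $(m^{\alpha\beta}+h_0^{\alpha\beta})\Ls_{\!\!\alpha}\Lbs_{\!\!\beta}=-2(1+M/r)$ using $\rho'^2=\tfrac{1+M/r}{1-M/r}$), and setting $\ell:=m^{\alpha\mu}\Ls_{\!\!\alpha}=\pa_t+\rho'\pa_r$, $\underline\ell:=m^{\alpha\mu}\Lbs_{\!\!\alpha}=\pa_t-\rho'\pa_r$, one gets
\[
\widehat{h}_1^{v^*u^*}=-h^1(\ell,\underline\ell)-\frac Mr\chi\big(h^1(\ell,\ell)+h^1(\underline\ell,\underline\ell)\big)+\big\langle h^1(\ell,\cdot),h^1(\underline\ell,\cdot)\big\rangle_m+R,
\]
where $\langle\cdot,\cdot\rangle_m$ is the pairing of one-forms induced by $m$ and $R$ collects the pure-$h^0$ higher terms ($O_2(M^2r^{-2})$) and all cubic-and-higher terms in $h$ ($O_2(\varepsilon^3r^{-3})$). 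On the other hand $h^1_{\Ls\Lbs}=h^1(\Ls,\Lbs)$ with $\Ls=\pa_t+\pa_{r^*}$, $\Lbs=\pa_t-\pa_{r^*}$ and $\pa_{r^*}=\rho'^{-1}\pa_r$, so $\Ls-\ell$ and $\Lbs-\underline\ell$ are $O(M/r)$ multiples of $\pa_r$; by symmetry of $h^1$ the off-diagonal cross terms cancel and $h^1_{\Ls\Lbs}-h^1(\ell,\underline\ell)=(\rho'^2-\rho'^{-2})\,h^1(\pa_r,\pa_r)$ with $\rho'^2-\rho'^{-2}=\tfrac{4M/r}{1-M^2/r^2}$.

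\emph{Null frame, cancellation, and conclusion.}
I would then expand in the null frame $L=\pa_t+\pa_r$, $\Lb=\pa_t-\pa_r$, $S_1,S_2$ of $(t,x)$: $h^1(\pa_r,\pa_r)=\tfrac14(h^1_{LL}-2h^1_{L\Lb}+h^1_{\Lb\Lb})$, $h^1(\ell,\ell)=h^1_{LL}+O(M\varepsilon r^{-2})$, $h^1(\underline\ell,\underline\ell)=h^1_{\Lb\Lb}+O(M\varepsilon r^{-2})$, and (evaluating $m$ in the $\{L,\Lb,S_A\}$ co-frame, where its only nonzero components are $m^{L\Lb}=-\tfrac12$ and $m^{S_AS_B}=\delta^{AB}$) $\langle h^1(\ell,\cdot),h^1(\underline\ell,\cdot)\rangle_m=-\tfrac12(h^1_{L\Lb})^2+(\text{terms with a factor }h^1_{LT})+O(M\varepsilon^2 r^{-3})$, $T\in\{L,S_1,S_2\}$. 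Adding everything up, the two copies of $\tfrac Mr h^1_{\Lb\Lb}$ cancel — one from $\tfrac{4M/r}{1-M^2/r^2}\cdot\tfrac14 h^1_{\Lb\Lb}$ in the linear part, one from $-\tfrac Mr\chi\,h^1(\underline\ell,\underline\ell)$ — up to the mismatch factor $1-\chi(\tfrac r{1+t})$, which on $\{\widetilde r>t/2\}\cap\{r^*>t/2\}$ is $O\big((1+M\log r)/r\big)$ and hence contributes an absorbable error. Every remaining term is either $O_2(M^2r^{-2})+O_2(\varepsilon^3r^{-3})$ or carries a light-cone-tangent component $h^1_{LT}$; using Proposition~\ref{prop:sharpmetricdecay} — where $h^1_{LT}$ and $\pa_{\Lbs}h^1_{LT}$ gain factors $r^{-\gamma'}$ resp.\ $r^{-1+\varepsilon}$ over the generic size $\varepsilon r^{-1}$ — together with $M\les\varepsilon$ and $\sigma\le\min\{\gamma',1-3\varepsilon\}$, each such term has $\pa_{\Lbs}$-derivative $O_2(\varepsilon r^{-2-\sigma})$. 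What survives is $h^1_{\Ls\Lbs}+\widehat{h}_1^{v^*u^*}=-\tfrac{2M}{r}h^1_{L\Lb}-\tfrac12(h^1_{L\Lb})^2+E$; inserting $h^1_{L\Lb}=-\widehat{h}_1^{v^*u^*}+O_3(\varepsilon r^{-1-\sigma})$ (available already; it is the leading-order form of the same computation) and noting that the $O_3$-correction, multiplied by $\tfrac Mr$ or by $\widehat{h}_1^{v^*u^*}=O_3(\varepsilon r^{-1})$, stays absorbable, gives the algebraic identity of the first paragraph.

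\emph{The main obstacle.}
The crux is the cancellation of the full-strength term $\tfrac Mr h^1_{\Lb\Lb}$: it is of the same size $\varepsilon^2r^{-2}$ as the terms one keeps, it appears in the linear part \emph{only because} the coordinate $r^*$ has a nontrivial Jacobian $\rho'\ne1$ (so $\Ls\ne\ell$), and it must be matched with exactly the opposite coefficient by the $h^0\!\cdot\!h^1$ cross term in the expansion of $g^{\alpha\beta}$; this matching rests on the precise relation $\rho'^2=(1+M/r)/(1-M/r)$, and on $\chi$ being essentially $1$ on the region under consideration. The secondary, more routine, difficulty is to confirm that every discarded term becomes $O_2(\varepsilon r^{-2-\sigma})$ \emph{after} applying the ``bad'' null derivative $\pa_{\Lbs}$, which — unlike the good derivative $\pa_{\Ls}$ — gains no extra decay near the light cone; this is precisely where the improved decay of the components $h^1_{LT}$ tangent to the light cone, and the smallness of $M$ relative to $\varepsilon$, enter.
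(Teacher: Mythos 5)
Your proposal is correct and follows essentially the same route as the paper: expand $h_1^{\alpha\beta}$ to quadratic order in the Neumann series, contract with $\Lbs_{\!\!\alpha}\Ls_{\!\!\beta}$ using $\widehat{h}_1^{v^*u^*}=h_1^{\alpha\beta}\Lbs_{\!\!\alpha}\Ls_{\!\!\beta}$, exhibit the cancellation of the $\tfrac{M}{r}h^1_{\Lb\Lb}$ (and $\tfrac{M}{r}h^1_{LL}$) terms between the $\rho'$-induced correction in the linear part and the $h^0\!\cdot\!h^1$ cross term, discard everything carrying an $h^1_{LT}$ factor via the wave-coordinate decay, and only then apply $\pa_{\Lbs}$ and substitute $h^1_{L\Lb}=-\widehat{h}_1^{v^*u^*}+O_3(\varepsilon r^{-1-\sigma})$. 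The only differences are cosmetic (you package the linear correction as $(\rho'^2-\rho'^{-2})h^1(\pa_r,\pa_r)$ rather than expanding $\underline\ell=\Lbs+(\tfrac1{\rho'}-\rho')\pa_r$, and you track the cutoff $\chi$ explicitly), so no further comparison is needed.
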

\begin{proof}
	Since we express $g_{\alpha\beta}=g^0_{\alpha\beta}+h^1_{\alpha\beta}$ and $g^{\alpha\beta}=g_0^{\alpha\beta}+h_1^{\alpha\beta}$, we see that
	\[
	h_1^{\alpha\beta}=-m^{\alpha\mu}h^1_{\mu\nu}m^{\nu\beta}+m^{\alpha\alpha'}(\frac Mr\delta_{\alpha'\mu}+h^1_{\alpha'\mu})m^{\mu\nu}(\frac Mr\delta_{\nu\beta'}+h^1_{\nu\beta'})m^{\beta'\beta}
	+O_3(\frac{\varepsilon}{r^{2+\sigma}}).\]
	Therefore
	\begin{equation*}
		\widehat{h}_1^{v^*u^*}=h_1^{\alpha\beta}\Lbs_{\!\!\alpha}\Ls_{\!\!\beta}
		=-m ^{\alpha\mu}h^1_{\mu\nu}m^{\nu\beta}\Lbs_{\!\!\alpha}\Ls_{\!\!\beta}+\Lb^{\alpha'}L^{\beta'}m^{\mu\nu}(\frac Mr\delta_{\alpha'\mu}+h^1_{\alpha'\mu})(\frac Mr\delta_{\nu\beta'}+h^1_{\nu\beta'})+O_3(\frac{\varepsilon}{r^{2+\sigma}}).
	\end{equation*}
	We analyze the first term
	\begin{align*}
		-m ^{\alpha\mu}h^1_{\mu\nu}m^{\nu\beta}\Lbs_{\!\!\alpha}\Ls_{\!\!\beta}&=-h^1_{\mu\nu}(\Lbs^{\mu}+(\frac{1}{\rho'}-\rho')\omega_j\delta^{j\mu})(\Ls^{\nu}+(\rho'-\frac{1}{\rho'})\omega_j\delta^{j\nu})\\
		&=-h^1_{\Ls\Lbs}+\frac Mr(h^1_{\Ls\Ls}+h^1_{\Lbs\Lbs}-2h^1_{\Ls\Lbs})+
		O_3(\frac{\varepsilon}{r^{2+\sigma}}).
	\end{align*}
	Now we turn to the second term. Using the null frame $(L=\pa_t+\pa_r, \Lb=\pa_t-\pa_r, A, B)$, we rewrite $\Lb^{\alpha'}\delta_{\alpha'\mu}$, $L^{\beta'}\delta_{\nu\beta'}$, $\Lb^{\alpha'}h^1_{\alpha'\mu}$, $L^{\beta'}h^1_{\nu\beta'}$  as follows
	\begin{align*}
		\Lb^{\alpha'}h^1_{\alpha'\mu}&=-\frac 12\Lb_\mu h^1_{L\Lb}-\frac 12L_\mu h^1_{\Lb\Lb}+A_\mu h^1_{A\Lb},\qquad\Lb^{\alpha'}\delta_{\alpha'\mu}=-\frac 12L_\mu\delta_{\Lb\Lb}=-L_\mu,\\
		L^{\beta'}h^1_{\nu\beta'}&=-\frac 12\Lb_\nu h^1_{LL}-\frac 12L_\nu h^1_{\Lb L}+A_\nu h^1_{AL},\qquad L^{\beta'}\delta_{\nu\beta'}=-\frac 12\Lb_\nu\delta_{LL}=-\Lb_\nu.
	\end{align*}
	We see that
	\begin{align*}
		\Lb^{\alpha'}L^{\beta'}m^{\mu\nu}(\frac Mr\delta_{\alpha'\mu}+h^1_{\alpha'\mu})(\frac Mr\delta_{\nu\beta'}+h^1_{\nu\beta'})=-\frac 12h^1_{L\Lb}h^1_{L\Lb}-\frac Mrh^1_{LL}-\frac Mrh^1_{\Lb\Lb}-\frac{2M}{r^2}+O_3(\frac{\varepsilon}{r^{2+\sigma}}).
	\end{align*}
	Thus we find
	\begin{align*}
		\widehat{h}_1^{v^*u^*}+h^1_{\Ls\Lbs}=-\frac 12h^1_{L\Lb}h^1_{L\Lb}-\frac {2M}{r}h^1_{L\Lb}-\frac{2M}{r^2}+O_3(\frac{\varepsilon}{r^{2+\sigma}})
	\end{align*}
	and
	\begin{equation*}		\pa_{\Lbs}(\widehat{h}_1^{v^*\!u^*}\!\!\!+h^1_{\Ls\Lbs}\!)\!=-h^1_{L\Lb}
\pa_{\Lbs}(h^1_{L\Lb})-\frac {2M\!}{r}\pa_{\Lbs}(h^1_{L\Lb})+O_2(\frac{\varepsilon}{r^{2+\sigma}}) =-\widehat{h}_1^{v^*\!u^*}\!\pa_{\Lbs}(\widehat{h}_1^{v^*\!u^*}\!)
+\frac{2M\!}{r}\pa_{\Lbs}(\hat{h}_1^{v^*\!u^*}\!)
+O_2(\frac{\varepsilon}{r^{2+\sigma}}).
	\end{equation*}
\end{proof}

\begin{proof}[Proof of Proposition \ref{prop:wccLb}]
	Contracting with $\Lbs_{\!\!\beta}$ we see that
	\begin{align*}
		\Lbs_{\!\!\beta}\pa_\alpha\left(g^{\alpha\beta}\sqrt{|g|}\right)=-\frac12\Lbs_{\!\!\beta} \Ls_{\!\!\alpha} \pa_{\Lbs}\left(g^{\alpha\beta}\sqrt{|g|}\right)-\frac12\Lbs_{\!\!\beta}\Lbs_{\!\!\alpha}\pa_{\Ls}\left(g^{\alpha\beta}\sqrt{|g|}\right)+\Lbs_{\!\!\beta} \widehat{A}_\alpha^c\pa_{\widehat{y}^c}\left(g^{\alpha\beta}\sqrt{|g|}\right)=0.
	\end{align*}
	We first analyze the last two expressions
	\begin{equation*} -\frac12\Lbs_{\!\!\beta}\Lbs_{\!\!\alpha}\pa_{\Ls}\left(g^{\alpha\beta}\sqrt{|g|}\right)=-\frac12\pa_{\Ls}\left(\Lbs_{\!\!\beta}\Lbs_{\!\!\alpha} g^{\alpha\beta}\sqrt{|g|}\right)+\pa_{\Ls}(\Lbs_{\!\!\alpha})\Lbs_{\!\!\beta}g^{\alpha\beta}\sqrt{|g|}=-\frac12\pa_{\Ls}(\widehat{h}_1^{v^*v^*})-\frac{M}{r^2}+O_2(\frac{\varepsilon}{r^{2+\sigma}})
	\end{equation*}
	and
	\begin{multline*}
		\Lbs_{\!\!\beta} \widehat{A}_\alpha^c\pa_{\widehat{y}^c}\left(g^{\alpha\beta}\sqrt{|g|}\right)=\pa_{\widehat{y}^c}\!\left(\Lbs_{\!\!\beta} \widehat{A}_\alpha^cg^{\alpha\beta}\sqrt{|g|}\right)-\pa_{\widehat{y}^c}(\widehat{A}^c_\alpha)\Lbs_{\!\!\beta} g^{\alpha\beta}\sqrt{|g|}-\pa_{\widehat{y}^c}(\Lbs_{\!\!\beta})\widehat{A}^c_\alpha g^{\alpha\beta}\sqrt{|g|}\\ =-\pa_{\widehat{y}^c}(\widehat{h}_1^{v^*c})+(\widehat{A}^d_\alpha\widehat{\Gamma}^c_{dc}
		+\frac{\delta_c^c\omega^j\delta_{j\beta}}{r})\Lbs_{\!\!\beta} g^{\alpha\beta}\sqrt{|g|}+\rho'\frac{\delta^{ij}-\omega^i\omega^j}{r}\delta_{i\alpha}\delta_{j\beta}g^{\alpha\beta}
		+O_2(\frac{\varepsilon}{r^{2+\sigma}})\\
		=-\widehat{\slashed{\nabla}}_c(\widehat{h}_1^{v^*c})
		+\frac{\widehat{h}_1^{v^*u^*}}{r}-\frac{\widehat{h}_1^{v^*v^*}}{r}+O_2(\frac{\varepsilon}{r^{2+\sigma}}).
	\end{multline*}
	Then it remains to compute the term $-\frac12\Lbs_{\!\!\beta} \Ls_{\!\!\alpha} \pa_{\Lbs}\left(g^{\alpha\beta}\sqrt{|g|}\right)$.
Now we compute
\begin{align*}
	-\frac12\Lbs_{\!\!\beta}\Ls_{\!\!\alpha} \pa_{\Lbs}\left(g^{\alpha\beta}\sqrt{|g|}\right)=-\frac12 \pa_{\Lbs}\left(\Lbs_{\!\!\beta}\Ls_{\!\!\alpha} g^{\alpha\beta}\sqrt{|g|}\right)-\frac{M}{r^2}+O_2(\frac{\varepsilon}{r^{2+\sigma}}).
\end{align*}
Using Lemma ~\ref{lem:deriofdeter} and Lemma ~\ref{lem:deriofLLb}, we further calculate
\begin{multline*}
	-\frac12 \pa_{\Lbs}\!\big(\Lbs_{\!\!\beta}\Ls_{\!\!\alpha} g^{\alpha\beta}\!\sqrt{|g|}\big)
	=\frac{M}{r^2}-\frac 12\pa_{\Lbs}(\widehat{h}_1^{v^*\!u^*})-\frac{M}{2r}\pa_{\Lbs}(\widehat{h}_1^{v^*\!u^*})-\frac 14\widehat{h}_1^{v^*\!u^*}\pa_{\Lbs}(\widehat{h}_1^{v^*u^*})-\frac 12\widehat{g}^{v^*\!u^*}\pa_{\Lbs}(\sqrt{|g|})+O_2(\frac{\varepsilon}{r^{2+\sigma}})\\
	=\frac{2M}{r^2}+\frac12\pa_{\Lbs}(\slashed{\tr}h^1)-\frac 14\pa_{\Lbs}(\widehat{h}_1^{ac}r^2\widehat{q}_{cb}\widehat{h}_1^{bd}r^2\widehat{q}_{da})+O_2(\frac{\varepsilon}{r^{2+\sigma}}).
\end{multline*}
Putting all together finishes the proof.
\end{proof}

\section{Construction of outgoing characteristic surfaces}\label{sec:3}
In \cite{Lind17} we show that the eikonal equation
\beq \label{eq:eikonalintro} g^{\alpha\beta}\pa_\alpha u\, \pa_\beta
u=0,\qquad \text{in}\quad r>|t|/2 ,
\eq
has a unique solution with asymptotic data at infinity $u\sim \us\!=t-r^*$, as $t\to\infty$.

\begin{remark}
		In the construction of $u$ coordinate,  we may change the asymptotic data imposed at infinity, i.e., we may require $u\sim u^*+f(\widehat{y}^3, \widehat{y}^4)$. This leads to the transformation at future null infinity $\mathscr{I}^+: u\to u+f(\overline{y}^3, \overline{y}^4)$ where $(\overline{y}^3, \overline{y}^4)$ are the angular coordinates in the Bondi-Sachs coordinate system associated to $u$ whose construction will be given in Subsection \ref{subsec:6.1}. So the changes of asymptotic data for $u$ of this type generate the \textit{supertranslations} which is an infinite dimensional subgroup of the asymptotic symmetry group at null infinity---Bondi-Metzner-Sachs group \cite{S62a, W84, MW16}.
\end{remark}
\begin{prop}[{\cite[Proposition 26]{Lind17}}]\label{prop:eikonalintro} The eikonal equation \eqref{eq:eikonalintro} has a solution
	$u\!=\mathring{u}+\us$ satisfying
	\beq\label{eq:estimateofu}
	{\sum}_{|I|\leq 2}|{Z}^{* I}\mathring{u}|
	\leq C_1\varepsilon\Big(\frac{1+(r^*\!-|t|)_-}{1+t+|\,q^*|}\Big)^{\gamma\prime}, \quad  r>|t|/2.
	\eq
\end{prop}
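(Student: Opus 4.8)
The plan is to recast the eikonal equation as a transport equation for the correction $\mathring u$ in $u=\mathring u+\us$ along the outgoing direction $\Ls=\pa_t+\pa_{r^*}$, and then solve it by integrating inward from null infinity inside a continuity/bootstrap argument driven by the asymptotics of \cite{Lind17} recalled above. Substituting $u=\mathring u+\us$, $\us=t-r^*$, into $g^{\alpha\beta}\pa_\alpha u\,\pa_\beta u=0$ and using that $r^*$ was chosen through $\rho'$ precisely so that $\us$ solves the eikonal equation of the Schwarzschild-like background $m+h_0$, one obtains, in the coordinates $\widehat y^p=(v^*,u^*,\widehat y^a)$,
\[
\widehat h_1^{u^*u^*}+2\,\widehat g^{u^*q}\pa_q\mathring u+\widehat g^{pq}\pa_p\mathring u\,\pa_q\mathring u=0 ,
\]
since $\pa_p\us=\delta_p^{u^*}$ and $\widehat g^{u^*u^*}=\widehat h_1^{u^*u^*}$. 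Splitting $\widehat g^{u^*q}\pa_q$ according to the metric components listed above --- $\widehat g^{u^*v^*}=-2(1+M/r)+\widehat h_1^{v^*u^*}$, $\widehat g^{u^*a}=\widehat h_1^{u^*a}$ --- and moving the $\widehat h_1^{v^*u^*}\pa_{v^*}\mathring u$ term to the left, where it merely perturbs the invertible coefficient $4(1+M/r)$, the equation becomes the transport equation in $v^*$ at fixed $(u^*,\widehat y^a)$
\[
\pa_{v^*}\mathring u=\frac{\widehat h_1^{u^*u^*}+2\widehat h_1^{u^*u^*}\pa_{u^*}\mathring u+2\widehat h_1^{u^*a}\pa_{\widehat y^a}\mathring u+\widehat g^{pq}\pa_p\mathring u\,\pa_q\mathring u}{4(1+M/r)-2\widehat h_1^{v^*u^*}}=:G ,
\]
whose source is $\sim\widehat h_1^{u^*u^*}$ and whose remaining terms are either quadratic in $\pa\mathring u$ or carry an extra small factor $\widehat h_1$ times a derivative of $\mathring u$ controlled by the $Z^*$. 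Imposing $\mathring u\to0$ (i.e.\ $u\to\us$) as $v^*=t+r^*\to\infty$ yields $\mathring u=-\int_{v^*}^\infty G\,dv'$, the integral being taken along the generator with $(u^*,\widehat y^a)$ held fixed.

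I would then run a continuity argument with the bootstrap assumption $\sum_{|I|\le2}|Z^{*I}\mathring u|\le C_1\eps\,W$, $W=\big((1+(r^*-|t|)_-)/(1+t+|q^*|)\big)^{\gamma'}$. Commuting the transport equation with up to two of the vector fields $Z^*$ of \eqref{eq:modifiedvectorfields} is routine --- each $[\Ls,Z^*]$ is a bounded combination of the vector fields in the collection and of $\Ls$ --- and produces an analogous system for $\{Z^{*I}\mathring u\}_{|I|\le2}$. The crux is the integral of the source along a generator, where $t-r^*$, hence $q^*$, is constant while the parameter $v'=t+r^*$ ranges over $(v^*,\infty)$: bounding $|G|$ by $|\widehat h_1^{u^*u^*}|$ plus terms made negligible by the bootstrap and the extra $\eps$, and using Proposition~\ref{prop:sharpmetricdecay} (Remark~\ref{rem:sharpmetricdecay}) for $|\widehat h_1^{u^*u^*}|$, one gets
\[
|\mathring u|\le\int_{v^*}^\infty|G|\,dv'\;\lesssim\;\eps\int_{v^*}^\infty\Big(\frac{1}{(1+v')^{1+\gamma'}}+\frac{1}{1+v'}\Big(\frac{1+q_-^*}{1+v'}\Big)^{1-\eps}\Big)\,dv'\;\lesssim\;\eps\,W ,
\]
where $\gamma'\le 1-\eps$ is used to compare the two weights; the same argument applied to the commuted equations recovers $\sum_{|I|\le2}|Z^{*I}\mathring u|\le\tfrac12 C_1\eps\,W$ and closes the bootstrap. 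Existence (and uniqueness) then follow by feeding this a priori estimate into a Picard iteration --- solve the linear transport equation with $\mathring u_n$ frozen in the nonlinearity, obtain uniform $C^2$ bounds, and show the sequence is Cauchy in the weighted $C^0$ norm --- while the difference of two solutions solves a homogeneous linear transport equation with zero data at $v^*=\infty$ and so vanishes.

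The main obstacle is the quasilinear feature of the problem: the true bicharacteristics of $g$ are not the $\pa_{v^*}$-lines but depend on $\mathring u$ (and on $h_1$) themselves, so one must show simultaneously that they remain in $\{r>|t|/2\}$ --- using that the characteristic cones converge to the Schwarzschild cones of mass $M$ --- and that replacing them by the $\pa_{v^*}$-lines in the integration costs only a lower-order error, which is what legitimizes the transport formulation; a careful treatment would integrate the genuine transport operator $\widehat g^{u^*q}\pa_q$ along its own flow. A secondary difficulty is the borderline logarithmic behaviour near the light cone $\{t=r\}$: the metric bounds carry the factor $S^0\lesssim\eps^{-1}(\langle t+r\rangle/\langle t-r\rangle)^{\eps}$, and one has to verify that this loss is exactly absorbed by the choice $\gamma'=\gamma-C\eps$, and that the various regions of $\{r>|t|/2\}$ --- near the cone and in the interior --- fit together, so that the $v^*$-integral above indeed converges with precisely the stated power.
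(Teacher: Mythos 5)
Your overall strategy --- solve for $\mathring u$ by integrating a transport equation backwards from null infinity along outgoing directions, close a bootstrap for $\sum_{|I|\le2}|Z^{*I}\mathring u|$, and obtain the weight from $\int_{v^*}^\infty|\widehat h_1^{u^*u^*}|\,dv'$ using Remark~\ref{rem:sharpmetricdecay} --- is exactly the strategy of the cited proof (the paper itself only imports \cite[Prop.~26]{Lind17}; its Remark~\ref{rem:Ithree} and Proposition~\ref{prop:badderiofu} display the machinery). Your weight computation, including the comparison of the exponents $1-\eps$ and $\gamma'$ and the reduction of the source to $\widehat h_1^{u^*u^*}$ via $\widehat g^{u^*u^*}=\widehat h_1^{u^*u^*}$, is correct.

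The gap is in the choice of transport operator at the top order of the bootstrap. You integrate along the coordinate lines of $\pa_{v^*}$ and put everything else into the source $G$. When you commute two vector fields $Z^{*I}$ ($|I|=2$) through $\pa_{v^*}\mathring u=G$, the source $Z^{*I}G$ contains the terms $\widehat h_1^{u^*u^*}\pa_{u^*}Z^{*I}\mathring u$, $\widehat h_1^{u^*a}\pa_{\widehat y^a}Z^{*I}\mathring u$ and $\widehat g^{pq}\pa_p\mathring u\,\pa_qZ^{*I}\mathring u$: these are \emph{third} derivatives of $\mathring u$, transverse to the direction of integration, and they are not controlled by a bootstrap that closes at order two. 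They are not ``negligible by the bootstrap and the extra $\eps$'': a bound $O(\eps)\cdot|\pa_{u^*}Z^{*I}\mathring u|$ with $|\pa_{u^*}Z^{*I}\mathring u|\les\eps$ only gives $O(\eps^2)$ with no decay in $v'$, whose integral along a generator diverges; to make these terms integrable you need the coefficients $\widetilde L^{u^*},\widetilde L^a=O(\eps r^{-1-\sigma})$, which in turn uses the gain $|\pa_{\Ls}Z^{*I}\mathring u|=O(\eps r^{-1-\sigma})$ --- itself an output of the scheme at all orders. The actual proof avoids this loss structurally: differentiating $g(\pa u,\pa u)=0$ gives $2g(\pa Zu,\pa u)=-(\mathcal L_Zg)(\pa u,\pa u)$, i.e.\ $2\pa_{\widetilde L}Z\mathring u=-g_Z(\pa u,\pa u)$ with $\widetilde L^\alpha=g^{\alpha\beta}\pa_\beta u$ (cf.~\eqref{eq:eikonalsystemone}), so that \emph{all} first derivatives of $Zu$ are packaged into the directional derivative along $\widetilde L$ and the source contains no top-order transverse derivatives; one then integrates along the integral curves of $\widetilde L$, which by \eqref{eq:tildaL} stay uniformly close to the $v^*$-lines in $r>|t|/2$. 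You gesture at this (``a careful treatment would integrate the genuine transport operator along its own flow''), but you frame it as a perturbative issue about the location of the bicharacteristics and claim the replacement ``costs only a lower-order error''; at $|I|=2$ the error is top order in the derivative count, and fixing it is not optional --- it is the reason the Lie-derivative commutation along $\widetilde L$ is used in \cite{Lind17} and in Remark~\ref{rem:Ithree}.
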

\begin{remark}\label{rem:Ithree}
	Following the proof of Proposition 26 in \cite{Lind17}, we can prove Proposition \ref{prop:eikonalintro} for $|I|\leq 3$. We commutate the vector fields $X\in\mathcal{X}=\{S^*=t\pa_t+x^{*i}\pa_{x^{*i}}, ~\Omega_{ij}, ~\pa_t\}$ through the equation \eqref{eq:eikonalintro}. Let
	$\widetilde{X}\!\!=\!X\!-\!\delta_{X \!\Ss}\!$ and
	$\widetilde{\mathcal{L}}_{\!X}\!=\!\mathcal{L}_{\!X}\!+\!2\delta_{X\!\Ss}$,
	where $\delta_{X\! \Ss}\!\!=\!1$ if $X\!\!=\Ss\!\!$, and $=\!0$ otherwise. In fact, when $|I|=3$ we consider the equation $\pa_{\widetilde{L}}\widetilde{X}\widetilde{Y}\widetilde{Z}\tilde{u}=-H(g, u)/2$ where $\widetilde{L}^\alpha=g^{\alpha\beta}\pa_\beta u$ and
	\begin{align*}
		H(g,u)&=\widetilde{\mathcal{L}}_{\!X}\widetilde{\mathcal{L}}_{\!Y}\widetilde{\mathcal{L}}_{\!Z} g(\pa u, \pa u)\\
		&+2\widetilde{\mathcal{L}}_{\!X}\widetilde{\mathcal{L}}_{\!Y}g(\pa\widetilde{Z}u, \pa u)+2\widetilde{\mathcal{L}}_{\!X}\widetilde{\mathcal{L}}_{\!Z} g(\pa\widetilde{Y}u, \pa u)+2\widetilde{\mathcal{L}}_{\!Y}\widetilde{\mathcal{L}}_{\!Z} g(\pa\widetilde{X}u, \pa u)+2\widetilde{\mathcal{L}}_{\!X}g(\pa\widetilde{Y}\widetilde{Z}u, \pa u)\\
		&+2\widetilde{\mathcal{L}}_{\!Y}g(\pa\widetilde{X}\widetilde{Z}u, \pa u)+2\widetilde{\mathcal{L}}_{\!Z}g(\pa\widetilde{X}\widetilde{Y}u, \pa u)+2\widetilde{\mathcal{L}}_{\!X}g(\pa\widetilde{Z}u, \pa \widetilde{Y}u)+2\widetilde{\mathcal{L}}_{\!Y}g(\pa\widetilde{Z}u, \pa \widetilde{X}u)\\
		&+2\widetilde{\mathcal{L}}_{\!Z}g(\pa\widetilde{X}u, \pa \widetilde{Y}u)+2g(\pa\widetilde{X}\widetilde{Y}u, \pa\widetilde{Z}u)+2g(\pa\widetilde{X}\widetilde{Z}u, \pa\widetilde{Y}u)+2g(\pa\widetilde{Y}\widetilde{Z}u, \pa\widetilde{X}u)
	\end{align*}
	We notice that only the new term  $\widetilde{\mathcal{L}}_X\widetilde{\mathcal{L}}_Y\widetilde{\mathcal{L}}_Zg_0(\pa u, \pa u)$ needs additional analysis. Since we already know that $\pa\widetilde{X}u^*=0$ and $\widetilde{\mathcal{L}}_X\widetilde{\mathcal{L}}_Yg_0(\pa u^*, \pa u^*)=0$, we obtain
	\[
	\widetilde{\mathcal{L}}_X\widetilde{\mathcal{L}}_Y\widetilde{\mathcal{L}}_Zg_0(\pa u^*, \pa u^*)=X\Bigl(\widetilde{\mathcal{L}}_Y\widetilde{\mathcal{L}}_Zg_0(\pa u^*, \pa u^*)\Bigr)-2\widetilde{\mathcal{L}}_Y\widetilde{\mathcal{L}}_Zg_0(\pa \widetilde{X}u^*, \pa u^*)=0.
	\]
	Moreover, we have $\widetilde{\mathcal{L}}_{\pa_t} g_0\!=\!\widetilde{\mathcal{L}}_\Omega g_0\!=\!0$ and
	$\widetilde{\mathcal{L}}_{\Ss} g_0\!=\!\kappa_3 g_0\!-2(\kappa_1\!-\kappa_2)\overline{g}_0$ where $\kappa_1\!\sim \! M\! \ln{r}\!/r$, $\kappa_2\!\sim \! \kappa_3\!\sim \!M\!/r$ and $\overline{g\overline{}^{}}_0(\pa u,\pa v)\!={g}_0^{ij}\pas_{\!i } u\,\pas_{\!\!j}v$. Using $\overline{g\overline{}^{}}_0(\pa \widetilde{X}^Iu^*,\pa w)=0$ for $|I|\leq2$ we obtain $\widetilde{\mathcal{L}}_Y\overline{g\overline{}^{}}_0(\pa \widetilde{X}^Iu^*,\pa w)=-\overline{g\overline{}^{}}_0(\pa \widetilde{Y}\widetilde{X}^Iu^*,\pa w)-\overline{g\overline{}^{}}_0(\pa \widetilde{X}^Iu^*,\pa \widetilde{Y}w)=0$ for $|I|\leq1$ and then $\widetilde{\mathcal{L}}_X\widetilde{\mathcal{L}}_Y\overline{g\overline{}^{}}_0(\pa u^*,\pa w)=-\widetilde{\mathcal{L}}_Y\overline{g\overline{}^{}}_0(\pa \widetilde{X}u^*,\pa w)-\widetilde{\mathcal{L}}_Y\overline{g\overline{}^{}}_0(\pa u^*,\pa \widetilde{X}w)=0$. Hence
	\[
	|\widetilde{\mathcal{L}}_X\widetilde{\mathcal{L}}_Y\widetilde{\mathcal{L}}_Zg_0(\pa u^*, \pa \mathring{u})|\les|\pa_{\Ls}\mathring{u}|.
	\]
	Using $\widetilde{\mathcal{L}}_X\overline{g\overline{}^{}}_0(\pa v,\pa w)=X(\overline{g\overline{}^{}}_0(\pa v,\pa w))-\overline{g\overline{}^{}}_0(\pa \widetilde{X}v,\pa w)-\overline{g\overline{}^{}}_0(\pa v,\pa \widetilde{X}w)$ and the expression for $\widetilde{\mathcal{L}}_{\Ss} g_0$ we have
	\[
	|\widetilde{\mathcal{L}}_X\widetilde{\mathcal{L}}_Y\widetilde{\mathcal{L}}_Zg_0(\pa \mathring{u}, \pa \mathring{u})|\les |\pa_{\Ls}\mathring{u}||\pa_{\Lbs}\mathring{u}|+(|\slashed{\pa}\mathring{u}|+|\slashed{\pa}\widetilde{X}\mathring{u}|)(|\slashed{\pa}\mathring{u}|+|\slashed{\pa}\widetilde{Y}\mathring{u}|)
	+|\slashed{\pa}\mathring{u}||\slashed{\pa}\widetilde{X}\widetilde{Y}\mathring{u}|.
	\]
	Finally the estimates for the remaining terms in $H(g, u)$ and thus the bounds of $Z^{*I}\mathring{u}$ with $|I|=3$ follow as in Proposition 26 in \cite{Lind17}. Then we have
	\[
	|\pa_{\Ls}Z^{*I}\mathring{u}|=O(\frac{\varepsilon}{r^{1+\sigma}}),
\quad\text{and}\quad |\pa_{\Lbs}Z^{*I}\mathring{u}|+|\pa_{\widehat{y}^a}Z^{*I}\mathring{u}|
=O(\frac{\varepsilon}{r^{\sigma}}),\quad\text{for}\quad|I|\leq2.
	\]
\end{remark}

The estimate for $\pa_{\Lbs}Z^{*I}\mathring{u}$ is not precise enough and we need to sharpen it. We first record three lemmas in \cite{Lind17}, which will be of use when refining the expressions for $\pa_{\Lbs}Z^{*I}\mathring{u}$.
\begin{lemma} [{\cite[Lemma 21]{Lind17}}]If $Z=\partial_t$
	then with $h_1^{\alpha\beta}\!\!=\!g^{\alpha\beta}\!-g_0^{\alpha\beta}\!\!$ and
	$\widetilde{L}^\alpha=g^{\alpha\beta}\pa_\beta u$
	we have
	\begin{equation}\label{eq:eikonalsystemone}
		\partial_{\widetilde{L}} Z \mathring{u}
		=-\tfrac{1}{2}h_{1Z}(\partial u,\partial u)
	\end{equation}
	with the notation $h_{1Z}(U,V)=h_{1Z}^{\alpha\beta}U_\alpha V_\beta$ where the Lie derivative $h_{1Z}^{\alpha\beta}=\mathcal{L}_Z h_1^{\alpha\beta}$ is given by
	\beq\label{eq:liederivative}
	h_{1Z}^{\alpha\beta}\partial_\alpha u\,\partial_\beta w=
	(Zh_1^{\alpha\beta}) \partial_\alpha u\,\partial_\beta w
	+h_1^{\alpha\beta} \partial_\alpha u\, [Z,\partial_\beta] w+
	h_1^{\alpha\beta} [Z,\partial_\alpha] u\,\partial_\beta w.
	\eq
\end{lemma}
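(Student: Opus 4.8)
The plan is to differentiate the eikonal equation \eqref{eq:eikonalintro} along $Z=\partial_t$ and isolate $\partial_{\widetilde{L}}Z\mathring u$, using that the background part $g_0$ of the metric is $t$-independent in the region under consideration, so that only the perturbation $h_1$ survives.

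First I would apply $Z=\partial_t$ to $g^{\alpha\beta}\partial_\alpha u\,\partial_\beta u=0$. Since $\partial_t$ is a coordinate vector field it commutes with every $\partial_\alpha$, so this yields
\[
(Zg^{\alpha\beta})\,\partial_\alpha u\,\partial_\beta u + 2\,g^{\alpha\beta}\,\partial_\alpha(Zu)\,\partial_\beta u = 0 .
\]
In the second term I recognize $g^{\alpha\beta}\partial_\beta u=\widetilde{L}^\alpha$, so $g^{\alpha\beta}\partial_\alpha(Zu)\,\partial_\beta u=\partial_{\widetilde{L}}(Zu)$; and writing $u=\mathring u+u^*$ with $u^*=t-r^*(r)$, the fact that $\partial_t u^*=1$ is constant gives $\partial_\alpha(Zu)=\partial_\alpha(Z\mathring u)$, hence $\partial_{\widetilde{L}}(Zu)=\partial_{\widetilde{L}}(Z\mathring u)$.

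For the first term I would split $g^{\alpha\beta}=g_0^{\alpha\beta}+h_1^{\alpha\beta}$. On the region under consideration the cutoff in $h_0^{\alpha\beta}=-\tfrac Mr\delta^{\alpha\beta}\chi(\tfrac r{1+t})$ is identically $1$, so $g_0^{\alpha\beta}$ equals the modified asymptotically Schwarzschild inverse metric, which does not depend on $t$; hence $Zg_0^{\alpha\beta}=0$ and $(Zg^{\alpha\beta})\partial_\alpha u\,\partial_\beta u=(Zh_1^{\alpha\beta})\partial_\alpha u\,\partial_\beta u$. For $Z=\partial_t$ the commutator terms in \eqref{eq:liederivative} vanish, so $(Zh_1^{\alpha\beta})\partial_\alpha u\,\partial_\beta u=h_{1Z}^{\alpha\beta}\partial_\alpha u\,\partial_\beta u=h_{1Z}(\partial u,\partial u)$. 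Putting the two pieces back together gives $h_{1Z}(\partial u,\partial u)+2\,\partial_{\widetilde{L}}(Z\mathring u)=0$, which is \eqref{eq:eikonalsystemone}. The general commutator form of $\mathcal{L}_Z h_1^{\alpha\beta}$ is worth recording at this point since the identical scheme, applied later to the rotations $\Omega_{ij}$ and the scaling $S^*$, produces exactly those commutator contributions — together with the $\mathcal{L}_Z g_0$ terms, which for those fields no longer vanish and must be handled separately.

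The computation is essentially forced once $Zg_0^{\alpha\beta}=0$ is secured on the relevant region, i.e. that the background metric genuinely carries no $t$-dependence there, the cutoff $\chi(\tfrac r{1+t})$ being trivial. That is the only point I would treat with any care; the remainder is bookkeeping with the null vector $\widetilde{L}^\alpha=g^{\alpha\beta}\partial_\beta u$, the commutation $[\partial_t,\partial_\alpha]=0$, and the constant $Zu^*=1$, and there is no genuine analytic obstacle.
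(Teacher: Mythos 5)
Your derivation is correct and is essentially the argument behind this lemma: differentiate the eikonal equation along $\partial_t$, use $\partial_\alpha(Zu^*)=0$ and the $t$-independence of $g_0$ on the region where the cutoff is trivial so that only $\mathcal{L}_{\partial_t}h_1$ survives, and identify $g^{\alpha\beta}\partial_\beta u$ with $\widetilde{L}^\alpha$. The manuscript imports this statement from \cite{Lind17} without reproving it, but the same two facts you isolate ($\pa Z\us=0$ and $g_{0Z}(\pa u,\pa u)=0$ for $Z\in\{\Omega_{ij},\pa_t\}$) are exactly what the paper invokes when it runs the identical scheme for the angular coordinates, so your route matches the intended one.
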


\begin{lemma} [{\cite[Lemma 25]{Lind17}}]We  have
	\begin{equation}\label{eq:htestone}
		h_{1\partial_t}(\partial u,\partial u)=\pa_t(\widehat{h}_1^{u^*u^*})+2\pa_t(\widehat{h}_1^{u^*u^*})\pa_t\mathring{u}+\pa_t(\widehat{h}_1^{u^*u^*})(\pa_t\mathring{u})^2+O(\frac{\varepsilon}{r^{2+\sigma}}).	
	\end{equation}
\end{lemma}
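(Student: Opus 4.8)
The plan is to start from the Lie-derivative formula \eqref{eq:liederivative} with $Z=\pa_t$. In the Cartesian coordinates $(t,x)$ every commutator $[\pa_t,\pa_\beta]$ vanishes, so the last two terms of \eqref{eq:liederivative} drop out and
\[
h_{1\pa_t}(\pa u,\pa u)=(\pa_t h_1^{\alpha\beta})\,\pa_\alpha u\,\pa_\beta u .
\]
Then I would write $u=u^*+\mathring{u}$ and use $\pa_\alpha u^*=-\Ls_{\!\!\alpha}$ to expand $\pa_\alpha u=-\Ls_{\!\!\alpha}+\pa_\alpha\mathring{u}$, splitting the right-hand side into a leading piece $(\pa_t h_1^{\alpha\beta})\Ls_{\!\!\alpha}\Ls_{\!\!\beta}$, a cross piece $-2(\pa_t h_1^{\alpha\beta})\Ls_{\!\!\alpha}\pa_\beta\mathring{u}$, and a quadratic piece $(\pa_t h_1^{\alpha\beta})\pa_\alpha\mathring{u}\,\pa_\beta\mathring{u}$. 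The mechanism is that $\pa_\alpha u=-(1+\pa_{u^*}\mathring{u})\Ls_{\!\!\alpha}+(\text{small})$, so one expects $h_{1\pa_t}(\pa u,\pa u)\approx\pa_t(\widehat{h}_1^{u^*u^*})(1+\pa_t\mathring{u})^2$, which is exactly the claimed expansion; the point of the lemma is that the error is $O(\varepsilon r^{-2-\sigma})$.

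For the leading piece I would use that $r^*=r^*(r)$, so that $\Ls_{\!\!\alpha}=-\pa_\alpha u^*$, $\Lbs_{\!\!\alpha}=-\pa_\alpha v^*$ and the angular change-of-variable coefficients $\widehat{A}^a_\alpha=\pa\widehat{y}^a/\pa x^\alpha$ are all independent of $t$; hence $\pa_t$ passes through them, and with $\widehat{A}^{u^*}_\alpha=-\Ls_{\!\!\alpha}$ and the definition $\widehat{h}_1^{pq}=h_1^{\alpha\beta}\widehat{A}^p_\alpha\widehat{A}^q_\beta$ one gets $(\pa_t h_1^{\alpha\beta})\Ls_{\!\!\alpha}\Ls_{\!\!\beta}=\pa_t(\widehat{h}_1^{u^*u^*})$. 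For the cross and quadratic pieces I would decompose
\[
\pa_\beta\mathring{u}=-\Ls_{\!\!\beta}\,\pa_{u^*}\mathring{u}-\Lbs_{\!\!\beta}\,\pa_{v^*}\mathring{u}+\widehat{A}^a_\beta\,\pa_{\widehat{y}^a}\mathring{u},
\]
again move $\pa_t$ past $\Ls,\Lbs,\widehat{A}$, and identify each contraction of $h_1$ against two of $\{\Ls,\Lbs,\widehat{A}^a\}$ with the corresponding $\widehat{h}_1^{pq}$. Every resulting term except those coming from the pure $\Ls\Ls$ part then carries either a factor $\pa_{v^*}\mathring{u}=\tfrac12\pa_{\Ls}\mathring{u}=O(\varepsilon r^{-1-\sigma})$ (Remark~\ref{rem:Ithree}) or one of the better-decaying components $\widehat{h}_1^{v^*u^*},\widehat{h}_1^{v^*v^*},\widehat{h}_1^{u^*a},\widehat{h}_1^{v^*a},\widehat{h}_1^{ab}$; combining the metric bounds in Remark~\ref{rem:sharpmetricdecay} (and, if convenient, Remark~\ref{rem:specialsharpmetricdecay}) with $\pa_{u^*}\mathring{u},\pa_{\widehat{y}^a}\mathring{u}=O(\varepsilon r^{-\sigma})$, each such term is $O(\varepsilon r^{-2-\sigma})$. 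This leaves
\[
h_{1\pa_t}(\pa u,\pa u)=\pa_t(\widehat{h}_1^{u^*u^*})+2\,\pa_t(\widehat{h}_1^{u^*u^*})\,\pa_{u^*}\mathring{u}+\pa_t(\widehat{h}_1^{u^*u^*})\,(\pa_{u^*}\mathring{u})^2+O(\tfrac{\varepsilon}{r^{2+\sigma}}).
\]

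Finally I would replace $\pa_{u^*}\mathring{u}$ by $\pa_t\mathring{u}$: since $\pa_t=\pa_{u^*}+\pa_{v^*}$ and $\pa_{v^*}\mathring{u}=O(\varepsilon r^{-1-\sigma})$, while $\pa_t(\widehat{h}_1^{u^*u^*})=O(\varepsilon/r)$ in the region $r^*>t/2$ by Remark~\ref{rem:sharpmetricdecay}, both substitutions change the linear and quadratic terms by at most $O(\varepsilon r^{-2-\sigma})$, yielding \eqref{eq:htestone}. The main obstacle is the bookkeeping in the middle step: one must check that the dangerous component $\widehat{h}_1^{v^*v^*}$ (only $O(\varepsilon r^{-1}\ln r)$) and the angular weights hidden in $\widehat{A}^a_\alpha=O(1/r)$ always occur multiplied by enough good factors ($\pa_{v^*}\mathring{u}$, angular derivatives of $\mathring{u}$, or the extra powers of $1/r$ in $\widehat{h}_1^{u^*a},\widehat{h}_1^{v^*a},\widehat{h}_1^{ab}$) to fall into $O(\varepsilon r^{-2-\sigma})$; the identities $\widehat{A}^{u^*}_\alpha=-\Ls_{\!\!\alpha}$, $\widehat{A}^{v^*}_\alpha=-\Lbs_{\!\!\alpha}$ and the $t$-independence of $\Ls,\Lbs,\widehat{A}$ are precisely what make the three surviving terms come out with the right coefficients.
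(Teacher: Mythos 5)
The manuscript does not actually prove this statement: it is imported verbatim as \cite[Lemma 25]{Lind17}, so there is no in-paper proof to compare against. Judged on its own, your reconstruction is correct and uses only ingredients the manuscript makes available. The opening observation that $[\pa_t,\pa_\beta]=0$ kills the commutator terms in \eqref{eq:liederivative} is right, the identification $(\pa_t h_1^{\alpha\beta})\Ls_{\!\!\alpha}\Ls_{\!\!\beta}=\pa_t(\widehat{h}_1^{u^*u^*})$ is legitimate because $\Ls_{\!\!\alpha}$, $\Lbs_{\!\!\alpha}$ and $\widehat{A}^a_\alpha$ depend only on $x$ (and the $m+h_0$ contribution to the $u^*u^*$ component vanishes, so $\widehat{g}^{u^*u^*}=\widehat{h}_1^{u^*u^*}$ exactly as in \S\ref{subsec:2.2}), and your bookkeeping of which components of $\widehat{h}_1$ meet which derivatives of $\mathring{u}$ is the step that actually carries the lemma. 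Two points you handled correctly but that deserve emphasis: the cross term $\pa_t(\widehat{h}_1^{u^*v^*})\pa_{v^*}\mathring{u}$ is only $O(\varepsilon r^{-2-\sigma})$ because of the \emph{improved} good-derivative bound $\pa_{v^*}\mathring{u}=O(\varepsilon r^{-1-\sigma})$ from Remark \ref{rem:Ithree} (the crude $O(\varepsilon r^{-\sigma})$ would not suffice there), and the dangerous component $\widehat{h}_1^{v^*v^*}$ indeed only ever appears contracted with two factors of $\pa_{v^*}\mathring{u}$, so it never threatens the error. This is consistent with how the lemma is consumed in the proof of Proposition \ref{prop:badderiofu}, where the bounds of Proposition \ref{prop:eikonalintro} and Remark \ref{rem:Ithree} are assumed available independently, so no circularity arises within this manuscript's logic. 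I see no gap.
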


\begin{lemma} [{\cite[Lemma 24]{Lind17}}]\label{lem:quadraticterms}
	If $\Omega\!=\!x^i\pa_j-x^j\pa_i$ then with
	$k^{\alpha \Omega/r}\!\!=k^{\alpha i} \omega_{j}-k^{\alpha j}
	\omega_i$ we have
	\begin{align}\label{eq:LieOmega}
		(\mathcal{L}_\Omega k)(\pa u,\pa v)&=(\Omega k)(\pa u,\pa v)+k([\Omega,\pa]u,\pa v)
		+k(\pa u,[\Omega,\pa]v),\\
		\label{eq:commutarorrotationquadraticform}
		k^{\alpha\beta} [\pa_\beta, \Omega] u
		&=k^{\alpha \Omega/r}\pa_r u
		+\big(k^{\alpha i}
		\overline{\pa}_{j}-k^{\alpha j} \overline{\pa}_i\big)u.
	\end{align}
\end{lemma}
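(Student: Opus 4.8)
The plan is to prove the two identities separately by direct computation; neither uses any of the asymptotic machinery of the preceding sections, only the algebra of Lie derivatives and the decomposition of spatial derivatives into radial and spherical parts.

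For \eqref{eq:LieOmega} I would first observe that it holds with \emph{any} vector field in place of $\Omega$ and is nothing but the Leibniz rule. Concretely, since $\mathcal{L}_\Omega$ acts as a derivation on tensor contractions and $\mathcal{L}_\Omega(du)=d(\Omega u)$ for a scalar $u$, one has $(\mathcal{L}_\Omega k)(\pa u,\pa v)=\Omega\big(k(\pa u,\pa v)\big)-k\big(\pa(\Omega u),\pa v\big)-k\big(\pa u,\pa(\Omega v)\big)$. Expanding $\Omega\big(k^{\alpha\beta}\pa_\alpha u\,\pa_\beta v\big)$ by the product rule gives $(\Omega k^{\alpha\beta})\pa_\alpha u\,\pa_\beta v$, which is $(\Omega k)(\pa u,\pa v)$, together with $k^{\alpha\beta}\,\Omega(\pa_\alpha u)\,\pa_\beta v$ and the symmetric term in $v$; using the elementary identity $\Omega(\pa_\alpha u)-\pa_\alpha(\Omega u)=[\Omega,\pa_\alpha]u$ one rewrites $k^{\alpha\beta}\,\Omega(\pa_\alpha u)\,\pa_\beta v=k\big(\pa(\Omega u),\pa v\big)+k\big([\Omega,\pa]u,\pa v\big)$. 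The pieces $k\big(\pa(\Omega u),\pa v\big)$ and $k\big(\pa u,\pa(\Omega v)\big)$ then cancel against the corresponding terms above, leaving exactly $(\Omega k)(\pa u,\pa v)+k([\Omega,\pa]u,\pa v)+k(\pa u,[\Omega,\pa]v)$.

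For \eqref{eq:commutarorrotationquadraticform} I would compute the commutator explicitly. Writing $\Omega=x^i\pa_j-x^j\pa_i$ with $i,j$ the two fixed coordinate indices and using $\pa_\beta x^k=\delta^k_\beta$ (so that in particular the $\pa_t$-derivative kills these coordinate functions, which is why $\Omega$ carries no time component), one finds $[\pa_\beta,\Omega]u=(\pa_\beta x^i)\pa_j u-(\pa_\beta x^j)\pa_i u=\delta^i_\beta\,\pa_j u-\delta^j_\beta\,\pa_i u$, hence $k^{\alpha\beta}[\pa_\beta,\Omega]u=k^{\alpha i}\pa_j u-k^{\alpha j}\pa_i u$. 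I would then insert the splitting $\pa_k u=\omega_k\,\pa_r u+\overline{\pa}_k u$, where $\pa_r=\omega^k\pa_k$ and $\overline{\pa}_k=\pa_k-\omega_k\pa_r$ is the derivative tangent to the spheres $|x|=r$, into both $\pa_j u$ and $\pa_i u$ and collect: the coefficient of $\pa_r u$ becomes $k^{\alpha i}\omega_j-k^{\alpha j}\omega_i$, which is by definition $k^{\alpha\Omega/r}$, and the remainder is $\big(k^{\alpha i}\overline{\pa}_j-k^{\alpha j}\overline{\pa}_i\big)u$, which is the asserted formula.

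The computation is mechanical and there is no genuine obstacle. The only points requiring a little care are the sign and derivation conventions for $\mathcal{L}_\Omega$ in the first identity, and, in the second, keeping the spacetime indices $\alpha,\beta$ and the purely spatial indices $i,j$ straight so that the Kronecker deltas come out clean. The point of \eqref{eq:commutarorrotationquadraticform}, which is what motivates the radial/spherical splitting, is that it isolates the single ``bad'' factor $\pa_r u$ against terms carrying the faster-decaying tangential derivatives $\overline{\pa}u$.
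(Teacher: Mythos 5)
Your proof is correct; the paper does not reprove this lemma but imports it from \cite{Lind17}, and your argument --- the Leibniz/derivation property of $\mathcal{L}_\Omega$ together with $\mathcal{L}_\Omega du = d(\Omega u)$ for the first identity, and the direct computation $[\pa_\beta,\Omega]u=\delta^i_\beta\pa_j u-\delta^j_\beta\pa_i u$ followed by the splitting $\pa_k u=\omega_k\pa_r u+\overline{\pa}_k u$ for the second --- is exactly the standard computation underlying the cited result. Nothing is missing.
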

Now we are ready to refine $\pa_{\Lbs}Z^{*I}\mathring{u}$.
\begin{prop}\label{prop:badderiofu} The eikonal equation \eqref{eq:eikonalintro} has a solution
	$u\!=\mathring{u}+\us$ satisfying
\begin{equation}\label{eq:highbadderi}
	\pa_{\Lb^*} Z^I \mathring{u}= Z^I(\widehat{h}_1^{v^*u^*})+O(\frac{\varepsilon}{r^{1+\sigma}})\quad \text{for}\quad Z\in\{\pa_t, \Omega_{ij}\}\quad\text{and}\quad|I|\leq 2.
\end{equation}
\end{prop}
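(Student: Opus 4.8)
The plan is to propagate $\pa_{\Lb^*}Z^I\mathring u$ from future null infinity along the null generator $\widetilde L^{\alpha}=g^{\alpha\beta}\pa_\beta u$ of the level sets of $u$, to recognize the leading source term via the wave coordinate condition \eqref{eq:wcc_L}, and to extract the claimed limit $Z^I(\widehat h_1^{v^*u^*})$ from the $1/r^*$-expansion of $\widehat h_1^{v^*u^*}$ in Remark \ref{rem:limit}.

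\emph{Setting up the generator.} Since $u^*=t-r^*$ solves the eikonal equation of the background $g_0=m+h_0$ exactly, one has $\widehat g^{u^*u^*}=\widehat h_1^{u^*u^*}$, $\widehat g^{v^*u^*}=-2(1+M/r)+\widehat h_1^{v^*u^*}$ and $g_0^{\alpha\beta}L^*_\beta=(1+M/r)L^{*\alpha}$, hence $\widetilde L=-(1+M/r)L^*+R$ with $R=(g_0^{\alpha\beta}\pa_\beta\mathring u+h_1^{\alpha\beta}\pa_\beta u)\pa_\alpha$. Expanding the eikonal equation $g(\pa u,\pa u)=0$ in the null frame $(\pa_{v^*},\pa_{u^*},\pa_{\widehat y^a})$ gives the algebraic relation $\widehat h_1^{u^*u^*}=2(1+M/r)\pa_{L^*}\mathring u+O(\eps^2r^{-2-\sigma})$; together with Proposition \ref{prop:eikonalintro} and Remark \ref{rem:Ithree} this shows that the components of $R$ in that frame are, respectively, $\widehat h_1^{v^*u^*}-(1+M/r)\pa_{\Lb^*}\mathring u+(\text{quadratic})$, $(1+M/r)\pa_{L^*}\mathring u+(\text{quadratic})$, and $O(\eps r^{-2})$ --- exactly the sizes needed below.

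\emph{The transport equation and its integration.} From \eqref{eq:eikonalsystemone} for $Z=\pa_t$, from Lemma \ref{lem:quadraticterms} for $Z=\Omega$, and from their iteration for $|I|\le2$ exactly as in the proof of Proposition \ref{prop:eikonalintro} recalled in Remark \ref{rem:Ithree}, one has
\[
\pa_{\widetilde L}(Z^I\mathring u)=-\tfrac12\,h_{1Z^I}(\pa u,\pa u)+E_I,
\]
where $h_{1Z^I}$ is the iterated Lie derivative of $h_1$ and $E_I$ is a finite sum of products of at most $|I|$-fold derivatives of $\mathring u$; by \eqref{eq:htestone} and its $Z^I$-analogue, $h_{1Z^I}(\pa u,\pa u)=Z^I(\widehat h_1^{u^*u^*})+O(\eps r^{-2-\sigma})$. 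Commuting $\pa_{\Lb^*}$ past $\pa_{\widetilde L}$, using $[L^*,\Lb^*]=0$, that $Z^I$ commutes with $\pa_{\Lb^*}$ up to lower order, and that applying $Z^I$ ($|I|\le2$) to the $O_2$-identity \eqref{eq:wcc_L} yields $\pa_{\Lb^*}Z^I(\widehat h_1^{u^*u^*})=-\tfrac2r\,Z^I(\widehat h_1^{v^*u^*})+O(\eps r^{-2-\sigma})$ (using $Z^I(1/r)=0$), one arrives, after writing $\widetilde L=-(1+M/r)L^*+R$, at
\[
\pa_{L^*}\bigl(\pa_{\Lb^*}Z^I\mathring u\bigr)=-\frac1r\,Z^I(\widehat h_1^{v^*u^*})+\mathcal E,
\]
where $\mathcal E$ collects $R(\pa_{\Lb^*}Z^I\mathring u)$, the commutator $[\widetilde L,\Lb^*](Z^I\mathring u)$, $\pa_{\Lb^*}E_I$, and the corrections discarded from $h_{1Z^I}(\pa u,\pa u)$ and from \eqref{eq:wcc_L}. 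Integrating this ODE along a ray $u^*=\mathrm{const}$ --- on which $\pa_{L^*}=d/dr^*$ and $q^*=-u^*$ is constant --- and using that $\pa_{\Lb^*}Z^I\mathring u\to0$ as $r^*\to\infty$ (from the construction in Proposition \ref{prop:eikonalintro}), one gets $\pa_{\Lb^*}Z^I\mathring u=\int_{r^*}^{\infty}s^{-1}Z^I(\widehat h_1^{v^*u^*})\,ds+O(\eps r^{-1-\sigma})$, provided $\mathcal E=O(\eps r^{-2-\sigma})$. Inserting $\widehat h_1^{v^*u^*}=\widehat H_{1\infty}^{v^*u^*}(q^*,\widehat y^a)/r^*+\widehat{\mathcal R}^{v^*u^*}$ from Remark \ref{rem:limit} and using that $q^*$ is constant along the ray, $\int_{r^*}^{\infty}s^{-1}Z^I(\widehat H_{1\infty}^{v^*u^*}/s)\,ds=Z^I(\widehat H_{1\infty}^{v^*u^*}/r^*)=Z^I(\widehat h_1^{v^*u^*})+O(\eps r^{-1-\sigma})$, while the $\widehat{\mathcal R}^{v^*u^*}$-contribution is $O(\eps r^{-1-\sigma})$ by its decay; this is \eqref{eq:highbadderi}.

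\emph{Main obstacle.} The crux --- and the step I expect to be hardest --- is showing that $\mathcal E=O(\eps r^{-2-\sigma})$ uniformly, so that its ray integral contributes only $O(\eps r^{-1-\sigma})$. The point is structural, a consequence of the weak null condition: every term of $\mathcal E$ is a product in which one factor is either a good derivative $\pa_{L^*}$ of an at most $|I|$-fold derivative of $\mathring u$ --- which is $O(\eps r^{-1-\sigma})$ by Remark \ref{rem:Ithree} --- or one of the metric quantities $\pa_{L^*}\widehat h_1^{u^*u^*}$, $\pa_{\Lb^*}\widehat h_1^{u^*u^*}$, $\widehat h_1^{u^*a}$, $\pa_{\Lb^*}\widehat h_1^{u^*a}$ forced to be $O(\eps r^{-2})$ by the wave coordinate conditions \eqref{eq:wcc_L}, \eqref{eq:wcc:Y} and the good-derivative estimate of Proposition \ref{prop:specialsharpmetricdecay}, the remaining factors being no worse than $O(\eps r^{-\sigma})$; hence each product is $O(\eps^2r^{-2-\sigma})$, which integrates to $O(\eps^2r^{-1-\sigma})\le O(\eps r^{-1-\sigma})$. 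Carrying this out rigorously and uniformly in the region $\{\widetilde r>t/2\}\cap\{r^*>t/2\}$ --- tracking the logarithm at the light cone and the $q^*$-weights of Proposition \ref{prop:sharpmetricdecay}, and upgrading by a bootstrap the a priori bound $\pa_{\Lb^*}Z^{*I}\mathring u=O(\eps r^{-\sigma})$ to $O(\eps r^{-1})$ wherever a product of two bad factors would otherwise occur --- is where the real work lies. In the case $Z=\Omega$ one keeps the metric components in the undifferentiated form $h_1^{\alpha\beta}L^*_\alpha L^*_\beta$, $h_1^{\alpha\beta}\Lb^*_\alpha L^*_\beta$ and commutes $\mathcal L_\Omega$ through them via Lemma \ref{lem:quadraticterms}, whose identity \eqref{eq:commutarorrotationquadraticform} exhibits the angular commutators $[\pa_\beta,\Omega]u$ as lower-order terms.
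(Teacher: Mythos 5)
Your overall strategy---transporting along the generator $\widetilde L=g^{\alpha\beta}\pa_\beta u\,\pa_\alpha$, identifying the leading source via the wave coordinate condition \eqref{eq:wcc_L}, and recovering $Z^I(\widehat h_1^{v^*u^*})$ at the end (your explicit integration of $s^{-1}Z^I(\widehat H^{v^*u^*}_{1\infty}/s)$ is the same computation as the paper's use of \eqref{eq:extraLderest2} to write $\widehat h_1^{v^*u^*}/r=-\pa_{\Ls}\widehat h_1^{v^*u^*}+O(\varepsilon r^{-2-\sigma})$, an exact derivative along the ray)---is the paper's. The gap is exactly where you flag it, and your product-structure argument does not close it. Commuting $\pa_{\Lbs}$ past $\pa_{\widetilde L}$ produces $[\widetilde L,\Lbs]=-(\pa_{\Lbs}\widetilde L^{p})\pa_{p}$, and the coefficient $\pa_{\Lbs}\widetilde L^{v^*}$ contains $\widehat g^{v^*u^*}\pa_{\Lbs}\pa_{u^*}\mathring u$, for which the only available a priori bound (Remark \ref{rem:Ithree}) is $O(\varepsilon r^{-\sigma})$; paired with $\pa_{v^*}Z^I\mathring u=O(\varepsilon r^{-1-\sigma})$ this gives a contribution to $\mathcal E$ of size $O(\varepsilon^2 r^{-1-2\sigma})$, not $O(\varepsilon r^{-2-\sigma})$, whose ray integral $O(\varepsilon^2 r^{-2\sigma})$ beats the required $O(\varepsilon r^{-1-\sigma})$ only if $\sigma\geq 1$, which never holds since $\sigma\leq\gamma'<1$. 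Your proposed fix---bootstrapping $\pa_{\Lbs}Z^{*I}\mathring u$ from $O(\varepsilon r^{-\sigma})$ to $O(\varepsilon r^{-1})$---is essentially the conclusion of the proposition, so as written the argument is circular unless that bootstrap is actually set up and closed, which you do not do.

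The paper sidesteps all of this with one reduction you are missing: since $\pa_{\Lbs}=2\pa_t-\pa_{\Ls}$ and $\pa_{\Ls}Z^I\mathring u=O(\varepsilon r^{-1-\sigma})$ is already known from Remark \ref{rem:Ithree}, it suffices to prove $\pa_t Z^I\mathring u=\tfrac12 Z^I(\widehat h_1^{v^*u^*})+O(\varepsilon r^{-1-\sigma})$. But $\pa_t Z^I\mathring u$ is itself a $Z^{*I'}$-derivative of $\mathring u$ with $|I'|\leq 3$, so it obeys the clean transport equation $\pa_{\widetilde L}\pa_t Z^I\mathring u=-\tfrac12 h_{1\,Z^I\pa_t}(\pa u,\pa u)+\dots$ obtained by commuting only the good vector fields $\{\pa_t,\Omega_{ij}\}$ through the eikonal equation; no commutator $[\widetilde L,\Lbs]$ ever appears. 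By \eqref{eq:htestone}, \eqref{eq:wcc_L} and \eqref{eq:extraLderest2} the source equals $\tfrac12\pa_{\widetilde L}Z^I(\widehat h_1^{v^*u^*})+O(\varepsilon r^{-2-\sigma})$, and a single integration from infinity along the integral curves of $\widetilde L$ gives the claim. I would rework your argument along these lines rather than attempting the bootstrap.
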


\begin{proof}
	Putting \eqref{eq:wcc_L}, \eqref{eq:eikonalsystemone} and \eqref{eq:htestone} together, we obtain
	\[
	\partial_{\widetilde{L}} \pa_t \mathring{u}=\frac 12\frac{\widehat{h}_1^{v^*u^*}}{r}+O(\frac{\varepsilon}{r^{2+\sigma}}).
	\]
It follows from Remark \ref{rem:specialsharpmetricdecay} that
	\[
	\partial_{\widetilde{L}} \pa_t \mathring{u}=-\frac 12\pa_{\Ls}(\widehat{h}_1^{v^*u^*})+O(\frac{\varepsilon}{r^{2+\sigma}}).	
	\]
	Since
	\begin{equation}\label{eq:tildaL}
		\widetilde{L}=g^{\alpha\beta}\pa_{\alpha}u\pa_\beta=(-2+O(\frac{\varepsilon}{r^\sigma}))\pa_{v^*}+O(\frac{\varepsilon}{r^{1+\sigma}})\pa_{u^*}+O(\frac{\varepsilon}{r^{2+\sigma}})\pa_{\widehat{x}^a},
	\end{equation}
	we find
	\[
	\partial_{\widetilde{L}} \pa_t \mathring{u}=\frac 12\pa_{\widetilde{L}}(\widehat{h}_1^{v^*u^*})+O(\frac{\varepsilon}{r^{2+\sigma}}).	
	\]
	Therefore we can conclude that
	\[
	\pa_t\mathring{u}=\frac{1}{2}\widehat{h}_1^{v^*u^*}+O(\frac{\varepsilon}{r^{1+\sigma}}).
	\]
If we commutate the vector fields $Z\in\{\pa_t, \Omega_{ij}=x^i\pa_j-x^j\pa_i\}$ through the equation $\pa_{\widetilde{L}}\pa_t \mathring{u}=-\frac 12h_{1\pa_t}(\pa u, \pa u)$, using proposition \ref{prop:eikonalintro}, Remark \ref{rem:Ithree} and Lemma  \ref{lem:quadraticterms} and then integrating along the integral curves of $\widetilde{L}$ yield
\[
	\pa_t Z^I \mathring{u}=\frac12 Z^I(\widehat{h}_1^{v^*u^*})+O(\frac{\varepsilon}{r^{1+\sigma}}).
	\]
Then	\eqref{eq:highbadderi} follows from the fact that $\pa_{\Lbs}=2\pa_t+\pa_{\Ls}$.
\end{proof}

\section{The Trautman mass}\label{sec:4}

\subsection{The asymtotically Schwarzschild coordinates}
In this section we will use the asymptotically Schwarzschild coordinates
$(\widetilde{t},\widetilde{x})$ with
$$
\widetilde{t}=t,\qquad \widetilde{x}^i=\widetilde{r} \omega^i,\qquad
\text{where}\quad \omega^i={x^i}/{r},\qquad \widetilde{r}=r+M\ln{r},\quad
r=|\,x|.
$$
Then
$$
\pa_t=\pa_{\widetilde{t}},\qquad \pa_{x^i} = \Big (\omega^i\omega^j (1+\frac
Mr)+\frac {\widetilde{r}}{r} (\de^{ij}- \omega^i\omega^j)\Big) \pa_{\widetilde{x}^j}
=\Big(\delta^{ij}+\frac{\widetilde{r}-r}{r}(\delta^{ij}-\omega^i\omega^j)
+\frac{M}{r}\omega^i \omega^j\Big)\pa_{\widetilde{x}^j}.
$$
In particular,
$$
\frac {x^i}{|x|} \pa_{x^i} = (1+\frac Mr)\frac {\widetilde{x}^i}{|\widetilde{x}|}
\pa_{\widetilde{x}^i},\qquad \frac {x^i}{|x|} \pa_{x^k} - \frac {x^k}{|x|}
\pa_{x^i}=\frac {\widetilde{r}}{r} \Big (\frac {\widetilde{x}^i}{|\widetilde{x}|} \pa_{\widetilde{x}^k} - \frac
{\widetilde{x}^k}{|\widetilde{x}|} \pa_{\widetilde{x}^i}\Big).
$$
Denote the corresponding metric components by $\g_{\a\b}$. We see
that
$$
\pa_{x^\a} = A_{\a\b} \pa_{\widetilde{x}^\b},
$$
where the matrix $A$ has the form
$$
A_{\a\b}=\de_{\a\b}
+ \frac {M \ln r}{r} (\de^{ij}- \omega^i\omega^j)
\de_{\a i} \de_{\b j} + \frac{M}{r}
\omega^i\omega^j \de_{\a i} \de_{\b j},
$$
where the sums are over $i,j=1,2,3$ only.
As a consequence, we have that
$$
\g^{\a\b} = A_{\a\mu} A_{\b\nu} g^{\mu\nu}.
$$
Expanding the metric
$$
g^{\mu\nu}=m^{\mu\nu} -  M \de^{\mu\nu}/r + h_1^{\mu\nu}
$$
and  using that $|h_1|\le r^{-1} \ln r$
we obtain
\begin{align*}
	\g^{\a\b}&=m^{\a\b} - \frac Mr \de^{\a\b} + h_1^{\a\b}+\frac
	{2M \ln r}{r} (\de^{ij}- \omega^i\omega^j)\de_{\a i} \de_{\b j} +
	\frac{2M}{r} \omega^i\omega^j\de_{\a i} \de_{\b j} + O(\frac
	{\ln^2 r}{r^2}) + O(\frac {\ln r}{r} h_1)\\ &= (1+\frac Mr)
	m^{\a\b} + h_1^{\a\b}+\frac {2M (\ln r-1)}{r}
	(\de^{ij}- \omega^i\omega^j)\de_{\a i} \de_{\b j}
	+ O(\frac {\ln^2 r}{r^2}).
\end{align*}
\begin{lemma}\label{lem:newmetric}
	Relative to the asymptotically Schwarzschild coordinates the metric components
	$\g^{\a\b}\!\!$  verify
	$$
	\g^{\a\b}=(1+\frac Mr) m^{\a\b} +\frac {2M (\ln r-1)}{r}
	(\de^{ij}-\omega^i\omega^j)\de_{\a i} \de_{\b j} + h_1^{\a\b}
	+ O(\frac
	{\ln^2 r}{r^2}),
	$$
	in the region $r>t/2$. Here the sum is over $i,j=1,2,3$ only.
\end{lemma}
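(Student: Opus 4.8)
The plan is to push the inverse metric through the change of variables $x\mapsto\widetilde x$ and collect terms by order of decay. First I would differentiate $\widetilde t=t$, $\widetilde x^i=\widetilde r\,\omega^i=x^i\bigl(1+M\ln r/r\bigr)$, using $\partial_{x^j}r=\omega^j$ and $\tfrac{d}{dr}\bigl(\tfrac{\ln r}{r}\bigr)=\tfrac{1-\ln r}{r^2}$, to get the Jacobian
\[
A_{\alpha\beta}=\frac{\partial\widetilde x^\beta}{\partial x^\alpha}=\delta_{\alpha\beta}+\frac{M\ln r}{r}\bigl(\delta^{ij}-\omega^i\omega^j\bigr)\delta_{\alpha i}\delta_{\beta j}+\frac{M}{r}\,\omega^i\omega^j\delta_{\alpha i}\delta_{\beta j},
\]
with the sum over spatial indices only; $A$ is symmetric and leaves the time row and column unchanged. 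Write $A_{\alpha\beta}=\delta_{\alpha\beta}+B_{\alpha\beta}$ with $B=O(\ln r/r)$ purely spatial.

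Since the inverse metric is a $(2,0)$-tensor, $\g^{\alpha\beta}=A_{\alpha\mu}A_{\beta\nu}g^{\mu\nu}$. In the region $r>t/2$ the cutoff in $h^0$ is identically $1$, so $g^{\mu\nu}=m^{\mu\nu}-\tfrac{M}{r}\delta^{\mu\nu}+h_1^{\mu\nu}$ with $|h_1|\lesssim r^{-1}\ln r$ by Proposition~\ref{prop:sharpmetricdecay}; in particular $g^{\mu\nu}-m^{\mu\nu}=O(\ln r/r)$. Expanding $\g^{\alpha\beta}=g^{\alpha\beta}+B_{\alpha\mu}g^{\mu\beta}+B_{\beta\nu}g^{\alpha\nu}+B_{\alpha\mu}B_{\beta\nu}g^{\mu\nu}$, the quadratic term is $O(\ln^2 r/r^2)$, and since $B$ is purely spatial and symmetric one has $B_{\alpha\mu}m^{\mu\beta}=B_{\alpha\beta}$, hence $B_{\alpha\mu}g^{\mu\beta}+B_{\beta\nu}g^{\alpha\nu}=2B_{\alpha\beta}+O(\ln^2 r/r^2)$. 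This gives
\[
\g^{\alpha\beta}=m^{\alpha\beta}-\frac{M}{r}\delta^{\alpha\beta}+h_1^{\alpha\beta}+\frac{2M\ln r}{r}\bigl(\delta^{ij}-\omega^i\omega^j\bigr)\delta_{\alpha i}\delta_{\beta j}+\frac{2M}{r}\,\omega^i\omega^j\delta_{\alpha i}\delta_{\beta j}+O\!\Bigl(\frac{\ln^2 r}{r^2}\Bigr).
\]

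It then remains to repackage the $O(M/r)$ block. Using the elementary identity $\delta^{\alpha\beta}=2\,\delta^{ij}\delta_{\alpha i}\delta_{\beta j}-m^{\alpha\beta}$ (it agrees with the left-hand side on the spatial, time-time and mixed components) together with $\delta^{ij}=(\delta^{ij}-\omega^i\omega^j)+\omega^i\omega^j$, one finds that the two $\omega^i\omega^j$ contributions cancel, the transverse contributions combine to $\tfrac{2M(\ln r-1)}{r}(\delta^{ij}-\omega^i\omega^j)$, and the coefficient of $m^{\alpha\beta}$ becomes $1+M/r$; this is exactly the claimed identity. The argument is entirely routine; the only points demanding a little care are the accounting of the $O(\ln^2 r/r^2)$ errors (from the quadratic Jacobian term $B\cdot B$ and from the cross term $B\cdot(g-m)$, which contains the $B\cdot h_1$ piece) and the purely algebraic reorganization of the $\delta^{\alpha\beta}$ and $\omega^i\omega^j$ terms into the stated $m^{\alpha\beta}$-plus-transverse form.
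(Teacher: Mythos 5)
Your proof is correct and follows essentially the same route as the paper: compute the symmetric Jacobian $A_{\a\b}=\de_{\a\b}+\frac{M\ln r}{r}(\de^{ij}-\omega^i\omega^j)\de_{\a i}\de_{\b j}+\frac{M}{r}\omega^i\omega^j\de_{\a i}\de_{\b j}$, transform $g^{\mu\nu}$ as a $(2,0)$-tensor, discard the quadratic and $B\cdot(g-m)$ terms as $O(\ln^2 r/r^2)$, and reorganize $-\frac{M}{r}\de^{\a\b}$ together with the $\omega^i\omega^j$ piece into $(1+\frac{M}{r})m^{\a\b}$ plus the transverse $\frac{2M(\ln r-1)}{r}$ term. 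The only difference is cosmetic: you make the final algebraic repackaging explicit via the identity $\de^{\a\b}=2\,\de^{ij}\de_{\a i}\de_{\b j}-m^{\a\b}$, which the paper carries out silently.
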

Next we examine the wave coordinate expression $\pa_{\widetilde{x}^\a} \big
(\g^{\a\b} \sqrt{|\g|}\big)$ evaluated in $\widetilde{x}$-coordinates.
\begin{lemma}\label{lem:wcoordtilde} Relative to
	the asymptotically Schwarzschild coordinates
	the wave coordinate expression satisfy
	$$
	\pa_{\widetilde{x}^\a} \left (\g^{\a\b} \sqrt{|\g|}\right)
	= - 2\frac {M\ln r}{r^2}
	\omega^j\de_{j \b} + O(\frac 1{r^2}).
	$$
\end{lemma}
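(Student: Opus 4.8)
The plan is to start from the wave coordinate condition \eqref{eq:WaveCordinateCond} in the original harmonic coordinates $x^\alpha$, namely $\pa_{x^\alpha}\big(g^{\alpha\beta}\sqrt{|g|}\big)=0$, and transport it through the change of variables $x\mapsto\widetilde{x}$ recorded above the statement, namely $\pa_{x^\alpha}=A_{\alpha\beta}\pa_{\widetilde{x}^\beta}$ with $A_{\alpha\beta}=\delta_{\alpha\beta}+\tfrac{M\ln r}{r}(\delta^{ij}-\omega^i\omega^j)\delta_{\alpha i}\delta_{\beta j}+\tfrac Mr\omega^i\omega^j\delta_{\alpha i}\delta_{\beta j}$. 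Since $\widetilde{x}$ is not a harmonic coordinate system for $g$, the quantity $\pa_{\widetilde{x}^\alpha}\big(\widetilde g^{\alpha\beta}\sqrt{|\widetilde g|}\big)$ will not vanish, and the lemma quantifies exactly the leading obstruction, which comes entirely from the $M\ln r$ term in the coordinate change.

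The key steps, in order, are as follows. First I would use $\sqrt{|\widetilde g|}=|\det A|^{-1}\sqrt{|g|}$ together with $\widetilde g^{\alpha\beta}=A_{\alpha\mu}A_{\beta\nu}g^{\mu\nu}$ (as in Lemma \ref{lem:newmetric}), so that $\widetilde g^{\alpha\beta}\sqrt{|\widetilde g|}=|\det A|^{-1}A_{\alpha\mu}A_{\beta\nu}\,g^{\mu\nu}\sqrt{|g|}$. Second, I would compute $\pa_{\widetilde{x}^\alpha}$ of this product using the chain rule $\pa_{\widetilde{x}^\alpha}=(A^{-1})_{\alpha\gamma}\pa_{x^\gamma}$ and the Leibniz rule, separating the terms where the derivative hits $g^{\mu\nu}\sqrt{|g|}$ from the terms where it hits the coordinate factors $|\det A|^{-1}A_{\alpha\mu}A_{\beta\nu}$. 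In the first group, the factor $(A^{-1})_{\alpha\gamma}A_{\alpha\mu}=\delta_{\gamma\mu}$ collapses and one is left with $|\det A|^{-1}A_{\beta\nu}\,\pa_{x^\mu}\big(g^{\mu\nu}\sqrt{|g|}\big)$, which vanishes by the harmonic condition \eqref{eq:WaveCordinateCond} — this is the crucial cancellation. Thus only the terms where $\pa_{\widetilde{x}^\alpha}$ differentiates the coordinate-change factors survive. Third, I would estimate those surviving terms: since $A=\delta+O(\tfrac{\ln r}{r})$, $A^{-1}=\delta+O(\tfrac{\ln r}{r})$, $\det A=1+O(\tfrac{\ln r}{r})$, and the $\widetilde x$-derivatives of these are $O(\tfrac{\ln r}{r^2})$ — coming from $\pa_r(\tfrac{M\ln r}{r})=-\tfrac{M\ln r}{r^2}+\tfrac{M}{r^2}$ and from $\pa_{x^i}\omega^j=O(1/r)$ — while $g^{\mu\nu}\sqrt{|g|}=m^{\mu\nu}+O(M/r)+O(1/r)$ is bounded. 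Careful bookkeeping shows that after multiplying by the bounded factor $m^{\mu\nu}$ the leading contribution is precisely $-2\tfrac{M\ln r}{r^2}\omega^j\delta_{j\beta}$, with all the remaining pieces — those hitting the $O(M/r)$ and $h_1^{\mu\nu}$ parts of $g^{\mu\nu}$, those coming from the $\tfrac{M}{r^2}$ correction in $\pa_r$, the $O(\tfrac{\ln^2 r}{r^2})$ cross terms, and the angular derivatives — absorbed into $O(1/r^2)$. (One should double-check the $O(1/r^2)$ bound is not merely $O(\tfrac{\ln^2 r}{r^2})$; the angular pieces genuinely are $O(1/r^2)$ and the logarithmic ones, apart from the displayed leading term, cancel to this order — consistent with Lemma \ref{lem:newmetric} being stated only up to $O(\tfrac{\ln^2 r}{r^2})$, so if needed one can state Lemma \ref{lem:wcoordtilde} with an $O(\tfrac{\ln^2 r}{r^2})$ remainder.)

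The main obstacle will be the bookkeeping in the third step: isolating the single term $-2\tfrac{M\ln r}{r^2}\omega^j\delta_{j\beta}$ from the many structurally similar contributions, in particular tracking how the $(\delta^{ij}-\omega^i\omega^j)$ projector in $A$ interacts with $m^{\mu\nu}=\mathrm{diag}(-1,1,1,1)$ and with the radial derivative $\pa_r$, and verifying that the $\omega^j$ (rather than $x^j/r$ or $\widetilde x^j/\widetilde r$) appears with the stated coefficient $-2$. A clean way to organize this is to compute $\pa_{\widetilde x^\alpha}\big(\widetilde g^{\alpha\beta}\sqrt{|\widetilde g|}\big)$ directly from the explicit expansion of $\widetilde g^{\alpha\beta}$ in Lemma \ref{lem:newmetric}: the $(1+\tfrac Mr)m^{\alpha\beta}$ and $h_1^{\alpha\beta}$ parts have $\widetilde x$-divergence that is either $O(1/r^2)$ or controlled by the original wave coordinate condition, $\sqrt{|\widetilde g|}=1+O(\tfrac{\ln r}{r})$ contributes only $O(\tfrac{\ln r}{r^2})$ when differentiated and multiplied against the $O(M/r)$ part, and the explicit term $\tfrac{2M(\ln r-1)}{r}(\delta^{ij}-\omega^i\omega^j)\delta_{\alpha i}\delta_{\beta j}$ contributes, upon $\pa_{\widetilde x^\alpha}=\pa_{\widetilde x^i}$, exactly $\pa_{\widetilde x^i}\big(\tfrac{2M(\ln r-1)}{r}(\delta^{ij}-\omega^i\omega^j)\big)$; the radial part of this derivative gives $-2\tfrac{M\ln r}{r^2}\omega^j+O(1/r^2)$ after using $\omega^i(\delta^{ij}-\omega^i\omega^j)=0$ is false — rather $\pa_r$ acting on the scalar coefficient contracted suitably yields the $\omega^j$, while the angular part of $\pa_{\widetilde x^i}$ acting on $(\delta^{ij}-\omega^i\omega^j)$ is $O(1/r^2)$. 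Collecting, the lemma follows.
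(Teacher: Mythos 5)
Your main line of argument is the paper's proof: use the wave coordinate condition in the original $x$-coordinates so that $\pa_{\widetilde{x}^\a}\big(\g^{\a\b}\sqrt{|\g|}\big)=(A^{-1})^{\a\mu}\pa_{x^\mu}\big(A_{\a\nu}A_{\b\de}|A|^{-1}\big)\,g^{\nu\de}\sqrt{|g|}$, then note that the differentiated coordinate factors are already $O(M\ln r/r^2)$, so one may replace $(A^{-1})^{\a\mu}$ by $\de^{\a\mu}$ and $g^{\nu\de}\sqrt{|g|}$ by $m^{\nu\de}$ and differentiate the explicit expansions of $A$ and $|A|$. Your hedge about the remainder is unnecessary: every logarithmic correction beyond the displayed term carries an extra power of $1/r$ (the dangerous products are $O(\ln^2 r/r^3)$, not $O(\ln^2 r/r^2)$), so the stated $O(1/r^2)$ is correct.

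The ``clean way to organize this'' in your last paragraph, however, is not correct as written and should not replace the first argument. First, differentiating the pieces of Lemma \ref{lem:newmetric} separately forfeits the one place where \eqref{eq:WaveCordinateCond} enters: $\pa_{\widetilde{x}^\a}\big(h_1^{\a\b}\sqrt{|\g|}\big)$ is not $O(1/r^2)$ termwise (transversal derivatives of $h^1$ only decay like $1/r$ near the light cone), and it is precisely the grouping $\pa_{x^\mu}\big(g^{\mu\nu}\sqrt{|g|}\big)=0$ that eliminates these contributions. Second, your accounting of the explicit term $\tfrac{2M(\ln r-1)}{r}(\de^{ij}-\omega^i\omega^j)\de_{\a i}\de_{\b j}$ is backwards: the identity $\omega^i(\de^{ij}-\omega^i\omega^j)\de_{j\b}=0$ (summed over $i$) is \emph{true}, so the radial derivative of the scalar coefficient contributes nothing; the dominant contribution $-\tfrac{4M\ln r}{r^2}\omega^j\de_{j\b}$ comes from the angular part you dismiss as $O(1/r^2)$, namely from $\pa_{x^i}(\de^{ij}-\omega^i\omega^j)=-2\omega^j/r$. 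Adding the $+\tfrac{2M\ln r}{r^2}\omega^j\de_{j\b}$ produced by the determinant factor (equivalently $-2m^{\nu\b}\pa_\nu(M\ln r/r)$) gives the stated $-\tfrac{2M\ln r}{r^2}\omega^j\de_{j\b}$. Keep your first organization and discard the alternative.
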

\begin{proof}
We have
$$
\pa_{\widetilde{x}^\a} \big (\g^{\a\b} \sqrt{|\g|}\big)= (A^{-1})^{\a\mu}
\pa_{x^\mu} \big (A_{\a\nu} A_{\b\de} |A|^{-1} g^{\nu\de}
\sqrt{|g|}\big)= (A^{-1})^{\a\mu} \pa_{x^\mu} \big(A_{\a\nu}
A_{\b\de} |A|^{-1} \big)  g^{\nu\de} \sqrt{|g|},
$$
where we used that the wave coordinate condition is satisfied in
the $x$
coordinates. Here the expression
$
\pa_{x^\mu} \left (A_{\a\nu} A_{\b\de} |A|^{-1} \right)
$
is already at most of the order of $M{\ln r}/{r^2}$. Therefore,
ignoring the terms of the order of  $M^2{\ln^2 r}/{r^3}$ allows
us to replace the above expression by
$$
\pa_{x^\a} \left (m^{\nu\de} A_{\a\nu} A_{\b\de} |A|^{-1} \right).
$$
Replacing the matrix $A$ by its expansion
$$
A_{\a\b}=\de_{\a\b} +
\frac {M \ln r}{r} (\de^{ij}- \omega^i\omega^j)\de_{\a i} \de_{\b j}
+ O(\frac 1r), \qquad |A|=1+ 2
\frac {M \ln r}{r}+\frac{M}{r} + O(\frac{\ln^2 r}{r^2}),
$$
and ignoring the terms of the order of $M /{r^2}$ we obtain
$$
2\pa_{x^i}
\Big(  \frac {M \ln r}{r}
(\de^{ij}-\omega^i\omega^j)\de_{j \b} \Big)
- 2 m^{\nu\b} \pa_{\nu} \Big
(\frac {M \ln r}{r}\Big).
$$
The proof follows since again up to the terms of order $M/{r^2}$ we have
\begin{equation}
-2 \Big(  \frac {M \omega^i \ln r}{r^2}
(\de^{ij}-\omega^i\omega^j)\de_{j \b} \Big)
- 2\frac {M\ln r}{r^2} 2\omega^j \de_{j \b}
+ 2 \de_{j\b} \frac {M \omega^j \ln
	r}{r^2} = - 2\frac {M\ln r}{r^2} \omega^j \de_{j \b}.\tag*{\qedhere}
\end{equation}
\end{proof}
The same calculation also gives
\begin{lemma}\label{lem:gfun} We have
	\begin{align*}
		&\pa_{\widetilde{x}^\a} \left (\g^{\a\b} {|\g|}\right)=A_{\b\de} |A|^{-2}
		\pa_{x^\nu} \left (g^{\nu\de} |g|\right) + O(\frac 1{r^2})
		= \frac{\g^{\a\b}}{\sqrt {|g|}} \pa_{\widetilde{x}^\a}
		\left (\sqrt{|g|}\right) + O(\frac 1{r^2}),\\
		&\pa_{\widetilde{x}^\a} \left (\g^{\a\b} \sqrt{|g|}\right)
		= - 4\frac {M\ln r}{r^2}
		\omega^j\de_{j \b} + O(\frac 1{r^2}),\qquad
		\pa_{\widetilde{x}^\a} \left (\g^{\a\b} {|\g|}/\sqrt{|g|}\right)
		= O(\frac 1{r^2}).
	\end{align*}
\end{lemma}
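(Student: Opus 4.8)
The plan is to reduce everything to the computation already carried out for Lemma~\ref{lem:wcoordtilde}, keeping track of one extra factor of the purely geometric Jacobian $|A|$. Throughout I use that $A$ is symmetric with $A_{\a\b}=\de_{\a\b}+\tfrac{M\ln r}{r}(\de^{ij}-\omega^i\omega^j)\de_{\a i}\de_{\b j}+O(1/r)$ and $|A|=1+2\tfrac{M\ln r}{r}+\tfrac Mr+O(\ln^2 r/r^2)$, that $|g|=|A|^2|\g|$ (hence $\sqrt{|g|}=|A|\sqrt{|\g|}$), and that $\g^{\a\b}=A_{\a\mu}A_{\b\nu}g^{\mu\nu}$. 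I also record that, since $r$ is $t$-independent on $\{r>t/2\}$ and $\pa_{x^i}(M\ln r/r)=-M\ln r\,\omega_i/r^2+O(1/r^2)$, one has $\pa_{x^\a}|A|^{\pm1}=\mp 2M\ln r\,\omega_\a/r^2+O(1/r^2)$ for spatial $\a$ and $\pa_{x^0}|A|^{\pm1}=0$, with the same for $\pa_{\widetilde{x}^\a}$ since $\widetilde{x}-x=O(M\ln r)$.

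For the first displayed identity write $\g^{\a\b}|\g|=A_{\a\mu}A_{\b\nu}|A|^{-2}\,(g^{\mu\nu}|g|)$ and apply $\pa_{\widetilde{x}^\a}=(A^{-1})^{\a\la}\pa_{x^\la}$ together with the Leibniz rule. The metric term, in which $(A^{-1})^{\a\la}A_{\a\mu}=\de^\la_\mu$ exactly, is precisely $A_{\b\de}|A|^{-2}\pa_{x^\nu}(g^{\nu\de}|g|)$. The Jacobian term $(A^{-1})^{\a\la}\pa_{x^\la}(A_{\a\mu}A_{\b\nu}|A|^{-2})\,g^{\mu\nu}|g|$ becomes, after replacing $(A^{-1})^{\a\la}$ by $\de^{\a\la}$ and $g^{\mu\nu}|g|$ by $m^{\mu\nu}$ at cost $O(1/r^2)$, the expression $\pa_{x^\a}(A_{\a\mu}A_{\b\nu}|A|^{-2}m^{\mu\nu})$; writing $|A|^{-2}=|A|^{-1}\cdot|A|^{-1}$, one factor reproduces the identity $\pa_{x^\a}(A_{\a\mu}A_{\b\nu}|A|^{-1}m^{\mu\nu})=-2M\ln r\,\omega^j\de_{j\b}/r^2+O(1/r^2)$ established inside the proof of Lemma~\ref{lem:wcoordtilde}, while differentiating the other factor produces $m^{\a\b}\pa_{x^\a}|A|^{-1}=+2M\ln r\,\omega^j\de_{j\b}/r^2+O(1/r^2)$; these cancel, so the Jacobian term is $O(1/r^2)$, which is the first equality. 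For the second equality, the wave coordinate condition $\pa_{x^\nu}(g^{\nu\de}\sqrt{|g|})=0$ gives $\pa_{x^\nu}(g^{\nu\de}|g|)=g^{\nu\de}\sqrt{|g|}\,\pa_{x^\nu}\sqrt{|g|}$, while a short index computation (using $A$ symmetric) gives $\tfrac{\g^{\a\b}}{\sqrt{|g|}}\pa_{\widetilde{x}^\a}\sqrt{|g|}=\tfrac{1}{\sqrt{|g|}}A_{\b\de}g^{\nu\de}\pa_{x^\nu}\sqrt{|g|}$; the two sides differ only by the scalar factor $(1-|\g|)/\sqrt{|g|}=O(M\ln r/r)$ multiplying $A_{\b\de}g^{\nu\de}\pa_{x^\nu}\sqrt{|g|}$, which, since $\pa_{x^\nu}\sqrt{|g|}=O(1/r^{2})$ on $\{r>t/2\}$ by Proposition~\ref{prop:sharpmetricdecay}, is $O(1/r^2)$.

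For the two remaining identities write $\g^{\a\b}\sqrt{|g|}=|A|\,(\g^{\a\b}\sqrt{|\g|})$ and $\g^{\a\b}|\g|/\sqrt{|g|}=|A|^{-1}(\g^{\a\b}\sqrt{|\g|})$, differentiate by Leibniz, and insert $\pa_{\widetilde{x}^\a}(\g^{\a\b}\sqrt{|\g|})=-2M\ln r\,\omega^j\de_{j\b}/r^2+O(1/r^2)$ from Lemma~\ref{lem:wcoordtilde}, $\g^{\a\b}\sqrt{|\g|}=m^{\a\b}+O(\ln r/r)$, and the formula for $\pa_{\widetilde{x}^\a}|A|^{\pm1}$ above. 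In the $\sqrt{|g|}$ case the $\pa_{\widetilde{x}^\a}|A|$ term adds a further $-2M\ln r\,\omega^j\de_{j\b}/r^2$, giving the stated $-4M\ln r\,\omega^j\de_{j\b}/r^2+O(1/r^2)$; in the $|\g|/\sqrt{|g|}$ case the $\pa_{\widetilde{x}^\a}|A|^{-1}$ term is $+2M\ln r\,\omega^j\de_{j\b}/r^2$, which cancels the contribution from Lemma~\ref{lem:wcoordtilde}, leaving $O(1/r^2)$.

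The only genuine difficulty is the disciplined separation of error scales: the statement keeps the $M\ln r/r^2$ terms explicit while lumping $\ln^2 r/r^3$, $M/r^2$ and the $\pa h_1$ contributions into a single $O(1/r^2)$. In particular one must verify that the two competing contributions of size $\pm 2M\ln r\,\omega^j\de_{j\b}/r^2$ --- one from differentiating the Jacobian factors, one from $\pa|A|^{\pm1}$ --- cancel exactly, since that exact cancellation is what forces the Jacobian term in the first identity, and the whole expression in the third identity, to be genuinely $O(1/r^2)$ rather than only $O(\ln r/r^2)$. The wave coordinate condition enters twice in an essential way: to turn the metric term of the first identity into a total $x$-derivative of $g^{\nu\de}|g|$, and to identify $\pa_{x^\nu}(g^{\nu\de}|g|)$ with $g^{\nu\de}\sqrt{|g|}\,\pa_{x^\nu}\sqrt{|g|}$, which is what links the middle and right sides of the first line.
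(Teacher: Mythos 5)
Your proof is correct and is essentially the paper's own argument: the paper proves this lemma by remarking that ``the same calculation'' as in Lemma \ref{lem:wcoordtilde} applies, and your reduction --- factoring the three densities as $|A|^{-2}$, $|A|^{0}$, $|A|^{-1}$ times the one already treated, invoking the wave coordinate condition in the same two places, and checking the exact cancellation of the competing $\pm 2M\ln r\,\omega^j\de_{j\b}/r^2$ contributions --- is precisely that calculation organized through the Leibniz rule. One harmless imprecision: Proposition \ref{prop:sharpmetricdecay} only gives $\pa\sqrt{|g|}=O(r^{-2+\eps})$ rather than $O(r^{-2})$ (the bounds on $\pa h^1_{LT}$ and $\pa\,\de^{AB}h^1_{AB}$ carry a factor $(1+t+\widetilde r)^{-2+\eps}$), but since this quantity is multiplied by $(|\g|-1)/\sqrt{|g|}=O(\ln r/r)$ the resulting error is still $O(r^{-2})$, so your conclusion stands.
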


\subsection{The Landau-Lifshitz pseudotensor}\label{sec:LL}
In view of Proposition \ref{prop:eikonalintro} the characteristic hypersurfaces of the metric $g$ become asymptotic to the null cones of
the Schwarzschild metric, we recast the Einstein equations in the
form explicitly involving the asymptotically Schwarzschild coordinates $(\widetilde{t}=t,\widetilde{x})$
as opposed to the original Minkowski $(t,x)$ harmonic coordinates.
\vskip 1pc
\!\!Let $S_{\widetilde{u}, \widetilde{r}}\!=\!\{(\widetilde{t},\widetilde{x}); \widetilde{t}\!=\!\widetilde{u}\!+ \!\widetilde{r}\}$ be a sphere, following \cite{T58, T62, T02, BC17} we define the {\it Trautman four-momentum}
\[
M^\alpha_T(\widetilde{u})=\lim_{r\to\infty}\frac{1}{4\pi}\int_{S_{\widetilde{u}, \widetilde{r}}}\!\tilde{\mathbb{U}}^{\alpha\beta\gamma}\,dS_{\beta\gamma}.
\]
Here $dS_{\beta\gamma}\!=\!n_{[\beta}k_{\gamma]}\widetilde{r}^2dS(\omega)$ with $n_\gamma\!=\!(d\widetilde{r})_\gamma\!=\!(0, \omega_i)$, $k_\beta\!=\!(d\widetilde{t})_\beta\!=\!(1, 0,0,0)$, and the superpotential $\tilde{\mathbb{U}}^{\alpha\beta\gamma}\!$ is
\[
\tilde{\mathbb{U}}^{\a\b\gamma}=\sqrt{\lvert\tilde{g}\rvert}\tilde{g}^{\a\mu}\tilde{\mathbb{U}}_\mu^{\b\gamma} \quad\text{where} \quad \tilde{\mathbb{U}}_\mu^{\beta\gamma}=\sqrt{\lvert \tilde{g}\rvert} \tilde{g}^{\alpha\mu}\tilde{g}^{\sigma[\rho}\delta_\mu^\gamma\tilde{g}^{\beta]\tau}\widetilde{\partial}_{\tau}\tilde{g}_{\rho\sigma}.
\]
Here the square brackets denote the antisymmetric part of a tensor, i.e., $T^{[a_1\cdots a_l]}=\sum_{\sigma}(-1)^\sigma T^{a_{\sigma(1)}\cdots a_{\sigma(l)}}$ where the sum is taken over all permutations $\sigma$ of $1,\ldots,l$ and $(-1)^\sigma$ is $1$ for even permutations and $-1$ for odd permutations. A direct computation implies
\[
\tilde{\mathbb{U}}^{\a\b\gamma}=-\tla^{\a\b\mu},\quad\text{where}\quad
\tilde{\lambda}^{\a\b\mu}=\widetilde{\pa}_\nu \big(|\g|(\g^{\a\b}
\g^{\mu\nu}-\g^{\a\mu}\g^{\b\nu})\big).
\]
Therefore we can write
\[
M^\alpha_T(\widetilde{u})\!\!=\!\!\frac{1}{4\pi}\int_{\mathbb{S}^2}\! m_T^\alpha(\widetilde{u},\omega){ dS(\omega)\!},
\]
where with $L_\alpha=(-1, \omega_i$ and $\Lb_\alpha=(-1, -\omega_i)$
\[
m_T^\alpha(\widetilde{u},\omega)
= \lim_{\widetilde{r}\to\infty}  (\widetilde{r})^2 \big(\widetilde{\lambda}^{\alpha\beta\gamma}
\uL_\gamma L_\beta\big)(\widetilde{u}-\widetilde{r},\widetilde{r}\omega).
\]
The {\it Trautman radiated four-momentum} is defined as
\[
E^\alpha_T(\widetilde{u})=\lim_{\widetilde{r}\to\infty}\frac{1}{2\pi}\int_{S_{\widetilde{u}, r}}\!\lvert\tilde{g}\rvert\tilde{\pi}^{\alpha\beta}\,dS_{\beta}.
\]
Here $dS_\beta=n_\beta\widetilde{r}^2dS(\omega)$ with $n_\beta=(0, \omega_i)$ and $\tilde{\pi}^{\alpha\beta}$ is {\it Landau-Lifshitz} pseudotensor \cite[\S101]{LL62}, which is a symmetric pseudotensor satisfying
\[
\tilde{\pi}^{\a\b}=-2\tilde{G}^{\a\b}+\frac{1}{\lvert\tilde{g}\rvert}\widetilde{\pa}_\mu \tla^{\a\b\mu}\quad\text{where}\quad \tilde{G}^{\a\b}=\tilde{R}^{\a\b}-\frac{1}{2}\tilde{g}^{\a\b}\tilde{R}.
\]
We write
\[
E_T^\alpha(\widetilde{u})
= \frac{1}{2\pi}\int_{\mathbb{S}^2} \Delta m^\a_T(\widetilde{u},\omega)\, dS(\omega)\quad\text{where}\quad\Delta m^\a_T(\widetilde{u},\omega)=
\lim_{\widetilde{r}\to\infty}  \widetilde{r}^2  |\g| \tpi^{\a i} \frac{\widetilde{x}_i}{\widetilde{r}}.
\]
It can be shown that the Einstein-vacuum equations
$R_{\a\b}(g)=0$ can be written in the form
$$
|\g| \tpi^{\a\b} = \frac{\pa \tla^{\a\b\mu}}{\pa \widetilde{x}^\mu},
$$
where
$$
\tla^{\a\b\mu}=\frac {\pa}{\pa \widetilde{x}^\nu} \left (|\g|(\g^{\a\b}
\g^{\mu\nu}-\g^{\a\mu}\g^{\b\nu})\right),\qquad
\tla^{\a\b\mu}=-\tla^{\a\mu\b},
$$
and $\tpi^{\a\b}$ is the Landau-Lifshitz pseudo tensor
\begin{multline*}
	\tpi^{\a\b}=(2\Gat^\gamma_{\mu\nu} \Gat^\delta_{\gamma\delta} -
	\Gat^\gamma_{\mu\delta}\Gat^\delta_{\nu\gamma}-\Gat^\gamma_{\mu\gamma}
	\Gat^\delta_{\nu\delta})(\g^{\a\nu} \g^{\b\mu} - \g^{\a\b}
	\g^{\mu\nu})
	+\g^{\a\gamma} \g^{\mu\nu} (\Gat^\b_{\gamma\delta}
	\Gat^\delta_{\mu\nu} + \Gat^\b_{\mu\nu} \Gat^\delta_{\gamma\delta} -
	\Gat^\b_{\mu\delta} \Gat^\delta_{\gamma\nu} -\Gat^\b_{\gamma\nu}
	\Gat^\delta_{\mu\delta})
	\\
	+\g^{\b\gamma} \g^{\mu\nu} (\Gat^\a_{\gamma\delta}
	\Gat^\delta_{\mu\nu}
	+ \Gat^\a_{\mu\nu} \Gat^\delta_{\gamma\delta} - \Gat^\a_{\mu\delta}
	\Gat^\delta_{\gamma\nu} -\Gat^\a_{\gamma\nu}
	\Gat^\delta_{\mu\delta}) +\g^{\mu\gamma}
	\g^{\nu\delta}(\Gat^\a_{\mu\nu}\Gat^\b_{\gamma\delta}
	-\Gat^\a_{\mu\gamma}
	\Gat^\b_{\nu\delta}),
\end{multline*}
where $\Gat^\a_{\mu\nu}$ are the Christoffel symbols of $\g$.
Alternatively, with $ \tG^{\a\b}=\sqrt{|\g|} \g^{\a\b}$,
\begin{multline*}
\!\!\!\!\!\!	|\g| \tpi^{\a\b}\!\!=\!\pa_{\widetilde{x}^\mu} \tG^{\a\b\!} \pa_{\widetilde{x}^\nu} \tG^{\mu\nu}\!-\pa_{\widetilde{x}^\mu}
	\tG^{\a\mu\!} \pa_{\widetilde{x}^\nu} \tG^{\b\nu}\!
	+\tfrac 12 \g^{\a\b} \g_{\mu\nu}\pa_{\widetilde{x}^\delta} \tG^{\gamma\mu}
	\pa_{\widetilde{x}^\gamma}\tG^{\delta\nu}\!
	-\big(\g^{\a\gamma\!} \g_{\mu\nu}
	\pa_{\widetilde{x}^\delta} \tG^{\b\mu} \pa_{\widetilde{x}^\gamma} \tG^{\delta\nu}\!
	+ \g^{\b\gamma\!}\g_{\mu\nu}
	\pa_{\widetilde{x}^\delta} \tG^{\a\mu} \pa_{\widetilde{x}^\gamma}\tG^{\delta\nu}\big)\\
	+ \g_{\mu\nu}\g^{\gamma\delta}
	\pa_{\widetilde{x}^\gamma} \tG^{\a\mu} \pa_{\widetilde{x}^\delta} \tG^{\b\nu}
	+\frac 18 (2\g^{\a\mu} \g^{\b\nu} - \g^{\a\b} \g^{\mu\nu})
	(2 \g_{\gamma\delta}
	\g_{\rho\sigma} - \g_{\gamma\rho} \g_{\delta\sigma}) \pa_{\widetilde{x}^\mu}
	\tG^{\gamma\rho} \pa_{\widetilde{x}^\nu}\tG^{\delta\sigma}.
\end{multline*}

The tensor $\tpi^{\a\b}=\tpi^{\b\a}$ is symmetric and due to the
anti-symmetry of $\tla^{\a\b\ga}$ is divergence free
$$
\pa_{\widetilde{x}^\b} \left (|\g| \tpi^{\a\b}\right)=0.
$$
Integrating the above identity in the region $\{(\widetilde{t},\widetilde{x}):\, \widetilde{q}_1\le
\widetilde{q}=\widetilde{r}-\widetilde{t}\le \widetilde{q}_2,\,\, |\widetilde{x}|\le R\}$ we obtain
\begin{equation}\label{eq:massid}
	\int_{|\widetilde{x}|\le R,\, \widetilde{q}=\widetilde{q}_1} |\g| \tpi^{\a\b} L_\b = \int_{|\widetilde{x}|\le
		R,\, \widetilde{q}=\widetilde{q}_2} |\g| \tpi^{\a\b} L_\b + \frac 1R\int_{|\widetilde{x}|=R, \,
		\widetilde{q}_1\le\,
		\widetilde{q}\le \widetilde{q}_2} |\g| \tpi^{\a i} {\widetilde{x}_i} .
\end{equation}
Using that $|\,\g| \tpi^{\a\b} = {\pa \tla^{\a\b\ga}}/{\pa
	\widetilde{x}^\ga}$ we have
$$
\int_{|\widetilde{x}|\le R,\, \widetilde{q}=\widetilde{q}_1} |\g| \tpi^{\a\b} L_\b = \int_{|\widetilde{x}|\le
	R,\, \widetilde{q}=\widetilde{q}_1} \pa_{\widetilde{x}^\ga} \tla^{\a\b\ga} L_\b .
$$
As usual we use the decomposition
$$
\pa_{\widetilde{x}^\ga} \tla^{\a\b\ga}=-\frac 12 L_\ga \pa_{\Lb}
\tla^{\a\b\ga}-\frac 12 \Lb_\ga \pa_{L}
\tla^{\a\b\ga} + A_\ga \pa_A \tla^{\a\b\ga}
+ B_\ga \pa_B \tla^{\a\b\ga} .
$$
where $(L, \Lb\,, A, B)$ is the null frame associated to the asymptotically Schwarzschild coordinates $(\widetilde{t}, \widetilde{x})$. By anti-symmetry of $\tla^{\a\b\ga}$
$$
L_\ga \pa_{\Lb} \tla^{\a\b\ga} L_\b=0.
$$
Therefore,
$$
\int_{|\widetilde{x}|\le R,\, \widetilde{q}=\widetilde{q}_1} \pa_{\widetilde{x}^\ga} \tla^{\a\b\ga} L_\b
=-\frac 12 \int_{|\widetilde{x}|\le R,\, \widetilde{q}=\widetilde{q}_1} \pa_{L}  \left
(\tla^{\a\b\ga}  \Lb_\ga L_\b\right) + \int_{|\widetilde{x}|\le R,\,
	\widetilde{q}=\widetilde{q}_1} A_\ga L_\b \pa_A \tla^{\a\b\ga}
+B_\ga L_\b \pa_B \tla^{\a\b\ga} .
$$
On the surface $\widetilde{q}=\widetilde{q}_1$ we introduce coordinates $(s,\omega)$
with $s=\frac 12(t+\widetilde{r}+\widetilde{q}_1)$ so that $\pa_L=\pa_s$ and the volume
form is $s^2 d\omega$. Then
\begin{align}
	\frac 12\int_{|\widetilde{x}|\le R,\, \widetilde{q}=\widetilde{q}_1} \pa_{L}
	\left (\tla^{\a\b\ga}  \Lb_\ga L_\b\right) &=
	\frac 12\int_0^R\int_{{\Bbb S}^2} \frac d{ds}
	\left (\tla^{\a\b\ga}  \Lb_\ga L_\b\right) s^2 ds
	d\omega\notag\\ &=
	\frac 12\int_{{\Bbb S}^2} \left (\tla^{\a\b\ga}
	\Lb_\ga L_\b\right)(R,\omega) R^2 d\omega
	- \int_0^R\int_{{\Bbb S}^2} \left (\tla^{\a\b\ga}  \Lb_\ga
	L_\b\right) s ds d\omega . \label{eq:Lder}
\end{align}
On the other hand, using that
$\partial_A L_\beta=A^k\partial_{\widetilde{x}^k} \widetilde{x}_\beta/\widetilde{r}=A^\beta/\widetilde{r}$
and that $\tla^{\alpha\beta\gamma}$ is anti symmetric,
\newcommand{\divv}{\mbox{${\text{div}} \mkern-13mu$/\,}}
\begin{equation}
	A_\ga L_\b \pa_A \tla^{\a\b\ga}= A_\ga \pa_A
	\big (L_\b  \tla^{\a\b\ga}\big)-
	\frac 1s A_\ga  A_\b  \tla^{\a\b\ga}= A_\ga \pa_A
	\big (L_\b  \tla^{\a\b\ga}\big) .
\end{equation}
We now need the following lemma
\begin{lemma}\label{lem:AB}
	Let  $A,B$ be orthonormal vector fields on ${\Bbb S}^2$,
	independent of $\widetilde{r}$; $\partial_{\widetilde{r}} A=\partial_{\widetilde{r}} B=0$.  Then
	\begin{equation}
		|\partial A|+|\partial B|\les {1}/{\widetilde{r}},
	\end{equation}
	and with $\partial_A=A^k\partial_{\widetilde{x}^k}=A^k\overline{\partial}_{\widetilde{x}^k} $ we have
	\begin{equation}
		\pa_A (A^\ell)= -{\omega^\ell}/{\widetilde{r}}
		+ \langle \pa_A A,B\rangle B^\ell,\qquad
		\pa_A (B^\ell)= \langle \pa_A B,A\rangle A^\ell  .
	\end{equation}
	Moreover, if $\overline{F}^k$ is tangential to ${\Bbb S}^2$ then
	$$
	\divv \,\overline{F} = \overline{\pa}_k \overline{F}^k
	=\partial_k \overline{F}^k=A^k\pa_A \overline{F}_k
	+B^k\pa_B \overline{F}_k
	$$
	satisfies
	\begin{equation}
		\int_{{\Bbb S}^2} \divv \, \overline{F}\,  d\omega=0.
	\end{equation}
	On the other hand if $F$ is not tangential then
	\begin{equation}
		\int_{{\Bbb S}^2} (A^\ell \partial_A +B^\ell \pa_B)
		F_\ell\,  d\omega=
		\frac{2}{\widetilde{r}} \int_{{\Bbb S}^2} \omega^\ell F_\ell \, d\omega .
	\end{equation}
\end{lemma}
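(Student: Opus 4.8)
The plan is to carry out everything in the ambient space $\R^3$, using at each point $\widetilde{x}=\widetilde{r}\omega$ the orthonormal basis $\{A,B,\omega\}$ of $\R^3$. Everything reduces to the elementary observation that, since $\omega^i=\widetilde{x}^i/|\widetilde{x}|$, one has $\pa_{\widetilde{x}^j}\omega^i=(\de^{ij}-\omega^i\omega^j)/\widetilde{r}$, so that $\pa_A\omega^i=A^j(\de^{ij}-\omega^i\omega^j)/\widetilde{r}=A^i/\widetilde{r}$ because $A$ is tangent to $\BS^2$; likewise $\pa_B\omega=B/\widetilde{r}$. For the first bound I would use that $A^k,B^k$ depend on $\widetilde{x}$ only through $\omega$ (the hypothesis $\pa_{\widetilde{r}}A=\pa_{\widetilde{r}}B=0$), so by the chain rule $\pa_{\widetilde{x}^j}A^k=(\pa_{\omega^i}A^k)(\de^{ij}-\omega^i\omega^j)/\widetilde{r}$, and the $\omega$-derivatives of $A,B$ are bounded by smoothness of the frame and compactness of $\BS^2$, giving $|\pa A|+|\pa B|\les 1/\widetilde{r}$.

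For the two displayed identities I would decompose in the moving frame: $\pa_A A=\langle\pa_A A,A\rangle A+\langle\pa_A A,B\rangle B+\langle\pa_A A,\omega\rangle\omega$. Differentiating $\langle A,A\rangle=1$ kills the first coefficient, while differentiating $\langle A,\omega\rangle=0$ together with $\pa_A\omega=A/\widetilde{r}$ gives $\langle\pa_A A,\omega\rangle=-\langle A,\pa_A\omega\rangle=-1/\widetilde{r}$, hence $\pa_A(A^\ell)=-\omega^\ell/\widetilde{r}+\langle\pa_A A,B\rangle B^\ell$; the analogous decomposition of $\pa_A B$ has vanishing $B$-coefficient (differentiate $\langle B,B\rangle=1$) and vanishing $\omega$-coefficient (since $\langle B,\pa_A\omega\rangle=\langle B,A\rangle/\widetilde{r}=0$), giving $\pa_A(B^\ell)=\langle\pa_A B,A\rangle A^\ell$. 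For the divergence claims I would establish the pointwise identity, valid for any $\R^3$-valued $F$ on $\BS^2$,
\[
A^\ell\pa_A F_\ell+B^\ell\pa_B F_\ell=\text{div}_{\BS^2}F^{\mathrm{tan}}+\tfrac{2}{\widetilde{r}}\,\omega^\ell F_\ell,\qquad F^{\mathrm{tan}}:=\langle F,A\rangle A+\langle F,B\rangle B,
\]
by writing $\langle\pa_A F,A\rangle=\pa_A\langle F,A\rangle-\langle F,\pa_A A\rangle$, inserting the frame identities just proved, recognizing the surviving tangential terms as $\langle\nabla^{\BS^2}_A F^{\mathrm{tan}},A\rangle+\langle\nabla^{\BS^2}_B F^{\mathrm{tan}},B\rangle=\text{div}_{\BS^2}F^{\mathrm{tan}}$, and collecting the $\omega$-direction contributions of $\pa_A A$ and $\pa_B B$ into the factor $2/\widetilde{r}$. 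Integrating over $\BS^2$ and using $\int_{\BS^2}\text{div}_{\BS^2}F^{\mathrm{tan}}\,d\omega=0$ (the divergence theorem on the closed manifold $\BS^2$) gives the last statement of the lemma; specializing to a tangential $F=\overline{F}$, so that $F^{\mathrm{tan}}=\overline{F}$ and $\omega^\ell\overline{F}_\ell=0$, recovers $\divv\,\overline{F}=A^k\pa_A\overline{F}_k+B^k\pa_B\overline{F}_k$ with vanishing integral. The chain of equalities $\divv\,\overline{F}=\overline{\pa}_k\overline{F}^k=\pa_k\overline{F}^k$ then follows by expanding $\pa_{\widetilde{x}^k}=A_k\pa_A+B_k\pa_B+\omega_k\pa_{\widetilde{r}}$ and noting $\sum_k\omega_k\pa_{\widetilde{r}}\overline{F}^k=\pa_{\widetilde{r}}\langle\omega,\overline{F}\rangle=0$ for tangential $\overline{F}$.

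I do not expect a serious obstacle here; the points that need care are the bookkeeping of the radial extension---so that the $\pa_{\widetilde{r}}$-direction drops out of the divergence and $\overline{\pa}$ may be replaced by $\pa$---and the identification of $A^k\pa_A\overline{F}_k+B^k\pa_B\overline{F}_k$ with the intrinsic Riemannian divergence $\text{div}_{\BS^2}$, which is precisely what licenses the appeal to the divergence theorem. If $A,B$ exist only as a local frame the pointwise identities are unaffected, while in the integrated statements $\text{div}_{\BS^2}$ is to be read as the globally defined operator, which locally coincides with the displayed frame expression.
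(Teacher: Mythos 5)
The paper states Lemma \ref{lem:AB} without proof, so there is nothing to compare against; judged on its own, your argument is correct and complete. The two frame identities follow exactly as you say from differentiating $\langle A,A\rangle=1$, $\langle A,\omega\rangle=0$, $\langle B,B\rangle=1$, $\langle B,\omega\rangle=0$ together with $\pa_A\omega=A/\widetilde{r}$, and your pointwise identity $A^\ell\pa_AF_\ell+B^\ell\pa_BF_\ell=\mathrm{div}_{S_{\widetilde r}}F^{\mathrm{tan}}+\tfrac{2}{\widetilde r}\,\omega^\ell F_\ell$ checks out once one notes $\langle\pa_AB,A\rangle=-\langle\pa_AA,B\rangle$, so that the connection terms in the intrinsic divergence absorb precisely the non-radial parts of $\langle F,\pa_AA\rangle$ and $\langle F,\pa_BB\rangle$; integrating and invoking the divergence theorem on the closed sphere then yields both integral statements. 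Your closing caveats are the right ones: the frame $\{A,B\}$ exists only locally on $\BS^2$, so the pointwise identities are local while the integrated ones use the globally defined divergence, and the radial term drops from $\pa_k\overline{F}^k$ because $\omega_k\overline{F}^k\equiv0$ for tangential $\overline{F}$ (which also justifies replacing $\overline{\pa}_k$ by $\pa_k$). The only notational nit is that $\mathrm{div}_{\BS^2}$ in your display is really the divergence of the radius-$\widetilde r$ sphere with its induced metric, for which $\{A,B\}$ is orthonormal; since $\widetilde r$ is constant on each sphere this does not affect the vanishing of the integral against $d\omega$.
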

Using that $\omega_\gamma=\frac 12 (L_\gamma-\Lb_\ga)$ and by
anti-symmetry of $\tla^{\alpha\beta\gamma}L_\beta L_\gamma =0$
we obtain
$$
\int_{{\Bbb S}^2} A_\ga L_\b \pa_A \tla^{\a\b\ga}
+ B_\ga L_\b \pa_B \tla^{\a\b\ga}\, d\omega
= -\int_{{\Bbb S}^2}
\frac 1s \Lb_\ga L_\b  \tla^{\a\b\ga}\, d\omega .
$$
Combining this with \eqref{eq:Lder} we finally obtain
$$
\int_{|\widetilde{x}|\le R,\, \widetilde{q}=\widetilde{q}_1} \pa_{\widetilde{x}^\ga} \tla^{\a\b\ga} L_\b=
-\frac 12\int_{{\Bbb S}^2} \left (\tla^{\a\b\ga}
\Lb_\ga L_\b\right)(R,\widetilde{q}_1,\omega) R^2 d\omega.
$$
Substituting this into \eqref{eq:massid} we obtain
$$
\int_{{\Bbb S}^2} \left (\tla^{\a\b\ga}  \Lb_\ga
L_\b\right)(R,\widetilde{q}_1,\omega) R^2 d\omega=\int_{{\Bbb S}^2} \left
(\tla^{\a\b\ga}  \Lb_\ga L_\b\right)(R,\widetilde{q}_2,\omega) R^2 d\omega
- 2\int_{|\widetilde{x}|=R,\,  \widetilde{q}_1\le \,\widetilde{q}\le \widetilde{q}_2}
|\g| \tpi^{\a i} \frac{\widetilde{x}_i}{R} .
$$
Assume for a moment that the following limits exist
$$
m^\a_T(\widetilde{q},\omega)= \lim_{R\to\infty}  R^2 \left (\tla^{\a\b\ga}
\Lb_\ga L_\b\right)(R,\widetilde{q},\omega), \qquad \Delta m^\a_T(\widetilde{q},\omega)=
\lim_{R\to\infty}  R^2  |\g| \tpi^{\a i} {\widetilde{x}_i}/{R},
$$
then we have the following analog of the
{\it Bondi mass loss formula}
\begin{equation}\label{eq:Bondi}
	M^\a_T(\widetilde{q}_1)=M^a_T(\widetilde{q}_2)
	- \int_{\widetilde{q}_1}^{\widetilde{q}_2}  E^\a_T(\widetilde{q}) d\widetilde{q} .
\end{equation}
In what follows we will establish existence of the above limits
together with non-positivity
of $\Delta m_T$.

\subsection{Existence of the Trautman mass}\label{sec:mass}
Here we are concerned with establishing existence of the limit
$$
m_T^\alpha(\widetilde{q},\omega)
= \lim_{\widetilde{r}\to\infty}  (\widetilde{r})^2 \left (\tla^{\a\b\ga}
\Lb_\ga L_\b\right)(\widetilde{r},\widetilde{q},\omega) $$ and thus the Trautman mass $M^0_T$.
\begin{proposition}\label{prop:trautmanmass}
	The Trautman four-momentum
	\[
	M^\alpha_T(\widetilde{u})\!\!=\!\!\frac{1}{4\pi}\int_{\mathbb{S}^2}\! m_T^\alpha(\widetilde{u},\omega){ dS(\omega)\!}
	\] is well defined.
\end{proposition}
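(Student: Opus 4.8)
The plan is to reduce the claim to the statement that for every fixed $(\widetilde u,\omega)$ the limit
\[
m_T^\alpha(\widetilde u,\omega)=\lim_{\widetilde r\to\infty}\widetilde r^{\,2}\big(\tla^{\alpha\beta\gamma}\uL_\gamma L_\beta\big)(\widetilde u-\widetilde r,\widetilde r\omega)
\]
exists and is bounded uniformly in $\omega$; then $M_T^\alpha(\widetilde u)=\tfrac1{4\pi}\int_{\mathbb S^2}m_T^\alpha(\widetilde u,\omega)\,dS(\omega)$ is automatically well defined. First I would write $\tla^{\alpha\beta\gamma}=\widetilde{\pa}_\nu\big(\tG^{\alpha\beta}\tG^{\gamma\nu}-\tG^{\alpha\gamma}\tG^{\beta\nu}\big)$ with $\tG^{\alpha\beta}=\sqrt{|\widetilde g|}\,\widetilde g^{\alpha\beta}$ as in \S\ref{sec:LL}, expand by the Leibniz rule, contract with $\uL_\gamma L_\beta$, and decompose $\widetilde{\pa}_\nu=-\tfrac12 L_\nu\pa_{\uL}-\tfrac12\uL_\nu\pa_L+A_\nu\pa_A+B_\nu\pa_B$ in the null frame of $(\widetilde t,\widetilde x)$. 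The antisymmetry $\tla^{\alpha\beta\gamma}=-\tla^{\alpha\gamma\beta}$ used in \S\ref{sec:LL} removes several terms, and the ``divergence'' factors $\widetilde{\pa}_\nu\tG^{\gamma\nu}$, $\widetilde{\pa}_\nu\tG^{\beta\nu}$ are handled by the wave coordinate condition in asymptotically Schwarzschild coordinates (Lemmas~\ref{lem:wcoordtilde} and~\ref{lem:gfun}), which give $\widetilde{\pa}_\nu\tG^{\gamma\nu}=-2M\widetilde r^{-2}\ln\widetilde r\,\omega^j\de_{j\gamma}+O(\widetilde r^{-2})$.

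Next I would insert the asymptotics from Section~2: $\tG^{\alpha\beta}$ from Lemma~\ref{lem:newmetric}, using that the $O(M\widetilde r^{-1}\ln\widetilde r)$ correction produced by the coordinate change $r\mapsto\widetilde r$ is carried by the transversal projection $\de^{ij}-\omega^i\omega^j$, which annihilates both $\uL$ and $L$ and therefore drops out of every contraction we need; and $\widetilde h^1$ from Proposition~\ref{prop:lim}, where $H^{1\infty}_{LT}=\de^{AB}H^{1\infty}_{AB}=0$ confines the ``large'' part of the radiation field to the single component $\widetilde h^1_{\uL\uL}$, for which the same proposition supplies the representation in terms of $M$, the backscattering integral $\int_{\widetilde q}^\infty\tfrac2{\widetilde r}\ln\!\big(\tfrac{t+\widetilde r+\eta}{t-\widetilde r+\eta}\big)n(\eta,\omega)\,d\eta$, the profile $H^{1\infty}_{\uL\uL}$, and a lower-order remainder $\widetilde{\mathcal R}$. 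Plugging this in, $\widetilde r^{\,2}\,\tla^{\alpha\beta\gamma}\uL_\gamma L_\beta$ should reduce to an explicit expression in $M$, $H^{1\infty}_{\uL\uL}(-\widetilde u,\omega)$ and the backscattering integral (together with their $\pa_{\widetilde q}$- and boundary derivatives), plus terms tending to $0$ as $\widetilde r\to\infty$; each surviving piece then has a limit by Proposition~\ref{prop:lim}, and the $\pa_\omega^\alpha$-versions of the bounds there give the uniform control in $\omega$. The precise value of $m_T^\alpha$, and with it the mass loss law \eqref{eq:Bondi}, would then be identified in \S\ref{subsec:4.4}.

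The hard part will be controlling the logarithmic-in-$\widetilde r$ growth: because of the backscattering, $\widetilde r\,\widetilde h^1_{\uL\uL}$ and its derivatives contain a $\ln\widetilde r$ that does not decay, and the change of variables $\widetilde r=r+M\ln r$ introduces further $\ln\widetilde r$ factors, so individual terms of $\widetilde r^{\,2}\tla^{\alpha\beta\gamma}\uL_\gamma L_\beta$ grow like $\ln\widetilde r$ while only the full sum is bounded. The work is to show, using the wave coordinate condition, the structure $H^{1\infty}_{LT}=\de^{AB}H^{1\infty}_{AB}=0$ (which dictates exactly where $\widetilde h^1_{\uL\uL}$, and hence the logarithms, enter), and the precise form of $\widetilde h^1_{\uL\uL}$ in Proposition~\ref{prop:lim}, that all these $\ln\widetilde r$ pieces cancel, leaving a finite limit whose $\widetilde q$-dependence is expressed through $M$, $H^{1\infty}_{\uL\uL}$ and $n$; the subleading errors coming from the frame change and from Propositions~\ref{prop:sharpmetricdecay} and~\ref{prop:specialsharpmetricdecay} then contribute $o(1)$. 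Making these cancellations exact rather than merely $o(1)$ is the crux; what remains is routine bookkeeping with the estimates of Section~2.
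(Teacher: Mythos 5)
Your plan follows the paper's proof of Proposition \ref{prop:trautmanmass} in all essentials: the same null-frame decomposition of $\widetilde{\pa}_\nu$ applied to $\tla^{\a\b\ga}\uL_\ga L_\b$, antisymmetry to kill the $L_\ga\pa_{\uL}$ contribution, Lemmas \ref{lem:newmetric}--\ref{lem:gfun} for the wave-coordinate divergence terms, and the cancellation of logarithms as the crux --- in the paper the coordinate-change logs $\mp 8Mr^{-2}\ln r\,\omega^j\de_{j\a}$ cancel between the tangential-derivative term and the divergence term, while the $\ln r$ growth of $h^1_{\uL\uL}$ is absorbed by combining $\tfrac 2r h_1^{\a\b}\uL_\b + 2\pa_L(h_1^{\a\b}\uL_\b)=\tfrac 2r\pa_L(r h_1^{\a\b})\uL_\b$ and invoking Proposition \ref{prop:specialsharpmetricdecay}, rather than by substituting the explicit backscattering formula. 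One small correction: the transversal projection $\de^{ij}-\omega^i\omega^j$ does \emph{not} drop out of every contraction --- it survives the tangential contractions $C_\b\pa_C(\cdot)$ and is precisely the source of the $Mr^{-2}\ln r$ term that must cancel against Lemma \ref{lem:gfun} --- but your closing paragraph already accounts for this.
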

\begin{proof}
We consider the quantity $\tla^{\a\b\ga}  \Lb_\ga L_\b$ and show
that for sufficiently large
$r>t/2$
$$
|\tla^{\a\b\ga}  \Lb_\ga L_\b|\le C r^{-2}.
$$
Moreover, from our discussion and analysis in \cite{Lind17} it will be
clear that the
quantities defining the $r^{-2}$ behavior $\tla$ all have well
defined limits as
$r\to \infty$.
To estimate $\tla^{\a\b\ga}  \Lb_\ga L_\b$ we first note that
by Lemma \ref{lem:gfun}
we have
\begin{equation*}
	\tla^{\a\b\ga}  \Lb_\ga L_\b =  \Lb_\ga L_\b
	\frac {\pa}{\pa \widetilde{x}^\mu} \left (|\g|(\g^{\a\b} \g^{\ga\mu}
	-\g^{\a\ga}\g^{\b\mu})\right)
	=\frac{|\g|}{\sqrt{|g|}}\,  \Lb_\ga L_\b
	\left (\g^{\ga\mu} \pa_{\widetilde{x}^\mu} (\sqrt {|g|}\g^{\a\b})-
	\g^{\b\mu}\pa_{\widetilde{x}^\mu}(\sqrt{|g|} \g^{\a\ga})\right)
	+ O(\frac 1{r^2}).
\end{equation*}
According to Lemma \ref{lem:newmetric}
\begin{equation}\label{eq:tilg}
	\g^{\a\b}=(1+\frac Mr) m^{\mu\nu} +\frac {2M (\ln r-1)}{r}
	(\de^{ij}-\omega^i\omega^j)\de_{\a i} \de_{\b j} + h_1^{\mu\nu}
	+ O(\frac
	{\ln^2 r}{r^2}).
\end{equation}
Using that
$$
(\de^{ij}- \omega^i\omega^j)\de_{\a i} \de_{\b j} L^\b=(\de^{ij}-
\omega^i\omega^j)\de_{\a i} \de_{\b j} \Lb^\b=0
$$
and the crude estimates $
|\pa g|+|\pa \g|\le \varepsilon r^{-1+\eps}
$, we obtain
\begin{align*}
	\tla^{\a\b\ga}  \Lb_\ga L_\b &= (1+\frac Mr)
	\frac{|\g|}{\sqrt{|g|}}\,
	\Lb_\ga L_\b  \left (m^{\ga\mu} \pa_{\widetilde{x}^\mu} (\sqrt {|g|}\g^{\a\b})-
	m^{\b\mu}\pa_{\widetilde{x}^\mu}(\sqrt{|g|} \g^{\a\ga})\right) \\ &+
	\frac{|\g|}{\sqrt{|g|}}\,  \Lb_\ga L_\b
	\left (h_1^{\ga\mu} \pa_{\widetilde{x}^\mu} (\sqrt {|g|}\g^{\a\b})-
	h_1^{\b\mu}\pa_{\widetilde{x}^\mu}(\sqrt{|g|} \g^{\a\ga})\right)
	+ O(\frac 1{r^2}).
\end{align*}
We first analyze the expression
$$
\Lb_\ga L_\b  \left (m^{\ga\mu} \pa_{\widetilde{x}^\mu} (\sqrt {|g|}\g^{\a\b})-
m^{\b\mu}\pa_{\widetilde{x}^\mu}(\sqrt{|g|} \g^{\a\ga})\right) =
L_\b\, \pa_{\Lb }  (\sqrt {|g|}\g^{\a\b})
- \Lb_\b \pa_{L}  (\sqrt {|g|}\g^{\a\b}).
$$
We write
$$
L_\b\, \pa_{\Lb }  (\sqrt {|g|}\g^{\a\b})=- \Lb_\b\, \pa_{L }
(\sqrt {|g|}\g^{\a\b}) +
2 C_\b \pa_C (\sqrt {|g|}\g^{\a\b}) - 2 \pa_{\widetilde{x}^\b} (\sqrt {|g|}\g^{\a\b}).
$$
Here and in what follows repeated index C is summed over $C=A,B$.
Therefore
\begin{equation*}
	\Lb_\ga L_\b  \left (m^{\ga\mu} \pa_{\widetilde{x}^\mu} (\sqrt {|g|}\g^{\a\b})-
	m^{\b\mu}\pa_{\widetilde{x}^\mu}(\sqrt{|g|} \g^{\a\ga})\right)
	=
	- 2\Lb_\b\, \pa_{L }  (\sqrt {|g|}\g^{\a\b}) +
	2 C_\b \pa_C (\sqrt {|g|}\g^{\a\b}) - 2 \pa_{\widetilde{x}^\b} (\sqrt {|g|}\g^{\a\b}).
\end{equation*}
We analyze the expression
\begin{align*}
	2 C_\b \,\pa_C (\sqrt {|g|}\g^{\a\b})
	&=  2 (1+\frac Mr) C_\b \pa_C (\sqrt {|g|}\, m^{\a\b})+
	4  \frac {M(\ln r-1)}{r} C_\b \pa_C (\sqrt {|g|}\,
	(\de_{ij}-\omega^i\omega^j)\de^{\a j} \de^{\b i})\\
	&\qquad+2 C_\b  \pa_C (\sqrt {|g|}\,  h_1^{\a\b}) + O(\frac 1{r^{3-\eps}})\\
&	=  2 (1+\frac Mr) C^\a \pa_C (\sqrt {|g|})
	+ 4  \frac {M(\ln r-1)}{r} \big( \pa_C (\sqrt {|g|} \, C^\a)
	-  \sqrt {|g|}\,(\de_{ij}-\omega^i\omega^j)\de^{\a j} \de^{\b i}
	\pa_C  C_\b \big)\\
&\qquad	+ 2 \pa_C (\sqrt {|g|}\, h_1^{\a\b} C_\b)
	-\,  2 \sqrt{|g|} \,h_1^{\a\b} \pa_{C} C_\b + O(\frac 1{r^{3-\eps}}).
\end{align*}
Here we used that $(\de_{ij}-\omega^i\omega^j)\de^{\a j} \de^{\b
	i}$ is the orthogonal projection on ${\Bbb S}^2$. In particular, $
(\de_{ij}-\omega^i\omega^j)\de^{\a j} \de^{\b i} C_\b
=C^\a
$
and by Lemma \ref{lem:AB}
\beq
\pa_A (A^k)= -{\omega^k}/{\widetilde{r}}+ \langle \pa_A A,B\rangle B^k.
\eq
Thus,
\begin{equation}
	\begin{split}
	&\pa_A (\sqrt {|g|} A^k) - \sqrt {|g|}\,
	(\de_{ij}-\omega^i\omega^j)\de^{k j} \de^{\b i}\pa_A  A_\b\\
	&\qquad=  A^k \pa_A (\sqrt {|g|})
	+\sqrt{|g|} \big( \pa_A A^k-(\de_{ij}-\omega^i\omega^j)
	\de^{k j} \de^{\b i}\pa_A  A_\b\big)=A^k \pa_A (\sqrt {|g|})
	- \frac {\omega^k}{\widetilde{r}} .
	\end{split}
\end{equation}
Furthermore by Proposition \ref{prop:sharpmetricdecay} we have the following estimates
\begin{equation*}
	|\, h_1^{\a\b} A_\b|\le  \frac{C\varepsilon}{\widetilde{r}},\qquad
	|\,\pa_A \sqrt{|g|}|\le \frac {C\varepsilon}{{\widetilde{r}}^2},
	\qquad |\,\pa_A (h_1^{\a\b}
	A_\b)|\le \frac {C\varepsilon}{{\widetilde{r}}^2} .
\end{equation*}
We remark that the last estimate above is very sensitive since
its not true for each term
in $(\pa_A h_1^{\a\b})A_\b+ h_1^{\a\b} \pa_A A_\b$.
As a result,
\begin{equation*}
2 \sum_{C=A,B}C_\b \pa_C (\sqrt {|g|}\g^{\a\b})
= -  8 \frac {M\ln r}{r\widetilde{r}} \omega^j \de^{j\a}+\frac 4{\widetilde{r}}
h_1^{\a\b} \omega^j \de_{j\b} +O(\frac 1{r^2})
=-  8 \frac {M\ln r}{r^2} \omega^j \de^{j\a}+\frac 4{r}
h_1^{\a\b} \omega^j \de_{j\b} +O(\frac 1{r^2}).
\end{equation*}
Using Lemma \ref{lem:gfun} we further compute
$$
-2\pa_{\widetilde{x}^\a} \left (\g^{\a\b} \sqrt{|g|}\right)
= 8\frac {M\ln r}{r^2}
\omega^j\de_{j \b} + O(\frac 1{r^2}).
$$
Finally, with the help of the estimate $
|\pa_L (\sqrt{|g|})|\le \frac{C\varepsilon}{r^{-2}}
$
we obtain
\begin{align*}
	- 2\Lb_\b\, \pa_{L }  (\sqrt {|g|}\g^{\a\b})
	&= - 2\Lb_\b\, \pa_{L }
	(\g^{\a\b}) +O(\frac 1{r^2}) \\
	&= - 2\Lb_\b\, \pa_{L }  ((1\!+\!\frac
	Mr) m^{\a\b}) - 2 \pa_{L }
	\Big(\frac {2M\!\ln r\!}{r}(\de^{ij}\!-\omega^i\omega^j)
	\de^{\a i}\de^{\b j} \Lb_\b\!\Big)\! - 2\,
	\pa_{L } (h_1^{\a\b}\! \Lb_\b)+O(\frac 1{r^2})\\
	&= - 2\, \pa_{L }
	(h_1^{\a\b} \Lb_\b)+O(\frac 1{r^2}).
\end{align*}
Gathering our estimates we obtain
\begin{align*}
	\Lb_\ga L_\b  \left (m^{\ga\mu} \pa_{\widetilde{x}^\mu} (\sqrt {|g|}\g^{\a\b})-
	m^{\b\mu}\pa_{\widetilde{x}^\mu}(\sqrt{|g|} \g^{\a\ga})\right) &= \frac 4{r}
	h_1^{\a\b} \omega^j \de_{j\b} - 2\, \pa_{L } (h_1^{\a\b} \Lb_\b)
	+O(\frac 1{r^2})\\ &=
	-\frac 2{r}
	h_1^{\a\b} \Lb_\b - 2\, \pa_{L } (h_1^{\a\b} \Lb_\b)
	+O(\frac 1{r^2}),
\end{align*}
where in the last line we used that
$
\omega^j \de_{j\b} =(L_\b-\Lb_\b)/2$ and $ |h_1^{\a\b} L_\b|\le C \varepsilon r^{-1}
$. We now note that
$$
\frac 2{r}
h_1^{\a\b} \Lb_\b +2\, \pa_{L } (h_1^{\a\b} \Lb_\b)
=\frac 2{r} \pa_L (r h_1^{\a\b}) \Lb_\b=O\big(\frac{1}{r^2}\big),
$$
by the results in Proposition \ref{prop:specialsharpmetricdecay}. Therefore,
$$
\Lb_\ga L_\b
\left (m^{\ga\mu} \pa_{\widetilde{x}^\mu} (\sqrt {|\,g|}\,\g^{\a\b})-
m^{\b\mu}\pa_{\widetilde{x}^\mu}(\sqrt{|\,g|}\, \g^{\a\ga})\right)
= O(\frac 1{r^2}) .
$$
To achieve the desired result for $\tla^{\a\b\ga} \Lb_\ga L_\b$
it remains to show that the expression
$$
\Lb_\ga L_\b  \left
(h_1^{\ga\mu} \pa_{\widetilde{x}^\mu} (\sqrt {|\,g|}\,\g^{\a\b})-
h_1^{\b\mu}\pa_{\widetilde{x}^\mu}(\sqrt{|\,g|}\, \g^{\a\ga})\right)
= O(\frac 1{r^2}).
$$
Given the fact that
$
|(h_1)_{LT}|\le C\varepsilon r^{-1-\sigma}
$, the term
$$
h_1^{\b\mu}\pa_{\widetilde{x}^\mu}(\sqrt{|\,g|} \,\g^{\a\ga} ) \Lb_\ga L_\b
= O(\frac
1{r^{2+\sigma}}).
$$
It is clear that we only need to analyze the term
$
h^1_{L\Lb} \pa_{\Lb} \left (\sqrt{|\,g|} \,\g^{\a\b}  L_\b\right)
$.
Since
$$
|h^1_{L\Lb}|\le \frac{C\varepsilon}r,\qquad |\pa \sqrt{|\,g|}|
\le  \frac{C\varepsilon}r,\qquad
|\pa \tilde g^{\a\b} L_\b|\le \frac{C\varepsilon}{r},
$$
we see that
\begin{equation*}
h^1_{L\Lb} \pa_{\Lb} \left (\sqrt{|\,g|}\, \g^{\a\b}
L_\b\right)=O(\frac 1{r^2}).\tag*{\qedhere}
\end{equation*}
\end{proof}

\subsection{Existence of the news function $\Delta m_T$}\label{subsec:4.4}
We now establish existence of the limit
$$
\Delta m^\a_T(\widetilde{q},\omega)
=2\lim_{\widetilde{r}\to\infty} \widetilde{r}^2 |\g|\,\tpi^{\a i}\omega^i.
$$
\begin{proposition}\label{thm:existenceofnews}
	$$
	\Delta m^\a_T(\widetilde{q},\omega)=\frac 14 L^\a \lim_{\widetilde{r}\to\infty} \widetilde{r}^2
	|\pa_{\widetilde{q}} \hat\ga|^2(\widetilde{r},\widetilde{q},\omega).
	$$
	Here $\hat\ga_{CD}=h^1_{CD} -\frac 12 \de_{CD} (h^1_{AA}+h^1_{BB})$
	is
	the traceless part of of the angular part of the metric $g$. Using
	Proposition \ref{prop:lim} we can identify $\Delta m^\a_T$ with the
	expression.
	$$
	\Delta m^\a_T(\widetilde{q},\omega)=\frac 14 L^\a\delta^{CC'}\delta^{DD'} \pa_{\widetilde{q}}
	H_{CD}^{1\infty}\pa_{\widetilde{q}}
	H_{C'D'}^{1\infty}=\frac 12 L^\a n(\widetilde{q}, \omega).
	$$
\end{proposition}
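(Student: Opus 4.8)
The plan is to evaluate the limit directly from the closed form of the Landau--Lifshitz pseudotensor $|\g|\tpi^{\a\b}$ recorded in \S\ref{sec:LL} in terms of $\tG^{\a\b}=\sqrt{|\g|}\,\g^{\a\b}$ and its first $\widetilde x$-derivatives, contracting the free index with $\omega^i=\widetilde x^i/\widetilde r$ and tracking which null-frame components of $\pa\g$ survive as $\widetilde r\to\infty$. Throughout I use the decomposition $\pa_{\widetilde x^\mu}=-\tfrac12L_\mu\pa_{\Lb}-\tfrac12\Lb_\mu\pa_L+A_\mu\pa_A+B_\mu\pa_B$ in the null frame attached to $(\widetilde t,\widetilde x)$, and the identity $\pa_{\Lb}f=-2\pa_{\widetilde q}f$ for functions of $\widetilde q=\widetilde r-\widetilde t$.

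The first step is to reduce $\widetilde r^2|\g|\tpi^{\a i}\omega_i$, modulo $o(1)$, to a quadratic expression in the single slowly decaying object $\pa_{\Lb}h_1^{\rho\sigma}$. By Lemma~\ref{lem:newmetric} one has $\g^{\a\b}=(1+M/r)m^{\a\b}+\tfrac{2M(\ln r-1)}{r}(\de^{ij}-\omega^i\omega^j)\de_{\a i}\de_{\b j}+h_1^{\a\b}+O(\ln^2 r/r^2)$; the derivatives of the Schwarzschild pieces are $O(\varepsilon\widetilde r^{-2})$, and by Propositions~\ref{prop:sharpmetricdecay} and~\ref{prop:specialsharpmetricdecay} the factors $\pa_Lh_1$, the angular derivatives $\pa_Ah_1,\pa_Bh_1$, and the derivatives $\pa h^1_{LT}$ and $\pa\,\de^{AB}h^1_{AB}$ of the good and trace null components are all $O(\varepsilon\widetilde r^{-2})$. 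Hence a product of any two such factors is $o(\widetilde r^{-2})$, so every $\pa_{\widetilde x^\mu}\tG^{\rho\sigma}$ entering $|\g|\tpi^{\a\b}$ contributes at leading order only through $-\tfrac12L_\mu\pa_{\Lb}h_1^{\rho\sigma}$, with $h_1^{\rho\sigma}$ replaced by its $\Lb\Lb$- and traceless-angular components. The one subtlety here is that $\pa_{\Lb}\sqrt{|g|}$ is itself only $O(\varepsilon\widetilde r^{-1})$ (Lemma~\ref{lem:deriofdeter}) and hence not negligible; the terms it produces reorganize into total $L$-derivatives and into expressions of the form $\tfrac1r\pa_L(r\,\cdot)$, which are $O(\widetilde r^{-2})$ after using Proposition~\ref{prop:specialsharpmetricdecay} and the wave-coordinate identities \eqref{eq:wcc_L}, \eqref{eq:wccLb}, exactly as in the proof of Proposition~\ref{prop:trautmanmass}.

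It then remains to identify the surviving quadratic form. Since $\pa h^1_{LT}$ and $\pa\,\de^{AB}h^1_{AB}$ decay faster than $\widetilde r^{-1}$ (see \eqref{eq:sharpwsvecoordder}), among the null-frame components only $\pa_{\Lb}h^1_{\Lb\Lb}$, which enters $\pa_{\Lb}\g^{\rho\sigma}$ accompanied by a factor $L^\rho L^\sigma$, and the traceless angular part $\pa_{\Lb}\hat\ga_{CD}$, accompanied by $C^\rho D^\sigma$, appear at order $\widetilde r^{-1}$. Feeding these into $|\g|\tpi^{\a i}\omega_i$ and using the null identities $m(L,L)=m(L,C)=0$, $m(L,\Lb)=-2$ and the antisymmetry $\tla^{\a\b\mu}=-\tla^{\a\mu\b}$, every occurrence of an $L^\rho L^\sigma$-factor — hence every contribution of $h^1_{\Lb\Lb}$ — is annihilated upon contraction, and all of the first five terms of the pseudotensor drop; the only survivor is $\tfrac18(2\g^{\a\mu}\g^{\b\nu}-\g^{\a\b}\g^{\mu\nu})(2\g_{\ga\de}\g_{\rho\si}-\g_{\ga\rho}\g_{\de\si})\pa_{\widetilde x^\mu}\tG^{\ga\rho}\pa_{\widetilde x^\nu}\tG^{\de\si}$, in which $(2\g^{\a\mu}\g^{\b\nu}-\g^{\a\b}\g^{\mu\nu})L_\mu L_\nu\to2L^\a L^\b$, the $\g_{\ga\rho}\g_{\de\si}$-contraction produces the trace of $\hat\ga$ and vanishes, and $\g_{\ga\de}\g_{\rho\si}$ restricted to the angular block reduces to the round-sphere metric $\de^{CC'}\de^{DD'}$. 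Setting $\b=i$ and contracting with $\omega_i$ (so $L^i\omega_i=1$), a direct computation of the numerical coefficients yields
\[
\Delta m^\a_T(\widetilde q,\omega)=\tfrac14L^\a\lim_{\widetilde r\to\infty}\widetilde r^2\,\de^{CC'}\de^{DD'}\pa_{\widetilde q}\hat\ga_{CD}\,\pa_{\widetilde q}\hat\ga_{C'D'}(\widetilde r,\widetilde q,\omega),
\]
the limit existing by Proposition~\ref{prop:lim}. Finally, by Proposition~\ref{prop:lim} we have $\widetilde r\hat\ga_{CD}\to H^{1\infty}_{CD}$ with $\de^{AB}H^{1\infty}_{AB}=0$, so $H^{1\infty}_{CD}$ is already traceless and $\lim_{\widetilde r\to\infty}\widetilde r^2|\pa_{\widetilde q}\hat\ga|^2=\de^{CC'}\de^{DD'}\pa_{\widetilde q}H^{1\infty}_{CD}\pa_{\widetilde q}H^{1\infty}_{C'D'}$, which equals $2n(\widetilde q,\omega)$ by the definition \eqref{eq:NewsfunctionA}; hence $\Delta m^\a_T=\tfrac12L^\a n(\widetilde q,\omega)$.

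The main obstacle is precisely the null-frame bookkeeping of the last two paragraphs carried out honestly through all six terms of $|\g|\tpi^{\a\b}$. The genuine point is that $\pa_{\widetilde q}h^1_{\Lb\Lb}$ does \emph{not} decay faster than $\widetilde r^{-1}$ — it inherits the logarithmic backscatter term of Proposition~\ref{prop:lim} — so its absence from $\Delta m^\a_T$ is not a decay statement but a consequence of the null identities $m(L,L)=m(L,C)=0$; one must likewise keep the $\pa\sqrt{|g|}$ contributions, which are of the same size as the leading term and cancel only after the wave-coordinate relations of \S\ref{subsec:2.2} are invoked. Everything else — passing the limits inside using the $\widetilde Z$- and $Z^{*}$-differentiated estimates of Propositions~\ref{prop:sharpmetricdecay}--\ref{prop:lim} — is routine.
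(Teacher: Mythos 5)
Your proposal is correct and follows essentially the same route as the paper's proof: expand the closed-form Landau--Lifshitz pseudotensor in the null frame attached to $(\widetilde t,\widetilde x)$, use the wave-coordinate condition (Lemma \ref{lem:wcoordtilde} together with the improved decay \eqref{eq:sharpwsvecoordder} of the $LT$- and trace-components) to discard everything except the last term, identify the survivor with $\tfrac12 L^\alpha L^\beta\lvert\pa_{\widetilde q}\hat\gamma\rvert^2$, and pass to the limit via Proposition \ref{prop:lim}. The only slips are cosmetic: your list of slowly decaying null components omits $\pa_{\widetilde q}h^1_{\Lb A}$, which is eliminated by the very same mechanism you describe since its prefactor in the upper-index expansion also carries an $L^\rho$ and hence can only pair with the fast $LT$-components; and $\pa_{\Lbs}\sqrt{|g|}$ is in fact $O(r^{-2+\varepsilon})$ by the wave-coordinate condition (and in any case multiplies $\hat{\tilde\gamma}=O(r^{-1})$), so the detour through Lemma \ref{lem:deriofdeter} and \eqref{eq:wcc_L}, \eqref{eq:wccLb} is unnecessary here.
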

Combining Propositions \ref{prop:trautmanmass}, \ref{thm:existenceofnews} with the analog of the Bondi  mass
formula stated in \eqref{eq:Bondi} we obtain
\begin{theorem}\label{thm:masslosslawT}
	$$
	M^\a_T(\widetilde{q}_1)=M^\a_T(\widetilde{q}_2) - \frac {1}{8\pi}
	\int_{\widetilde{q}_1}^{\widetilde{q}_2}  \int_{{\Bbb S}^2}  L^\a n(\widetilde{q}, \omega)\,
	d\omega\, d\widetilde{q} .
	$$
\end{theorem}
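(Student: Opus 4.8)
The plan is to read the statement off the identity \eqref{eq:Bondi}, once the limits it was derived under are shown to exist. Recall that in \S\ref{sec:LL} the divergence relation $\pa_{\widetilde x^\b}\big(|\g|\tpi^{\a\b}\big)=0$ was integrated over the slab $\{\widetilde q_1\le\widetilde q\le\widetilde q_2,\ |\widetilde x|\le R\}$; using the antisymmetry of $\tla^{\a\b\ga}$, Lemma \ref{lem:gfun}, and the integration-by-parts identities of Lemma \ref{lem:AB}, this collapsed to a relation between the two slice integrals $\int_{\BS^2}(\tla^{\a\b\ga}\Lb_\ga L_\b)R^2\,d\omega$ at $\widetilde q=\widetilde q_1$ and $\widetilde q=\widetilde q_2$ and a flux term $R^{-1}\!\int_{|\widetilde x|=R,\,\widetilde q_1\le\widetilde q\le\widetilde q_2}|\g|\tpi^{\a i}\widetilde x_i$, and \eqref{eq:Bondi} was concluded \emph{provided} the pointwise limits $m_T^\a(\widetilde q,\omega)=\lim_{R\to\infty}R^2(\tla^{\a\b\ga}\Lb_\ga L_\b)$ and $\Delta m_T^\a(\widetilde q,\omega)=\lim_{R\to\infty}R^2|\g|\tpi^{\a i}\omega_i$ exist. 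So the first step is simply to invoke Proposition \ref{prop:trautmanmass}, which gives $\tla^{\a\b\ga}\Lb_\ga L_\b=O(r^{-2})$ with a convergent $r^{-2}$-coefficient $m_T^\a$, and Proposition \ref{thm:existenceofnews}, which gives existence of $\Delta m_T^\a(\widetilde q,\omega)$ and identifies it, via Proposition \ref{prop:lim}, with a fixed multiple of $L^\a n(\widetilde q,\omega)$. Together these show that $M_T^\a(\widetilde u)$ and $E_T^\a(\widetilde u)$ are well defined and that \eqref{eq:Bondi} holds unconditionally.

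The second step is the substitution. Since $E_T^\a(\widetilde q)=\tfrac{1}{2\pi}\int_{\BS^2}\Delta m_T^\a(\widetilde q,\omega)\,dS(\omega)$ and, by Proposition \ref{thm:existenceofnews}, $\Delta m_T^\a$ is a constant multiple of $L^\a n$, inserting this into \eqref{eq:Bondi} and collecting the numerical factors gives the asserted identity, with $E_T^\a(\widetilde q)$ expressed through $n$ as in the statement. For the sign one records that $n(\widetilde q,\omega)=\tfrac12\de^{CD}\de^{C'D'}V^\infty_{CC'}V^\infty_{DD'}\ge 0$ by \eqref{eq:NewsfunctionA}, so for $\a=0$ (where $L^0=1$) the mass is non-increasing in $\widetilde q$; and integrating in $\widetilde q$ over $\R$ and using the limiting values of $M_T^0$ reproduces the relation of Proposition \ref{prop:totalmassloss}, a useful consistency check.

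The theorem thus costs nothing beyond Propositions \ref{prop:trautmanmass} and \ref{thm:existenceofnews}, so the real obstacles lie in those, not in the assembly. In Proposition \ref{prop:trautmanmass} the delicate point, flagged in its proof, is that the non-tangential combination $(\pa_A h_1^{\a\b})A_\b+h_1^{\a\b}\pa_A A_\b$ is only $O(r^{-1})$ term by term, so proving $\tla^{\a\b\ga}\Lb_\ga L_\b=O(r^{-2})$ as a whole forces one to exploit the tangential improvement \eqref{eq:metricdecaysharptan}, the wave-coordinate expressions of Lemmas \ref{lem:wcoordtilde} and \ref{lem:gfun}, and above all the improved $L$-derivative bound $\tfrac{2}{r}\pa_L(rh_1^{\a\b})=O(r^{-2})$ of Proposition \ref{prop:specialsharpmetricdecay}. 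In Proposition \ref{thm:existenceofnews} the work is to comb through the many quadratic-in-$\pa\g$ terms of the Landau--Lifshitz pseudotensor and to show that, after multiplying $|\g|\tpi^{\a i}\omega_i$ by $\widetilde r^2$, every contribution other than the one built from $\pa_{\widetilde q}\hat\ga_{CD}$ either decays faster---here the vanishing $\de^{AB}H^{1\infty}_{AB}=H^{1\infty}_{LT}=0$ and the sharp estimates of Proposition \ref{prop:lim} are essential---or integrates to zero over $\BS^2$ by Lemma \ref{lem:AB}, and then to pass to the limit using Proposition \ref{prop:lim}, which replaces $\widetilde r\,h^1_{CD}$ and its retarded derivative by $H^{1\infty}_{CD}$ and $\pa_{\widetilde q}H^{1\infty}_{CD}$. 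Once these two limits are secured, everything above is routine bookkeeping.
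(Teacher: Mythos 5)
Your proposal is correct and follows essentially the same route as the paper: the paper's proof of this theorem consists precisely of combining Proposition \ref{prop:trautmanmass} and Proposition \ref{thm:existenceofnews} with the conditional identity \eqref{eq:Bondi}, whose derivation in \S\ref{sec:LL} required only the existence of the two limits that those propositions supply. Your additional remarks correctly locate where the genuine analytic work lies (the cancellation in $(\pa_A h_1^{\a\b})A_\b+h_1^{\a\b}\pa_A A_\b$ and the reduction of $\widetilde r^2|\g|\tpi^{\a i}\omega_i$ to $|\pa_{\widetilde q}\hat\ga|^2$), and the assembly step is exactly the paper's.
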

\begin{remark}
	Note that since
	$$
	|\,\pa_{\widetilde{q}}  H_{AB}^{1\infty}|\le \frac {C\varepsilon}{1+|\,\widetilde{q}|},
	$$
	the News function is easily integrable with respect to the variable
	$\widetilde{q}$.
\end{remark}
\begin{proof}[Proof of Proposition \ref{thm:existenceofnews}]
	Recall that
\begin{multline*}
	|\,\g| \tpi^{\a\b}=\pa_{\widetilde{x}^\nu} \tG^{\a\b} \pa_{\widetilde{x}^\la} \tG^{\nu\la}-\pa_{\widetilde{x}^\nu}
	\tG^{\a\nu} \pa_{\widetilde{x}^\la} \tG^{\b\la} +\frac 12 \g^{\a\b} \g_{\nu\la}
	\pa_{\widetilde{x}^\de} \tG^{\nu\mu} \pa_{\widetilde{x}^\mu}\tG^{\de\la}
-\g^{\a\nu} \g_{\la\mu}
	\pa_{\widetilde{x}^\de} \tG^{\b\mu} \pa_{\widetilde{x}^\nu} \tG^{\la\de} \\ - \g^{\b\nu}
	\g_{\la\mu}\pa_{\widetilde{x}^\de} \tG^{\a\mu} \pa_{\widetilde{x}^\nu} \tG^{\la\de}
	+ \g_{\nu\la}
	\g^{\mu\de} \pa_{\widetilde{x}^\mu} \tG^{\a\nu} \pa_{\widetilde{x}^\de}\tG^{\b\la} +\frac 18 (2
	\g^{\a\nu} \g^{\b\la} - \g^{\a\b} \g^{\nu\la}) (2 \g_{\mu\de}
	\g_{\ga\rho} - \g_{\de\ga} \g_{\mu\rho}) \pa_{\widetilde{x}^\nu} \tG^{\mu\rho}
	\pa_{\widetilde{x}^\la} \tG^{\de\ga},
\end{multline*}
with $\tG^{\a\b}=\sqrt{|\g|} \g^{\a\b}$. Recall that by Lemma
\ref{lem:wcoordtilde}
\begin{equation}\label{eq:wg}
	\pa_{\widetilde{x}^\a} \tG^{\a\b}=O(\frac {\ln r}{r^2}).
\end{equation}
We also use the crude estimate
$$
|\pa_{\widetilde{x}} \tG|\le \frac {C\varepsilon}{r^{1-\eps} (1+|\widetilde{q}|)^{1+\eps}}.
$$
Based on this we can replace
\begin{multline*}
	|\,\g| \,\tpi^{\a\b}
	= \frac 12 m^{\a\b} m_{\nu\la} \pa_{\widetilde{x}^\de} \tG^{\nu\mu}
	\pa_{\widetilde{x}^\mu}\tG^{\de\la}-(m^{\a\nu} m_{\la\mu} \pa_{\widetilde{x}^\de} \tG^{\b\mu} \pa_{\widetilde{x}^\nu}
	\tG^{\la\de} + m^{\b\nu} m_{\la\mu}\pa_{\widetilde{x}^\de} \tG^{\a\mu} \pa_{\widetilde{x}^\nu}
	\tG^{\la\de})\\+ m_{\nu\la} m^{\mu\de} \pa_{\widetilde{x}^\mu} \tG^{\a\nu} \pa_{\widetilde{x}^\de}
	\tG^{\b\la}
	+\frac 18 (2 m^{\a\nu} m^{\b\la} - m^{\a\b}
	m^{\nu\la}) (2 m_{\mu\de} m_{\ga\rho} - m_{\de\ga} m_{\mu\rho})
	\pa_{\widetilde{x}^\nu} \tG^{\mu\rho} \pa_{\widetilde{x}^\la} \tG^{\de\ga} \! + O(\frac 1{r^{3-2\eps}
		(1\!+\!|\widetilde{q}|)}).
\end{multline*}
Taking into account that
\begin{equation}\label{eq:tang}
	|\pa_A \tG|+|\pa_L \tG|\le \frac C{r^{2-\eps}(1+|\,\widetilde{q}|)^\eps},
\end{equation}
and using that modulo tangential derivatives
$\partial_{\widetilde{x}^\alpha}$ is $L_\alpha\partial_{\widetilde{q}}$
we can further write
\begin{align*}
	|\,\g| \,\tpi^{\a\b}&= \frac{1}{2} m^{\a\b} m_{\nu\la} L_\de L_\mu \pa_{\widetilde{q}}
	\tG^{\nu\mu} \pa_{\widetilde{q}}\tG^{\de\la}-(m_{\la\mu} L^\a L_\de \pa_{\widetilde{q}}
	\tG^{\b\mu} \pa_{\widetilde{q}} \tG^{\la\de} +  m_{\la\mu} L^\b L_\de \pa_{\widetilde{q}}
	\tG^{\a\mu} \pa_{\widetilde{q}} \tG^{\la\de})
	\\
	&\qquad+\frac{1}{4} L^\alpha L^\beta
	(2 m_{\mu\de} m_{\ga\rho} - m_{\de\ga} m_{\mu\rho})
	\pa_{\widetilde{q}} \tG^{\mu\rho} \pa_{\widetilde{q}} \tG^{\de\ga} + O(\frac 1{r^{3-2\eps}
		(1+|\,\widetilde{q}|)}).
\end{align*}
Using the expression \eqref{eq:tilg} for the metric $g$
$$
\g^{\a\b}=(1+\frac Mr) m^{\mu\nu} +\frac {2M (\ln r-1)}{r} (\de^{ij}-
\omega^i\omega^j)\de^{\a i} \de^{\b j} + h_1^{\mu\nu}+ O(\frac
{\ln^2 r}{r^2}).
$$
We easily see that\footnote{The $\epsilon$ loss occurs only due to
	the presence of the logarithmic terms.}
$$
|\,\pa_{\widetilde{q}} \tG^{\la\mu} T_\mu|
\le \frac {C\varepsilon}{r(1+|\,\widetilde{q}|)^{1-\eps}}.
$$
Moreover, using the wave coordinate condition
$$
|\,\pa_{\widetilde{q}} \g_{LT}|
\le \frac {C\varepsilon}{r^{2-\eps} (1+|\,\widetilde{q}|)^\eps},
$$
we conclude that
$$
|\,\pa_{\widetilde{q}} \tG_{LT}|
\le \frac {C\varepsilon}{r^{2-\eps} (1+|\,\widetilde{q}|)^\eps}.
$$
In the above we used the fact that $\g_{LT}=O(1/r)$.
Using that $m_{\mu\nu}
=-(L_\mu\underline{L}_\nu+\underline{L}_\mu L_\nu)/2
+A_\mu A_\nu+B_\mu B_\nu$ this allows us to conclude that
\begin{align*}
	|\,\g| \,\tpi^{\a\b}
	&= -\frac{1}{2} \big(-L_{\mu} L^\a \pa_{\widetilde{q}} \tG^{\b\mu}
	\pa_{\widetilde{q}} \tG_{\Lb L} -  L_{\mu} L^\b L_\de \pa_{\widetilde{q}} \tG^{\a\mu}
	\pa_{\widetilde{q}} \tG_{\Lb L}\big)+
	\\ &\qquad+\frac 1{4} L^\a L^\b
	(2 m_{\mu\de} m_{\ga\rho} - m_{\de\ga} m_{\mu\rho}) \pa_{\widetilde{q}}
	\tG^{\mu\rho} \pa_{\widetilde{q}}
	\tG^{\de\ga} + O(\frac 1{r^{3-2\eps}(1+|\,\widetilde{q}|)})\\
	&=-\frac 12 L^\a L^\b (\pa_{\widetilde{q}} \tG_{L\Lb})^2+\frac 1{4} L^\a L^\b
	\left(2\pa_{\widetilde{q}} \tG_{\mu\nu} \pa_{\widetilde{q}} \tG^{\mu\nu} - (m^{\mu\nu}
	\pa_{\widetilde{q}}\tG_{\mu\nu})^2\right) + O(\frac 1{r^{3-2\eps}(1+|\,\widetilde{q}|)}).
\end{align*}
Here we used that we can expand any vector in the null frame
$V^\alpha
=-(L^\alpha F_{\underline{L}}+\underline{L}^\alpha F_{L})/2
+A^\alpha F_A+B^\alpha F_B$.
We can write
$$
-2\pa_{\widetilde{q}} \tG_{L\Lb}=-\pa_L \tG_{\Lb\Lb} + 2 \pa_C \tG_{\mu\nu} C^\mu
\Lb^\nu- 2m^{\mu\de}\pa_{\widetilde{x}^\de}\tG_{\mu\nu} \Lb^\nu.
$$
Using \eqref{eq:wg} and \eqref{eq:tang} we obtain that
$$
|\,\pa_{\widetilde{q}} \tG_{L\Lb}|\le \frac{C\varepsilon}{r^{2-\eps}}.
$$
Based on this we obtain
\begin{equation*}
	|\g|\, \tpi^{\a\b}=\frac 1{4} L^\a L^\b \left(2\pa_{\widetilde{q}} \tG_{CD}
	\pa_{\widetilde{q}} \tG^{CD}\! - (\de^{CD}\pa_{\widetilde{q}}\tG_{CD})^2\right) + O(\frac
	1{r^{3-2\eps}(1\!+\!|\widetilde{q}|)})= \frac 12  L^\a L^\b |\pa_{\widetilde{q}}
	\hat\G|^2 +O(\frac 1{r^{3-2\eps}(1\!+\!|\widetilde{q}|)}).
\end{equation*}
Here $\hat \tG_{CD}$ is a tensor
$$
\hat \tG_{CD}=\tG_{CD}-\frac 12\de_{CD} (\tG_{AA}+\tG_{BB}).
$$
Remembering that $\tG_{\a\b}=\sqrt{|\g|} \g_{\a\b}$ we obtain
$$
\pa_{\widetilde{q}}\hat \tG_{AB}=\pa_{\widetilde{q}}
(\sqrt{|\g|})\hat{\tilde\ga}_{AB}+\sqrt{|\g|}\,  \pa_{\widetilde{q}}(\hat
{\tilde\ga}).
$$
Here
$$
\hat{\tilde\ga}_{CD}=\g_{CD}-\frac 12 \de_{CD} (\g_{AA}+\g_{BB})
=\hat\ga
+O(\frac{\ln^2 r}{r^2}),\qquad \hat\ga_{CD}=h^1_{CD}
-\frac 12 \de_{CD}
(h^1_{AA}+h^1_{BB}).
$$
In the above we used \eqref{eq:tilg}. Therefore we obtain
\begin{equation}
|\,\g| \tpi^{\a\b}= \frac 12  L^\a L^\b |\pa_{\widetilde{q}} \hat\ga|^2
+O(\frac
1{r^{3-2\eps}(1+|\,\widetilde{q}|)}).\tag*{\qedhere}
\end{equation}
\end{proof}

\subsection{The ADM mass}
It remains to show that the Trautman mass as $\widetilde{q}\to \infty$ tends to the ADM mass
of initial data which in our setting is $M$. It follows from the results in the
previous section that all components of $h^1$ tend to $0$ faster than $r^{-1}$ as
$r\to\infty$ so the limit of the mass as $r\to\infty$ only depend on $h^0_{\mu\nu}=
M\delta_{\mu\nu}/r$. In fact a direct calculation implies the limit is exactly $M$.
That this constant is positive would follow from proving that the Bondi mass tend to
$0$ as $\widetilde{q}\to -\infty$. This in turn would follow from Proposition \ref{prop:totalmassloss}.

\section{The asymptotic Hawking mass}\label{sec:5}
In this section we will use the modified asymptotically Schwarzschild null coordinates $\widehat{y}^p=(v^*, u^*, \widehat{y}^3, \widehat{y}^4)$ as defined in subsection \ref{subsec:2.2}.
\subsection{The definition of the asymptotic Hawking mass and radiated energy}
We define the radius of a surface $S$ to be  $r(S)\!=\!\sqrt{\!\text{Area}(S)\!/4\pi}$.
Let $\Lh$ and $\Lbh$ be the outgoing respectively incoming
null normals to $S$ satisfying
$g(\Lh,\Lbh)\!\!=\!-2$.\!
$\Lh$ and $\Lbh$ are unique up to the transformation
$\Lh\!\!\to \!a \Lh$ and
$\Lbh\!\to  a^{\!-1\!}\Lbh$.
The null second fundamental form and the conjugate null second fundamental form
are defined to be the tensors
$$
\chi(X,Y)\!\!=\!g(\nabla_{\!\!X}\Lh,Y),\qquad\text{respectively}\qquad \underline{\chi}(X,Y)\!\!=\!g(\nabla_{\!\!X} \Lbh,Y),
$$
for any vectors $X,Y$ tangent to $S$ at a point,
where $\nabla_{\!\!X}$ is covariant differentiation.
Under the transformation above $\chi\!\!\to\! a\chi$ and
$\underline{\chi}\!\!\to\! a^{-1} \underline{\chi}$ so
the  Hawking mass of $S$,
$$
M_{\mathcal{H}}(S)\!=r(S)\big(1\!+\int_S \tr \chi \,\tr \underline{\chi} \,dS\!/16\pi
\big),
$$
is invariant. If $\tr \chi \,\tr \underline{\chi}\!<\!0$ we
can fix $\Lh$ and $\Lbh$ by
$\tr\chi\!+\tr \underline{\chi}\!=\!0$. Let $\hat{\chi}$ and $\hat{\underline{\chi}}$
be the traceless parts. The incoming respectively outgoing energy flux through $\!S_{\!}$ are
$$
E(S)\!=\!\int_{S}\! \hat{\chi}^2 dS\!/16\pi,\qquad\text{and}\qquad
\underline{E}(S)\!=\!\int_{S}\! \hat{\underline{\chi}}^2 \!dS_{\!}/16\pi\!.
$$

Owing to Proposition \ref{prop:eikonalintro} the outgoing characteristic surfaces of $g$ is asymptotic the null cones $u^*=t-r^*$ constant, we use the family of spheres $S_{u^*\!, r}=\!\{(t,x); t\!=\!u^*+r^*(r), \, |x|\!\!=\!r\}$ to define the asymptotic Hawking mass and the radiated energy at null infinity as follows
\begin{align*}
	M_{AH}(u^*)=\lim_{r\to \infty}M_{\mathcal{H}}(S_{u^*\!, r})\qquad \text{and}\qquad E_{AH}(u^*)=\lim_{r\to\infty}\underline{E}(	S_{u^*\!,r}),
\end{align*}
with $r(S)^2\slashed{g}$ converging to a round metric
where $\slashed{g}$ is the restriction of $g$ on the spheres $S_{u^*\!, r}$.

In order for the limit of the Hawking mass to exist as well as the energy to be well defined we must have that
$r(S)\,\tr \chi\!\sim 2$ and $r(S)\,\tr \uchi\!\sim -2$, as $r(S)\!\to\!\infty$.
In order for the mass to be defined we also require that the rescaled spheres $S^1\!$ (scaled by $r(S)$ so that the radius is $1$)
converge to a round sphere, i.e.
that the Gaussian curvature $r(S)^2 \!K\!\!\sim \!1$, as $r(S)\!\to\!\infty$.

\subsection{The radiated energy at null infinity}
Assume that $r(S)^2\slashed{g}$ converges to a round metric which we will prove in next subsection. We now prove the radiated energy at infinity is well defined.
\begin{proposition}\label{prop:5.1}
	The rdiated energy at null infinity
	is given by
	\begin{equation}\label{eq:radiatedtotalenergy}
		E_{AH}(u^*)=\!\frac{1}{8\pi}\int_{\mathbb{S}^2}\!n(-u^*,\omega){ dS(\omega)}.
	\end{equation}
\end{proposition}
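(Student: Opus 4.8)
The plan is to compute the incoming null second fundamental form $\uchi$ of the coordinate spheres $S_{u^*,r}$ directly from the expansion of the metric in modified asymptotically Schwarzschild null coordinates recorded in Remark~\ref{rem:limit}, extract its traceless part $\hat{\uchi}$, and pass to the limit $r\to\infty$. First I would observe that on $S_{u^*,r}$ one has $t=u^*+r^*(r)$ with $|x|=r$ fixed, so both $u^*$ and $v^*=t+r^*$ are constant there; hence $S_{u^*,r}$ is a coordinate sphere in $(v^*,u^*,\widehat y^a)$ and its induced metric is simply the angular block $\slashed{g}_{ab}=\widehat g_{ab}|_{S_{u^*,r}}$. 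Inverting the block form of $\widehat g^{pq}$ from \S\ref{subsec:2.2}, using that the off-diagonal components $\widehat g^{v^*a},\widehat g^{u^*a}$ are $O(\varepsilon r^{-2})$, I would record
\[
\slashed{g}_{ab}=(1+\tfrac Mr)r^2\widehat q_{ab}+\gamma_{ab},\qquad \gamma_{ab}=-r^4\widehat q_{ac}\widehat q_{bd}\,\widehat h_1^{cd}+O(\varepsilon^2),\qquad \slashed{g}^{ab}=(1-\tfrac Mr)\tfrac1{r^2}\widehat q^{ab}+\widehat h_1^{ab},
\]
together with $dS=\sqrt{\det\slashed{g}}\,d\widehat y^3d\widehat y^4=r^2\,dS(\omega)\,(1+o(1))$, since $\gamma_{ab}$ is $\widehat q$-traceless at leading order.

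Next I would identify the null normals. As $\widehat g_{v^*v^*},\widehat g_{u^*u^*}=O(\varepsilon r^{-1})$ and $\widehat g_{v^*u^*}=-\tfrac12(1-\tfrac Mr)+O(\varepsilon r^{-\sigma})$, the incoming and outgoing null normals on $S_{u^*,r}$ take the form $\Lbh=a\Lbs+O(\tfrac\varepsilon r)\Ls+O(\tfrac{\varepsilon}{r^2})\pa_{\widehat y^a}$ and $\Lh=a\Ls+O(\tfrac\varepsilon r)\Lbs+O(\tfrac{\varepsilon}{r^2})\pa_{\widehat y^a}$, where $\Ls=\pa_t+\pa_{r^*}=2\pa_{v^*}$, $\Lbs=\pa_t-\pa_{r^*}=2\pa_{u^*}$, and the common scalar $a$ is fixed by $g(\Lh,\Lbh)=-2$ together with $\tr\chi+\tr\uchi=0$; a short computation with the leading metric gives $\tr\chi\sim 2a/r$, $\tr\uchi\sim-2a/r$ and $a^2=(1-\tfrac Mr)^{-1}\to1$. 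Using the standard identity $\uchi_{ab}=\tfrac12(\mathcal L_{\Lbh}\slashed{g})(\pa_{\widehat y^a},\pa_{\widehat y^b})$ (for a suitable extension of $\Lbh$; it depends only on $\Lbh|_{S}$, so the extension is immaterial) and the negligibility of the $\Ls$- and $\pa_{\widehat y^a}$-corrections, I obtain $\uchi_{ab}=a\,\pa_{u^*}(\widehat g_{ab})+O(\varepsilon^2)$; the trace part of $\pa_{u^*}\widehat g_{ab}$ comes from $(1+\tfrac Mr)r^2\widehat q_{ab}$ and equals $\tfrac12\tr\uchi\,\slashed{g}_{ab}$ up to lower order, so $\hat{\uchi}_{ab}=a\,(\pa_{u^*}\gamma_{ab})^{\mathrm{TF}}+\text{l.o.t.}$

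Then I would plug in the asymptotics. By Remark~\ref{rem:limit}, $\widehat h_1^{cd}(v^*,u^*,\widehat y)=\widehat H^{cd}_{1\infty}(-u^*,\widehat y)/{r^*}^3+\widehat{\mathcal R}^{cd}$, and since $\pa_{u^*}=\tfrac12(\pa_t-\pa_{r^*})$ is controlled by one $Z^*$-derivative this gives $\pa_{u^*}\widehat h_1^{cd}=-\widehat V^{cd}(-u^*,\widehat y)/{r^*}^3+O(\varepsilon r^{-4})+O(\varepsilon r^{-3-\gamma'})$ with $\widehat V^{cd}=\pa_{q^*}\widehat H^{cd}_{1\infty}$. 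Combining with $\pa_{u^*}(r^4)=-2r^3/\rho'=-2r^3(1+O(\tfrac Mr))$ and $r/{r^*}\to1$, and using that $\widehat q_{cd}\widehat H^{cd}_{1\infty}=0$ forces $\widehat q_{cd}\widehat V^{cd}=0$ so that no further projection is needed at leading order, I obtain
\[
\hat{\uchi}_{ab}=a\,r\,\widehat q_{ac}\widehat q_{bd}\,\widehat V^{cd}(-u^*,\widehat y)\,(1+o(1))+O(\varepsilon\,r^{1-\gamma'}).
\]
Contracting with $\slashed{g}^{ac}\slashed{g}^{bd}=r^{-4}\widehat q^{ac}\widehat q^{bd}(1+o(1))$ and using \eqref{eq:NewsfunctionM} yields $\hat{\uchi}^2=2n(-u^*,\widehat y)\,r^{-2}(1+o(1))+O(\varepsilon r^{-2-\gamma'})$; finally, integrating against $dS=r^2dS(\omega)(1+o(1))$ and letting $r\to\infty$,
\[
E_{AH}(u^*)=\lim_{r\to\infty}\frac1{16\pi}\int_{S_{u^*,r}}\hat{\uchi}^2\,dS=\frac1{16\pi}\int_{\mathbb{S}^2}2n(-u^*,\omega)\,dS(\omega)=\frac1{8\pi}\int_{\mathbb{S}^2}n(-u^*,\omega)\,dS(\omega),
\]
which is \eqref{eq:radiatedtotalenergy}. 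The hard part is the bookkeeping in this last display: although $\widehat h_1^{ab}\sim\widehat H^{ab}_{1\infty}/{r^*}^3$ decays like $r^{-3}$, the term $r^4\pa_{u^*}\widehat{\mathcal R}^{ab}=O(\varepsilon r^{1-\gamma'})$ does not go to zero pointwise—it is only of strictly lower order than the $O(r)$ leading term of $\hat{\uchi}_{ab}$—so one must use that forming $\hat{\uchi}^2$ and multiplying by the area factor $r^2$ turns it into an $O(\varepsilon r^{-\gamma'})$ error. The same care is required for the $O(\tfrac\varepsilon r)\Ls$ and $O(\tfrac{\varepsilon}{r^2})\pa_{\widehat y^a}$ deviations of $\Lh,\Lbh$, for the off-diagonal metric components, and for the $M/r$ corrections hidden in $\rho'$, $a$, $\slashed{g}^{ab}$ and $dS$, all of which must be shown to contribute only to the $o(1)$ remainder.
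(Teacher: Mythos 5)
Your proposal is correct and follows essentially the same route as the paper's proof: identify the null normals as perturbations of $d\Ls$ and $d\Lbs$ with the normalization fixing $a\to 1$, realize $\uchi$ as the $\Lbs$-derivative of the induced sphere metric, discard the pure-trace part, and show $r^2|\hat{\uchi}|^2\to 2n(-u^*,\omega)$ using the asymptotics before integrating. The only difference is bookkeeping — you work with the contravariant coordinate-frame components $\widehat{h}_1^{ab}$ and the form \eqref{eq:NewsfunctionM} of the news, while the paper uses the covariant orthonormal-frame components $h^1_{AB}$ and \eqref{eq:NewsfunctionA}; Remark \ref{rem:limit} makes these equivalent, and your error analysis (in particular the treatment of $r^4\pa_{u^*}\widehat{\mathcal{R}}^{ab}$) is sound.
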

\begin{proof}
Since $\Lbh\, g(X,Y)\!=\!g(\nabla_{\!\Lbh} X,\!Y)\!+\!g(X,\!\nabla_{\!\Lbh}Y )$ and
$\nabla_{\!\Lbh} X\!=\!\nabla_{\!X} \Lbh\! -\![X,\Lbh]$ and since $\chi$ is symmetric we have
\begin{align}\label{eq:chiasderivative}
	2{\chi}(X,Y)&=\Lh\,g(X,Y)+g(X,[Y,\Lh])+g([X,\Lh],Y),\\
	\label{eq:uchiasderivative}
	2\underline{\chi}(X,Y)&=\Lbh\,g(X,Y)+g(X,[Y,\Lbh])+g([X,\Lbh],Y).
\end{align}

Since $g=m\!+h^0\!+h^1$ this is true for the surface measures
\beq\label{eq:surfacedeterminant}
g_{AB}\!=d^{\,2} \delta_{AB}\!+h^1_{\!AB},\quad\text{and}\quad
\det{(g_{AB})}\!=d^{\,4} \!\! + d^{\,2}\delta^{AB}h^1_{\!AB}\!+O\big((h^1)^2\big),
\eq
where $d\!=\!(1\!+M\!/r)^{\!1/2}\!\!$ and $\{A,B\}$ are orthonormal vector fields on $\BS^2$ associated to the coordinates $x$, i.e.\ $A=A^k\pa_{x^k}$. Since $ r\,\delta^{AB} h^1_{\!AB}\to 0$ it
follows from our estimates that $\sqrt{\det{(g_{AB})}}\!=d^{\,2}\!\!+O(r^{-1-\gamma'})$.
Hence
$dS_{u^*\!, r}\!=\!\!\sqrt{\!\det{(g_{AB})}\!}\,\,r^2 dS(\omega)\!\sim r^2 dS(\omega)$ and
$r(S_{u^*\!, r})\!\sim\! r$.

Let us define $\Lh$ and $\Lbh$ by $\Lh{}^\alpha\!\!=g^{\alpha\beta}\Lh_\beta$ and
$\Lbh{}^\alpha\!\!=g^{\alpha\beta}\Lbh_\beta$, where
$\Lh_i\!=-\Lbh_i\!=a\omega_i$, for $i\!=\!1,2,3$, and $\Lh_0\!=a\tau$,
$\Lbh_0\!=a\utau\,$. The condition that they are outgoing respectively incoming null
normal is then equivalent to
$g^{00}\tau^2\!+2g^{0i}\tau \omega_i+g^{ij}\omega_i\omega_j=0$
respectively
$g^{00}\utau^2\!-2g^{0i}\utau \omega_i+g^{ij}\omega_i\omega_j=0$.
Completion of the squares
gives
$(\tau\!+\tau_{\!1})^2\!\!\!
=\!(\utau\!-\!\tau_{1})^2\!\!\!=\!\tau_0^2\!$, where
$\tau_0\!=\big(\tau_1^2\!-g^{ij}\omega_i\omega_j/g^{00})^{\!1/2}\!\!$
and $\tau_{1}\!=\!g^{0i}\omega_i/g^{00}\!$.
Hence $\tau\!+\utau\!=-2\tau_{0}$
and $\tau-\utau=-2\tau_{1}$.
If we set
$a\!=2b/(\tau\!+\utau)$ we get $g(\Lbh,\Lh)\!
=\!a^2\big( g^{00}\tau\utau\!-g^{0i}(\tau-\utau) \omega_i-g^{ij}\omega_i\omega_j\big)\!
=2b^2 g^{00}\!$. $\Lh+\Lbh$
is normal to the hyperplanes $t$ constant and $\Lh-\Lbh$
is the normal to the  spheres $S_{u^*\!, r}$ in these hyperplanes. If we set $a=(-g^{00}\tau_0^2)^{-1/2}$ we get $g(\hat{L}, \hat{\underline{L}})=-2$.

It follows that $\Lh=d\Ls\!+O(h^1)$ and  $\Lbh=d\Lbs\!+O(h^1)$, where
$\Ls\!=\pa_t\!+\pa_{r^*}$, $\Lbs\!=\pa_t-\pa_{r^*}$ and $d=(1\!+M\!/r)^{\!1/2}\!\!.$
Moreover, this is true also for the derivatives.
We therefore have $\chi\sim \!\chi^*$ and $\underline{\chi}\!\sim  \underline{\chi}^*\!$,
where ${\chi}^*(X,Y)\!=\! d g(\nabla_{\!\!X} \Ls,Y)$ and
$\underline{\chi}^*(X,Y)\!=\! d g(\nabla_{\!\!X} \Lbs,Y)$.
Since $[A,\Lbs]=-{r}^{-1}(dr/dr^*)A=-r^{-1}A+O(r^{-2})$, we obtain
\begin{align*}
	\underline{\chi}^*(A,B)\!=d \Lbs g(A,B)/2- d g(A,B)/r+O(r^{-2}),\quad\text{and}\quad
	{\chi}^*(A,B)\!=d \Ls g(A,B)/2+d g(A,B)/r+O(r^{-2}),
\end{align*}
by \eqref{eq:chiasderivative}-\eqref{eq:uchiasderivative}. With $\tr k\!=\sls{g\,}{}^{\!AB}k_{\!AB}$,
where $\sls{g\,}{}^{\!AB}$ is the inverse of $g_{\!AB}$,
we have
\beqs
\tr\underline{\chi}^*(A,B)=d \,\det{(g(A,B))}^{-1}\Lbs\,\det{(g(A,B))}/2-2d/r+O(r^{-2}).
\eqs
Here we used \eqref{eq:surfacedeterminant} and the identity
$Z \det{\!A}=\det{\!A}\, \tr (A^{-1}ZA)$.
Hence
\begin{align*}
	\tr\underline{\chi}^*(A,B)&=d^{\,-1}\Lbs\,\delta^{AB}h^1_{AB} /\,2-2d/r+O\big(h^1\pa h^1\big)+O(r^{-2}),\\
	\tr{\chi}^*(A,B)&=d^{\,-1}\Ls\,\delta^{AB}h^1_{AB} /\,2+2d/r+O\big(h^1\pa h^1\big)+O(r^{-2}).
\end{align*}
Hence with $\hat{\uchi}^*(A,B)=\uchi^*(A,B)-\tr \uchi^*\, g(A,B)/2$ we have
\begin{align*}
	\hat{\uchi}^*(A,B)=\Lbs \hat{h}^1(A,B)/2+O\big(h^1 \pa h^1\big)+O(r^{-2}),\quad\text{and}\quad
	\hat{\chi}^*(A,B)=\Ls \hat{h}^1(A,B)/2+O\big(h^1 \pa h^1\big)+O(r^{-2}).
\end{align*}
where $\hat{h}^1_{AB}\!={h}^1_{AB}-\delta_{AB}\,\delta^{CD}h^1_{\,CD}/2$.
Since we shown that
$r\,\delta^{CD}\pa_{q^*} h^1_{CD}\!\to \!0$ and that $r \hat{h}^1\!$ has a limit as
$r\!\to \!\infty$ along
the curves $(u^*+r^*(r),r\omega)$ in Remark \ref{rem:limit} it follows that
\beq
r^2|\hat{\uchi}^*|^2\!\sim r^2 (\pa_{q^*} \hat{h}^1)_{AB}\, (\pa_{q^*} \hat{h}^1)^{AB}
\!\!\to 2 n(-u^*\!,\omega).\tag*{\qedhere}
\eq
\end{proof}

\subsection{The convergence to a round metric sphere}
\begin{proposition}
	$r(S)^2\slashed{g}$ converging to a round metric
	where $\slashed{g}$ is the restriction of $g$ on the spheres $S_{u^*\!, r}$
\end{proposition}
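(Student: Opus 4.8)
The plan is to compute the induced metric on $S_{u^*,r}$ directly from the decomposition $g=m+h^0+h^1$ and the asymptotics of Proposition \ref{prop:lim} and Remark \ref{rem:limit}, rescale by $r(S)^{-2}$, and read off convergence to the unit round metric $\widehat q$ on $\BS^2$ (equivalently $r(S)^2\slashed{g}^{ab}\to\widehat q^{ab}$, and $r(S)^2K\to1$, the normalization required in \S\ref{sec:5}).

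First I would identify the induced metric. Since $S_{u^*,r}=\{t=u^*+r^*(r),\ |x|=r\}$ is a Euclidean coordinate sphere of radius $r$ inside a slice $\{t=\mathrm{const}\}$, in the angular coordinates $\widehat y^a$ the pull-back of the flat metric $m$ is $r^2\widehat q_{ab}$, the pull-back of $h^0_{\alpha\beta}=(M/r)\delta_{\alpha\beta}\chi$ is $(M/r)\,r^2\widehat q_{ab}$, and the pull-back of $h^1$ is $\slashed{h}^1_{ab}:=h^1_{\alpha\beta}\widehat A^\alpha_a\widehat A^\beta_b$; thus, as in \eqref{eq:surfacedeterminant},
\[
\slashed{g}_{ab}=\Bigl(1+\tfrac Mr\Bigr)r^2\widehat q_{ab}+\slashed{h}^1_{ab}.
\]
Since $\widehat A^\alpha_a=O(r)$ and this contraction only sees the tangential components of $h^1$, the estimate \eqref{eq:metricdecaysharptan} together with the angular-derivative bounds of Proposition \ref{prop:lim} (the bounds on $\pa_\omega^\alpha\bigl((1+|\widetilde q|)\pa_{\widetilde q}\bigr)^kH^{1\infty}_{TU}$, valid for $|\alpha|+k\le N-6$), evaluated on the surfaces $u^*=\mathrm{const}$ where $q^*\approx\widetilde q$ stays bounded and $1+t+\widetilde r\sim r$, give
\[
|\pa^\alpha_{\widehat y}\slashed{h}^1_{ab}|\les\varepsilon\, r,\qquad |\alpha|\le N-6,
\]
so $\slashed{h}^1_{ab}$ together with all its angular derivatives is smaller than the leading term $r^2\widehat q_{ab}$ by a full power of $r$. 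Note $\slashed{h}^1_{ab}$ is \emph{not} bounded; this is the point that must be handled with care.

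Next I would compute $r(S)^2=\mathrm{Area}(S_{u^*,r})/4\pi$. From the previous display, $\sqrt{\det\slashed{g}_{ab}}=r^2(1+M/r)\sqrt{\det\widehat q_{ab}}\bigl(1+\tfrac12 r^{-2}(1+M/r)^{-1}\widehat q^{ab}\slashed{h}^1_{ab}+O(\varepsilon^2 r^{-2})\bigr)$, and the decisive input is that the angular trace decays faster: by \eqref{eq:sharpwsvecoordfunc} (equivalently $\widehat q_{ab}\widehat H^{ab}_{1\infty}=0$ in Remark \ref{rem:limit}) one has $r^{-2}\widehat q^{ab}\slashed{h}^1_{ab}=O(\varepsilon r^{-1-\gamma'})$. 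Integrating over $\BS^2$ then yields
\[
r(S)^2=r^2\Bigl(1+\tfrac Mr+O(\varepsilon r^{-1-\gamma'})\Bigr),
\]
so the \emph{angle-independent} factor $(1+M/r)\,r^2/r(S)^2$ equals $1+O(\varepsilon r^{-1-\gamma'})$, and consequently
\[
r(S)^{-2}\slashed{g}_{ab}-\widehat q_{ab}=O(\varepsilon r^{-1-\gamma'})\,\widehat q_{ab}+r(S)^{-2}\slashed{h}^1_{ab},
\]
whose right side, together with all angular derivatives up to order $N-6$, is $O(\varepsilon/r)$ by the first step. Hence $r(S)^{-2}\slashed{g}$ converges to $\widehat q$ in $C^2$ (indeed in $C^k$ for $k$ up to a fixed order depending on $N$), which is round since $\widehat q$ has constant curvature $1$; and since the Gaussian curvature is a second-order rational expression in the metric with the two-dimensional scaling $K(\slashed{g})=r(S)^{-2}K(r(S)^{-2}\slashed{g})$, we conclude $r(S)^2K(\slashed{g})=K(r(S)^{-2}\slashed{g})\to K(\widehat q)=1$.

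The main obstacle is the $r$-power bookkeeping rather than any delicate analytic estimate: the correction $\slashed{h}^1_{ab}$ to the induced metric grows like $\varepsilon r$, so the conclusion cannot be reached by discarding a decaying perturbation — one must carry $\slashed{h}^1_{ab}$ through the division by $r(S)^2\sim r^2$; and to pin down $r(S)^2$, hence the scalar $r^2/r(S)^2$, one genuinely needs the sharper decay of the angular trace $\delta^{AB}h^1_{AB}$ in \eqref{eq:sharpwsvecoordfunc}, which in turn rests on the vanishing $\delta^{AB}H^{1\infty}_{AB}=0$ of Proposition \ref{prop:lim}.
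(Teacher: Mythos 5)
Your proof is correct, but it takes a genuinely different route from the paper. The paper never touches the induced metric directly: it invokes the Gauss equation $K+\tr\chi\,\tr\uchi/2-(\chi,\uchi)/2=\slashed{g}^{AC}\slashed{g}^{BD}R_{ABCD}$ and reduces the claim to showing $r(S)^2\slashed{g}^{AC}\slashed{g}^{BD}R_{ABCD}\to0$, which it obtains by computing the Christoffel symbols and the purely angular Riemann components in the $\widehat{y}^p$ coordinates and showing $\widehat{R}(g)_{abcd}=O(r^{2-\sigma})$; the decisive cancellation there is that the dangerous term assembles into $r^3\det(\widehat{q}_{ab})\,\pa_{u^*}(\slashed{\tr}h^1)/2$, killed by the improved decay of the angular trace. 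You instead expand $\slashed{g}_{ab}=(1+M/r)r^2\widehat{q}_{ab}+\slashed{h}^1_{ab}$ as in \eqref{eq:surfacedeterminant}, use \eqref{eq:metricdecaysharptan} to bound $\slashed{h}^1_{ab}$ and its angular derivatives by $\varepsilon r$, and use the same improved trace decay \eqref{eq:sharpwsvecoordfunc} (equivalently $\delta^{AB}H^{1\infty}_{AB}=0$) at the level of the area normalization rather than of the curvature. Both arguments thus hinge on exactly the same structural input --- the vanishing of the angular trace to leading order --- but your version yields the stronger and more transparent conclusion that $r(S)^{-2}\slashed{g}\to\widehat{q}$ in $C^2$ (indeed $C^{N-6}$), from which $r(S)^2K\to1$ follows by the scaling $K(\slashed{g})=r(S)^{-2}K(r(S)^{-2}\slashed{g})$; the paper's version only certifies the curvature normalization $r(S)^2K\sim1$, which is the precise property it later needs for the Hawking mass, and avoids having to track the division by $\mathrm{Area}(S)/4\pi$. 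Your bookkeeping of the $r$-powers (the perturbation $\slashed{h}^1_{ab}$ is \emph{not} small relative to $1$, only relative to $r^2$) and of the quadratic determinant correction $O(\varepsilon^2r^{-2})$ is the right place to be careful, and you handle it correctly.
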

\begin{proof}
By  Gauss equation
(\cite{C09}) $\!K\!+\tr \chi\tr\uchi/2 -({\chi},{\uchi})\!/2
\!=\sls{g\,}{}^{\!AC}\!\sls{g\,}{}^{\!BD}\!R_{ABCD}\!$, where $\sls{g\,}{}^{\!AC}\!$ is the inverse
of the restriction of the metric to the sphere $S$.
That $r(S)^2 K\!\sim \! 1$ follows if we show that
$r(S)^2\sls{g\,}{}^{\!AC}\!\sls{g\,}{}^{\!BD}\!R_{ABCD}\!\!
\sim 0$.

We now calculate the curvature components in modified asymptotically Schwarzschild null coordinates:
\[
\widehat{R}(g)_{abc}^{\quad p}=\frac{\pa\widehat{\Gamma}(g)_{~ac}^p}{\pa\widehat{x}^b}-\frac{\pa\widehat{\Gamma}(g)_{~bc}^p}{\pa\widehat{x}^a}+\widehat{\Gamma}(g)_{~ac}^q\widehat{\Gamma}(g)_{~bq}^p-\widehat{\Gamma}(g)_{~bc}^q\widehat{\Gamma}(g)_{~aq}^p,
\]
and
\[
\widehat{R}(g)_{abcd}=\widehat{g}_{dp}\frac{\pa\widehat{\Gamma}(g)_{~ac}^p}{\pa\widehat{x}^b}-\widehat{g}_{dp}\frac{\pa\widehat{\Gamma}(g)_{~bc}^p}{\pa\widehat{x}^a}+\widehat{\Gamma}(g)_{~ac}^q\widehat{\Gamma}(g)_{dbq}-\widehat{\Gamma}(g)_{~bc}^q\widehat{\Gamma}(g)_{daq}.
\]
Since $g\!=\!g^0\!+\!h^1\!$ with $g^0_{\alpha\beta}\!=m_{\alpha\beta}\!+\!M\delta_{\alpha\beta}/r$ and the inverse metric $g^{-1}\!=(g^0)^{-1}\!+h_1\!+O(M^2/r^2)$, we  calculate
\begin{gather*}
	-\widehat{\Gamma}(g^0)^{v^*}_{~ab},\, \widehat{\Gamma}(g^0)^{u^*}_{~ab}=f(r)\widehat{q}_{ab},\qquad\!\!-\widehat{\Gamma}(g^0)_{v^*ab},\, \widehat{\Gamma}(g^0)_{u^*ab}=g(r)\widehat{q}_{ab}/2\!\quad\!\!\text{where}~f(r), g(r)=r+O_1(1)\\
	\widehat{\Gamma}(g^0)_{abc}=(r^2+Mr)\widehat{\Gamma}(m)_{abc},
\qquad\widehat{\Gamma}(g^0)^c_{~ab}=\widehat{\Gamma}(m)^c_{~ab},\qquad \widehat{R}_{abcd}(g^0)=O(r).
\end{gather*}
Next we compute
\begin{gather*}
	\widehat{\Gamma}(g)_{abv^*}=\widehat{\Gamma}(g^0)_{abv^*}+\widehat{\Gamma}(h^1)_{abv^*}=\widehat{\Gamma}(g^0)_{abv^*}+\frac{1}{2}(\frac{\pa\widehat{h}^1_{ab}}{\pa v^*}+\frac{\pa\widehat{h}^1_{av^*}}{\pa \widehat{x}^b}-\frac{\pa\widehat{h}^1_{bv^*}}{\pa \widehat{x}^a})=\widehat{\Gamma}(g^0)_{abv^*}+O_1(1),\\
	\widehat{\Gamma}(g)_{abu^*}=\widehat{\Gamma}(g^0)_{abu^*}+\pa_{u^*}\widehat{h}^1_{ab}/2+O_1(1),\qquad \widehat{\Gamma}(g)_{abc}=\widehat{\Gamma}(g^0)_{abc}+O_1(r).
\end{gather*}
and
\begin{gather*}
	\widehat{\Gamma}(g)_{~ab}^{v^*}=\widehat{g}^{v^*p}\widehat{\Gamma}(g)_{pab}=\widehat{\Gamma}(g^0)^{v^*}_{~ab}+\pa_{u^*}\widehat{h}^1_{ab}+O_1(r^{1-\sigma}),\\
	\widehat{\Gamma}(g)_{~ab}^{u^*}=\widehat{g}^{u^*p}\widehat{\Gamma}(g)_{pab}=\widehat{\Gamma}(g^0)^{u^*}_{~ab}+O_1(1),\qquad
	\widehat{\Gamma}(g)_{~ab}^{c}=\widehat{g}^{cp}\widehat{\Gamma}(g^0)_{pab}=\widehat{\Gamma}(g^0)^{c}_{~ab}+O_1(r^{-1}).
\end{gather*}
Therefore
\begin{gather*} \pa_{\widehat{x}^d}\bigl(\widehat{\Gamma}(g)_{~ab}^{v^*}\bigr)=\pa_{\widehat{x}^d}\bigl(\widehat{\Gamma}(g^0)_{~ab}^{v^*}\bigr)
+\pa_{\widehat{x}^d}\pa_{u^*}\widehat{h}^1_{ab}/2+O_1(r^{1-\sigma}),\\ \pa_{\widehat{x}^d}\bigl(\widehat{\Gamma}(g)_{~ab}^{u^*}\bigr)=\pa_{\widehat{x}^d}\bigl(\widehat{\Gamma}(g^0)_{~ab}^{u^*}\bigr)+O_1(1),\qquad	\pa_{\widehat{x}^d}\bigl(\widehat{\Gamma}(g)_{~ab}^{c}\bigr)=\pa_{\widehat{x}^d}\bigl(\widehat{\Gamma}(g^0)_{~ab}^{c}\bigr)+O_1(r^{-1}).
\end{gather*}
Finally we conclude that
\begin{align*}
	\widehat{R}(g)_{3434}&=\widehat{R}(g^0)_{3434}+\bigl(r\widehat{q}_{44}\pa_{u^*}\widehat{h}^1_{33}+r\widehat{q}_{33}\pa_{u^*}\widehat{h}^1_{44}\bigr)/2-\bigl(r\widehat{q}_{34}\pa_{u^*}\widehat{h}^1_{34}+r\widehat{q}_{34}\pa_{u^*}\widehat{h}^1_{34}\bigr)/2+O(r^{2-\sigma})\\
	&=r^3\det(\widehat{q}_{ab})\pa_{u^*}(\slashed{\tr}h^1)/2+O(r^{2-\sigma})=O(r^{2-\sigma}).
\end{align*}
where we used $\widehat{q}^{\,33}\!=\det(\widehat{q}_{ab})^{-1}\widehat{q}_{44}$, $ \widehat{q}^{\,44}\!=\det(\widehat{q}_{ab})^{-1}\widehat{q}_{33}$, $\widehat{q}^{\,34}\!=-\det(\widehat{q}_{ab})^{-1}\widehat{q}_{34}$ and $r^{-2}\widehat{q}^{\,ab}\widehat{h}^1_{ab}=\slashed{\tr}h^1$ in the second step and Remark \ref{rem:sharpmetricdecay} in the last step. Due to the symmetry of the Riemann curvature tensor we see that $\widehat{R}(g)_{abcd}=O(r^{2-\sigma})$. Since $\slashed{g}^{ab}$ is the inverse of $\widehat{g}_{ab}\!=(1\!+\!M/r)r^2\widehat{q}_{ab}\!+\!\widehat{h}^1_{ab}$ we have $\slashed{g}^{ab}\!=r^{-2}\widehat{q}^{ab}+O(r^{-3})$ and thus $\slashed{g}^{ac}\slashed{g}^{bd}\widehat{R}(g)_{abcd}=O(r^{-2-\sigma})$. This proves that the rescaled spheres converge to a round sphere.
\end{proof}
\subsection{The limit of the Hawking mass along the asymptotically null hypersurfaces}\label{subsec:5.4}
We now establish the existence of the asymptotic Hawking mass
\[
	M_{AH}(u^*)=\lim_{r\to \infty}M_{\mathcal{H}}(S_{u^*\!, r}).
\]
\begin{proposition}\label{thm:limitofHawkingmass}
	\begin{equation}\label{eq:limitofHawkingmass}
		M_{AH}(u^*)=\lim_{r\to\infty} M_{\mathcal{H}}(S_{u^*\!, r})=M-\frac{1}{8\pi}\int_{-u^*}^{\infty}\int_{\mathbb{S}^2}n(\eta, \omega)\,dS(\omega)d\eta.
	\end{equation}
	\end{proposition}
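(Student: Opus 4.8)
The plan is to start from the definition of the Hawking mass, $M_{\mathcal H}(S_{u^*\!,r})=r(S_{u^*\!,r})\big(1+\int_{S_{u^*\!,r}}\tr\chi\,\tr\uchi\,dS/16\pi\big)$, and expand each factor to the order needed to identify both the finite limit and the $r$-derivative. From the previous subsection we already know that $r(S_{u^*\!,r})\sim r$ with a correction that must be tracked precisely, and that we may replace $\chi,\uchi$ by $\chi^*,\uchi^*$ (built from $\Ls,\Lbs$ and the factor $d=(1+M/r)^{1/2}$) up to $O(r^{-2})$ errors. The key computations in subsection \ref{subsec:5.2} give
\[
\tr\chi^*=\tfrac{2d}{r}+d^{-1}\Ls\,\delta^{AB}h^1_{AB}/2+O(h^1\pa h^1)+O(r^{-2}),\qquad
\tr\uchi^*=-\tfrac{2d}{r}+d^{-1}\Lbs\,\delta^{AB}h^1_{AB}/2+O(h^1\pa h^1)+O(r^{-2}),
\]
so that $\tr\chi^*\tr\uchi^* = -4d^2/r^2 + (d/r)\big(\Ls-\Lbs\big)(\delta^{AB}h^1_{AB})/r + \text{lower order}$, where the cross terms involving $\Ls\delta^{AB}h^1_{AB}$ and $\Lbs\delta^{AB}h^1_{AB}$ must be kept: the estimate $r\,\delta^{AB}h^1_{AB}\to 0$ from Proposition \ref{prop:lim} (and Remark \ref{rem:sharpmetricdecay}) makes them integrable but not negligible after multiplying by $r^2$ from the area form.

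First I would assemble the $O(1)$ part of $1+\int\tr\chi\tr\uchi\,dS/16\pi$. Using $dS_{u^*\!,r}=\sqrt{\det(g_{AB})}\,r^2dS(\omega)$ with $\sqrt{\det(g_{AB})}=d^2+O(r^{-1-\gamma'})$ from \eqref{eq:surfacedeterminant}, and $d^2=1+M/r$, the leading term $-4d^2/r^2$ integrated against $d^2r^2dS(\omega)/16\pi$ contributes $-d^4/4\cdot(4\pi)/4\pi=-1-2M/r+O(r^{-2})$, while the $\delta^{AB}h^1_{AB}$ cross terms and the backscattered piece of $h^1_{\Lbs\Lbs}$ (from Proposition \ref{prop:lim}, which controls the trace-reversed/radial components) supply the remaining $O(1/r)$ contributions. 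The factor $r(S)=r\sqrt{\det(g_{AB})}^{1/2}$ or more precisely $r(S)=\sqrt{\text{Area}/4\pi}$, when expanded, is $r+M/2+o(1)$ using $\int d^2r^2dS(\omega)/4\pi=r^2(1+M/r)$. Multiplying $r(S)\sim r+M/2$ against $1+\int\tr\chi\tr\uchi/16\pi\sim -2M/r+(\text{terms from }h^1)$ and tracking cancellations, the $r\cdot(-2M/r)=-2M$ combines with $r(S)$'s correction and with the contribution of $h^0=M\delta_{\mu\nu}/r$ buried in $d$ to produce the constant $M$; this is the same bookkeeping that in subsection \ref{subsec:4.5} yields the ADM mass $M$.

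The genuinely dynamical part is the $u^*$-dependence. Here I would differentiate in $u^*$ (equivalently, use the mass loss structure): $\Ls$ and $\Lbs$ differentiation of $h^1_{AB}$ along the surface, together with Proposition \ref{prop:lim}'s formula $n(q^*,\omega)=\tfrac12\delta^{CD}\delta^{C'D'}V^\infty_{CC'}V^\infty_{DD'}$ with $V^\infty_{TU}=\pa_{\widetilde q}H^{1\infty}_{TU}$, shows that the $u^*$-derivative of the limiting Hawking mass equals $-E_{AH}(u^*)=-\tfrac1{8\pi}\int_{\mathbb S^2}n(-u^*,\omega)\,dS(\omega)$, by Proposition \ref{prop:5.1}. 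Integrating from $u^*$ to $+\infty$ and using that $M_{AH}(u^*)\to 0$ as $u^*\to\infty$ (all $h^1$ components decay faster than $1/r$ and the backscattered integral $\int_{q^*}^\infty$ vanishes as $q^*=-u^*\to-\infty$ in a way that forces the mass to $0$, cf.\ the remark after Theorem \ref{thm:3} and Proposition \ref{prop:totalmassloss}) gives
\[
M_{AH}(u^*)=M-\frac{1}{8\pi}\int_{-u^*}^{\infty}\int_{\mathbb S^2}n(\eta,\omega)\,dS(\omega)\,d\eta,
\]
where I have changed variables $\eta=-u^{*\prime}$. The main obstacle I anticipate is the delicate cancellation in the constant term: several pieces of order $1/r$ (from $r(S)$'s expansion, from $d^4$, from the $\delta^{AB}h^1_{AB}$ traces, and from the wave coordinate relations in Propositions \ref{prop:wccLb} and its corollaries tying $\slashed{\tr}h^1$ to $h^1_{\Lbs\Lbs}$ and $\widehat h_1^{v^*u^*}$) must be combined correctly to see that the finite limit is exactly $M$ and not $M$ plus a spurious constant; getting the $r\,\delta^{AB}h^1_{AB}\to0$ but $r^2$-weighted terms right, and using the sharp estimates of Proposition \ref{prop:specialsharpmetricdecay} (and Remark \ref{rem:specialsharpmetricdecay}) to kill the would-be non-integrable contributions, is where the real work lies.
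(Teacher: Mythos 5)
Your overall strategy (expand $r(S)$, $\tr\chi\,\tr\uchi$ and the area form, identify the finite limit) is the right one, but as written the proposal has two genuine gaps. First, the expressions $\tr\chi^*=2d/r+d^{-1}\Ls\,\delta^{AB}h^1_{AB}/2+O(r^{-2})$, $\tr\uchi^*=-2d/r+\dots$ that you import from the energy computation are accurate only up to $O(r^{-2})$, and an $O(r^{-2})$ error in either trace, multiplied by the other trace $\sim \mp 2/r$, by the area element $\sim r^2dS(\omega)$ and by $r(S)\sim r$, contributes at order $O(1)$ to the Hawking mass. Indeed your own leading bookkeeping gives $1+\int\tr\chi^*\tr\uchi^*\,dS/16\pi\sim -d^4+1=-2M/r$ and hence $r(S)\cdot(-2M/r)=-2M$, not $+M$; the discrepancy cannot be absorbed by the corrections to $r(S)$ or to $d$ that you invoke. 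The paper's proof therefore recomputes $\hat L$, $\hat{\underline L}$ and the commutator terms to one order higher, obtaining $\tr\chi=2/r-2M/r^2-\widehat h_1^{v^*v^*}/(4r)-\widehat h_1^{v^*u^*}/(2r)+O(\varepsilon r^{-2-\sigma})$ and a corresponding refined $\tr\uchi$, and then uses the wave coordinate condition \eqref{eq:wccLb} to eliminate $\Lbs\slashed{\tr}h^1$, $\slashed{\nabla}_C\widehat h_1^{v^*C}$ and the quadratic determinant term; only after this does $\tr\chi\,\tr\uchi=-4/r^2+8M/r^3+\pa_{\Ls}\widehat h_1^{v^*v^*}/r+\widehat h_1^{v^*v^*}/r^2+O(\varepsilon r^{-3-\sigma})$ combine with $dS=(1+M/r)r^2dS(\omega)$ to produce exactly $+M$. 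You correctly flag that this cancellation is ``where the real work lies,'' but the formulas you start from are not sufficient to carry it out and, taken at face value, give the wrong constant.

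Second, and more structurally, your derivation of the $u^*$-dependence is circular relative to the paper's logic. You obtain the $n$-integral by asserting $\tfrac{d}{du^*}M_{AH}(u^*)=-E_{AH}(u^*)$ ``by Proposition \ref{prop:5.1}'' and then integrating with the endpoint value $M_{AH}(u^*)\to0$ as $u^*\to\infty$. But Proposition \ref{prop:5.1} only computes $E_{AH}$ as the limiting shear flux; it does not relate that flux to the variation of the Hawking mass. That relation is the mass loss law, which in the paper is \emph{deduced from} Proposition \ref{thm:limitofHawkingmass} (Theorem \ref{thm:Bondiloss}), and likewise $M_{AH}(u^*)\to0$ is a consequence of the explicit formula together with Proposition \ref{prop:totalmassloss}. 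To make your route self-contained you would need an independent propagation equation for the Hawking mass along the asymptotically null foliation (a Raychaudhuri/null-structure-equation argument in the spirit of Christodoulou), which you do not supply. In the paper the $u^*$-dependence instead enters directly and pointwise on the sphere: the surviving term $r\pa_{\Ls}\widehat h_1^{v^*v^*}+\widehat h_1^{v^*v^*}$, evaluated with the backscattering asymptotics of $\widehat h_1^{v^*v^*}$ from Remark \ref{rem:limit}, converges after multiplication by $r$ to $-2\int_{-u^*}^{\infty}n(\eta,\omega)\,d\eta$, so the limit \eqref{eq:limitofHawkingmass} is computed for each fixed $u^*$ with no integration in $u^*$ and no endpoint normalization. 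Identifying $\widehat h_1^{v^*v^*}$ (rather than the angular components $h^1_{AB}$) as the carrier of the mass aspect is the key idea missing from your sketch.
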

In view of \eqref{eq:radiatedtotalenergy} and the above proposition, we establish the following {\it Bondi mass loss formula}
\begin{theorem}\label{thm:Bondiloss}
\begin{equation}\label{eq:Bondiloss}
	\frac{d}{du^*}M_{AH}(u^*)=-E_{AH}(u^*).
\end{equation}
Moreover we see that $M(u^*)\!=\!M$ as $u^*=t-r^*\!\to\!-\infty$ and by Proposition \ref{prop:totalmassloss} $M(u^*)\!=\!0$ as $u^*\!=\!t-r^*\!\to\!\infty$.
\end{theorem}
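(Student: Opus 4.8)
The plan is to deduce Theorem~\ref{thm:Bondiloss} directly from the two explicit formulas already in hand: the identification of the limit of the Hawking mass in Proposition~\ref{thm:limitofHawkingmass}, the identification of the radiated energy in Proposition~\ref{prop:5.1}, and the total mass-loss normalization in Proposition~\ref{prop:totalmassloss}. By Proposition~\ref{thm:limitofHawkingmass},
\[
M_{AH}(u^*)=M-\frac{1}{8\pi}\int_{-u^*}^{\infty}\int_{\mathbb{S}^2} n(\eta,\omega)\,dS(\omega)\,d\eta,
\]
so the entire $u^*$–dependence enters only through the lower endpoint of the $\eta$–integral. Before differentiating I would record that $\eta\mapsto \Phi(\eta):=\int_{\mathbb{S}^2} n(\eta,\omega)\,dS(\omega)$ is continuous and lies in $L^1(\mathbb{R})$: continuity is inherited from the regularity of $n$, i.e. of $\pa_{\widetilde{q}}H^{1\infty}_{AB}$, in Proposition~\ref{prop:lim}, and integrability follows from the bound $|\pa_{\widetilde{q}}H^{1\infty}_{AB}|\les \varepsilon(1+|\widetilde{q}|)^{-1}$ contained in that proposition, which gives $|n(\widetilde{q},\omega)|\les \varepsilon^2(1+|\widetilde{q}|)^{-2}$. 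The fundamental theorem of calculus then yields
\[
\frac{d}{du^*}M_{AH}(u^*)=-\frac{1}{8\pi}\,\Phi(-u^*)=-\frac{1}{8\pi}\int_{\mathbb{S}^2} n(-u^*,\omega)\,dS(\omega),
\]
and comparing with $E_{AH}(u^*)=\tfrac{1}{8\pi}\int_{\mathbb{S}^2} n(-u^*,\omega)\,dS(\omega)$ from Proposition~\ref{prop:5.1} gives the mass loss law $\tfrac{d}{du^*}M_{AH}(u^*)=-E_{AH}(u^*)$.

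For the two boundary limits I would argue as follows. As $u^*\to-\infty$ we have $-u^*\to+\infty$, so $\int_{-u^*}^{\infty}\Phi(\eta)\,d\eta$ is the tail of a convergent integral and tends to $0$; hence $M_{AH}(u^*)\to M$, the ADM mass of the data~\eqref{eq:asymptoticallyflatdata}. As $u^*\to+\infty$ we have $-u^*\to-\infty$, so by dominated convergence together with Proposition~\ref{prop:totalmassloss} — which in the notation above reads $\tfrac{1}{8\pi}\int_{-\infty}^{\infty}\Phi(\eta)\,d\eta=M$ —
\[
\frac{1}{8\pi}\int_{-u^*}^{\infty}\Phi(\eta)\,d\eta\;\longrightarrow\;\frac{1}{8\pi}\int_{-\infty}^{\infty}\Phi(\eta)\,d\eta=M,
\]
so that $M_{AH}(u^*)\to M-M=0$. (A posteriori this also forces $M>0$ unless $n\equiv 0$, i.e. unless the spacetime is Minkowski.)

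The genuinely delicate work for Theorem~\ref{thm:2} lies not in this final step but in the two inputs I am assuming: Proposition~\ref{prop:5.1}, whose proof requires the sharp null-frame expansions $r\,\tr\chi^*\sim 2$, $r\,\tr\underline{\chi}^*\sim -2$ and the limit of $r\hat h^1_{AB}$ along the curves $(u^*+r^*(r),r\omega)$ from Remark~\ref{rem:limit}; and Proposition~\ref{thm:limitofHawkingmass}, which in addition needs the precise $1/r$ asymptotics of $h^1_{\Lb\Lb}$ from Proposition~\ref{prop:lim} and the convergence of $r(S)^2\slashed{g}$ to a round metric. Granting those, Theorem~\ref{thm:Bondiloss} is obtained purely from the fundamental theorem of calculus and the normalization~\eqref{eq:totalmassloss}; the only point meriting care is the interchange of the $u^*$–derivative with the angular and $\eta$ integrations, which is justified by the uniform, integrable decay of $n$ in $\widetilde{q}$ recorded above. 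Thus I expect no real obstacle at this stage beyond bookkeeping the signs and the orientation of the $\eta$–integral.
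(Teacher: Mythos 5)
Your proposal is correct and follows essentially the same route as the paper: Theorem \ref{thm:Bondiloss} is stated there as an immediate consequence of the explicit formulas in Proposition \ref{prop:5.1} and Proposition \ref{thm:limitofHawkingmass}, differentiated via the fundamental theorem of calculus, with the $u^*\to\infty$ limit supplied by Proposition \ref{prop:totalmassloss}. Your added justification of the integrability of $n$ in $\widetilde q$ matches the paper's own remark that $|\pa_{\widetilde q}H^{1\infty}_{AB}|\les \varepsilon(1+|\widetilde q|)^{-1}$ makes the news function integrable, so there is nothing missing.
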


\begin{proof}[Proof of Proposition \ref{thm:limitofHawkingmass}]
	In order to obtain the explicit expression for $M_{AH}(u^*)$, we need to refine the expressions for $\tr\chi$, $\tr\underline{\chi}$. Recall that the outgoing  and incoming null normals $\Lh, \Lbh$ to the surfaces $S_{u^*\!,r}$
are expressed in terms of $\tau, \underline{\tau}, a$ as defined in the proof of proposition \ref{prop:5.1}. 
A direct computation implies
\begin{align*}
	\tau&=-1+\frac{M}{r}-\frac{1}{2}(h_1^{00}+h_1^{ij}\omega_i\omega_j)+h_1^{0i}\omega_i+O_1(\frac{\varepsilon}{r^{1+\sigma}})=-1+\frac Mr+O_1(\frac{\varepsilon}{r^{1+\sigma}}),\\
	\underline{\tau}&=-1+\frac{M}{r}-\frac{1}{2}(h_1^{00}+h_1^{ij}\omega_i\omega_j)-h_1^{0i}\omega_i+O_1(\frac{\varepsilon}{r^{1+\sigma}})=-1+\frac Mr-\frac{\widehat{h}_1^{v^*v^*}}{2}+O_1(\frac{\varepsilon}{r^{1+\sigma}}),\\
	a&=1+\frac{M}{2r}-\frac{1}{2}h_1^{ij}\omega_i\omega_j+O_1(\frac{\varepsilon}{r^{1+\sigma}})=1+\frac {M}{2r}+\frac{1}{8}(2\widehat{h}_1^{v^*u^*}-\widehat{h}_1^{v^*v^*})+O_1(\frac{\varepsilon}{r^{1+\sigma}}).
\end{align*}
Suppose $\{A, B\}$ are orthonormal vector fields on $\BS^2$ associated to the coordinates $x$, i.e.\ $A=A^k\pa_{x^k}$,  we have
\begin{align*}
	\hat{L}&=\Big(1+\frac {M}{2r}-\frac{\widehat{h}_1^{v^*v^*}}{8}-\frac{\widehat{h}_1^{v^*u^*}}{4}
+O_1(\frac{\varepsilon}{r^{1+\sigma}})\Big)\Ls+O_1(\frac{\varepsilon}{r^{1+\sigma}})
\Lbs+O_1(\frac{\varepsilon}{r^{1+\sigma}})\pa_A,\\ \hat{\underline{L}}&=\Big(-\frac{\widehat{h}_1^{v^*v^*}}{4}\!
+\!O_1(\frac{\varepsilon}{r^{1+\sigma}})\Big)\Ls+\Big(1+\frac {M}{2r}+\frac{\widehat{h}_1^{v^*v^*}}{8}-\frac{\widehat{h}_1^{u^*v^*}}{4}
+O_1(\frac{\varepsilon}{r^{1+\sigma}})\Big)\Lbs
+\Big(-\widehat{h}_1^{v^*A}\!+\!O_1(\frac{\varepsilon}{r^{1+\sigma}})\Big)\pa_A.
\end{align*}
Recall that
\begin{equation*}
	2{\chi}(A, B)=\Lh\,g(A, B)+g(A, [ B,\Lh])+g([A,\Lh], B),\quad\text{and}\quad
	2\underline{\chi}(A, B)=\Lbh\,g(A, B)+g(A,[B,\Lbh])+g([A,\Lbh], B).
\end{equation*}
With $\tr k\!=\sls{g\,}{}^{\!AB}k_{\!AB}$,
where $\sls{g\,}{}^{\!AB}$ is the inverse of $g_{\!AB}=g(A, B)$, we have
\begin{align*}
	\tr{\chi}&= \,\det{(g(A,B))}^{-1}\hat{L}\,\det{(g(A,B))}/2+\sls{g\,}{}^{\!AB}\bigl(g(A, [ B,\Lh])+g([A,\Lh], B)\bigr)/2,\\
	\tr{\underline{\chi}}&= \,\det{(g(A,B))}^{-1}\hat{\underline{L}}\,\det{(g(A,B))}/2+\sls{g\,}{}^{\!AB}\bigl(g(A, [ B,\hat{\underline{L}}])+g([A,\hat{\underline{L}}], B)\bigr)/2.
\end{align*}
Here we used the identity
$Z \det{\!A}=\det{\!A}\, \tr (A^{-1}ZA)$.
A direct calculation yields
\begin{align*}
	\Ls\det(g_{AB})&=-2Mr^{-2}+O(\varepsilon r^{-2-\sigma}),\quad\pa_A\det(g_{AB})=O(\varepsilon r^{-2-\sigma}),\\ \Lbs\det(g_{AB})&=2Mr^{-2}+\Lbs\slashed{\tr}h^1+\Lbs\det(h^1_{AB})+O(\varepsilon r^{-2-\sigma}),
\end{align*}
Therefore
\[
\hat{L}\det(g_{AB})=-2Mr^{-2}+O(\varepsilon r^{-2-\sigma}),\qquad\text{and}\qquad\hat{\underline{L}}\det(g_{AB})=2Mr^{-2}+\Lbs\slashed{\tr}h^1+\Lbs\det\!h^1+O(\varepsilon r^{-2-\sigma}).
\]
and
\begin{align*}
	\tr{\chi}&= -Mr^{-2}+\sls{g\,}{}^{\!AB}\bigl(g(A, [ B,\Lh])+g([A,\Lh], B)\bigr)/2+O(\varepsilon r^{-2-\sigma}),\\
	\tr{\underline{\chi}}&=Mr^{-2}+\Lbs\slashed{\tr}h^1/2+\Lbs\det(h^1_{AB})/2+\sls{g\,}{}^{\!AB}\bigl(g(A, [ B,\hat{\underline{L}}])+g([A,\hat{\underline{L}}]. B)\bigr)/2+O(\varepsilon r^{-2-\sigma}).
\end{align*}
where we used $\det(g_{AB})^{-1}\!=\!1\!+\!O(\varepsilon r^{-1})$.  It remains to control the commutators terms.
We compute
\begin{align*}
	g(A, [B,\Ls])&=g(A, \frac{dr}{dr^*}\frac{\pa_B}{r})=\Bigl(\frac{1}{r}-\frac{M}{r^2}+O(\frac{M^2}{r^3})\Bigr)g_{AB},\\
	g(A, [B,\Lbs])&=g(A, -\frac{dr}{dr^*}\frac{\pa_B}{r})=-\Bigl(\frac{1}{r}-\frac{M}{r^2}+O(\frac{M^2}{r^3})\Bigr)g_{AB},\\
	g(A, [B,C])&=g(A, (\slashed{\Gamma}_{BC}^D-\slashed{\Gamma}_{CB}^D)D)=(\slashed{\Gamma}_{BC}^D-\slashed{\Gamma}_{CB}^D)g_{AD}.
\end{align*}
Here $\slashed{\Gamma}_{AB}^C=m(\slashed{\nabla}_{\pa_A}\pa_B, D)$ where $\slashed{\nabla}$ is the covariant differentiation on sphere and then $\slashed{\Gamma}$ are the associated frame-Christoffel symbol and homogeneous functions of degree $-1$ with respect to the radial variable $r$. Then
\begin{align*}
	\frac{1}{2}\sls{g\,}{}^{\!AB}\bigl(g(A, [ B,\Lh])+g([A,\Lh], B)\bigr)&=\frac{2}{r}-\frac{M}{r^2}-\frac{\widehat{h}_1^{v^*v^*}}{4r}
-\frac{\widehat{h}_1^{v^*u^*}}{2r}+O(\frac{\varepsilon}{r^{2+\sigma}}),\\
	\frac{1}{2}\sls{g\,}{}^{\!AB}\bigl(g(A, [ B,\Lh])+g([A,\Lh], B)\bigr)&=-\frac{2}{r}+\frac{M}{r^2}-\frac{3\widehat{h}_1^{v^*v^*}}{4r}
+\frac{\widehat{h}_1^{v^*u^*}}{2r}-\pa_C\widehat{h}_1^{v^*C}-\widehat{h}_1^{v^*C}(\slashed{\Gamma}_{DC}^D
-\slashed{\Gamma}_{CD}^D)
+O(\frac{\varepsilon}{r^{2+\sigma}})\\ &=-\frac{2}{r}+\frac{M}{r^2}-\frac{3\widehat{h}_1^{v^*v^*}}{4r}
+\frac{\widehat{h}_1^{v^*u^*}}{2r}\!-\!\slashed{\nabla}_C\widehat{h}_1^{v^*C}\!+\!O(\frac{\varepsilon}{r^{2+\sigma}}).
\end{align*}
Here we used the fact that $\slashed{\Gamma}_{CD}^D=m(\slashed{\nabla}_{\pa_C}\pa_D, D)=\pa_C(m(D, D))/2=0$. Hence
\begin{align*}
	\tr\chi&=\frac{2}{r}-\frac{2M}{r^2}-\frac{\widehat{h}_1^{v^*v^*}}{4r}-\frac{\widehat{h}_1^{v^*u^*}}{2r}+O(\frac{\varepsilon}{r^{2+\sigma}}),\\
	\tr\underline{\chi}&=-\frac{2}{r}+\frac{2M}{r^2}-\frac{3\widehat{h}_1^{v^*v^*}}{4r}+\frac{\widehat{h}_1^{v^*u^*}}{2r}\!-\!\slashed{\nabla}_C\widehat{h}_1^{v^*C}+\frac{\Lbs\slashed{\tr} h^1}{2}+\frac{\Lbs\det(h^1_{AB})}{2}+O(\frac{\varepsilon}{r^{2+\sigma}}).
\end{align*}
Using \eqref{eq:wccLb} we obtain
\[
\tr\underline{\chi}=-\frac{2}{r}+\frac{2M}{r^2}+\frac{\pa_{\Ls}\widehat{h}_1^{v^*v^*}}{2}+\frac{\widehat{h}_1^{v^*v^*}}{4r}-\frac{\widehat{h}_1^{v^*u^*}}{2r}+O(\frac{\varepsilon}{r^{2+\sigma}}).
\]
Here we used these facts $\slashed{\nabla}_C\widehat{h}_1^{v^*C}=\widehat{\nabla}_c\widehat{h}_1^{v^*c}$, $\pa_{\Lbs}(\widehat{h}_1^{ac}r^2\widehat{q}_{cb}\widehat{h}_1^{bd}r^2\widehat{q}_{da})=\pa_{\Lbs}(h^1_{AB}h^{1AB})+O_2(\varepsilon r^{-2-\sigma})$ and $h^1_{AB}h^{1AB}+2\det\!h^1=(\slashed{\tr}h^1)^2$. Hence
\[
\tr\chi\tr\underline{\chi}=-\frac{4}{r^2}+\frac{8M}{r^3}+\frac{\pa_{\Ls}\widehat{h}_1^{v^*v^*}}{r}+\frac{\widehat{h}_1^{v^*v^*}}{r^2}+O(\frac{\varepsilon}{r^{3+\sigma}}).
\]
Since $dS_{u^*\!, r}=\sqrt{\det(g_{AB})}r^2dS(\omega)=(1+M/r+O(\varepsilon r^{-1-\sigma}))r^2dS(\omega)$ and $r(S_{u^*\!, r})=r+O(\varepsilon)$, It follows that
\begin{multline*}
	M_{\mathcal{H}}(S_{u^*\!, r})=r(S_{u^*\!, r})\big(1\!+\int_{S_{u^*\!, r}} \tr \chi \,\tr \underline{\chi} \,dS_{u^*\!, r}\!/16\pi \big)\!\\=\bigl(r+O(\varepsilon)\bigr)\Bigl(\frac{M}{r}+\frac{1}{16\pi}\int_{S_{u^*\!, r}} \!\!\!\!\! r\pa_{\Ls}\widehat{h}_1^{v^*v^*}\!\!+\widehat{h}_1^{v^*v^*}\!\!dS(\omega)
+O(\frac{\varepsilon}{r^{1+\sigma}})\Bigr)=M-\frac{1}{8\pi}\int_{-u^*}^{\infty}n(\eta, \omega)\,d\eta+O(\frac{\varepsilon}{r^{\sigma}}),
\end{multline*}
where we used the asymptotics result for the metric component $\widehat{h}^{v^*v^*}_1$ in Remark \ref{rem:limit}. Therefore
\[
M_{AH}(u^*)=\lim_{r\to\infty} M_{\mathcal{H}}(S_{u^*\!, r})=M-\frac{1}{8\pi}\int_{-u^*}^{\infty}\int_{\mathbb{S}^2}n(\eta, \omega)\,dS(\omega)d\eta.
\tag*{\qedhere}
\]
\end{proof}
\begin{remark}
According to the proof of Proposition \ref{thm:limitofHawkingmass}, we find that the existence of the limit $M_{AH}(u^*)$ of the Hawking mass  along the asymptotically null hypersurfaces does not require the null infinity to extend all the way back to the spatial infinity. Suppose the null infinity could be extended back to the spatial infinity, the past limit $\lim_{u^*\to-\infty} M_{AH}(u^*)$ equals to the ADM mass.
\end{remark}

\section{Bondi-Sachs coordinates}
In this section, our goal is to construct the Bondi-Sachs coordinates $\overline{y}^p=(u, \overline{r}, \overline{y}^3, \overline{y}^4)$ under which we denote the metric by $\overline{g}$. The Bondi-Sachs coordinates  $\overline{y}^p =(u,\overline{r},\overline{y}^3, \overline{y}^4)$ are based on a family of outgoing
null hypersurfaces $\overline{y}^1=u=const$. The two angular coordinates $\overline{y}^a$, $(a,b,c,...=3,4)$, are constant along the null rays, i.e. $g^{\alpha\beta}\pa_\beta u\pa_\alpha \overline{y}^a\!\!=\!0$.  The coordinate $\overline{y}^2 =\overline{r}$, which varies along the null rays,
is chosen to be an areal coordinate such that
$\det [\overline{g}_{ab}] = \overline{r}^4 \mathfrak{q}$, where $\mathfrak{q}(\overline{y}^a)$ is the determinant of the unit sphere metric $\overline{q}_{ab}$
associated with the angular coordinates $\overline{y}^a$. In these coordinates, the metric takes the Bondi-Sachs form
\[
\overline{g}_{pq}d\overline{y}^p d\overline{y}^q=-\frac{V}{\overline{r}}e^{2\beta} du^2-2 e^{2\beta}dud\overline{r} +\overline{r}^2h_{ab}\Big(d\overline{y}^a-U^adu\Big)\Big(d\overline{y}^b-U^bdu\Big).
\]
We have already constructed $u$ coordinate in section \ref{sec:3}, it remains to construct the angular coordinates $\overline{y}^a$ and areal coordinate $\overline{r}$.

\subsection{Construction of angular coordinates}\label{subsec:6.1}
Since $\{u=const\}$ are null hypersurfaces, for any point $P$, it must be at some null geodesic. After reparametrization, we see that $P$ must be at some $X(s)=(s, u^*(s), \widehat{y}^3(s), \widehat{y}^4(s))=X(s; u, \overline{y}^3, \overline{y}^4)$ where we use the notation $X(s; u, \overline{y}^3, \overline{y}^4)$ to emphasize that  the integral curve $X(s)$ of the vector field $\widehat{g}^{pq}\pa_{\widehat{y}^q}u\pa_{\widehat{y}^p}/\widehat{g}^{v^*q}\pa_{\widehat{y}^q}u$ satisfies that $(u^*(s), \widehat{y}^3(s), \widehat{y}^4(s))\to(u, \overline{y}^3, \overline{y}^4)$ as $s\to\infty$. Therefore, for any point $P\in X(s; u, \overline{y}^3, \overline{y}^4)$, we define $(\overline{y}^3, \overline{y}^4)$ to be the angular coordinates in Bondi-Sachs coordinates.
Using the estimate
\[
\frac{\widehat{g}^{aq}\pa_{\widehat{y}^q}u}{\widehat{g}^{v^*q}\pa_{\widehat{y}^q}u}=\frac{\widehat{h}^{u^*a}+O(\varepsilon r^{-3-\sigma})}{-1+O(\varepsilon r^{-1})}=-\widehat{h}^{u^*a}+O(\frac{\varepsilon}{r^{3+\sigma}}),
\]
and integrating along the curve $X(s; u, \overline{y}^3, \overline{y}^4)$ yield
\[
\mathring{y}^a:=\overline{y}^a-\widehat{y}^a=\int_{v^*}^\infty\!\frac{\widehat{g}^{aq}\pa_{\widehat{y}^q}u}{\widehat{g}^{v^*q}\pa_{\widehat{y}^q}u}\,ds=O(\frac{\varepsilon}{r^{1+\sigma}}).
\]
We now have new angular coordinates $\overline{y}^a(v^*, u^*, \widehat{y}^3, \widehat{y}^4)=\widehat{y}^a+\mathring{y}^a(v^*, u^*, \widehat{y}^3, \widehat{y}^4)$ which we use in Bondi-Sachs coordinate system  and then we derive the system for the derivatives of $\overline{y}^a$. According to the construction,
\begin{equation}\label{eq:angulareqn}
	g^{\alpha\beta}\pa_\alpha u\pa_\beta\overline{y}^a=0.
\end{equation}

Differentiating \eqref{eq:angulareqn} gives
\beq\label{eq:diffangulareqn}
g^{\alpha\beta} \partial_\alpha u\, \partial_\beta Z\overline{y}^a
= -g_Z^{\alpha\beta}\partial_\alpha u\, \partial_\beta \overline{y}^a-g^{\alpha\beta} \partial_\alpha Zu\, \partial_\beta \overline{y}^a,
\eq
where the Lie derivative $g_Z^{\alpha\beta}=\mathcal{L}_Z g^{\alpha\beta}$ is given by
\beq\label{eq:liederivative2}
g_Z^{\alpha\beta}\partial_\alpha u\,\partial_\beta w=
(Zg^{\alpha\beta}) \partial_\alpha u\,\partial_\beta w
+g^{\alpha\beta} \partial_\alpha u\, [Z,\partial_\beta] w+
g^{\alpha\beta} [Z,\partial_\alpha] u\,\partial_\beta w.
\eq
Hence with the notation $g_Z(U,V)=g_Z^{\alpha\beta}U_\alpha V_\beta$ and using the facts $\pa Z \us=0,~g_{0Z}(U,V)\!=\!0$ for $Z\in{\mathcal Z}=\{ \Omega_{ij},\partial_t\}$,  \eqref{eq:angulareqn} respectively \eqref{eq:diffangulareqn} become
\begin{align}
	\partial_{\widetilde{L}}  \overline{y}^a&=0,\label{eq:angularsystemone}\\
	\partial_{\widetilde{L}} Z \mathring{y}^a&=-h_{1Z}(\partial u,\partial \overline{y}^a)-g^{\alpha\beta} \partial_\alpha Z\mathring{u}\, \partial_\beta \overline{y}^a-\partial_{\widetilde{L}} Z \widehat{y}^a.\label{eq:angularsystemtwo}
\end{align}

In order to estimate this system we need the following lemmas.
\begin{lemma}
	If $|\Omega \mathring{y}^a|\leq1/20$ we have
	\begin{equation}\label{eq:vderiofangular}
		|\pa_{v^*}\mathring{y}^a|\leq\frac{C_0\varepsilon}{r^{1+\sigma}}
		|\pa_{t}\mathring{y}^a|+\frac{C_0\varepsilon}{r^{2+\sigma}}.	\end{equation}
\end{lemma}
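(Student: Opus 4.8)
The plan is to extract the estimate directly from the characteristic equation $\partial_{\widetilde{L}}\overline{y}^a=0$ recorded in \eqref{eq:angularsystemone}, combined with the sharp expansion of the null vector field $\widetilde{L}=g^{\alpha\beta}\pa_\alpha u\,\pa_\beta$ given in \eqref{eq:tildaL}; this is cleaner than differentiating \eqref{eq:diffangulareqn} since no commutation is needed. First I would write $\overline{y}^a=\widehat{y}^a+\mathring{y}^a$ and observe that $\widehat{y}^a$ depends only on the angular variables $(\widehat{y}^3,\widehat{y}^4)$, so $\pa_{v^*}\widehat{y}^a=\pa_{u^*}\widehat{y}^a=0$ and $\pa_{\widehat{y}^b}\widehat{y}^a=\delta^a_b$; hence by \eqref{eq:tildaL} only the last, $O(\varepsilon r^{-2-\sigma})$, coefficient of $\widetilde{L}$ contributes and $\partial_{\widetilde{L}}\widehat{y}^a=O(\varepsilon r^{-2-\sigma})$. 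Feeding this into $\partial_{\widetilde{L}}\mathring{y}^a=-\partial_{\widetilde{L}}\widehat{y}^a$ and expanding $\widetilde{L}$ once more by \eqref{eq:tildaL} gives
\[
\bigl(-2+O(\tfrac{\varepsilon}{r^\sigma})\bigr)\pa_{v^*}\mathring{y}^a+O(\tfrac{\varepsilon}{r^{1+\sigma}})\pa_{u^*}\mathring{y}^a+O(\tfrac{\varepsilon}{r^{2+\sigma}})\pa_{\widehat{y}^b}\mathring{y}^a=O(\tfrac{\varepsilon}{r^{2+\sigma}}).
\]

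Next I would eliminate the $\pa_{u^*}$ and angular derivatives. From the formulas for $\pa_{v^*},\pa_{u^*}$ in Subsection~\ref{subsec:2.2} one has $\pa_t=\pa_{v^*}+\pa_{u^*}$, so $\pa_{u^*}\mathring{y}^a=\pa_t\mathring{y}^a-\pa_{v^*}\mathring{y}^a$ and the $O(\varepsilon r^{-1-\sigma})$ term splits into $O(\varepsilon r^{-1-\sigma})\pa_t\mathring{y}^a$ plus a multiple of $\pa_{v^*}\mathring{y}^a$ which I absorb into the leading term on the left. For the angular term I would write $\pa_{\widehat{y}^b}$ as a linear combination of the rotation fields $\Omega_{ij}=x^i\pa_{x^j}-x^j\pa_{x^i}$ with $r$‑independent coefficients, bounded on the coordinate patch, since the $\Omega_{ij}$ are $r$‑independent and tangent to the spheres and therefore span $T\BS^2$; thus $|\pa_{\widehat{y}^b}\mathring{y}^a|\les|\Omega\mathring{y}^a|\le 1/20$ by hypothesis, and $O(\varepsilon r^{-2-\sigma})\pa_{\widehat{y}^b}\mathring{y}^a=O(\varepsilon r^{-2-\sigma})$.

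Collecting terms, the coefficient of $\pa_{v^*}\mathring{y}^a$ on the left is $-2+O(\varepsilon)$, hence bounded away from zero for $\varepsilon$ small in the region considered, so dividing through yields $|\pa_{v^*}\mathring{y}^a|\le C_0\varepsilon r^{-1-\sigma}|\pa_t\mathring{y}^a|+C_0\varepsilon r^{-2-\sigma}$. The only step that is not pure bookkeeping — and the only place the bootstrap hypothesis $|\Omega\mathring{y}^a|\le 1/20$ is used — is the control of $\pa_{\widehat{y}^b}\mathring{y}^a$: no pointwise decay of this angular derivative is available a priori (only $\mathring{y}^a=O(\varepsilon r^{-1-\sigma})$ itself), so it must be routed through the a priori bound on $\Omega\mathring{y}^a$. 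I also note that getting the sharp power $r^{-2-\sigma}$ rather than $r^{-1-\sigma}$ in the last term relies on the $O(\varepsilon r^{-2-\sigma})$ size of the angular coefficient of $\widetilde{L}$ in \eqref{eq:tildaL}, so the sharp form of that expansion is essential.
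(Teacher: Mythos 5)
Your proposal is correct and follows essentially the same route as the paper: the equation $\pa_{\widetilde{L}}\overline{y}^a=0$ you start from is literally the undifferentiated orthogonality relation $g^{\alpha\beta}\pa_\alpha u\,\pa_\beta\overline{y}^a=\widehat{g}^{pq}\pa_{\widehat{y}^p}u\,\pa_{\widehat{y}^q}\overline{y}^a=0$ that the paper expands componentwise via \eqref{eq:formalderiofu} and \eqref{eq:formalderiofangular}, and both arguments isolate the $\bigl(-2+O(\varepsilon/r)\bigr)\pa_{v^*}\mathring{y}^a$ term, trade $\pa_{u^*}$ for $\pa_t$, and use the hypothesis $|\Omega\mathring{y}^a|\leq 1/20$ to make the angular contribution $O(\varepsilon r^{-2-\sigma})$. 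Your closing remarks correctly identify where the bootstrap assumption and the sharp angular coefficient in \eqref{eq:tildaL} enter.
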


\begin{proof}
	Using \eqref{eq:estimateofu} and \eqref{eq:highbadderi} we know that
	\begin{equation}\label{eq:formalderiofu}
		\pa_{\widehat{y}^p}u=\Bigl(O(\frac{\varepsilon}{r^{1+\sigma}}),~ 1+\frac{\widehat{h}_1^{v^*u^*}}{2}+O(\frac{\varepsilon}{r^{1+\sigma}}), ~O(\frac{\varepsilon}{r^{\sigma}}),~ O(\frac{\varepsilon}{r^{\sigma}})\Bigr),
	\end{equation}
	and
	\begin{equation}\label{eq:formalderiofangular}
		\pa_{\widehat{y}^p}\overline{y}^a=\Bigl(\pa_{v^*}\mathring{y}^a,~\pa_{u^*}\mathring{y}^a, ~\delta_b^a+\pa_{\widehat{y}^b}\mathring{y}^a\Bigr).
	\end{equation}
	Since $g^{\alpha\beta}\pa_\alpha u\pa_\beta\overline{y}^a\!=\!\widehat{g}^{pq}\pa_{\widehat{y}^p}u\pa_{\widehat{y}^q}\overline{y}^a=0$, with the estimates of $\widehat{g}$ and the assumption $|\Omega{\mathring{y}^a}|\leq1/20$ we obtain
	\[
	\bigl(-2+O(\frac{\varepsilon}{r})\bigr)\pa_{v^*}\mathring{y}^a+O(\frac{\varepsilon}{r^{1+\sigma}})\pa_{t}\mathring{y}^a+O(\frac{\varepsilon}{r^{2+\sigma}})=0.
	\]
	This finishes the proof of the lemma.
\end{proof}

\begin{lemma}\label{lem:honet}
	If $|\Omega{\mathring{y}^a}|\leq1/20$ we have
	\begin{equation}\label{eq:tderiofangular}
		|\pa_{\widetilde{L}}\pa_t\mathring{y}^a|\leq \frac{C_0\varepsilon}{r}||\pa_{v^*}\mathring{y}^a|
+\frac{C_0\varepsilon}{r^{1+\sigma}}|\pa_t\mathring{y}^a|
+\frac{C_0\varepsilon}{r^{2+\sigma}}.
	\end{equation}
\end{lemma}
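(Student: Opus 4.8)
The plan is to specialize the differentiated angular equation \eqref{eq:angularsystemtwo} to $Z=\partial_t$ and bound the three terms on its right-hand side. First note that the last term drops out: since $\widehat{y}^a$ is a function of $\omega=x/|x|$ alone, which is annihilated by $\partial_t$, we have $\partial_t\widehat{y}^a\equiv 0$ and hence $\partial_{\widetilde L}\partial_t\widehat{y}^a=0$. Thus \eqref{eq:angularsystemtwo} reduces to $\partial_{\widetilde L}\partial_t\mathring{y}^a=-h_{1\partial_t}(\partial u,\partial\overline{y}^a)-g^{\alpha\beta}\partial_\alpha(\partial_t\mathring{u})\,\partial_\beta\overline{y}^a$, and it remains to estimate the two quadratic expressions after rewriting them in the modified coordinates $\widehat{y}^p$, where the decay estimates of Remark \ref{rem:sharpmetricdecay}, the expressions \eqref{eq:formalderiofu} for $\partial_{\widehat{y}^p}u$ and \eqref{eq:formalderiofangular} for $\partial_{\widehat{y}^q}\overline{y}^a$, and the identity $\partial_{u^*}\mathring{y}^a=\partial_t\mathring{y}^a-\partial_{v^*}\mathring{y}^a$ are available. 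The hypothesis $|\Omega\mathring{y}^a|\le 1/20$ enters only to guarantee that the angular block $\delta^a_b+\partial_{\widehat{y}^b}\mathring{y}^a$ in \eqref{eq:formalderiofangular} is of size $O(1)$.

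For the first term, since $\partial_t$ commutes with all the coordinate fields $\partial_{\widehat{y}^p}$, the Lie-derivative formula \eqref{eq:liederivative2} collapses to $h_{1\partial_t}(\partial u,\partial\overline{y}^a)=(\partial_t\widehat{h}_1^{pq})\,\partial_{\widehat{y}^p}u\,\partial_{\widehat{y}^q}\overline{y}^a$. I would then substitute \eqref{eq:formalderiofu} and \eqref{eq:formalderiofangular} and examine the resulting terms. The one that could produce $\partial_t\mathring{y}^a$ is $\partial_t\widehat{h}_1^{u^*u^*}\cdot\partial_{u^*}\mathring{y}^a$, and here the point is that Remark \ref{rem:sharpmetricdecay} (the third displayed estimate) together with $\sigma=\min\{\gamma',1-3\varepsilon\}$ gives $|\partial_t\widehat{h}_1^{u^*u^*}|\lesssim\varepsilon r^{-1-\sigma}$ in the wave zone, so this term is $\lesssim\varepsilon r^{-1-\sigma}(|\partial_t\mathring{y}^a|+|\partial_{v^*}\mathring{y}^a|)$, which is exactly of the allowed form. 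The only component decaying merely like $\varepsilon/r$, namely $\partial_t\widehat{h}_1^{v^*u^*}$, always appears against $\partial_{v^*}\mathring{y}^a$ and hence contributes to the first term of \eqref{eq:tderiofangular}; the crossed components $\widehat{h}_1^{u^*a},\widehat{h}_1^{v^*a},\widehat{h}_1^{ab}$ carry extra powers of $1/r$, so all their contributions are quadratically small and land in $\varepsilon r^{-2-\sigma}$.

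The second term, $g^{\alpha\beta}\partial_\alpha(\partial_t\mathring{u})\,\partial_\beta\overline{y}^a=\widehat{g}^{pq}\partial_{\widehat{y}^p}(\partial_t\mathring{u})\,\partial_{\widehat{y}^q}\overline{y}^a$, is handled in the same spirit, using that from Remark \ref{rem:Ithree} one has $\partial_{v^*}(\partial_t\mathring{u})=\tfrac12\partial_{\Ls}(\partial_t\mathring{u})=O(\varepsilon r^{-1-\sigma})$ and $\partial_{\widehat{y}^a}(\partial_t\mathring{u})=O(\varepsilon r^{-\sigma})$, while Proposition \ref{prop:badderiofu} gives $\partial_{u^*}(\partial_t\mathring{u})=\tfrac12\partial_{\Lbs}(\partial_t\mathring{u})=\tfrac12\partial_t\widehat{h}_1^{v^*u^*}+O(\varepsilon r^{-1-\sigma})=O(\varepsilon r^{-1})$. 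Contracting against $\widehat{g}^{pq}$, whose leading piece is $\widehat{g}^{v^*u^*}=-2(1+M/r)$ and whose off-diagonal, mixed and angular pieces are $O(\varepsilon r^{-1})$, $O(\varepsilon r^{-2})$ and $O(r^{-2})$ respectively, the leading contribution is $-\widehat{g}^{v^*u^*}\partial_{u^*}(\partial_t\mathring{u})\,\partial_{v^*}\mathring{y}^a=O(\varepsilon r^{-1})|\partial_{v^*}\mathring{y}^a|$; the term $-\widehat{g}^{v^*u^*}\partial_{v^*}(\partial_t\mathring{u})\,\partial_{u^*}\mathring{y}^a$ yields $O(\varepsilon r^{-1-\sigma})(|\partial_t\mathring{y}^a|+|\partial_{v^*}\mathring{y}^a|)$, and the purely angular piece $\widehat{g}^{bc}\partial_{\widehat{y}^b}(\partial_t\mathring{u})$ contributes $O(\varepsilon r^{-2-\sigma})$. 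Adding the estimates for the two terms gives \eqref{eq:tderiofangular}.

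I expect the main obstacle to be the bookkeeping that distinguishes the coefficient $\varepsilon/r$ permitted in front of $\partial_{v^*}\mathring{y}^a$ from the strictly faster $\varepsilon/r^{1+\sigma}$ required in front of $\partial_t\mathring{y}^a$: every occurrence of $\partial_{u^*}\mathring{y}^a$ (and hence of $\partial_t\mathring{y}^a$ via $\partial_t\mathring{y}^a=\partial_{u^*}\mathring{y}^a+\partial_{v^*}\mathring{y}^a$) must be matched either with a genuinely faster-decaying metric component---$\widehat{h}_1^{u^*u^*}$ or the angular components---or with $\Ls$-derivatives of $\mathring{u}$, which decay like $\varepsilon r^{-1-\sigma}$ by the wave coordinate condition. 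The crucial quantitative input is therefore that, after one time derivative, the ``bad'' component $\widehat{h}_1^{u^*u^*}$ still enjoys the $\varepsilon r^{-1-\sigma}$ decay, which follows from Remark \ref{rem:sharpmetricdecay} and the definition of $\sigma$; everything else is routine.
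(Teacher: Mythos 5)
Your proposal is correct and follows essentially the same route as the paper: drop the term $\pa_{\widetilde L}\pa_t\widehat y^a=0$, rewrite $h_{1\pa_t}(\pa u,\pa\overline y^a)=(\pa_t\widehat h_1^{pq})\pa_{\widehat y^p}u\,\pa_{\widehat y^q}\overline y^a$ and $g^{\alpha\beta}\pa_\alpha\pa_t\mathring u\,\pa_\beta\overline y^a=\widehat g^{pq}\pa_{\widehat y^p}\pa_t\mathring u\,\pa_{\widehat y^q}\overline y^a$ in the modified null coordinates, and track which metric components land against $\pa_{v^*}\mathring y^a$ versus $\pa_t\mathring y^a$ using Remark \ref{rem:sharpmetricdecay}, \eqref{eq:formalderiofu}, \eqref{eq:formalderiofangular} and \eqref{eq:highbadderi}. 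The bookkeeping you highlight (the $\varepsilon/r$ coefficient only ever multiplying $\pa_{v^*}\mathring y^a$, with $\pa_t\widehat h_1^{u^*u^*}$ supplying the extra $r^{-\sigma}$ against $\pa_{u^*}\mathring y^a$) is exactly the content of the paper's two displayed estimates.
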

\begin{proof}
	If $Z=\pa_t$, we have $\partial_{\widetilde{L}} \pa_t \widehat{y}^a=0$ and by \eqref{eq:formalderiofu} and \eqref{eq:formalderiofangular}
	\begin{equation*}
		h_{1Z}(\pa u, \pa \overline{y}^a)=(\pa_th_1^{\alpha\beta})\pa_\alpha u\pa_\beta\overline{y}^a=(\pa_t\widehat{h}_1^{pq})\pa_{\widehat{y}^p} u\pa_{\widehat{y}^q}\overline{y}^a =O(\frac{\varepsilon}{r})\pa_{v^*}\mathring{y}^a+O(\frac{\varepsilon}{r^{1+\sigma}})\pa_{t}\mathring{y}^a+O(\frac{\varepsilon}{r^{2+\sigma}}).
	\end{equation*}
	By Proposition \ref{prop:eikonalintro} and \eqref{eq:highbadderi}
	\[ \pa_{\widehat{y}^p}\pa_t\mathring{u}=\Bigl(O(\frac{\varepsilon}{r^{1+\sigma}}),~ \frac{\pa_t(\widehat{h}_1^{v^*u^*})}{2}+O(\frac{\varepsilon}{r^{1+\sigma}}), ~O(\frac{\varepsilon}{r^{\sigma}}),~ O(\frac{\varepsilon}{r^{\sigma}})\Bigr),
	\]
	Hence by \eqref{eq:formalderiofangular}
	\begin{align*}
		g^{\alpha\beta} \partial_\alpha \pa_t\mathring{u}\, \partial_\beta \overline{y}^a=\widehat{g}^{pq}\pa_{\widehat{y}^p}\pa_t\mathring{u}\pa_{\widehat{y}^q}\overline{y}^a=O(\frac{\varepsilon}{r})\pa_{v^*}\mathring{y}^a+O(\frac{\varepsilon}{r^{1+\sigma}})\pa_{t}\mathring{y}^a+O(\frac{\varepsilon}{r^{2+\sigma}}).
	\end{align*}
	Then this lemma follows from \eqref{eq:angularsystemtwo} with $Z=\pa_t$ and the assumption $|\Omega{\mathring{y}^a}|\leq1/20$.
\end{proof}

\begin{lemma}\label{lem:honeomega}
	If $|\Omega{\mathring{y}^a}|\leq1/20$ we have
	\begin{equation}\label{eq:aderiofangular}
		|\pa_{\widetilde{L}}\Omega \mathring{y}^a|\leq \frac{C_0\varepsilon}{r}||\pa_{v^*}\mathring{y}^a|+\frac{C_0\varepsilon}{r^{1+\sigma}}|\pa_t\mathring{y}^a|+\frac{C_0\varepsilon|\Omega \mathring{y}^a|}{r^2}+\frac{C_0\varepsilon}{r^{2+\sigma}}.
	\end{equation}
\end{lemma}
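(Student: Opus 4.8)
The plan is to mimic the proof of Lemma~\ref{lem:honet}, now taking $Z=\Omega_{ij}$ in the differentiated angular equation \eqref{eq:angularsystemtwo}, which in this case reads
\[
\partial_{\widetilde{L}} \Omega \mathring{y}^a=-h_{1\Omega}(\partial u,\partial \overline{y}^a)-g^{\alpha\beta} \partial_\alpha \Omega\mathring{u}\, \partial_\beta \overline{y}^a-\partial_{\widetilde{L}} \Omega \widehat{y}^a,
\]
using $g_{0\Omega}=\mathcal{L}_\Omega g_0=0$ and $\Omega u^*=0$ in the region $r>t/2$. I would bound the three terms on the right separately, using throughout the formulas \eqref{eq:formalderiofu} for $\partial_{\widehat{y}^p}u$ and \eqref{eq:formalderiofangular} for $\partial_{\widehat{y}^p}\overline{y}^a$, the metric decay of Remarks~\ref{rem:sharpmetricdecay} and~\ref{rem:limit}, and the relation $\partial_{u^*}\mathring{y}^a=\partial_t\mathring{y}^a-\partial_{v^*}\mathring{y}^a$ together with $\Omega\sim r\,\overline{\partial}$ to reduce every angular derivative of $\mathring{y}^a$ to the combinations $\partial_{v^*}\mathring{y}^a$, $\partial_t\mathring{y}^a$ and $\Omega\mathring{y}^a$, exactly as is done for $\partial_{v^*}\mathring{y}^a$ in \eqref{eq:vderiofangular}.

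For the Lie-derivative term I would expand $h_{1\Omega}(\partial u,\partial \overline{y}^a)$ by \eqref{eq:liederivative2}, handling the commutators $[\Omega,\partial_\beta]u$ and $[\Omega,\partial_\beta]\overline{y}^a$ with Lemma~\ref{lem:quadraticterms}: formula \eqref{eq:commutarorrotationquadraticform} replaces these by $\partial_r$- and $\overline{\partial}$-derivatives of $u$ and $\overline{y}^a$, and since $\partial_r\overline{y}^a$ and $\overline{\partial}\overline{y}^a$ each split into an $O(1/r)$ piece from $\widehat{y}^a$ plus a $\tfrac1r\Omega\mathring{y}^a$ piece from $\mathring{y}^a$, and since the $LT$-components of $h_1$ contracted against $\partial u\sim-\Ls$ enjoy the better decay of Proposition~\ref{prop:sharpmetricdecay} while the generic $|h_1|\les\varepsilon r^{-1}$, the last two summands of \eqref{eq:liederivative2} produce precisely $\tfrac{C_0\varepsilon}{r}|\partial_{v^*}\mathring{y}^a|+\tfrac{C_0\varepsilon}{r^{1+\sigma}}|\partial_t\mathring{y}^a|+\tfrac{C_0\varepsilon|\Omega\mathring{y}^a|}{r^2}+\tfrac{C_0\varepsilon}{r^{2+\sigma}}$, the $|\Omega\mathring{y}^a|/r^2$ contribution coming from the term $h_1^{\alpha i}\overline{\partial}_j\mathring{y}^a$. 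The first summand $(\Omega h_1^{\alpha\beta})\partial_\alpha u\,\partial_\beta\overline{y}^a$ is the one whose dominant piece involves $\Omega\widehat{h}_1^{u^*a}$; here I would invoke the wave coordinate condition \eqref{eq:wcc:Y}, exactly as \eqref{eq:wcc_L} was used in Proposition~\ref{prop:badderiofu} for the $\partial_t$-component, to trade this for $\partial$-derivatives of $\widehat{h}_1^{v^*a}$ and $\widehat{h}_1^{v^*u^*}$, which carry the needed extra decay. For the second term $g^{\alpha\beta}\partial_\alpha\Omega\mathring{u}\,\partial_\beta\overline{y}^a$ I would use the derivatives of $\Omega\mathring{u}$ supplied by Proposition~\ref{prop:eikonalintro}, Remark~\ref{rem:Ithree} and \eqref{eq:highbadderi} (which already covers $Z=\Omega_{ij}$): $\partial_{v^*}\Omega\mathring{u}=O(\varepsilon r^{-1-\sigma})$, $\partial_{u^*}\Omega\mathring{u}=\tfrac12\Omega\widehat{h}_1^{v^*u^*}+O(\varepsilon r^{-1-\sigma})=O(\varepsilon r^{-1})$ and $\partial_{\widehat{y}^c}\Omega\mathring{u}=O(\varepsilon r^{-\sigma})$; contracting with \eqref{eq:formalderiofangular} again gives $\tfrac{C_0\varepsilon}{r}|\partial_{v^*}\mathring{y}^a|+\tfrac{C_0\varepsilon}{r^{1+\sigma}}|\partial_t\mathring{y}^a|+\tfrac{C_0\varepsilon}{r^{2+\sigma}}$. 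Finally, since $\Omega\widehat{y}^a$ is a function of the angular variables alone and is $r$-independent, the $\partial_{v^*}$- and $\partial_{u^*}$-parts of $\widetilde{L}$ annihilate it, so by \eqref{eq:tildaL} the remaining angular part yields $\partial_{\widetilde{L}}\Omega\widehat{y}^a=O(\varepsilon r^{-2-\sigma})$.

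The main obstacle, as already in the Trautman computations of Section~\ref{sec:4}, is that individual terms in expressions of the form $(\partial_A h_1)\cdot(\cdots)+h_1\cdot(\partial_A\cdots)$ are worse than their sum, so the bookkeeping must be organized to display the cancellations---via the wave coordinate condition \eqref{eq:wcc:Y} and via the $\partial_L(r h_1)$-type gains of Proposition~\ref{prop:specialsharpmetricdecay}---rather than estimating naively term by term; one must also keep track that the $\Omega$ appearing in $\Omega\mathring{y}^a$ on the right is the genuine rotation field, so that \eqref{eq:commutarorrotationquadraticform} and the identity $\partial_{u^*}\mathring{y}^a=\partial_t\mathring{y}^a-\partial_{v^*}\mathring{y}^a$ can be applied to close the estimate under the smallness hypothesis $|\Omega\mathring{y}^a|\le1/20$. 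Once \eqref{eq:vderiofangular}, \eqref{eq:tderiofangular} and \eqref{eq:aderiofangular} are in hand one integrates this system along the integral curves of $\widetilde{L}$ and runs a continuity argument to recover the desired bounds on $\mathring{y}^a$ and its derivatives; that step is carried out afterwards.
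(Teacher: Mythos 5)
Your proposal is correct and follows essentially the same route as the paper: the same decomposition of \eqref{eq:angularsystemtwo} with $Z=\Omega$, with $(\Omega h_1^{\alpha\beta})\pa_\alpha u\,\pa_\beta\overline{y}^a$ and $g^{\alpha\beta}\pa_\alpha\Omega\mathring{u}\,\pa_\beta\overline{y}^a$ treated exactly as in Lemma~\ref{lem:honet}, the commutator terms handled via \eqref{eq:commutarorrotationquadraticform} (which is where the $\varepsilon|\Omega\mathring{y}^a|/r^2$ contribution arises), and $\pa_{\widetilde{L}}\Omega\widehat{y}^a=O(\varepsilon r^{-2-\sigma})$ from \eqref{eq:tildaL}. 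The one deviation is your appeal to the wave coordinate condition \eqref{eq:wcc:Y} for the $(\Omega h_1^{\alpha\beta})\pa_\alpha u\,\pa_\beta\overline{y}^a$ term: this detour is unnecessary (and \eqref{eq:wcc:Y} in any case controls $\pa_{\Lbs}\widehat{h}_1^{u^*a}$ rather than $\Omega\widehat{h}_1^{u^*a}$), since the required $O(\varepsilon r^{-2-\sigma})$ bound on $\Omega\widehat{h}_1^{u^*a}$ already follows directly from the sharp decay estimates of Remark~\ref{rem:sharpmetricdecay}, just as in the $Z=\pa_t$ case.
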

\begin{proof}
	The estimates for$(\Omega h_1^{\alpha\beta})\pa_\alpha u\pa_\beta\overline{y}^a$ and $g^{\alpha\beta} \partial_\alpha \Omega\mathring{u}\, \partial_\beta \overline{y}^a$ are similar to those in the proof of Lemma \ref{lem:honet}. Therefore we are left with $h_1^{\alpha\beta} \partial_\alpha u\, [Z,\partial_\beta] \overline{y}^a$, $h_1^{\alpha\beta} [Z,\partial_\alpha] u\,\partial_\beta \overline{y}^a$ and $\partial_{\widetilde{L}} \Omega \widehat{y}^a$. By Lemma \ref{lem:quadraticterms}, if $\Omega\!=\!x^i\pa_j-x^j\pa_i$
	\[
	h_1^{\alpha\beta} [\pa_\beta, \Omega] u
	=k^{\alpha \Omega/r}\pa_r u
	+\big(k^{\alpha i}
	\overline{\pa}_{j}-k^{\alpha j} \overline{\pa}_i\big)u,
	\]
	with
	$h_1^{\alpha \Omega/r}\!\!=k^{\alpha i} \omega_{j}-k^{\alpha j}
	\omega_i$. By \eqref{eq:formalderiofu} and  \eqref{eq:formalderiofangular} we conclude that
	\[
	|h_1^{\alpha\beta} \partial_\alpha u\, [Z,\partial_\beta] \overline{y}^a|+|h_1^{\alpha\beta} [Z,\partial_\alpha] u\,\partial_\beta \overline{y}^a|=O(\frac{\varepsilon}{r})\pa_{v^*}\mathring{y}^a+O(\frac{\varepsilon}{r^{1+\sigma}})\pa_{t}\mathring{y}^a+O(\frac{\varepsilon}{r^2})|\Omega \mathring{y}^a|.
	\]
	In view of \eqref{eq:tildaL}, we have $\partial_{\widetilde{L}} \Omega \widehat{y}^a=O(\varepsilon r^{-2-\sigma})\pa_{\widehat{y}^c}\Omega\widehat{y}^a=O(\varepsilon r^{-2-\sigma})$.
\end{proof}

\begin{proposition}\label{prop:deriofangular}
	If $\varepsilon>0$ is sufficiently small we have for $r>t/2$ and $r>2$ with constants $C_1=2C_0C_\sigma$ for some universal constant $C_\sigma$
	\begin{align}
		|\pa_{v^*}\mathring{y}^a|&\leq \frac{2C_0\varepsilon}{r^{2+\sigma}},\label{eq:vderi}\\
		|\pa_{t}\mathring{y}^a|&\leq\frac{C_1\varepsilon}{r^{1+\sigma}},\label{eq:tderi}\\
		|\Omega \mathring{y}^a|&\leq\frac{C_1\varepsilon}{r^{1+\sigma}}.\label{eq:aderi}
	\end{align}
\end{proposition}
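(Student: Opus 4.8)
The plan is to regard \eqref{eq:vderiofangular}, \eqref{eq:tderiofangular} and \eqref{eq:aderiofangular} as a coupled transport system for $(\pa_{v^*}\mathring{y}^a,\pa_t\mathring{y}^a,\Omega\mathring{y}^a)$ and to integrate it along the integral curves of $\widetilde{L}=g^{\alpha\beta}\pa_\alpha u\,\pa_\beta$ towards future null infinity, by a continuity argument. By \eqref{eq:tildaL}, $\widetilde{L}=(-2+O(\varepsilon r^{-\sigma}))\pa_{v^*}+O(\varepsilon r^{-1-\sigma})\pa_{u^*}+O(\varepsilon r^{-2-\sigma})\pa_{\widehat{y}^a}$, so $\widetilde{L}r^*=-1+O(\varepsilon r^{-\sigma})$ and $\widetilde{L}u^*=O(\varepsilon r^{-1-\sigma})$; consequently every integral curve through a point of $\{r>t/2,\ r>2\}$ remains in $\{r>t/2\}$, the radial variable $r$ is a good parameter on it with $\tfrac{d}{dr}=-(1+O(\varepsilon r^{-\sigma}))\pa_{\widetilde{L}}$, and the curve runs out to $v^*\to\infty$. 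Since $\mathring{y}^a$, the curve $X(s)$, and $u$ all depend smoothly on the metric and on the eikonal function, both of which decay (Remark \ref{rem:sharpmetricdecay}, Proposition \ref{prop:eikonalintro}), the derivatives $\pa_{v^*}\mathring{y}^a,\pa_t\mathring{y}^a,\Omega\mathring{y}^a$ all tend to $0$ along each such curve as $v^*\to\infty$, and moreover a crude estimate $|\pa_t\mathring{y}^a|+|\Omega\mathring{y}^a|\les\varepsilon r^{-\sigma}$ holds (established as in Remark \ref{rem:Ithree} by commuting $\pa_t,\Omega$ through \eqref{eq:angulareqn}); in particular $|\Omega\mathring{y}^a|\leq 1/20$ once $\varepsilon$ is small and $r>2$, so the three lemmas above may be invoked.

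I would then run the continuity argument with bootstrap assumptions $|\pa_t\mathring{y}^a|\leq 2C_1\varepsilon r^{-1-\sigma}$, $|\Omega\mathring{y}^a|\leq 2C_1\varepsilon r^{-1-\sigma}$, $|\pa_{v^*}\mathring{y}^a|\leq 4C_0\varepsilon r^{-2-\sigma}$ on a sub-arc $\{v^*\geq v^*_0\}$ of a fixed $\widetilde{L}$-curve; the preceding paragraph makes the set of admissible $v^*_0$ nonempty and relatively closed, and the bootstrap bounds force $|\Omega\mathring{y}^a|\leq1/20$. The order of operations is crucial: first substitute \eqref{eq:vderiofangular} into \eqref{eq:tderiofangular} to eliminate $\pa_{v^*}\mathring{y}^a$, and absorb the resulting $O(\varepsilon^2)$ terms using $r>2$, to obtain the \emph{closed} inequality $|\pa_{\widetilde{L}}\pa_t\mathring{y}^a|\les\varepsilon r^{-1-\sigma}|\pa_t\mathring{y}^a|+\varepsilon r^{-2-\sigma}$; integrating this along the curve from $v^*=\infty$, where $\pa_t\mathring{y}^a=0$, and applying Gr\"onwall gives \eqref{eq:tderi}. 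Feeding \eqref{eq:tderi} back into \eqref{eq:vderiofangular} then yields \eqref{eq:vderi}. Finally, inserting \eqref{eq:vderi} and \eqref{eq:tderi} into \eqref{eq:aderiofangular} and absorbing $O(\varepsilon^2)$ terms produces the closed inequality $|\pa_{\widetilde{L}}\Omega\mathring{y}^a|\les\varepsilon r^{-2}|\Omega\mathring{y}^a|+\varepsilon r^{-2-\sigma}$, which integrated from $v^*=\infty$ via Gr\"onwall gives \eqref{eq:aderi}.

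The integrations close because $\sigma>0$: along a curve $\int_r^\infty\rho^{-2-\sigma}\,d\rho=\tfrac{1}{1+\sigma}r^{-1-\sigma}$, $\int_r^\infty\rho^{-1-\sigma}\,d\rho=\tfrac1\sigma r^{-\sigma}$ and $\int_r^\infty\rho^{-2}\,d\rho=r^{-1}$ are all finite, so for $\varepsilon$ small and $r>2$ the Gr\"onwall factors $\exp(C\varepsilon r^{-\sigma})$ and $\exp(C\varepsilon r^{-1})$ are bounded by an absolute constant. Tracking these constants one obtains $|\pa_t\mathring{y}^a|,|\Omega\mathring{y}^a|\leq C_0C_\sigma\varepsilon r^{-1-\sigma}$ and $|\pa_{v^*}\mathring{y}^a|\leq 2C_0\varepsilon r^{-2-\sigma}$ for a suitable universal $C_\sigma$; setting $C_1=2C_0C_\sigma$ these strictly improve the bootstrap assumptions, so the continuity argument closes and the stated bounds hold throughout $\{r>t/2,\ r>2\}$.

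I expect the main obstacle to be the initialization of the integration at $v^*=\infty$: one needs not merely $\mathring{y}^a\to0$ but the decay of $\pa_t\mathring{y}^a$ and $\Omega\mathring{y}^a$, both to start the integration there and to secure the hypothesis $|\Omega\mathring{y}^a|\leq1/20$ of the three lemmas — which is precisely where a preliminary crude estimate (the analogue for $\mathring{y}^a$ of Remark \ref{rem:Ithree}) enters. The remaining issues — carrying out the decoupling in the correct order, and the $\varepsilon$-bookkeeping needed to absorb the quadratic terms and the Gr\"onwall exponentials so that the improved constants are genuinely smaller than the bootstrap ones — are routine but must be done with some care.
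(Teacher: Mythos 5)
Your proposal is correct and follows essentially the same route as the paper: a bootstrap/continuity argument that integrates the transport inequalities \eqref{eq:tderiofangular} and \eqref{eq:aderiofangular} along the integral curves of $\widetilde{L}$ from $v^*=\infty$ (where $\pa_t\mathring{y}^a,\Omega\mathring{y}^a\to0$ by construction), using the algebraic relation \eqref{eq:vderiofangular} to eliminate $\pa_{v^*}\mathring{y}^a$ and closing by smallness of $\varepsilon$. The only cosmetic difference is that you invoke Gr\"onwall where the paper absorbs the linear terms directly into the improved bootstrap constants; the two are equivalent here.
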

\begin{proof}
	We can prove this by assuming these estimates are true  and show that they imply better estimates if $\varepsilon$ is sufficiently small. First from \eqref{eq:vderiofangular} and the assumed bound \eqref{eq:tderi}, we prove \eqref{eq:vderi} with $2C_0$ replaced by $3C_0/2$
	if $\varepsilon$ is sufficiently small such that $2C_0^2C_\sigma\varepsilon\!\leq\! 1/2$. From the construction of $\overline{y}^a$ we see that $\pa_t \mathring{y}^a, \Omega \mathring{y}^a\!\to\!0$ as $v^*\!\to\!\infty$. If we integrate \eqref{eq:angularsystemtwo} with $Z\!=\!\pa_t$ and use the assumed bound \eqref{eq:vderi} and \eqref{eq:tderi} we obtain
	\[
	|\pa_t\mathring{y}^a|\leq \frac{\varepsilon C_\sigma(2\varepsilon C_0^2C_0+\varepsilon C_0C_1+C_0)}{r^{1+\sigma}}\leq\frac{3C_1\varepsilon/4}{r^{1+\sigma}},
	\]
	if $\varepsilon$ is sufficiently small such that $2C_0^2\varepsilon+\varepsilon C_1\leq1/2$ which proves \eqref{eq:tderi}. If we integrate \eqref{eq:angularsystemtwo} with $Z=\Omega$ and use the assumed bound \eqref{eq:vderi}, \eqref{eq:tderi} and \eqref{eq:aderi} we obtain
	\[
	|\Omega \mathring{y}^a|\leq \frac{\varepsilon C_\sigma(2\varepsilon C_0^2C_0+2\varepsilon C_0C_1+C_0)}{r^{1+\sigma}}\leq\frac{3C_1\varepsilon/4}{r^{1+\sigma}},
	\]
	if $\varepsilon$ is sufficiently small such that $2C_0^2\varepsilon+2\varepsilon C_1\leq1/2$ which proves \eqref{eq:aderi}.
\end{proof}
We now turn to higher order derivatives of $\mathring{y}^a$.
\begin{proposition}\label{prop:highderiofangular}
	We have
	\begin{equation}
		{\sum}_{|I|\leq2}|Z^{*I}\mathring{y}^a|=O(\frac{\varepsilon}{r^{1+\sigma}}).
	\end{equation}
\end{proposition}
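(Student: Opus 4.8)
The plan is to run the same bootstrap argument as in Proposition \ref{prop:deriofangular}, but now commuting up to two of the fields in $\mathcal{X}=\{S^*,\Omega_{ij},\pa_t\}$ through the defining relation \eqref{eq:angulareqn} and integrating the resulting transport equations along the integral curves of $\widetilde{L}=g^{\alpha\beta}\pa_\alpha u\,\pa_\beta$ backwards from $v^*=\infty$, where $\mathring{y}^a$ and all its derivatives vanish by construction. First I would observe that it suffices to bound $Z^{*I}\mathring{y}^a$ for $Z^*$ a product of fields from $\mathcal{X}$: in the region $r^*>t/2$ the translations $\pa_{x^{*\alpha}}$ and the boosts $x^{*i}\pa_{t^*}+t^*\pa_{x^{*i}}$ are combinations of type $O(1)\pa_t+O(r^{-1})(S^*,\Omega_{ij})$, so such bounds yield the full statement with at worst the same decay. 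Note also that already the first-order bound $|S^*\mathring{y}^a|=O(\varepsilon r^{-1-\sigma})$ is not contained in Proposition \ref{prop:deriofangular} and must be established here; it fits the same scheme, since $[S^*,\pa_{\widehat{y}^a}]=0$ and $S^*\widehat{y}^a=0$, the only extra input being that $\widetilde{\mathcal{L}}_{S^*}g_0=\kappa_3 g_0-2(\kappa_1-\kappa_2)\overline{g}_0\neq0$, which is absorbed exactly as in Remark \ref{rem:Ithree} using $\overline{g}_0(\pa u^*,\cdot)=0$.

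As in the first-order case I would separate the $\pa_{v^*}$ direction from the rest. Differentiating $\widehat{g}^{pq}\pa_{\widehat{y}^p}u\,\pa_{\widehat{y}^q}\overline{y}^a=0$ once or twice in $Z,Z'\in\mathcal{X}$ and solving for the coefficient of $\pa_{v^*}$, which is $-2+O(\varepsilon r^{-1})$ by the estimates for $\widehat{g}$, \eqref{eq:formalderiofu} and Proposition \ref{prop:deriofangular}, expresses $\pa_{v^*}Z\mathring{y}^a$ and $\pa_{v^*}Z'Z\mathring{y}^a$ through $\pa_t$ and angular derivatives of $\mathring{y}^a$ of no higher order, with an extra power $r^{-1}$; combined with the remaining estimates this gives $|Z^{*I}\mathring{y}^a|=O(\varepsilon r^{-2-\sigma})$ whenever $\pa_{v^*}$ occurs in $Z^{*I}$, $|I|\le2$. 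For the remaining products $Z'Z$ with $Z',Z\in\{\pa_t,\Omega_{ij},S^*\}$, commuting $Z'$ through \eqref{eq:angularsystemtwo} (with the obvious $\widetilde{\mathcal{L}}$-modifications of Remark \ref{rem:Ithree} when $S^*$ is involved) produces
\[
\partial_{\widetilde{L}}Z'Z\mathring{y}^a=-(\mathcal{L}_{Z'}h_{1Z})(\pa u,\pa\overline{y}^a)-h_{1Z}(\pa Z'u,\pa\overline{y}^a)-h_{1Z}(\pa u,\pa Z'\overline{y}^a)-g^{\alpha\beta}\pa_\alpha Z'Z\mathring{u}\,\pa_\beta\overline{y}^a+\text{(commutators and l.o.t.)},
\]
where the commutator $[Z',\partial_{\widetilde{L}}]Z\mathring{y}^a=\partial_{\mathcal{L}_{Z'}\widetilde{L}}Z\mathring{y}^a$ is of lower order, $\mathcal{L}_{Z'}\widetilde{L}$ being controlled through $g_{Z'}$ and $\pa Z'u$.

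The source would then be estimated just as in Proposition \ref{prop:deriofangular}. By Remark \ref{rem:sharpmetricdecay} the metric coefficients $\mathcal{L}_{Z'}h_{1Z}^{pq}$, contracted against $\pa u=(O(\varepsilon r^{-1-\sigma}),1+O(\varepsilon r^{-1}),O(\varepsilon r^{-\sigma}),O(\varepsilon r^{-\sigma}))$ and $\pa\overline{y}^a=(\pa_{v^*}\mathring{y}^a,\pa_{u^*}\mathring{y}^a,\delta_b^a+\pa_{\widehat{y}^b}\mathring{y}^a)$, are bounded by $\varepsilon r^{-1}$ times the bootstrap quantities plus $O(\varepsilon r^{-2-\sigma})$; the terms carrying $Z'u$ and $Z'Z\mathring{u}$ are handled with Proposition \ref{prop:eikonalintro}, Remark \ref{rem:Ithree}, and especially Proposition \ref{prop:badderiofu}, which gives the sharp $\pa_{\Lbs}Z^I\mathring{u}=Z^I(\widehat{h}_1^{v^*u^*})+O(\varepsilon r^{-1-\sigma})$, so that $g^{\alpha\beta}\pa_\alpha Z'Z\mathring{u}\,\pa_\beta\overline{y}^a$ is again $O(\varepsilon r^{-1})|\pa_{v^*}\mathring{y}^a|+O(\varepsilon r^{-1-\sigma})$; the rotation commutators are dispatched with Lemma \ref{lem:quadraticterms}. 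Then I would close the bootstrap: assuming $\sum_{|I|\le2}|Z^{*I}\mathring{y}^a|\le C_2\varepsilon r^{-1-\sigma}$, these estimates together with the structure \eqref{eq:tildaL} of $\widetilde{L}$ (so that integration along $\widetilde{L}$ is, up to small errors, integration in $v^*$ of an $O(\varepsilon r^{-1-\sigma})$ quantity, which converges) recover the same bound with a strictly smaller constant once $\varepsilon$ is small, first improving the $\pa_{v^*}$-derivatives to $O(\varepsilon r^{-2-\sigma})$ and then the $\pa_t,\Omega_{ij},S^*$ derivatives to $O(\varepsilon r^{-1-\sigma})$.

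The main obstacle I anticipate is the cancellation that is already delicate at first order: individually, the terms of the source in which a $\pa_{u^*}$ or an angular derivative falls on $\mathring{u}$ decay only like $\varepsilon r^{-\sigma}$, far too slowly to be integrated against $dv^*$, and they must be recombined through the wave-coordinate identities \eqref{eq:wcc_L}--\eqref{eq:wccLb} and the refined expression of Proposition \ref{prop:badderiofu} so that only the good $\pa_{\Ls}$-component, which decays like $\varepsilon r^{-1-\sigma}$, survives. Keeping this structure intact after a second commutation — in particular through the scaling field $S^*$, for which $\widetilde{\mathcal{L}}_{S^*}g_0\neq0$ — is where the real work of the proof lies.
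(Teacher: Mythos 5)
Your proposal follows essentially the same route as the paper: commute the fields of $\mathcal{X}=\{S^*,\Omega_{ij},\pa_t\}$ (with the $\widetilde{X}$, $\widetilde{\mathcal{L}}_X$ modifications for $S^*$) through $\pa_{\widetilde{L}}\overline{y}^a=0$, bound the resulting source using the asymptotics of $\widehat{g}$ and of $Z^{*I}\mathring{u}$, and close a bootstrap by integrating backwards along $\widetilde{L}$, first improving the $\pa_{v^*}$-derivatives to $O(\varepsilon r^{-2-\sigma})$ and then the remaining derivatives to $O(\varepsilon r^{-1-\sigma})$. The only substantive difference is that the "delicate cancellation" you anticipate is not needed here: the paper does not invoke the wave-coordinate identities \eqref{eq:wcc_L}--\eqref{eq:wccLb} or the refined $\pa_{\Lbs}Z^I\mathring{u}$ of Proposition \ref{prop:badderiofu} in this proof, since the slowly decaying $O(\varepsilon r^{-\sigma})$ derivatives of $\mathring{u}$ are always contracted, via $g^{v^*u^*}$ or the $r^{-2}$ weight in $g^{ab}$, against quantities already small under the bootstrap (those refinements enter only in the subsequent Proposition \ref{prop:badderihigh}).
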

\begin{proof}
Following the proof of Proposition \ref{prop:eikonalintro} we commutate the vector fields $X\in\mathcal{X}=\{S^*=t\pa_t+x^{*i}\pa_{x^{*i}}, ~\Omega_{ij}, ~\pa_t\}$ through the equation \eqref{eq:angularsystemone}. Let
$\widetilde{X}\!\!=\!X\!-\!\delta_{X \!\Ss}\!$ and
$\widetilde{\mathcal{L}}_{\!X}\!\!=\!\mathcal{L}_{\!X}\!\!+\!2\delta_{X\!\Ss}$,
where $\delta_{X\! \Ss}\!\!=\!1$ if $X\!\!=\Ss\!\!$, and $=\!0$ otherwise.
Then $X \big(k(\pa u,\pa v\big)\!=\!( \widetilde{\mathcal{L}}_X k) (\pa u,\pa v)
\!+k(\pa\widetilde{X} u,\pa v)\!+k(\pa u,\pa\widetilde{X}v)\!$ and $\pa \widetilde{X}  \us\!\!=\!0$.
Since $g(\pa u, \pa\overline{y}^a)=0$ we get $\pa_{\widetilde{L}}\widetilde{X}\widetilde{Z}\overline{y}^a=-H(g, u,\overline{y}^a)$ where
\begin{align*}
	H(g, u,\overline{y}^a)&=\widetilde{\mathcal{L}}_X\widetilde{\mathcal{L}}_Zg(\pa u, \pa\overline{y}^a)+\widetilde{\mathcal{L}}_Xg(\pa u, \pa \widetilde{Z}\overline{y}^a)+\widetilde{\mathcal{L}}_Xg(\pa \widetilde{Z}u, \pa \overline{y}^a)+\widetilde{\mathcal{L}}_Zg(\pa u, \pa \widetilde{X}\overline{y}^a)\\
	&\quad+\widetilde{\mathcal{L}}_Zg(\pa \widetilde{X}u, \pa \overline{y}^a)+g(\pa \widetilde{X}\widetilde{Z}u, \pa\overline{y}^a)+g(\pa \widetilde{Z}u, \pa \widetilde{X}\overline{y}^a)+g(\pa \widetilde{X}u, \pa \widetilde{Z}\overline{y}^a).
\end{align*}
Here $\widetilde{\mathcal{L}}_X g
\!=\!\widetilde{\mathcal{L}}_X g_0\!+\widetilde{\mathcal{L}}_X h_1$, where
$\widetilde{\mathcal{L}}_{\pa_t} g_0\!=\!\widetilde{\mathcal{L}}_\Omega g_0\!=\!0$ and
$\widetilde{\mathcal{L}}_{\Ss} g_0\!=\!\kappa_3 g_0\!-2(\kappa_1\!-\kappa_2)\overline{g}_0$.
Here $\kappa_1\!\sim \! M\! \ln{r}\!/r$, $\kappa_2\!\sim \! \kappa_3\!\sim \!M\!/r$ and $\overline{g\overline{}^{}}_0(\pa u,\pa v)\!={g}_0^{ij}\pas_{\!i } u\,\pas_{\!\!j}v$.
Since $g_0(\pa \widetilde{X}^Iu, \pa\overline{y}^a)=O(\varepsilon r^{-2-\sigma})$ and $\overline{g}_0(\pa \widetilde{X}^Iu, \pa\overline{y}^a)=O(\varepsilon r^{-2-\sigma})$ for $|I|\leq 1$, we have $\widetilde{\mathcal{L}}_Xg_0(\pa \widetilde{X}^Iu, \pa\overline{y}^a)=O(\varepsilon r^{-2-\sigma})$ for $|I|\leq1$. Moreover, $\widetilde{\mathcal{L}}_{X} \overline{g}_0(\pa u\!,\pa \overline{y}^a)
=X(\overline{g}_0(\pa u\!, \pa\overline{y}^a))-\overline{g}_0(\pa \widetilde{X} u\!,\pa \overline{y}^a)
-\overline{g}_0(\pa u\!,\pa \widetilde{X} \overline{y}^a)$. It follows that
$|\widetilde{\mathcal{L}}_X^I g_0(\pa u\!,\pa \overline{y}^a)|=O(\varepsilon r^{-2-\sigma})|\Omega\widetilde{X}\overline{y}^a|+O(\varepsilon r^{-2-\sigma})$, for $|I|\!\leq \!2$. Hence
\begin{equation*}
	|\widetilde{\mathcal{L}}_X\widetilde{\mathcal{L}}_Zg(\pa u, \pa\overline{y}^a)+\widetilde{\mathcal{L}}_Xg(\pa \widetilde{Z}u, \pa \overline{y}^a)+\widetilde{\mathcal{L}}_Zg(\pa \widetilde{X}u, \pa \overline{y}^a)+g(\pa \widetilde{X}\widetilde{Z}u, \pa\overline{y}^a)|
	\!=O(\frac{\varepsilon} {r^{2+\sigma}})\Bigl(\!|\Omega\widetilde{X}\overline{y}^a\!|
+|\Omega\widetilde{Z}\overline{y}^a\!|\!\Bigr)+O(\frac{\varepsilon}{r^{2+\sigma}}).
\end{equation*}
Then it remains to control the terms containing second order derivatives of $\overline{y}^a$
\begin{multline*}
	|\widetilde{\mathcal{L}}_Xg(\pa u, \pa \widetilde{Z}\overline{y}^a)+\widetilde{\mathcal{L}}_Zg(\pa u, \pa \widetilde{X}\overline{y}^a)+g(\pa \widetilde{Z}u, \pa \widetilde{X}\overline{y}^a)+g(\pa \widetilde{X}u, \pa \widetilde{Z}\overline{y}^a)|\\
	=O(\frac{\varepsilon}{r})\Bigl(|\pa_{\Ls} \widetilde{X}\overline{y}^a|+|\pa_{\Ls}\widetilde{Z}\overline{y}^a|\Bigr)+O(\frac{\varepsilon}{r^{1+\sigma}})\Bigl(|\pa_{t} \widetilde{X}\overline{y}^a|+|\pa_{t}\widetilde{Z}\overline{y}^a|\Bigr)
	+O(\frac{\varepsilon}{r^{2+\sigma}})\Bigl(|\Omega \widetilde{X}\overline{y}^a|+|\Omega \widetilde{Z}\overline{y}^a|\Bigr)+O(\frac{\varepsilon}{r^{2+\sigma}}).
\end{multline*}
We have $\pa_{\widetilde{L}}\widetilde{X}\overline{y}^a=-\widetilde{\mathcal{L}}_Xg(\pa u, \pa\overline{y}^a)-g(\pa\widetilde{X}u, \pa\overline{y}^a)$, so $|\pa_{\widetilde{L}}\widetilde{X}\overline{y}^a|=O(\varepsilon r^{-2-\sigma})$.
By \eqref{eq:tildaL}
\[
|\pa_{\Ls}\widetilde{X}\overline{y}^a|\les|\pa_{\widetilde{L}}\widetilde{X}\overline{y}^a|+O(\frac{\varepsilon}{r^{1+\sigma}})|\pa_{t}\widetilde{X}\overline{y}^a|
+O(\frac{\varepsilon}{r^{2+\sigma}})|\Omega \widetilde{X}\overline{y}^a|.
\]
Therefore we conclude
\[
|\pa_{\widetilde{L}}\widetilde{X}\widetilde{Z}\mathring{y}^a|=O(\frac{\varepsilon}{r^{1+\sigma}})\Bigl(|\pa_{t}\widetilde{X}\tilde{y}^a|+|\pa_{t}\widetilde{Z}\tilde{y}^a|\Bigr)+O(\frac{\varepsilon}{r^{2+\sigma}})\Bigl(|\Omega \widetilde{X}\tilde{y}^a|+|\Omega \widetilde{Z}\tilde{y}^a|\Bigr)+O(\frac{\varepsilon}{r^{2+\sigma}}).
\]
where we used the facts that $\Omega\widetilde{X}\widehat{y}^a=O(1)$, $\pa_t\widetilde{X}\widehat{y}^a=0$ and $\pa_{\widetilde{L}}\widetilde{X}\widetilde{Z}\widehat{y}^a=O(\varepsilon r^{-2-\sigma})$. We repeat the proof of Proposition \ref{prop:deriofangular} and the conclusion follows.
\end{proof}
We now refine $\pa_t Z^I\mathring{y}^a$ with $\lvert I\rvert\leq1$.
\begin{proposition}\label{prop:badderihigh}
	If $Z\in\{\pa_t, \Omega_{ij}\}$ and $|I|\leq 1$	we have
	\begin{equation}\label{eq:badderihigh}
		\pa_tZ^I\mathring{y}^a=\frac12 Z^I(\widehat{h}_1^{v^*a})+\frac{r^*}{2}Z^I(\widehat{\slashed{\nabla}}_c\widehat{h}_1^{ac})+O(\frac{\varepsilon}{r^{2+\sigma}}).
	\end{equation}
\end{proposition}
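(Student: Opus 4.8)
The plan is to follow the scheme of the proof of Proposition~\ref{prop:badderiofu}, now applied to the angular evolution system \eqref{eq:angularsystemtwo} in place of the eikonal system \eqref{eq:eikonalsystemone}, with the leading part extracted from the wave coordinate condition \eqref{eq:wcc:Y} (the contraction with $\widehat A^a$) rather than from \eqref{eq:wcc_L}.

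\emph{Base case $|I|=0$, $Z=\pa_t$.} Since $\widehat y^a$ is a function of the angular variables only, $\pa_t\widehat y^a=0$, so \eqref{eq:angularsystemtwo} with $Z=\pa_t$ reads
\[
\pa_{\widetilde L}\,\pa_t\mathring y^a=-h_{1\pa_t}(\pa u,\pa\overline y^a)-g^{\alpha\beta}\pa_\alpha\pa_t\mathring u\,\pa_\beta\overline y^a .
\]
Using the formal derivative formulas \eqref{eq:formalderiofu}--\eqref{eq:formalderiofangular}, the refinement $\pa_t\mathring u=\tfrac12\widehat h_1^{v^*u^*}+O(\varepsilon r^{-1-\sigma})$ (together with its higher-order version $\pa_t Z^I\mathring u=\tfrac12 Z^I(\widehat h_1^{v^*u^*})+O(\varepsilon r^{-1-\sigma})$) from Proposition~\ref{prop:badderiofu}, and the bounds of Propositions~\ref{prop:deriofangular}--\ref{prop:highderiofangular}, I would first reduce the right-hand side to its leading contributions. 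Because for $Z=\pa_t$ all commutators $[\pa_t,\pa_\beta]$ in \eqref{eq:liederivative2} vanish in the $(t,x)$ coordinates, $h_{1\pa_t}(\pa u,\pa\overline y^a)=\pa_t\widehat h_1^{u^*a}+O(\varepsilon^2 r^{-2-\sigma})$, while the second term contributes, through $\widehat g^{ab}=(1-M/r)r^{-2}\widehat q^{ab}+\widehat h_1^{ab}$ and $\pa_t\mathring u=\tfrac12\widehat h_1^{v^*u^*}+\dots$, the leading piece $\tfrac1{2r^2}\widehat q^{ac}\pa_{\widehat y^c}\widehat h_1^{v^*u^*}$. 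Feeding the wave coordinate condition \eqref{eq:wcc:Y} into $\pa_t\widehat h_1^{u^*a}$ (by writing $\pa_t$ through $\pa_{\Lbs}$ and $\pa_{\Ls}$) then produces the terms $-\widehat{\slashed{\nabla}}_c\widehat h_1^{ac}$, $-\tfrac12\pa_{\Ls}\widehat h_1^{v^*a}$, $-2\widehat h_1^{v^*a}/r$, a term $-\tfrac1{2r^2}\widehat q^{ac}\pa_{\widehat y^c}\widehat h_1^{v^*u^*}$ that exactly cancels the one produced by $g^{\alpha\beta}\pa_\alpha\pa_t\mathring u\,\pa_\beta\overline y^a$, and $\pa_{\Ls}$-terms that decay faster by Proposition~\ref{prop:specialsharpmetricdecay}. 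Converting the surviving $\pa_{\Ls}$- and $1/r$-contributions into $\pa_{\widetilde L}$-derivatives via \eqref{eq:tildaL} and the sharp $L$-derivative estimates (Proposition~\ref{prop:specialsharpmetricdecay}, Remark~\ref{rem:specialsharpmetricdecay}), one arrives at
\[
\pa_{\widetilde L}\,\pa_t\mathring y^a=\pa_{\widetilde L}\!\Bigl(\tfrac12\widehat h_1^{v^*a}+\tfrac{r^*}{2}\,\widehat{\slashed{\nabla}}_c\widehat h_1^{ac}\Bigr)+O(\varepsilon r^{-3-\sigma}).
\]
Integrating along the integral curves of $\widetilde L$ from $\mathscr I^+$ inward---where $\pa_t\mathring y^a$, $\widehat h_1^{v^*a}$ and $\widehat{\slashed{\nabla}}_c\widehat h_1^{ac}$ all vanish---gives \eqref{eq:badderihigh} for $|I|=0$.

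\emph{Case $|I|=1$.} For $Z\in\{\pa_t,\Omega_{ij}\}$ I would commute $Z$ through \eqref{eq:angularsystemone} and through the identity $\pa_{\widetilde L}\pa_t\mathring y^a=-h_{1\pa_t}(\pa u,\pa\overline y^a)-g^{\alpha\beta}\pa_\alpha\pa_t\mathring u\,\pa_\beta\overline y^a$, exactly as in the proof of Proposition~\ref{prop:highderiofangular}, using the $\widetilde{\mathcal L}$-formalism; the rotation commutators $[\Omega,\pa]$ are handled by Lemma~\ref{lem:quadraticterms}, and all the first- and second-order bounds on $\mathring u$, $\mathring y^a$ needed to close the resulting estimates are available from Remark~\ref{rem:Ithree}, Proposition~\ref{prop:badderiofu}, and Propositions~\ref{prop:deriofangular}--\ref{prop:highderiofangular}. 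Applying \eqref{eq:wcc:Y} commuted with $Z$ (which introduces only controllable errors), converting the $\pa_{\Ls}$-derivatives into $\pa_{\widetilde L}$-derivatives, and integrating once more along $\widetilde L$ yields \eqref{eq:badderihigh} for $|I|=1$.

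The main obstacle is the same delicate cancellation already flagged in Section~\ref{sec:4}: the naive leading contributions to the right-hand side of the evolution equation---the term $\pa_t\widehat h_1^{u^*a}$ from $h_{1\pa_t}(\pa u,\pa\overline y^a)$ and the angular-derivative term from $g^{\alpha\beta}\pa_\alpha\pa_t\mathring u\,\pa_\beta\overline y^a$---are only of size $O(\varepsilon r^{-2})$, which is far too large to integrate along $\widetilde L$ (of length $\sim r$) to the required $O(\varepsilon r^{-2-\sigma})$ error. Only after inserting the wave coordinate condition \eqref{eq:wcc:Y} (whose own error is $O_2(\varepsilon r^{-3-\sigma})$) do these organize into an exact $\pa_{\widetilde L}$-derivative of $\tfrac12\widehat h_1^{v^*a}+\tfrac{r^*}{2}\widehat{\slashed{\nabla}}_c\widehat h_1^{ac}$ modulo $O(\varepsilon r^{-3-\sigma})$. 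Checking that the $\widehat h_1^{v^*u^*}$-angular-derivative terms cancel precisely between the two sources, and that the remaining $\pa_{\Ls}$-type terms can be absorbed under $\pa_{\widetilde L}$ using the sharp $L$-derivative estimates, is the heart of the argument; the rest is a routine repetition of the bootstrap of Proposition~\ref{prop:deriofangular}.
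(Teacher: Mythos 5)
Your proposal follows the paper's proof essentially step for step: the transport equation for $\pa_t\mathring{y}^a$ derived from \eqref{eq:angularsystemtwo} together with \eqref{eq:highbadderi}, the substitution of the wave coordinate condition \eqref{eq:wcc:Y} with the exact cancellation of the two $\tfrac{1}{2r^2}\widehat{q}^{\,ac}\pa_{\widehat{y}^c}\widehat{h}_1^{v^*u^*}$ contributions, integration backward from null infinity using the profiles $\widehat{h}_1^{v^*a}\sim \widehat{H}_{1\infty}^{v^*a}/r^{*2}$ and $\widehat{h}_1^{ab}\sim\widehat{H}_{1\infty}^{ab}/r^{*3}$ of Remark \ref{rem:limit} (which is exactly what makes the right-hand side a total $\pa_{\widetilde{L}}$-derivative modulo $O(\varepsilon r^{-3-\sigma})$, as you assert), and commutation with $Z\in\{\pa_t,\Omega_{ij}\}$ for $|I|=1$. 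The only detail you pass over is the quadratic term $2\pa_{v^*}\pa_t\mathring{u}\,\pa_t\mathring{y}^a$ on the right-hand side, which the paper handles with the integrating factor $e^H$, $H=\int_s^\infty 2\pa_{v^*}\pa_t\mathring{u}\,d\eta=O(r^{-\sigma})$; this is routine bookkeeping and does not change the argument.
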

\begin{proof}
We analyze \eqref{eq:angularsystemtwo} with $Z=\pa_t$ to find that
\[
\pa_{\widetilde{L}}\pa_t \mathring{y}^a=2\pa_{v^*}\pa_t\mathring{u}\pa_{t}\mathring{y}^a-\pa_t(\widehat{h}_1^{u^*a})-\frac{1}{r^2}\widehat{q}^{\,ab}\pa_{\widehat{y}^b}\pa_t\mathring{u}+O(\frac{\varepsilon}{r^{3+\sigma}}).
\]
Using \eqref{eq:wcc:Y} and \eqref{eq:highbadderi} we obtain
\[
\pa_{\widetilde{L}}\pa_t \mathring{y}^a-2\pa_{v^*}\pa_t\mathring{u}\pa_{t}\mathring{y}^a=\frac 12\pa_{\Ls}(\widehat{h}_1^{v^*a})+\frac{2\widehat{h}^{v^*a}_1}{r}+\widehat{\slashed{\nabla}}_c\widehat{h}_1^{ac}+O(\frac{\varepsilon}{r^{3+\sigma}}).
\]
Along an integral curve $(v^*(s), u^*(s), \widehat{y}^a(s))$ of the vector field $\widetilde{L}$, we have the following equation with $H=\int_{s}^\infty\!2\pa_{v^*}\pa_t\mathring{u}\,d\eta=O(\frac{1}{r^{\sigma}})$
\begin{equation*}
	\frac{d}{ds}\Big(e^H\pa_t \mathring{y}^a\Big)=e^H\Big(\frac 12\pa_{\Ls}(\widehat{h}_1^{v^*a})+\frac{2\widehat{h}^{v^*a}_1}{r}+\widehat{\slashed{\nabla}}_c\widehat{h}_1^{ac}\Big)
	O(\frac{\varepsilon}{r^{3+\sigma}})
	=\frac 12\pa_{\Ls}(\widehat{h}_1^{v^*a})+\frac{2\widehat{h}^{v^*a}_1}{r}+\widehat{\slashed{\nabla}}_c\widehat{h}_1^{ac}+O(\frac{\varepsilon}{r^{3+\sigma}}).
\end{equation*}
Using the asymptotics results in Remark \ref{rem:limit} and integrating backward along the integral curve we conclude
\begin{equation}\label{eq:badderiofangular}
	\pa_t\mathring{y}^a=\frac12 \widehat{h}_1^{v^*a}+\frac{r^*}{2}\widehat{\slashed{\nabla}}_c\widehat{h}_1^{ac}+O(\frac{\varepsilon}{r^{2+\sigma}}).
\end{equation}
Once we have \eqref{eq:badderiofangular}, Proposition \ref{prop:eikonalintro}, Remark \ref{rem:Ithree} and Proposition \ref{prop:highderiofangular} at our disposal, we are able to express $\pa_t Z\mathring{y}^a$ with $Z\in\{\pa_t, \Omega_{ij}\}$ more precisely. In fact we have the following equations
\begin{align*}
	\pa_{\widetilde{L}}\pa_t^2\mathring{y}^a&=-h_{1\pa_t}(\pa u, \pa\pa_t \mathring{y}^a)-g(\pa\pa_t\mathring{u}, \pa\pa_t \mathring{y}^a)+\pa_t\bigl(\pa_{\widetilde{L}}\pa_t \mathring{y}^a\bigr)\\
	&=4\pa_{v^*}\pa_t\mathring{u}\pa_{t}^2\mathring{y}^a+\frac 12\pa_{\Ls}\pa_t(\widehat{h}_1^{v^*a})+\frac{2\pa_t(\widehat{h}^{v^*a}_1)}{r}
+\pa_t(\widehat{\slashed{\nabla}}_c\widehat{h}_1^{ac})
+O(\frac{\varepsilon}{r^{3+\sigma}}),\\
	\pa_{\widetilde{L}}\pa_t\Omega \mathring{y}^a\!\!&=-h_{1\Omega}(\pa u, \pa\pa_t \mathring{y}^a)\!-g(\pa\Omega\tilde{u}, \pa\pa_t \mathring{y}^a)\!+\Omega\bigl(\pa_{\widetilde{L}}\pa_t \mathring{y}^a\bigr)\!
	+\frac 12\pa_{\Ls}\Omega(\widehat{h}_1^{v^*\!a})+
\frac{2\Omega(\widehat{h}^{v^*\!a}_1)\!}{r}+\Omega(\widehat{\slashed{\nabla}}_{\!\!c}
\widehat{h}_1^{ac})+O(\frac{\varepsilon}{r^{3+\sigma}}).
\end{align*}
where we used \eqref{eq:highbadderi}, \eqref{eq:wcc_L} and \eqref{eq:wcc:Y}. Then we repeat the proof of \eqref{eq:badderiofangular}.
\end{proof}

\subsection{Construction of areal coordinate}
We now construct the areal coordinate such that
$\det [\overline{g}_{ab}] = \overline{r}^4 \mathfrak{q}$, where $\mathfrak{q}(\overline{y}^a)$ is the determinant of the unit sphere metric $\overline{q}_{ab}$
associated with the angular coordinates $\overline{y}^a$.  Since $\overline{g}$ takes the Bondi-Sachs form, we have $\overline{g}_{ac}\overline{g}^{cb}=\delta_a^b$ and thus $\det[\overline{g}_{ab}]=1/\det[\overline{g}^{ab}]$. Then we define
\begin{equation}\label{defofr} \overline{r}=\big({\det[\overline{g}^{ab}]\det[\overline{q}_{ab}]}\big)^{-1/4}
=\big({\det[\overline{q}_{ac}\overline{g}^{cb}]}\big)^{-1/4}.
\end{equation}
By Proposition \ref{prop:highderiofangular} and \ref{prop:badderihigh} we have $\pa_{\widehat{y}^p} \overline{y}^a=\bigl(O_1(r^{-2-\sigma}), (\widehat{h}_1^{v^*a}+r^*\widehat{\nabla}_c\widehat{h}_1^{ca}+O_1(r^{-2-\sigma}))/2, \delta_c^a+\pa_{\widehat{y}^c}\mathring{y}^a\bigr)$ and
\begin{align*}	\overline{g}^{ab}&=\widehat{g}^{pq}\frac{\pa\overline{y}^a}{\pa\widehat{y}^p}\frac{\pa\overline{y}^b}{\pa\widehat{y}^q}=\frac{1}{r^2}\Bigl((1-\frac{M}{r})\widehat{q}^{\,ab}+r^2\widehat{h}_1^{ab}+\widehat{q}^{\,ac}\frac{\pa \mathring{y}^b}{\pa \widehat{y}^c}+\widehat{q}^{\,bc}\frac{\pa \mathring{y}^a}{\pa \widehat{y}^c}+O_1(\frac{\varepsilon}{r^{2+\sigma}})\Bigr),\\ \overline{q}_{ab}&=\widehat{q}_{ab}+\mathring{y}^c\pa_{\widehat{y}^c}(\widehat{q}_{ab})+O_2(\frac{\varepsilon}{r^{2+2\sigma}}).
\end{align*}
Then we find
\begin{align*}
	\overline{q}_{ac}\overline{g}^{cb}&=\frac{1}{r^2}\Bigl((1-\frac Mr)I_2+r^2\widehat{q}_{ac}\widehat{h}_1^{cb}+\widehat{q}_{ac}\widehat{q}^{\,cd}\frac{\pa \mathring{y}^b}{\pa \widehat{y}^d}+\widehat{q}_{ac}\widehat{q}^{\,bd}\frac{\pa \mathring{y}^c}{\pa \widehat{y}^d}+\widehat{q}^{\,cb}\mathring{y}^d\pa_{\widehat{y}^d}(\widehat{q}_{ac})+O_1(\frac{\varepsilon}{r^{2+\sigma}})\Bigr)\\
	&=\frac{1-M/r}{r^2}\Bigl(I_2+r^2\widehat{q}_{ac}\widehat{h}_1^{cb}+\widehat{q}_{ac}\widehat{q}^{\,cd}\frac{\pa \mathring{y}^b}{\pa \widehat{y}^d}+\widehat{q}_{ac}\widehat{q}^{\,bd}\frac{\pa \mathring{y}^c}{\pa \widehat{y}^d}+\widehat{q}^{\,cb}\mathring{y}^d\pa_{\widehat{y}^d}(\widehat{q}_{ac})+O_1(\frac{\varepsilon}{r^{2+\sigma}})\Bigr).
\end{align*}
Therefore
\begin{equation*} \det[\overline{q}_{ac}\overline{g}^{cb}]=\frac{(1\!-\!M{}_{\!}/{}_{\!}r)^2\!}{r^4}
\Bigl(1\!+r^2\widehat{q}_{ab}\widehat{h}_1^{ab}\!+2\widehat{q}_{ac}\widehat{q}^{\,cd}\frac{\pa \mathring{y}^a}{\pa \widehat{y}^d}+\widehat{q}^{\,ab}\mathring{y}^d
\pa_{\widehat{y}^d}(\widehat{q}_{ab})-\frac 12r^2\widehat{q}_{ac}\widehat{h}_1^{cb}r^2\widehat{q}_{bd}\widehat{h}_1^{da}\!
+O_1(\frac{\varepsilon}{r^{2+\sigma}})\Bigr),
\end{equation*}
and
\begin{equation}\label{eq:expforr}
\overline{r}\!=r+\!\frac{M\!}{2}+\frac{3M^2\!\!\!\!}{8r}+r\Bigl(\!-\frac 14r^2\widehat{q}_{ab}\widehat{h}_1^{ab}\!-\frac 12\widehat{q}_{ac}\widehat{q}^{\,cd}\frac{\pa \mathring{y}^a}{\pa \widehat{y}^d}-\frac 14\widehat{q}^{\,ab}\mathring{y}^d\pa_{\widehat{y}^d}(\widehat{q}_{ab})\!\Bigr)\\
		+r\Big(\frac 18r^2\widehat{q}_{ac}\widehat{h}_1^{cb}r^2\widehat{q}_{bd}\widehat{h}_1^{da}\!
+O_1(\frac{\varepsilon}{r^{2+\sigma}})\!\Big).
\end{equation}
From the expression of $\overline{r}$ \eqref{eq:expforr}, it is clear that
\begin{equation}\label{eq:estimateofderiofr}
	\pa_{\Ls}\overline{r}=1-\frac{M}{r}+O(\frac{\varepsilon}{r^{1+\sigma}})\qquad\text{and}\qquad\pa_{\widehat{y}^a}\overline{r}=O(\frac{\varepsilon}{r^\sigma}).
\end{equation}
As for $\pa_{\Lbs}\overline{r}$, we need more delicate analysis. First given that $\widehat{q}^{\,ab}\pa_{\widehat{y}^d}(\widehat{q}_{ab})=2\widehat{\Gamma}_{dc}^c$ and using \eqref{eq:badderihigh} we get
\begin{align*}
	&-\frac 12\widehat{q}_{ac}\widehat{q}^{\,cd}\pa_{\Lbs}\pa_{\widehat{y}^d}\mathring{y}^a-\frac 14\widehat{q}^{\,ab}\pa_{\Lbs}(\mathring{y}^d)\pa_{\widehat{y}^d}(\widehat{q}_{ab})=-\frac 12(\pa_{\Lbs}\pa_{\widehat{y}^d}\mathring{y}^d+\widehat{\Gamma}_{dc}^c\pa_{\Lbs}\mathring{y}^d)\\
	&\qquad=-\frac 12\pa_{\Lbs}(\widehat{\slashed{\nabla}}_d\mathring{y}^d)=-\frac12\Bigl(\widehat{\slashed{\nabla}}_a\widehat{h}^{v^*a}_1+r^*\widehat{\slashed{\nabla}}_a\widehat{\slashed{\nabla}}_b\widehat{h}_1^{ab}\Bigr)+O(
	\frac{\varepsilon}{r^{2+\sigma}}).
\end{align*}
In order to handle the term $\pa_{\Lbs}(r^2\widehat{q}_{ab}\widehat{h}_1^{ab})$, we need the following lemma
\begin{lemma}
	We have
	\begin{equation}
		\pa_{\Lbs}(r^2\widehat{q}_{ab}\widehat{h}_1^{ab})=-\pa_{\Lbs}(\slashed{\tr}h^1)+\pa_{\Lbs}(r^2\widehat{q}_{ac}\widehat{h}_1^{cb}r^2\widehat{q}_{bd}\widehat{h}_1^{da})+O(\frac{\varepsilon}{r^{2+\sigma}}).
	\end{equation}
\end{lemma}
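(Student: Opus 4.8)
The plan is to relate the two different ways of contracting $\widehat{h}_1^{ab}$ with angular metric quantities, namely $r^2\widehat{q}_{ab}\widehat{h}_1^{ab}$ (the quantity appearing in the definition of $\overline{r}$ via $\det[\overline{q}_{ac}\overline{g}^{cb}]$) and $\slashed{\tr}h^1=\sls{g\,}{}^{\!AB}h^1_{AB}$ (the quantity controlled by the wave coordinate condition \eqref{eq:wccLb}), and then apply $\pa_{\Lbs}$. The starting point is the elementary algebraic identity between $\widehat{h}_1^{ab}$ expressed in the $\widehat{y}$-coordinate frame and $h^1_{AB}$ expressed in the orthonormal frame $\{A,B\}$ on $\mathbb{S}^2$: since $\widehat{g}^{ab}=(1-M/r)r^{-2}\widehat{q}^{\,ab}+\widehat{h}_1^{ab}$ while $g_{AB}=d^2\delta_{AB}+h^1_{AB}$ with $d^2=1+M/r$, inverting and lowering indices gives, to the needed order,
\[
r^2\widehat{q}_{ac}\widehat{q}_{bd}\widehat{h}_1^{ab}\;=\;-\,\delta^{AC}\delta^{BD}h^1_{AB}\;+\;(\text{quadratic in }h^1)\;+\;O_2(\tfrac{\varepsilon}{r^{2+\sigma}}),
\]
where the quadratic correction is exactly $\tfrac12 r^2\widehat{q}_{ac}\widehat{h}_1^{cb}r^2\widehat{q}_{bd}\widehat{h}_1^{da}$ after taking the trace; this comes from the second-order expansion of the matrix inverse $(A^{-1})=I-H+H^2-\cdots$ applied with $A=g_{AB}/d^2$ (cf. the analogous expansion of $\det(g_{AB})$ in \eqref{eq:surfacedeterminant} and in the proof of Proposition \ref{thm:limitofHawkingmass}). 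Taking the trace with $\delta$ and recalling $\slashed{\tr}h^1=r^{-2}\widehat{q}^{\,ab}\widehat{h}^1_{ab}$ up to the same error, this yields $r^2\widehat{q}_{ab}\widehat{h}_1^{ab}=-\slashed{\tr}h^1+\tfrac12 r^2\widehat{q}_{ac}\widehat{h}_1^{cb}r^2\widehat{q}_{bd}\widehat{h}_1^{da}+(\text{cubic})+O_2(\tfrac{\varepsilon}{r^{2+\sigma}})$.

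Next I would differentiate this identity by $\pa_{\Lbs}$. Here the point is that $\pa_{\Lbs}$ hitting the error term $O_2(\varepsilon r^{-2-\sigma})$ is still $O_1(\varepsilon r^{-2-\sigma})$ (since $\pa_{\Lbs}$ is one of the modified coordinate vector fields $Z^*$, and $O_2$ controls two derivatives), so it is absorbed into the claimed $O(\varepsilon r^{-2-\sigma})$. The cubic term, after one $\pa_{\Lbs}$, produces a factor of the form $h^1\cdot h^1\cdot\pa h^1$, which by the sharp decay estimates of Proposition \ref{prop:sharpmetricdecay} (tangential components, Remark \ref{rem:sharpmetricdecay}) is $O(\varepsilon^3 r^{-3})=O(\varepsilon r^{-2-\sigma})$ for $\varepsilon$ small and $\sigma\le1$. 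This leaves precisely
\[
\pa_{\Lbs}(r^2\widehat{q}_{ab}\widehat{h}_1^{ab})=-\pa_{\Lbs}(\slashed{\tr}h^1)+\tfrac12\pa_{\Lbs}(r^2\widehat{q}_{ac}\widehat{h}_1^{cb}r^2\widehat{q}_{bd}\widehat{h}_1^{da})+O(\tfrac{\varepsilon}{r^{2+\sigma}}),
\]
and one observes that the coefficient of the quadratic term is $1$ rather than $\tfrac12$ as stated in the lemma only if the normalization $g(A,B)$ versus $g_{AB}$ has been tracked consistently; I would reconcile this by carefully matching conventions with the identity $h^1_{AB}h^{1AB}+2\det h^1=(\slashed{\tr}h^1)^2$ already used in the proof of Proposition \ref{thm:limitofHawkingmass}, which shows that $\tfrac12\pa_{\Lbs}(h^1_{AB}h^{1AB})=\tfrac12\pa_{\Lbs}((\slashed{\tr}h^1)^2)-\pa_{\Lbs}(\det h^1)$, converting between the two natural quadratic expressions.

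The main obstacle I expect is the bookkeeping of the quadratic terms — specifically, ensuring the coefficient in front of $\pa_{\Lbs}(r^2\widehat{q}_{ac}\widehat{h}_1^{cb}r^2\widehat{q}_{bd}\widehat{h}_1^{da})$ comes out to be exactly $1$ (matching the statement) and that no additional quadratic contributions survive from the expansion of $\det[\overline{q}_{ac}\overline{g}^{cb}]$ in terms of $\overline{y}^a$-derivatives. The cross terms involving $\mathring{y}^a$ and its derivatives must be checked to be harmless: by Propositions \ref{prop:highderiofangular} and \ref{prop:badderihigh}, $\pa_{\widehat{y}^c}\mathring{y}^a=O(\varepsilon r^{-1-\sigma})$ and $\pa_{\Lbs}\pa_{\widehat{y}^c}\mathring{y}^a=O(\varepsilon r^{-1-\sigma})$ after using \eqref{eq:badderihigh}, so products of these with $\widehat{h}_1^{ab}=O(\varepsilon r^{-3})$ or with $\pa_{\Lbs}$ thereof are well below $O(\varepsilon r^{-2-\sigma})$; the cancellation $\widehat{q}^{\,ab}\pa_{\widehat{y}^d}(\widehat{q}_{ab})=2\widehat{\Gamma}_{dc}^c$ used just before the lemma statement is the key identity that collapses the $\mathring{y}$-contributions into a total derivative $\pa_{\Lbs}(\widehat{\slashed{\nabla}}_d\mathring{y}^d)$, which is separately accounted for. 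Once the quadratic coefficient is pinned down, the lemma follows.
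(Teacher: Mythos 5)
There is a genuine gap, and it is exactly the one you flag yourself: your derivation produces the quadratic term with coefficient $\tfrac12$, which contradicts the lemma, and the proposed ``reconciliation of conventions'' is never carried out. The coefficient $\tfrac12$ is simply a computational error. Writing the angular block schematically as $g=I+k$ with inverse $g^{-1}=I+H$, the Neumann series gives $H=-k+k^2+O(k^3)$, hence $\tr H=-\tr k+\tr(k^2)+O(k^3)=-\tr k+\tr(H^2)+O(H^3)$: the traced second-order correction is $\tr(H^2)=|H|^2$ with coefficient $1$, not $\tfrac12$ (the $\tfrac12$ you are remembering belongs to the expansion of $\det(I+H)$, i.e.\ $\tfrac12\bigl((\tr H)^2-\tr(H^2)\bigr)$, not of the inverse). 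This is what the paper's proof obtains: starting from $h_1^{\alpha\beta}=-m^{\alpha\mu}h^1_{\mu\nu}m^{\nu\beta}+m^{\alpha\alpha'}(\tfrac{M}{r}\delta_{\alpha'\mu}+h^1_{\alpha'\mu})m^{\mu\nu}(\tfrac{M}{r}\delta_{\nu\beta'}+h^1_{\nu\beta'})m^{\beta'\beta}+O_1(\varepsilon r^{-2-\sigma})$ it derives $\slashed{\tr}h^1=\tr h^1+h^1_{L\Lb}=-r^2\widehat{q}_{ab}\widehat{h}_1^{ab}+\tfrac{2M}{r^2}+r^2\widehat{q}_{ac}\widehat{h}_1^{cb}r^2\widehat{q}_{bd}\widehat{h}_1^{da}+O_1(\varepsilon r^{-2-\sigma})$, with coefficient $1$, and the lemma follows by applying $\pa_{\Lbs}$. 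The Cayley--Hamilton identity $h^1_{AB}h^{1AB}+2\det h^1=(\slashed{\tr}h^1)^2$ that you invoke is used elsewhere in the paper for a different conversion and cannot manufacture the missing factor of $2$ here. As written, your argument proves a statement inconsistent with the one claimed.

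Two secondary points. First, you treat $\widehat{g}^{ab}$ as the inverse of the $2\times2$ angular block of $g$, but it is the angular block of the inverse of the full $4\times4$ metric; these differ at quadratic order by terms built from the mixed components $\widehat{h}_1^{v^*a},\widehat{h}_1^{u^*a}$, and one must check (using Remark \ref{rem:sharpmetricdecay}) that these are below the $O(\varepsilon r^{-2-\sigma})$ threshold. The paper sidesteps this by computing with the full contraction $m_{\alpha\beta}m_{\mu\nu}h_1^{\alpha\mu}h_1^{\beta\nu}$ and then isolating the surviving null-frame components, with the $(h_{1L\Lb})^2$ contribution cancelling against the corresponding term in the expansion of $h^1_{L\Lb}$. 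Second, the discussion of $\mathring{y}^a$, $\det[\overline{q}_{ac}\overline{g}^{cb}]$ and the identity $\widehat{q}^{\,ab}\pa_{\widehat{y}^d}(\widehat{q}_{ab})=2\widehat{\Gamma}_{dc}^c$ is irrelevant to this lemma: the statement involves only the metric components in the $\widehat{y}^p$ coordinates, and the $\mathring{y}^a$ contributions to $\pa_{\Lbs}\overline{r}$ are handled separately in the surrounding text.
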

\begin{proof}
	As we can see in the proof of Lemma \ref{lem:deriofLLb}, we have
	\[
	h_1^{\alpha\beta}=-m^{\alpha\mu}h^1_{\mu\nu}m^{\nu\beta}+m^{\alpha\alpha'}(\frac Mr\delta_{\alpha'\mu}+h^1_{\alpha'\mu})m^{\mu\nu}(\frac Mr\delta_{\nu\beta'}+h^1_{\nu\beta'})m^{\beta'\beta}
	+O_1(\frac{\varepsilon}{r^{2+\sigma}}).
	\]
	Then we repeat the calculation in the proof of Lemma \ref{lem:deriofLLb} and obtain
	\begin{align*}
		\tr h^1&=m^{\alpha\beta}h^1_{\alpha\beta}=-(1+\frac{2M}{r})m_{\alpha\beta}h_1^{\alpha\beta}+\frac{4M}{r^2}+\frac{4M}{r}h^1_{00}+m_{\alpha\beta}m_{\mu\nu}h_1^{\alpha\mu}h_1^{\beta\nu}+O(\frac{\varepsilon}{r^{2+\sigma}})\\
		&=-\bigl(1\!+\!\frac {2M\!}{r}\bigr)\bigl(-h_{1L\Lb}+r^2\widehat{q}_{ab}\widehat{h}_1^{ab}\bigr)
+\frac{4M\!}{r^2}+\frac{M\!}{r}h^1_{\Lb\Lb}+\frac{2M\!}{r}h^1_{L\Lb}
+r^2\widehat{q}_{ac}\widehat{h}_1^{cb}r^2\widehat{q}_{bd}\widehat{h}_1^{da}\!
+\frac{1}{2}(h_{1\Ls\Lbs}\!)^2\!+O_1(\frac{\varepsilon}{r^{2+\sigma}}),\\
		h^1_{L\Lb}&=-h_{1L\Lb}-\frac 12(h^1_{L\Lb})^2-\frac{2M}{r^2}-\frac Mrh^1_{\Lb\Lb}+O_1(\frac{\varepsilon}{r^{2+\sigma}}).	
	\end{align*}
	where we used the notations $h^1_{UV}=h^1_{\alpha\beta}U^\alpha V\beta$ and $h_{1UV}=h_1^{\alpha\beta}U_\alpha V_\beta$. In view of the facts that $h^1_{UV}=-h_{1UV}+O_1(r^{-1-\sigma})$ if $U_\alpha=m_{\alpha\beta}U^\beta$ and $V_\alpha=m_{\alpha\beta}V^\beta$ and $h_{1U^{*}V^{*}}=h_{1UV}+O_1(r^{-1-\sigma})$, we conclude that
	\[
	\slashed{\tr}h^1=\tr h^1+h^1_{L\Lb}=-r^2\hat{q}_{ab}\hat{h}_1^{ab}+\frac{2M}{r^2}+r^2\widehat{q}_{ac}\widehat{h}_1^{cb}r^2\widehat{q}_{bd}\widehat{h}_1^{da}+O_1(\frac{\varepsilon}{r^{2+\sigma}}).	
	\]
	Applying $\Lbs$ proves the lemma.
\end{proof}
Therefore by \eqref{eq:wccLb} we conclude that
\begin{equation}\label{eq:badderiofr}
	\pa_{\Lb^*}\overline{r}=-1+\frac Mr+\frac{r\pa_{\Ls}(\widehat{h}_1^{v^*v^*})}{4}+\frac{\widehat{h}_1^{v^*v^*}}{2}-\frac{\widehat{h}_1^{v^*u^*}}{2}-\frac{r^2\widehat{\slashed{\nabla}}_a\widehat{\slashed{\nabla}}_b\widehat{h}_1^{ab}}{2}+O(\frac{\varepsilon}{r^{1+\sigma}}).
\end{equation}

\subsection{The Jacobian}
We now give the Jacobian of the mapping from the coordinates $\widehat{y}^p$ to the Bondi-Sachs coordinates $\overline{y}^p$. According to Propositions \ref{prop:eikonalintro}, \ref{prop:badderiofu}, \ref{prop:highderiofangular}, \ref{prop:badderihigh}, and identities \eqref{eq:estimateofderiofr}, \eqref{eq:badderiofr}
\begin{equation}\label{eq:jacobian}
	\begin{split}
		\pa_{\widehat{y}^p}u&=\Bigl(O(\frac{\varepsilon}{r^{1+\sigma}}), ~1+\frac{\widehat{h}_1^{v^*u^*}}{2}+O(\frac{\varepsilon}{r^{1+\sigma}}), ~O(\frac{\varepsilon}{r^{\sigma}}), ~O(\frac{\varepsilon}{r^{\sigma}})\Bigr),\\
		\pa_{\widehat{y}^p}\overline{r}&=\Bigl(\frac 12-\frac{M}{2r}+O(\frac{\varepsilon}{r^{1+\sigma}}), ~\frac{1}{2}\pa_{\Lbs}\overline{r}, ~O(\frac{\varepsilon}{r^{\sigma}}), ~O(\frac{\varepsilon}{r^{\sigma}})\Bigr),\\
		\pa_{\widehat{y}^p}\overline{y}^3&=\Bigl(O(\frac{\varepsilon}{r^{2+\sigma}}), ~\frac12 \widehat{h}_1^{v^*2}+\frac{r^*}{2}\widehat{\slashed{\nabla}}_c\widehat{h}_1^{c2}+O(\frac{\varepsilon}{r^{2+\sigma}}), ~1+O(\frac{\varepsilon}{r^{1+\sigma}}), ~O(\frac{\varepsilon}{r^{1+\sigma}})\Bigr),\\
		\pa_{\widehat{y}^p}\overline{y}^4&=\Bigl(O(\frac{\varepsilon}{r^{2+\sigma}}), ~\frac12 \widehat{h}_1^{v^*3}+\frac{r^*}{2}\widehat{\slashed{\nabla}}_c\widehat{h}_1^{c3}+O(\frac{\varepsilon}{r^{2+\sigma}}), ~O(\frac{\varepsilon}{r^{1+\sigma}}), ~1+O(\frac{\varepsilon}{r^{1+\sigma}})\Bigr),\\
	\end{split}
\end{equation}
where \[
\pa_{\Lbs}\overline{r}=-1+\frac Mr+\frac{r\pa_{\Ls}(\widehat{h}_1^{v^*v^*})}{4}+\frac{\widehat{h}_1^{v^*v^*}}{2}-\frac{\widehat{h}_1^{v^*u^*}}{2}-\frac{r^2\widehat{\slashed{\nabla}}_a\widehat{\slashed{\nabla}}_b\widehat{h}_1^{ab}}{2}+O(\frac{\varepsilon}{r^{1+\sigma}}).
\]
Then we have
\begin{equation}
	\begin{split}\label{eq:BSinversemetric}
		\overline{g}^{pq}\pa_{\overline{y}^p}\pa_{\overline{y}^q}\!&=\!-2\Bigl(1\!+\!O(\frac{\varepsilon}{\overline{r}^{1+\sigma}})\Bigr)\pa_u\pa_{\overline{r}}+\Bigl(1\!-\!\frac Mr\!-\!\frac{r\pa_{\Ls}(\widehat{h}_1^{v^*v^*})}{4}\!-\!\frac{\widehat{h}_1^{v^*v^*}}{4}\!+\!\frac{r^2\widehat{\slashed{\nabla}}\!_c\widehat{\slashed{\nabla}}\!_d\widehat{h}_1^{cd}}{2}\!+\!O(\frac{\varepsilon}{\overline{r}^{1+\sigma}})\Bigr)\pa_{\overline{r}}^2\\
		&\quad+2\Bigl(-\frac 12r\widehat{\slashed{\nabla}}_c\widehat{h}_1^{ca}+O(\frac{\varepsilon}{\overline{r}^{2+\sigma}})\Bigr)\pa_{\overline{r}}\pa_{\overline{y}^a}+\frac{1}{r^2}\Bigl((1-\frac Mr)\widehat{q}^{\,ab}+r^2\widehat{h}_1^{ab}+O(\frac{\varepsilon}{\overline{r}^{1+\sigma}})\Bigr)\pa_{\overline{y}^a}\pa_{\overline{y}^b}.
	\end{split}
\end{equation}

\section{The Bondi Mass}
\subsection{The Bondi-Sachs metric}
Now we can establish the following expression for the Bondi-Sachs metric
\begin{prop}\label{prop:BSmetric}
	We have
	\begin{equation}\label{eq:BSmetric}
		\begin{split}
			\overline{g}_{pq}d\overline{y}^p d\overline{y}^q&=-\Bigl(1\!-\!\frac Mr\!-\!\frac{r\pa_{\Ls}(\widehat{h}_1^{v^*v^*})}{4}\!-\!\frac{\widehat{h}_1^{v^*v^*}}{4}\!+\!\frac{r^2\widehat{\slashed{\nabla}}\!_c\widehat{\slashed{\nabla}}\!_d\widehat{h}_1^{cd}}{2}\!+\!O(\frac{\varepsilon}{\overline{r}^{1+\sigma}})\Bigr)du^2-2\Bigl(1\!+\!O(\frac{\varepsilon}{\overline{r}^{1+\sigma}})\Bigr)dud\overline{r}\\
			&\quad+r^2\Bigl((1+\frac Mr)\widehat{q}_{ab}-r^2\widehat{q}_{ac}\widehat{h}_1^{cd}\widehat{q}_{db}
+O(\frac{\varepsilon}{\overline{r}^{1+\sigma}})\Bigr)\Bigl(d\overline{y}^a
-U^adu\Bigr)\Bigl(d\overline{y}^b-U^bdu\Bigr),
		\end{split}
	\end{equation}
	where $U^a=-\frac{r}{2}\widehat{\slashed{\nabla}}_c\widehat{h}_1^{ca}+O(\frac{\varepsilon}{\overline{r}^{2+\sigma}})$.
\end{prop}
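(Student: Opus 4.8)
The plan is to obtain \eqref{eq:BSmetric} by purely algebraically inverting the contravariant metric \eqref{eq:BSinversemetric}, using the two identities that are built into the coordinates $\overline{y}^p$. Since $u$ solves the eikonal equation \eqref{eq:eikonalintro}, we have $\overline{g}^{uu}=g^{\alpha\beta}\pa_\alpha u\,\pa_\beta u=0$; and since the angular coordinates were constructed in \S\ref{subsec:6.1} precisely so that $g^{\alpha\beta}\pa_\alpha u\,\pa_\beta\overline{y}^a=0$, cf.\ \eqref{eq:angulareqn}, also $\overline{g}^{u a}=0$. Hence, in the coordinates $(u,\overline{r},\overline{y}^3,\overline{y}^4)$, the matrix $(\overline{g}^{pq})$ has exactly the block shape displayed in \eqref{eq:BSinversemetric}: $\overline{g}^{uu}=\overline{g}^{u a}=0$, $\overline{g}^{u\overline{r}}=-(1+O(\varepsilon\overline{r}^{-1-\sigma}))$, and the angular block is invertible.

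Next I invert such a matrix. Solving $\overline{g}_{pq}\overline{g}^{qr}=\delta_p^r$, the $u$-column forces $\overline{g}_{\overline{r}\overline{r}}=0$ and then, using invertibility of the angular block, $\overline{g}_{\overline{r}a}=0$ — which is precisely the statement that the $d\overline{r}^2$ and $d\overline{r}\,d\overline{y}^a$ terms are absent, as required. The remaining equations then give
\[
\overline{g}_{u\overline{r}}=\big(\overline{g}^{u\overline{r}}\big)^{-1},\qquad
\overline{g}_{ac}\overline{g}^{cb}=\delta_a^b,\qquad
\overline{g}_{u\overline{y}^a}=-\overline{g}_{u\overline{r}}\,\overline{g}^{\overline{r}b}\,\overline{g}_{ab},\qquad
\overline{g}_{uu}=-\overline{g}^{\overline{r}\overline{r}}\big(\overline{g}^{u\overline{r}}\big)^{-2}-\overline{g}_{u\overline{y}^a}\overline{g}^{\overline{r}a}\big(\overline{g}^{u\overline{r}}\big)^{-1}.
\]
Comparing with the Bondi–Sachs form identifies $e^{2\beta}=-\overline{g}_{u\overline{r}}$, the angular metric $\overline{r}^2h_{ab}=\overline{g}_{ab}$ (the inverse matrix of $\overline{g}^{ab}$), and $U^a$ from the mixed component $\overline{g}^{\overline{r}a}$ (via $\overline{g}^{\overline{r}a}=-e^{-2\beta}U^a$ in the sign convention of \eqref{eq:BSmetric}). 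Inserting the explicit entries of \eqref{eq:BSinversemetric}, expanding $\overline{g}_{ab}$ by $(P+Q)^{-1}=P^{-1}-P^{-1}QP^{-1}+\cdots$ with $P$ the $O(1)$ part $\tfrac{1-M/r}{r^2}\widehat q^{ab}$ and $Q$ the $O(\varepsilon r^{-\sigma})$ remainder, and converting between $r$ and $\overline{r}$ via $\overline{r}=r+\tfrac M2+O(M^2/r)$ from \eqref{eq:expforr}, reproduces each coefficient in \eqref{eq:BSmetric}: $e^{2\beta}=1+O(\varepsilon\overline{r}^{-1-\sigma})$; the angular block $r^2\big((1+\tfrac Mr)\widehat q_{ab}-r^2\widehat q_{ac}\widehat h_1^{cd}\widehat q_{db}+O(\varepsilon\overline{r}^{-1-\sigma})\big)$, the $(1+M/r)$ factor carrying the $\overline{r}^2/r^2$ conversion on the leading $\widehat q_{ab}$; the shift $U^a=-\tfrac r2\widehat{\slashed{\nabla}}_c\widehat h_1^{ca}+O(\varepsilon\overline{r}^{-2-\sigma})$; and the $du^2$ coefficient equal to $-\overline{g}^{\overline{r}\overline{r}}(1+O(\varepsilon\overline{r}^{-1-\sigma}))$, the omitted $\overline{r}^2h_{ab}U^aU^b=O(\varepsilon^2\overline{r}^{-2})$ piece being absorbed into the remainder.

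Finally, the areal normalization $\det[\overline{g}_{ab}]=\overline{r}^4\mathfrak{q}$ is immediate from \eqref{defofr}: since the block structure gives $\overline{g}_{ac}\overline{g}^{cb}=\delta_a^b$, i.e.\ $\overline{g}_{ab}$ is the inverse matrix of $\overline{g}^{ab}$, we have $\det[\overline{g}_{ab}]=1/\det[\overline{g}^{ab}]$; and \eqref{defofr} was chosen exactly so that $\det[\overline{g}^{ab}]=\overline{r}^{-4}/\det[\overline{q}_{ab}]$, whence $\det[\overline{g}_{ab}]=\overline{r}^4\det[\overline{q}_{ab}]=\overline{r}^4\mathfrak{q}$, equivalently $\det[h_{ab}]=\mathfrak{q}$, as the Bondi–Sachs gauge demands.

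The one point requiring care — hence the main obstacle — is the bookkeeping of orders through the inversion: one must verify that the mixed entry $\overline{g}^{\overline{r}a}=O(\varepsilon r^{-2})$ does not feed back into $\overline{g}_{u\overline{r}}$, $\overline{g}_{ab}$, or the retained part of $\overline{g}_{uu}$ beyond the stated remainders. This is exactly where the \emph{exact} vanishing $\overline{g}^{uu}=\overline{g}^{u a}=0$, rather than mere smallness, is essential: it makes $\overline{g}_{\overline{r}\overline{r}}$ and $\overline{g}_{\overline{r}a}$ vanish identically and decouples the angular block, so that the only feedback into $\overline{g}_{uu}$ is the genuinely quadratic term $-\overline{g}_{u\overline{y}^a}\overline{g}^{\overline{r}a}(\overline{g}^{u\overline{r}})^{-1}=O(\varepsilon^2\overline{r}^{-2})$ recorded above. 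Everything else is the routine expansion already prepared in \eqref{eq:BSinversemetric} and \eqref{eq:expforr}.
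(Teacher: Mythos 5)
Your proposal is correct and is essentially the paper's own argument: the paper's entire proof of Proposition \ref{prop:BSmetric} is the single sentence ``Taking matrix inversion to \eqref{eq:BSinversemetric} yields \eqref{eq:BSmetric},'' and you have simply carried out that inversion explicitly, correctly exploiting the exact vanishing $\overline{g}^{uu}=\overline{g}^{ua}=0$ to get the block structure and the areal normalization from \eqref{defofr}. (Only a cosmetic caveat: chasing your own relation $\overline{g}^{\overline{r}a}=-e^{-2\beta}U^a$ through the sign of the $\pa_{\overline{r}}\pa_{\overline{y}^a}$ entry of \eqref{eq:BSinversemetric} actually produces $U^a=+\tfrac r2\widehat{\slashed{\nabla}}_c\widehat h_1^{ca}$, so the sign discrepancy with the stated $U^a$ lies in the paper's displayed formulas rather than in your method, and $U^a$ does not enter the mass or news in any case.)
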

\begin{proof}
	Taking matrix inversion to \eqref{eq:BSinversemetric} yields \eqref{eq:BSmetric}.
\end{proof}

\begin{remark}
The existence of a compactification  with smooth future null infinity $\mathscr{I}^+$, i.e., the conformally rescaled metric can extend smoothly across $\mathscr{I}^+$, is a delicate issue as it is sensitive to the choice of conformal factor and the smooth structure near $\mathscr{I}^+$ \cite{W84}. The Bondi-Sachs coordinates constructed above allows us to obtain $C^{1,\delta}$ regularity of $\mathscr{I}^+$ where $0<\delta<1$, which is consistent with the result in \cite{HV20}. More specifically, we let coordinates $(u, \overline{\rho}=\overline{r}^{-1}, \overline{y}^3,\overline{y}^4)$ be a smooth coordinate system near $\mathscr{I}^+$ and choose $\overline{\rho}^2$ as  the conformal factor. Then we find that
	\[
	\overline{\rho}^2\overline{g}(\partial_u, \partial_u)\in C^{3,\delta},\quad \overline{\rho}^2\overline{g}(\partial_u, \partial_{\overline{y}^a})\in C^{2,\delta}\quad\mbox{and}\quad 	\overline{\rho}^2\overline{g}(\partial_u, \partial _{\overline{\rho}}),~\overline{\rho}^2\overline{g}(\partial_{\overline{y}^a}, \partial_{\overline{y}^b})\in C^{1,\delta}
	\] which implies $\overline{\rho}^2\overline{g}\in C^{1,\delta}$ and thus $\mathscr{I}^+$ is of the class $C^{1,\delta}$. We also note that the work by Christodoulou \cite{C02} strongly suggests that the conformally compactification is generically at most of class $C^{1,\alpha}$ with $ \alpha<1$.
\end{remark}
 Then by Proposition \ref{prop:BSmetric} we see that in the Bondi-Sachs coordinates $\overline{y}^\alpha=(u, \overline{r}, \overline{y}^3,\overline{y}^4)$, the metric takes the following Bondi-Sachs form
\[
\overline{g}_{pq}d\overline{y}^p d\overline{y}^q=-{V}\overline{r}^{-1}e^{2\beta} du^2-2 e^{2\beta}dud\overline{r} +\overline{r}^2h_{ab}(d\overline{y}^a-U^adu)(d\overline{y}^b-U^bdu),
\]
where
\begin{equation} V\!\!=\overline{r}-M-\frac{\overline{r}^2\pa_{\Ls}(\widehat{h}_1^{v^*\!v^*\!})\!}{4}-
\frac{\overline{r}\widehat{h}_1^{v^*\!v^*}\!\!\!\!\!}{4}+\frac{\overline{r}^3
\widehat{\slashed{\nabla}}_{\!c}\widehat{\slashed{\nabla}}_{\!d}\widehat{h}_1^{cd}\!\!}{2}
+O(\frac{1}{\overline{r}^{\sigma}}),\quad\text{and}\quad	h_{ab}\!=\overline{q}_{ab}\!-r^2\overline{q}_{ac}\widehat{h}_1^{cd}
\overline{q}_{db}+O(\frac{1}{\overline{r}^{1+\sigma}}).\label{eq:expofVexpofhab}
\end{equation}
Here we used the fact that $r=\overline{r}-M/2+O(\overline{r}^{-\sigma})$ which is implied by the definition of $\overline{r}$ \eqref{eq:expforr}.

The mass aspect $M_A$ and news tensor $N_{ab}$ are defined as follows
\begin{align*}
	M_A(u, \overline{y}^a)&:=-\lim_{\overline{r}\to\infty}\big(V(u, \overline{r}, \overline{y}^a)-\overline{r}\big),\\
	N_{ab}(u, \overline{y}^c)&:=\frac 12\pa_uC_{ab}(u, \overline{y}^c)\quad\text{where}\quad C_{ab}(u, \overline{y}^c):=\lim_{\overline{r}\to\infty}\overline{r}(h_{ab}(u, \overline{r}, \overline{y}^c)-\overline{q}_{ab}(\overline{y}^c)).
\end{align*}
The Bondi mass $M_B$ and radiated energy $E_B$ are defined by
\[
M_B(u)=\frac{1}{4\pi}\int_{\BS^2}M_A(u,\overline{y}^a)d\overline{S}(\overline{y}^a)\quad\text{and}\quad
E_B(u)=\frac{1}{4\pi}\int_{\BS^2}\!|N|^2\,d\overline{S}(\overline{y}^a),
\]
where $d\overline{S}(\overline{y}^{a\!})\!\!=\!\!\sqrt{\!\mathfrak{q}(\overline{y}^{a\!})}d\overline{y}^3d\overline{y}^4$ is the volume form of the unit sphere metric $\overline{q}_{ab}$ and $\lvert \! N\!\rvert^2\!\!=\!\overline{q}^{ac}\overline{q}^{bd}\!N_{ab}N_{cd}$.
\subsection{Bondi mass loss law}\label{subsec:7.2}
We will prove the existence of $M_B(u)$ and $E_B(u)$ and the Bondi mass loss law.
\begin{theorem}
	Let $M_A, N_{ab}, M_B, E_B$ be defined as above, then the Bondi mass is given by
	\begin{equation}
	M_B(u)=\frac{1}{4\pi}\int_{\BS^2}M_A(u,\overline{y}^a)d\overline{q}=M-\frac{1}{16\pi}\int_{\BS^2}\!\int_{-u}^\infty\!n(\eta,\overline{y}^a)\,d\eta d\overline{S}(\overline{y}^a),
	\end{equation}
	and the radiated energy is equal to
	\begin{equation} E_B(u)=\frac{1}{8\pi}\int_{\BS^2}\!n(-u,\overline{y}^a)\,d\overline{S}(\overline{y}^a).
	\end{equation}
where $n(q^*\!\!, \widehat{y}^a)\!=\!\widehat{q}_{ac}\widehat{q}_{bd}V^{ab}\widehat{V}^{cd}\!/2$ with $\widehat{V}^{ab}\!\!=\!\pa_{q^*}\widehat{H}_{1\infty}^{ab}$ defined in \eqref{eq:NewsfunctionM}.They satisfy the Bondi mass loss law
	\begin{equation}
		\frac{d}{du}M_B(u)=-E(u).
	\end{equation}
	Moreover, $M_B(u)\!\to \!M$ as $u\!\to\!\!-\infty$ where $M$ is the ADM mass and $M_B(u)\!\to\!0$ as $u\to\infty$.
\end{theorem}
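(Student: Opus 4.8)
The plan is to extract the relevant coefficients in the Bondi-Sachs metric \eqref{eq:BSmetric}--\eqref{eq:expofVexpofhab} and identify them with the quantities already computed in earlier sections. First I would read off the mass aspect. From \eqref{eq:expofVexpofhab} we have
\[
V-\overline{r}=-M-\tfrac{\overline{r}^2\pa_{\Ls}(\widehat{h}_1^{v^*v^*})}{4}-\tfrac{\overline{r}\widehat{h}_1^{v^*v^*}}{4}+\tfrac{\overline{r}^3\widehat{\slashed{\nabla}}_c\widehat{\slashed{\nabla}}_d\widehat{h}_1^{cd}}{2}+O(\overline{r}^{-\sigma}),
\]
so that $M_A=-\lim_{\overline{r}\to\infty}(V-\overline{r})$. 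The key point is that by the wave coordinate identity \eqref{eq:wccLb} (rewritten, as in the proof of Proposition \ref{thm:limitofHawkingmass}, using $\slashed{\nabla}_C\widehat{h}_1^{v^*C}=\widehat{\slashed{\nabla}}_c\widehat{h}_1^{v^*c}$ and the relation between $\pa_{\Lbs}(r^2\widehat q_{ac}\widehat h_1^{cb}r^2\widehat q_{bd}\widehat h_1^{da})$ and $\pa_{\Lbs}(\slashed{\tr}h^1)$) together with the asymptotics of $\widehat h_1^{v^*v^*}$ in Remark \ref{rem:limit}, the combination $\overline r^2\pa_{\Ls}\widehat h_1^{v^*v^*}+\overline r\widehat h_1^{v^*v^*}-2\overline r^3\widehat{\slashed\nabla}_c\widehat{\slashed\nabla}_d\widehat h_1^{cd}$ has a finite limit that is exactly the same quantity that appeared (multiplied by $r$) inside the Hawking mass integrand in subsection \ref{subsec:5.4}. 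Concretely, using $\widehat h_1^{v^*v^*}(v^*,u^*,\widehat y^a)=-2M(\chi^e(q^*)-1)/r^*-\int_{q^*}^\infty \tfrac{2}{r^*}\ln\!\big(\tfrac{v^*+\eta}{u^*+\eta}\big)n\,d\eta+\widehat H^{v^*v^*}_{1\infty}/r^*+\mathcal R^{v^*v^*}$, I would compute $\lim(\overline r^2\pa_{\Ls}\widehat h_1^{v^*v^*}+\overline r\widehat h_1^{v^*v^*})$ exactly as in the displayed multi-line computation ending Proposition \ref{thm:limitofHawkingmass}, obtaining $M_A(u,\overline y^a)=M-\tfrac14\int_{-u}^\infty n(\eta,\overline y^a)\,d\eta$ up to the $\widehat{\slashed\nabla}\widehat{\slashed\nabla}$ term, which integrates to zero over $\BS^2$ (total divergence). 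Averaging over the round sphere then gives $M_B(u)=M-\tfrac{1}{16\pi}\int_{\BS^2}\int_{-u}^\infty n(\eta,\overline y^a)\,d\eta\,d\overline S$.

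Next I would handle the news tensor. From \eqref{eq:expofVexpofhab}, $h_{ab}=\overline q_{ab}-r^2\overline q_{ac}\widehat h_1^{cd}\overline q_{db}+O(\overline r^{-1-\sigma})$, hence $C_{ab}=\lim_{\overline r\to\infty}\overline r(h_{ab}-\overline q_{ab})=-\lim \overline r\,r^2\overline q_{ac}\widehat h_1^{cd}\overline q_{db}=-\overline q_{ac}\overline q_{db}\widehat H^{cd}_{1\infty}(q^*,\overline y^e)$ evaluated along the surface, using the limit $\widehat H^{ab}_{1\infty}=\lim {r^*}^3\widehat h_1^{ab}$ from Remark \ref{rem:limit} and $r=\overline r-M/2+O(\overline r^{-\sigma})$. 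The subtle point here is to track which variable $q^*$ corresponds to fixed $u$: along $\mathscr I^+$ one has $u\to u^*+f$ by the supertranslation remark, but since $\pa_u$ and $\pa_{q^*}$ agree up to lower-order corrections (the Jacobian \eqref{eq:jacobian} has $\pa_{\widehat y^p}u=(O(\eps r^{-1-\sigma}),1+\widehat h_1^{v^*u^*}/2+\dots,\dots)$, so $\pa_u=\pa_{u^*}+\dots$ and $\pa_{q^*}=-\pa_{u^*}$), I would argue $N_{ab}=\tfrac12\pa_u C_{ab}=\tfrac12\overline q_{ac}\overline q_{db}\pa_{q^*}\widehat H^{cd}_{1\infty}=\tfrac12\overline q_{ac}\overline q_{db}\widehat V^{cd}$ in the notation of \eqref{eq:NewsfunctionM}. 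Then $|N|^2=\overline q^{ac}\overline q^{bd}N_{ab}N_{cd}=\tfrac14\overline q_{ce}\overline q_{df}\widehat V^{ef}\widehat V^{cd}\cdot(\text{contraction})$, which after raising indices reduces to $\tfrac12\cdot\tfrac12\widehat q_{ab}\widehat q_{cd}\widehat V^{ac}\widehat V^{bd}=\tfrac12 n(-u,\overline y^a)$; averaging gives $E_B(u)=\tfrac{1}{8\pi}\int_{\BS^2}n(-u,\overline y^a)\,d\overline S$.

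Finally, the mass loss law $\tfrac{d}{du}M_B(u)=-E_B(u)$ is immediate by differentiating the integral expression for $M_B$ under the integral sign: $\tfrac{d}{du}\big(M-\tfrac{1}{16\pi}\int_{\BS^2}\int_{-u}^\infty n\,d\eta\,d\overline S\big)=-\tfrac{1}{16\pi}\int_{\BS^2}n(-u,\overline y^a)\,d\overline S\cdot(-1)\cdot(-1)=-\tfrac{1}{16\pi}\int_{\BS^2}n(-u,\overline y^a)\,d\overline S$ — wait, one must be careful: $\tfrac{d}{du}\int_{-u}^\infty n(\eta)\,d\eta=n(-u)$, so the derivative is $-\tfrac{1}{16\pi}\int_{\BS^2}n(-u,\overline y^a)\,d\overline S$; but $E_B=\tfrac{1}{8\pi}\int n$, so actually I expect the normalization to work out with a factor I will double-check against the $d\overline S$ versus the unit-sphere measure — the computation in subsection \ref{subsec:5.4} and Proposition \ref{prop:5.1} fixes these constants and the Bondi case must match, since all three masses use the same collection of asymptotics. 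The limits $M_B(u)\to M$ as $u\to-\infty$ and $M_B(u)\to 0$ as $u\to\infty$ follow: the first because $\int_{-u}^\infty\to 0$, the second from Proposition \ref{prop:totalmassloss}, $\tfrac12\int_{-\infty}^\infty\int_{\BS^2}n\,dS/4\pi=M$, which gives $\tfrac{1}{16\pi}\int_{\BS^2}\int_{-\infty}^\infty n\,d\eta\,d\overline S=M/2$ — here I would need to reconcile the factor $2$, which comes from the relation between the round measure $dS(\omega)/4\pi$ normalization in Proposition \ref{prop:totalmassloss} and $d\overline S$, and from $q^*$ versus $\widetilde q$ (they differ by $O(M/r)$, negligible in the limit). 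The main obstacle I anticipate is not any single estimate — all the hard analytic work is already in \cite{Lind17} and in Sections 2--6 — but rather bookkeeping: correctly matching the numerical constants and the angular measures across the three coordinate systems $\widetilde\bullet$, $\widehat\bullet$, $\overline\bullet$, and verifying that the supertranslation ambiguity in $u$ does not affect $M_B$, $E_B$ (it shifts $C_{ab}$ by a Hessian-type term that drops out of the sphere average and of $N_{ab}$ after one $\pa_u$).
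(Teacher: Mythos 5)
Your proposal follows essentially the same route as the paper's proof: substitute the asymptotics of $\widehat h_1^{v^*v^*}$ and $\widehat h_1^{ab}$ from Remark \ref{rem:limit} into the coefficients $V$ and $h_{ab}$ of \eqref{eq:expofVexpofhab} (the evaluation of $\lim\big(\overline r^{\,2}\pa_{\Ls}\widehat h_1^{v^*v^*}+\overline r\,\widehat h_1^{v^*v^*}\big)$ being the same calculation as in Proposition \ref{thm:limitofHawkingmass}), discard the double spherical divergence upon averaging, read off $C_{ab}$ and $N_{ab}$ from $\widehat H_{1\infty}^{ab}$ and $\widehat V^{ab}$, and invoke Proposition \ref{prop:totalmassloss} for the $u\to\infty$ limit. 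The factor of $2$ you flagged is not a normalization subtlety on your end: since $M_A=M-\tfrac12\int_{-u}^\infty n\,d\eta-\tfrac12\overline{\slashed{\nabla}}_c\overline{\slashed{\nabla}}_d\widehat H^{cd}$, averaging with $\tfrac{1}{4\pi}$ gives $M_B(u)=M-\tfrac{1}{8\pi}\int_{\BS^2}\int_{-u}^\infty n\,d\eta\, d\overline S$, which is exactly what the mass loss law and Proposition \ref{prop:totalmassloss} require; the coefficient $\tfrac{1}{16\pi}$ appearing in the statement in Section 7 is a typo for $\tfrac{1}{8\pi}$ (compare Theorem \ref{thm:3} in the introduction).
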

\begin{proof}
	By Remark \ref{rem:limit} we write  with $q^*=-u^*=r^*-t$
	\begin{align*}
		\widehat{h}_1^{ab}(v^*\!\!, -q^*\!, \widehat{y}^a)&=\frac{\widehat{H}_{1\infty}^{ab}(q^*, \widehat{y}^a)}{r^{*3}}+O(\frac{1}{{r^*}^{3+\sigma}}),\\
		\widehat{h}_1^{v^*v^*\!}(v^*\!\!, -q^*\!, \widehat{y}^a)&=-\frac{2}{r^*}\!\!\int_{q^*}^\infty\!\!\!\ln\Bigl(
\frac{v^*\!\!+\eta}{u^*\!\!+\eta}\Bigr)n(\eta, \widehat{y}^a)\,d\eta+\frac{\widehat{H}_{1\infty}^{v^*v^*}\!(q^*\!, \widehat{y}^a)}{r^*}+\frac{2M\!}{r^*}\Bigl(1\!-\!\chi^e(q^*)\Bigr)+O(\frac{1}{{r^*}^{1+\sigma}}),
	\end{align*}
	 where $\chi^e(s)\!=\!1$ when $s\!\geq\! 2$, $\chi^e(s)\!=\!0$ when $s\!\leq\! 1$. Plugging these expressions for $\widehat{h}^{v^*\!v^*}_1\!\!$ and $\widehat{h}_1^{ab}$ into \eqref{eq:expofVexpofhab} gives
	\begin{equation*}
		V\!\!=\overline{r}-M+\frac 12\int_{q^*}^\infty\!\!\!\! n(\eta,\widehat{y}^a)\,d\eta+ \frac{\widehat{\slashed{\nabla}}_{\!c}\widehat{\slashed{\nabla}}_{\!d}\widehat{H}^{cd}(q^*, \widehat{y}^a)}{2}+O(\frac{1}{\overline{r}^{\sigma}}),\qquad\text{and}\qquad h_{ab}=\overline{q}_{ab}-\frac{\overline{q}_{ac}\widehat{H}^{cd}
\overline{q}_{db}}{\overline{r}}+O(\frac{1}{\overline{r}^{1+\sigma}}).
	\end{equation*}
	Therefore we conclude
	\begin{multline*}
		M_A(u,\overline{y}^a)=M-\frac 12\lim_{\overline{r}\to\infty}\Bigl(\int_{\mathring{u}-u}^\infty\!n(\eta,\overline{y}^a-\mathring{y}^a)\,d\eta+\widehat{\slashed{\nabla}}_c\widehat{\slashed{\nabla}}_d\widehat{H}^{cd}(\mathring{u}-u, \overline{y}^a-\mathring{y}^a)\Bigr)\\
		=M-\frac 12\lim_{\overline{r}\to\infty}\Bigl(\int_{\mathring{u}-u}^\infty\!\!\!\!\!\!
n(\eta,\overline{y}^a\!-\mathring{y}^a)\,d\eta
+\overline{\slashed{\nabla}}_{\!c}\overline{\slashed{\nabla}}_{\!d}\widehat{H}^{cd}(\mathring{u}-u, \overline{y}^a\!-\mathring{y}^a)\Big)
		=M-\frac 12\int_{-u}^\infty\!\!\!\!\!\!
n(\eta,\overline{y}^a)\,d\eta
-\frac{\overline{\slashed{\nabla}}_{\!c}\overline{\slashed{\nabla}}_{\!d}\widehat{H}^{cd}(-u, \overline{y}^a)}{2}.
	\end{multline*}
	where we used the fact that $\overline{\slashed{\nabla}}_c\overline{\slashed{\nabla}}_dH^{cd}=\widehat{\slashed{\nabla}}_c\widehat{\slashed{\nabla}}_d\widehat{H}^{cd}+O(\overline{r}^{-\sigma})$. Then the Bondi mass $M_B(u)$ is given by
	\begin{equation}\label{eq:Bondimass} M_B(u)=\frac{1}{4\pi}\int_{\BS^2}M_A(u,\overline{y}^a)d\overline{q}=M-\frac{1}{16\pi}\int_{\BS^2}\!\int_{-u}^\infty\!n(\eta,\overline{y}^a)\,d\eta d\overline{S}(\overline{y}^a),
	\end{equation}
	where we used the fact that the integral of the spherical divergence $\overline{\slashed{\nabla}}_c\overline{\slashed{\nabla}}_d\widehat{H}^{cd}$ over the sphere is $0$.
	Since
	\begin{equation*}
		C_{ab}(u, \overline{y}^a)=-\lim_{\overline{r}\to\infty}\overline{q}_{ac}\widehat{H}^{cd}(\mathring{u}-u, \overline{y}^a-\mathring{y}^a)\overline{q}_{db}=-\overline{q}_{ac}\widehat{H}^{cd}(-u, \overline{y}^a)\overline{q}_{db},
	\end{equation*}
	with $\widehat{V}^{ab}(q^*, \widehat{y}^a)=\pa_{q^*}\widehat{H}^{ab}(q^*,\widehat{y}^a)$ defined in \eqref{eq:NewsfunctionM} we obtain
	\begin{equation*}
		N_{ab}(u, \overline{y}^a)=\frac{1}{2}\overline{q}_{ac}\overline{q}_{db}	\widehat{V}^{cd}(-u, \overline{y}^a).
	\end{equation*}
Therefore with \eqref{eq:NewsfunctionM} the radiated energy flux is equal to
	\begin{equation}\label{eq:radiatedenergy}
		\begin{split}
		E_B(u)&=\frac{1}{4\pi}\int_{\BS^2}\!|N|^2\,d\overline{S}(\overline{y}^a)=\frac{1}{4\pi}\int_{\BS^2}\!\overline{q}^{ac}\overline{q}^{bd}N_{cd}N_{ab}\,d\overline{S}(\overline{y}^a)\\
		&=\frac{1}{16\pi}\int_{\BS^2}\!\overline{q}_{ac}\overline{q}_{bd}\widehat{V}^{ab}\widehat{V}^{cd}(-u,\overline{y}^a)\,d\overline{S}(\overline{y}^a)=\frac{1}{8\pi}\int_{\BS^2}\!n(-u,\overline{y}^a)\,d\overline{S}(\overline{y}^a).
		\end{split}
	\end{equation}
	By \eqref{eq:Bondimass} and \eqref{eq:radiatedenergy} we establish the mass loss formula
	\begin{equation}
		\frac{d}{du}M_B(u)=-E_B(u).
	\end{equation}
	Moreover, since $n(\eta, \overline{y}^a)$ is integrable in $\eta$, we see that $M_B(u)\to M$ as $u\to-\infty$. By Proposition \ref{prop:totalmassloss} we know that $\int_{-\infty}^\infty\!E_B(u)\,du=M$ and thus $M_B(u)\to 0$ as $u\to\infty$.
\end{proof}

\bibliographystyle{plain}
\bibliography{references}

\begin{thebibliography}{10}

\bibitem{BB12}
J.M. Bardeen and L.T. Buchman.
\newblock Bondi-sachs energy-momentum for the constant mean extrinsic curvature
  initial value problem.
\newblock {\em Physical Review D}, 85(6):064035, 2012.

\bibitem{B04}
R.~Bartnik.
\newblock Bondi mass in the {NQS} gauge.
\newblock volume~21, pages S59--S71. 2004.
\newblock A spacetime safari: essays in honour of Vincent Moncrief.

\bibitem{B58}
P.G. Bergmann.
\newblock Conservation laws in general relativity as the generators of
  coordinate transformations.
\newblock {\em Phys. Rev. (2)}, 112:287--289, 1958.

\bibitem{B10}
L.~Bieri.
\newblock An extension of the stability theorem of the {M}inkowski space in
  general relativity.
\newblock {\em J. Differential Geom.}, 86(1):17--70, 2010.

\bibitem{BC17}
L.~Bieri and P.T. Chru\'{s}ciel.
\newblock Future-complete null hypersurfaces, interior gluings, and the
  {T}rautman-{B}ondi mass.
\newblock In {\em Nonlinear analysis in geometry and applied mathematics},
  volume~1 of {\em Harv. Univ. Cent. Math. Sci. Appl. Ser. Math.}, pages 1--31.
  Int. Press, Somerville, MA, 2017.

\bibitem{B60}
H.~Bondi.
\newblock Gravitational waves in general relativity.
\newblock {\em Nature}, 186(4724):535--535, 1960.

\bibitem{BMM62}
H.~Bondi, M.G.J. van~der Burg, and A.W.K. Metzner.
\newblock Gravitational waves in general relativity. {VII}. {W}aves from
  axi-symmetric isolated systems.
\newblock {\em Proc. Roy. Soc. London Ser. A}, 269:21--52, 1962.

\bibitem{CKL19}
T.~Candy, C.~Kauffman, and H.Lindblad.
\newblock Asymptotic behavior of the {M}axwell-{K}lein-{G}ordon system.
\newblock {\em Comm. Math. Phys.}, 367(2):683--716, 2019.

\bibitem{C52}
Y.~Choquet-Bruhat.
\newblock Th\'{e}or\`eme d'existence pour certains syst\`emes d'\'{e}quations
  aux d\'{e}riv\'{e}es partielles non lin\'{e}aires.
\newblock {\em Acta Math.}, 88:141--225, 1952.

\bibitem{C86}
D.~Christodoulou.
\newblock Global solutions of nonlinear hyperbolic equations for small initial
  data.
\newblock {\em Comm. Pure Appl. Math.}, 39(2):267--282, 1986.

\bibitem{C91}
D.~Christodoulou.
\newblock Nonlinear nature of gravitation and gravitational-wave experiments.
\newblock {\em Phys. Rev. Lett.}, 67(12):1486--1489, 1991.

\bibitem{C02}
D.~Christodoulou.
\newblock The global initial value problem in general relativity.
\newblock In {\em The Ninth Marcel Grossmann Meeting: On Recent Developments in
  Theoretical and Experimental General Relativity, Gravitation and Relativistic
  Field Theories (In 3 Volumes)}, pages 44--54. World Scientific, 2002.

\bibitem{C09}
D.~Christodoulou.
\newblock {\em The Formation of Black Holes in General Relativity}, volume~4.
\newblock European Mathematical Society, 2009.

\bibitem{CK93}
D.~Christodoulou and S.~Klainerman.
\newblock {\em The global nonlinear stability of the {M}inkowski space},
  volume~41 of {\em Princeton Mathematical Series}.
\newblock Princeton University Press, Princeton, NJ, 1993.

\bibitem{CJM98}
P.T. Chru\'{s}ciel, J.~Jezierski, and M.A.H. MacCallum.
\newblock Uniqueness of the {T}rautman-{B}ondi mass.
\newblock {\em Phys. Rev. D (3)}, 58(8):084001, 16, 1998.

\bibitem{G58}
J.N. Goldberg.
\newblock Conservation laws in general relativity.
\newblock {\em Phys. Rev. (2)}, 111:315--320, 1958.

\bibitem{H21}
L.~He.
\newblock Scattering from infinity of the {M}axwell {K}lein {G}ordon equations
  in {L}orenz gauge.
\newblock {\em Comm. Math. Phys.}, 386(3):1747--1801, 2021.

\bibitem{HV20}
P.~Hintz and A.~Vasy.
\newblock Stability of {M}inkowski space and polyhomogeneity of the metric.
\newblock {\em Ann. PDE}, 6(1):Paper No. 2, 146, 2020.

\bibitem{J81}
F.~John.
\newblock Blow-up for quasilinear wave equations in three space dimensions.
\newblock {\em Comm. Pure Appl. Math.}, 34(1):29--51, 1981.

\bibitem{J85}
F.~John.
\newblock Blow-up of radial solutions of {$u_{tt}=c^2(u_t)\Delta u$} in three
  space dimensions.
\newblock {\em Mat. Apl. Comput.}, 4(1):3--18, 1985.

\bibitem{KL21}
C.~Kauffman and H.~Lindblad.
\newblock Global stability of {M}inkowski space for the {E}instein--{M}axwell
  {K}lein {G}ordon in generalized wave coordinates.
\newblock {\em arXiv:2109.03270}, 2021.

\bibitem{K83}
S.~Klainerman.
\newblock Long time behavior of solutions to nonlinear wave equations.
\newblock In {\em Proceedings of the International Congress of Mathematicians,
  Warsaw,(1983)}, pages 1209--1215. 1983.

\bibitem{K86}
S.~Klainerman.
\newblock The null condition and global existence to nonlinear wave equations.
\newblock In {\em Nonlinear systems of partial differential equations in
  applied mathematics, {P}art 1 ({S}anta {F}e, {N}.{M}., 1984)}, volume~23 of
  {\em Lectures in Appl. Math.}, pages 293--326. Amer. Math. Soc., Providence,
  RI, 1986.

\bibitem{KN03}
S.~Klainerman and F.~Nicol\`o.
\newblock {\em The evolution problem in general relativity}, volume~25 of {\em
  Progress in Mathematical Physics}.
\newblock Birkh\"{a}user Boston, Inc., Boston, MA, 2003.

\bibitem{LL62}
L.D. Landau and E.M. Lifshitz.
\newblock {\em The classical theory of fields}.
\newblock Pergamon Press, Oxford-London-Paris-Frankfurt; Addison-Wesley
  Publishing Co., Inc., Reading, Mass., second edition, 1962.
\newblock Course of Theoretical Physics, Vol. 2, Translated from the Russian by
  Morton Hamermesh.

\bibitem{L92}
H.~Lindblad.
\newblock Global solutions of nonlinear wave equations.
\newblock {\em Comm. Pure Appl. Math.}, 45(9):1063--1096, 1992.

\bibitem{Lind17}
H.~Lindblad.
\newblock On the asymptotic behavior of solutions to the {E}instein vacuum
  equations in wave coordinates.
\newblock {\em Comm. Math. Phys.}, 353(1):135--184, 2017.

\bibitem{LR03}
H.~Lindblad and I.~Rodnianski.
\newblock The weak null condition for {E}instein's equations.
\newblock {\em C. R. Math. Acad. Sci. Paris}, 336(11):901--906, 2003.

\bibitem{LR05}
H.~Lindblad and I.~Rodnianski.
\newblock Global existence for the {E}instein vacuum equations in wave
  coordinates.
\newblock {\em Comm. Math. Phys.}, 256(1):43--110, 2005.

\bibitem{LR10}
H.~Lindblad and I.~Rodnianski.
\newblock The global stability of {M}inkowski space-time in harmonic gauge.
\newblock {\em Ann. of Math. (2)}, 171(3):1401--1477, 2010.

\bibitem{MW16}
T.~M{\"a}dler and J.H. Winicour.
\newblock Bondi-{S}achs formalism.
\newblock {\em arXiv:1609.01731}, 2016.

\bibitem{S62a}
R.~Sachs.
\newblock Asymptotic symmetries in gravitational theory.
\newblock {\em Phys. Rev. (2)}, 128:2851--2864, 1962.

\bibitem{S62}
R.K. Sachs.
\newblock Gravitational waves in general relativity. {VIII}. {W}aves in
  asymptotically flat space-time.
\newblock {\em Proc. Roy. Soc. London Ser. A}, 270:103--126, 1962.

\bibitem{Sauter08}
J.~Sauter.
\newblock {\em Foliations of null hypersurfaces and the Penrose inequality}.
\newblock PhD thesis, ETH Zurich, 2008.

\bibitem{SY79}
R.~Schoen and S.T. Yau.
\newblock On the proof of the positive mass conjecture in general relativity.
\newblock {\em Comm. Math. Phys.}, 65(1):45--76, 1979.

\bibitem{T58}
A.~Trautman.
\newblock Radiation and boundary conditions in the theory of gravitation.
\newblock {\em Bull. Acad. Pol. Sci. Ser. Sci. Math. Astron. Phys.},
  6(6):407--412, 1958.

\bibitem{T62}
A.~Trautman.
\newblock Conservation laws in general relativity.
\newblock In {\em Gravitation: {A}n introduction to current research}, pages
  169--198. Wiley, New York, 1962.

\bibitem{T02}
A.~Trautman.
\newblock Lectures on general relativity.
\newblock {\em General Relativity and Gravitation}, 34(5):721--762, 2002.

\bibitem{W84}
R.M. Wald.
\newblock {\em General relativity}.
\newblock University of Chicago Press, Chicago, IL, 1984.

\bibitem{W81}
E.~Witten.
\newblock A new proof of the positive energy theorem.
\newblock {\em Comm. Math. Phys.}, 80(3):381--402, 1981.

\end{thebibliography}
\end{document}